\tikzset{
	<->,  
	>=stealth, 
	shorten >=2pt, shorten <=2pt, 
	node distance=3cm, 
	every state/.style={draw=blue!55,very thick,fill=blue!20}, 
	initial/.style ={draw=blue!55,very thick,fill=blue!20}
 }
\newcommand{\Z}{{\mathbb Z}}
\newtheorem{theorem}{Theorem}
\newtheorem{lemma}{Lemma}
\numberwithin{definition}{subsection}
\newtheorem{conjecture}{Conjecture}
\newcommand{\beq}{\begin{equation}\begin{aligned}}
\newcommand{\eeq}{\end{aligned}\end{equation}}
\begin{document}

\title{Domain walls from SPT-sewing}
\author{Yabo Li}
    \email{liyb.poiuy@gmail.com}
	\affiliation{Center for Quantum Phenomena, Department of Physics, New York University, 726 Broadway, New York, New York 10003, USA}
\author{Zijian Song}
    \email{zjsong@ucdavis.edu}
    \affiliation{Department of Physics and Astronomy, University of California, Davis, California 95656, USA}
\author{Aleksander Kubica}
    \affiliation{Yale Quantum Institute \& Department of Applied Physics, Yale University, New Haven, CT 06520, USA}
\author{Isaac H. Kim}
    \affiliation{Department of Computer Science, University of California, Davis, California 95616, USA}
\begin{abstract}
We introduce a systematic method for constructing gapped domain walls of topologically ordered systems by gauging a lower-dimensional symmetry-protected topological (SPT) order.
Based on our construction, we propose a correspondence between 1d SPT phases with a non-invertible $G\times \text{Rep}(G)\times G$ symmetry and invertible domain walls in the quantum double associated with the group $G$.
We prove this correspondence when $G$ is Abelian and provide evidence for the general case by studying the quantum double model for $G=S_3$.
We also use our method to construct \emph{anchoring domain walls}, which are novel exotic domain walls in the 3d toric code that transform point-like excitations to semi-loop-like excitations anchored on these domain walls. 

\end{abstract}

\maketitle
{
  \hypersetup{linkcolor=black}
  \tableofcontents
}

\section{Introduction}
\label{sec:introduction}

Most of our understanding of topological order stems from various exactly solvable models.
The quantum double~\cite{kitaev2003fault} and string-net models~\cite{Levin2005} provided concrete examples to study universal properties of topologically ordered phases; their extensions to three dimensions are also known~\cite{walker20123}.
It is believed that they constitute all gapped phases of matter with a gappable boundary~\cite{kitaev2006anyons,kim2024classifying2dtopologicalphases}.
These models can also host a variety of lower-dimensional defects, which can make their physics richer~\cite{Bombin2010,beigi2011quantum,kapustin2011topological,kitaev2012models,Barkeshli2013,Barkeshli2013a,bravyi1998quantum,ostrik2002module,davydov2010modular,davydov2013structure,levin2013protected,kong2014anyon,lan2015gapped,fuchs2014geometric,huston2023composing,xu20242}.
The codimension-$1$ defects are called as \emph{domain walls}, which are the main subject of our work.

Domain walls are often understood by folding the system along them and, subsequently, turning the domain walls into gapped boundaries~\cite{kitaev2012models}.
In two dimensions, if the underlying anyon model is Abelian, the gapped boundary can be classified in terms of the set of anyons that can be condensed on the boundary, known as the Lagrangian subgroup~\cite{levin2013protected}.
More generally, Refs.~\cite{kitaev2012models,kong2014anyon,lan2015gapped,cong2017hamiltonian,kong2022invitation,kong2024higher} proposed to use higher category theory to classify the boundaries. The corresponding mathematical object is a generalization of the Lagrangian subgroup, called the Lagrangian algebra. Beyond studying the condensation of Abelian anyons (0-dimensional topological defects), this framework examines the condensation of non-Abelian anyons and higher-dimensional topological excitations~\cite{kong2024higher}. This approach has been employed in the study of gapped boundaries and domain walls of 2d non-Abelian quantum doubles~\cite{xu20242} as well as 3d topological orders~\cite{zhao2022string,kong2020defects}. 

However, not everything about the domain walls is well-understood, even in two spatial dimensions.
For instance, Ref.~\cite{Shi2021} reported a set of superselection sectors and fusion rules at the domain wall whose categorical description is not understood at the moment; domain walls in higher dimensions are even less understood. 
In general, in order to study the physical properties of the system it is desirable to have a solvable model whose ground state can be described exactly.

The unifying approach to the problem of understanding domain walls that underpins our work is gauging.
Recent studies have shown that gauging gives rise to a method of preparing a variety of topological phases~\cite{Tantivasadakarn2024,verresen2022efficientlypreparingschrodingerscat,bravyi2022adaptiveconstantdepthcircuitsmanipulating,Lu2022,Tantivasadakarn2023,Tantivasadakarn2023a,li2023symmetry,li2023measuring,Sukeno2024,lyons2024protocolscreatinganyonsdefects};
gauging has also been explored in the context of quantum error-correcting codes~\cite{Yoshida2016,Vijay2016,Williamson2016,kubica2018,Shirley2019,rakovszky2023}.
The gauging procedure can be implemented using constant-depth adaptive quantum circuits and has been experimentally demonstrated~\cite{iqbal2024non,foss2023experimental,song2024realizationconstantdepthfanoutrealtime,iqbal2024qutrittoriccodeparafermions}. Very recently, such an approach was extended to the more general setting of solvable anyon models~\cite{ren2024efficientpreparationsolvableanyons}. 

Our work employs a similar approach, but by gauging lower-dimensional defects.
For instance, we extensively study the effects of gauging a lower-dimensional symmetry-protected topological (SPT) order embedded in a trivial bulk, symmetric under the same symmetry group.
We remark that there are prior works creating defects in a similar way~\cite{barkeshli2023codimension}.
However, the way in which we gauge the lower-dimensional SPT is different and leads to the domain walls that, to the best of our knowledge, cannot be obtained by the method from Ref.~\cite{barkeshli2023codimension}.
In order to differentiate the two approaches, we refer to the method from Ref.~\cite{barkeshli2023codimension} as gauged-SPT domain wall (or more generally, defect).
Our method is referred to as \emph{SPT-sewing}.
The origin of this name will become clear once we explain our method.

Although the bulk our work focuses on gauging the standard SPT (based on a group symmetry), we also consider nontrivial SPT phases with a generalized notion of symmetry~\cite{thorngren2019fusion,inamura2021topological}.
These generalized symmetries are known as non-invertible symmetries~\cite{mcgreevy2023generalized,schafer2024ictp,shao2023s}, and lattice models for various non-invertible SPT phases have been studied~\cite{fechisin2023non,seifnashri2024cluster,jia2024generalized,bhardwaj2024lattice,li2024non,kawagoe2024levinwengaugetheoryentanglement}.
We also discuss such models because there are domain walls that can only be created by gauging an SPT with a non-invertible symmetry.

The gauging approach allows us to obtain a variety of domain walls.
In fact, in two spatial dimensions, we believe our approach is general enough to construct \emph{all} gapped domain walls (insofar as the underlying topological order in the bulk can be obtained by gauging some SPT).
The main evidence for this conjecture comes from Abelian anyon models, for which we prove that all gapped domain walls can be obtained via our method.
The evidence for the non-Abelian case is more scarce; nonetheless, we provide some nontrivial examples, such as the quantum double model for the symmetric group $S_3$. 

In three spatial dimensions, we obtain conceptually novel domain walls, which we call \emph{anchoring domain walls}.
Anchoring domain walls transform point-like excitations into semi-loop-like excitations anchored on them, and vice versa.
We find that there are two types of such domain walls, which exhibit different physics.
Namely, one type hosts an Abelian anyon model at the domain wall whereas the other one hosts a non-Abelian anyon model. 
To the best of our knowledge, such domain walls have not been studied before.
How they fit with the existing categorical classification of lower-dimension defects~\cite{kong2024higher} remains unclear.
We simply discuss their constructions and pertinent properties, leaving the classification question for future work.

The rest of this article is structured as follows.
In Section~\ref{sec:2d_toric_code}, we describe our method with a simple example of the 2d toric code~\cite{kitaev2003fault}.
In Section~\ref{sec:gauging_map}, we introduce the convention used throughout our work and review how gauging works.
In Section~\ref{sec:2d_Abelian_general}, we construct all possible gapped domain walls for Abelian quantum doubles in two dimensions.
In Section~\ref{sec:non-Abelian}, we apply our method to the non-Abelian quantum double model for $S_3$.
In Section~\ref{sec:3ddomainwall}, we discuss various exotic domain walls in the 3d toric code and some simplified models are presented in Section~\ref{sec:3d_generalized_pauli}.
We end with a discussion in Section~\ref{sec:discussion}.

\section{Example: 2d toric code} 
\label{sec:2d_toric_code}

In this section, we explain our method with a simple example of the 2d toric code. 
First, we briefly review this model and the gauging map we use.
We then explain how gauging can be used to construct domain walls of the 2d toric code.

\subsection{Toric code and the gauging map} \label{sec:toric_code_1}

The toric code~\cite{kitaev2003fault} is defined on a square lattice, wherein each edge hosts one qubit. The Hamiltonian contains commuting four-body interaction terms, often referred to as the vertex ($A_v$) and plaquette ($B_p$) terms
\beq
H&=-\sum_{v\in V} A_v -\sum_{p\in F} B_p\\
&=-\sum_{v\in V} \prod_{v\in e}X_e -\sum_{p\in F} \prod_{e\in p}Z_e,
\label{eq:toric code}
\eeq
where $V$ and $F$ are the sets of vertices and plaquettes, respectively.
The ground states are thus the simultaneous $+1$-eigenstates of $A_s$ and $B_p$.
If a quantum state $\ket{\Psi}$ is a $-1$-eigenstate of a certain vertex (plaquette) term, we say that $\ket{\Psi}$ has a point-like $e$ ($m$) excitation on that vertex (plaquette).
A pair of such excitations can be created by applying a string of Pauli $Z$ or $X$ operators on the ground state.

A ground state of the toric code can be obtained via a \emph{gauging map} applied to the trivial fixed-point state of the $\mathbb{Z}_2$ symmetry, i.e., $\ket{+}^{\otimes V}$~\cite{yoshida2017gapped}. The gauging map is a bijective map between states with global symmetry $\mathbb{Z}_2$ and states with a gauge symmetry $\mathbb{Z}_2$ on the same lattice, but with the Hilbert space associated with the vertices for the former and with the edges for the latter. It is convenient to describe the action of the gauging map on local operators. The map converts a Pauli $X$ operator on a vertex $v$ to $\prod_{v\in e}X_e$. It also converts a Pauli operator of the form $Z_v Z_{v'}$ to $Z_e$, where $e$ is an edge connecting $v$ and $v'$. Upon gauging, the $\mathbb{Z}_2$ global symmetry becomes the $\mathbb{Z}_2$ gauge symmetry. This implies that the state $\ket{+}^{\otimes V}$ gets mapped to a ground state of the toric code; see Fig.~\ref{fig:gauging_1}.

\begin{figure}
    \centering
    \includegraphics[width=\columnwidth]{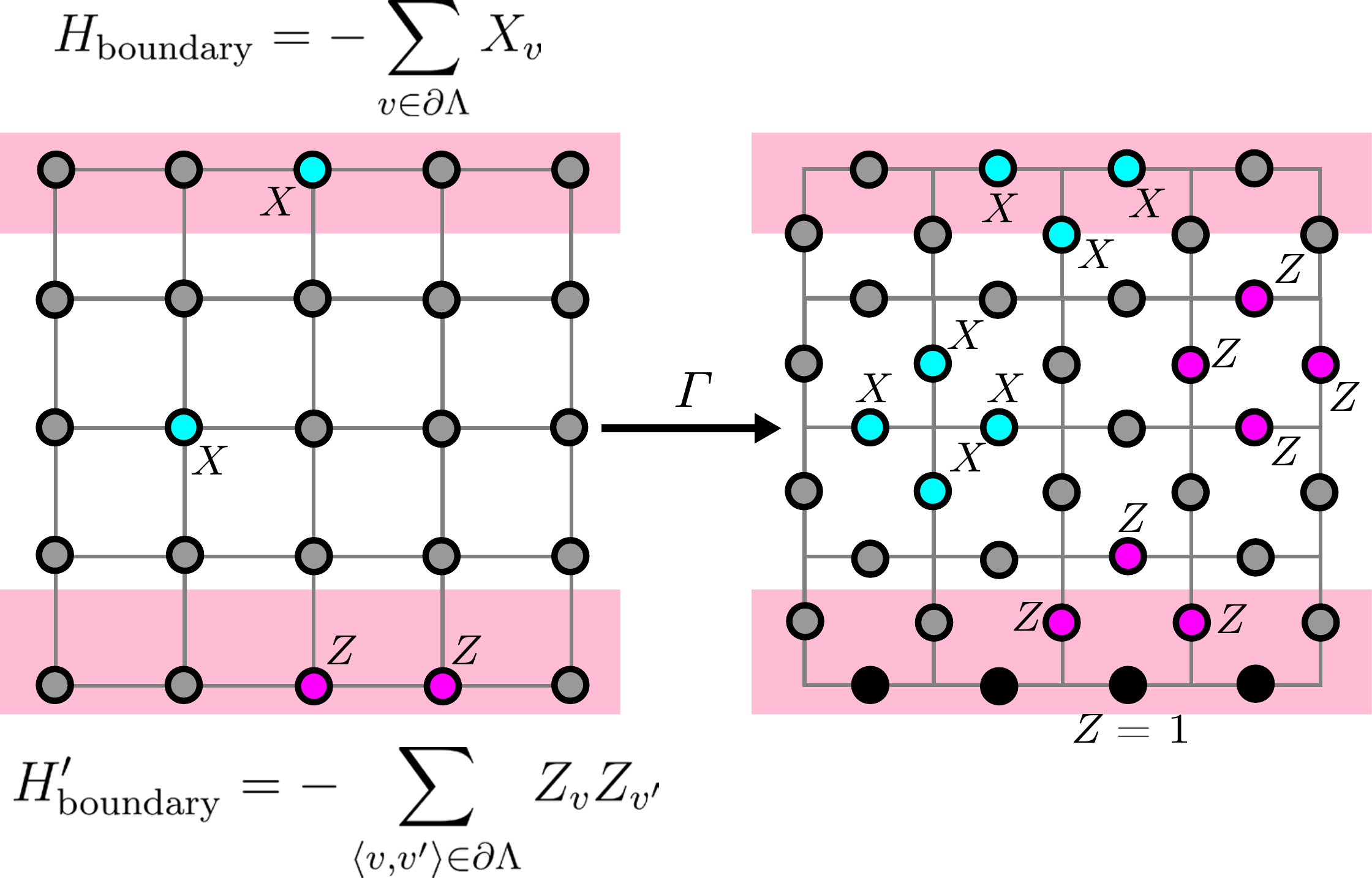}
    \caption{Schematic description of the gauging map. Local symmetric operators are mapped to local operators on the right lattice. For instance, $X_v$ is mapped to $\prod_{v\in e}X_e$, and $Z_{v}Z_{v'}$ on adjacent vertices is mapped to $Z_{\langle v,v'\rangle}$ on the edge. As a result, two different boundary Hamiltonians become the smooth and rough boundary.}
    
    \label{fig:gauging_1}
\end{figure}

For the lattice with open boundary conditions, the gauging map yields gapped boundaries of the toric code.
Let $\Lambda$ be a square lattice with open boundary conditions and denote its boundary as $\partial\Lambda \subset \Lambda$.
Different boundaries of the toric code can be obtained by choosing different boundary Hamiltonians with respect to the $\mathbb{Z}_2$ global symmetry.
There are only two allowed possibilities, the symmetry and spontaneously symmetry breaking phases.
After gauging, they become different toric code boundaries.

The smooth boundary~\cite{bravyi1998quantum} can be obtained from the product state $\ket{+}^{\otimes \Lambda}$, which can be thought of as the ground state of a $\mathbb{Z}_2$ symmetric Hamiltonian
\beq
    H_1&=H_{\text{bulk}} + H_{\text{boundary}}\\
    &=-\sum_{v\in \Lambda \backslash \partial\Lambda}X_v - \sum_{v\in  \partial\Lambda}X_v.
    \label{eq:pregauge toric code}
\eeq
Upon gauging, $H_{\text{boundary}}$ becomes a sum of stabilizers that define the smooth boundary; see the top right of Fig.~\ref{fig:gauging_1}.

The rough boundary~\cite{bravyi1998quantum} can be obtained by changing the boundary term to a $\mathbb{Z}_2$-symmetry breaking term
\beq
    H'_{\text{boundary}}= - \sum_{\langle v,v'\rangle\in  \partial\Lambda}Z_v Z_{v'}.
\eeq
After gauging, we obtain the rough boundary; see the bottom right of Fig.~\ref{fig:gauging_1}.

On the smooth (rough) boundary, a single $m$ ($e$) particle can be annihilated.
When that happens, we say that $m$ ($e$) is condensed on that boundary~\cite{bombin2008family,beigi2011quantum}. These two boundaries are the only possible boundaries for the 2d toric code~\cite{kapustin2011topological,kitaev2012models,ostrik2002module,davydov2010modular,davydov2013structure,kong2014anyon,lan2015gapped,fuchs2014geometric}.

\subsection{Examples of SPT-sewing} \label{sec:toric_code_2}

Gapped phases of matter can host a \emph{gapped domain wall}, which is a codimension-$1$ defect.
For the toric code, there are six types of domain walls.
Some of these domain walls are \emph{invertible}, meaning that all the anyons can pass through.
There are two invertible domain walls, the trivial one and the one exchanging the $e$ and $m$ particles.
We refer to them as $S_1$ and $S_{\psi}$, respectively.
The other domain walls are \emph{non-invertible} and are labeled by the anyons that can condense on them, i.e., $S_e, S_m, S_{em},$ and $S_{me}$.

The non-invertible domain walls can be obtained via the method in Section~\ref{sec:toric_code_1}.
We thus focus on the construction of invertible domain walls, in particular the $S_{\psi}$ domain wall.
There are many different ways of constructing $S_{\psi}$, e.g., via a lattice defect~\cite{Bombin2010,kitaev2012models}.
Recently, Ref.~\cite{barkeshli2023codimension} showed that $S_{\psi}$ can be constructed from gauging a 1d Kitaev Majorana fermion chain.
Ref.~\cite{roumpedakis2023higher} showed from topological field theory that $S_{\psi}$ can be constructed from higher-gauging on a codimension-1 surface in the spacetime.
Ref.~\cite{tantivasadakarn2023string} showed that the domain walls can be obtained from a toric code ground state by applying a linear-depth unitary sequential circuit.

Here we take a different approach to constructing the $S_{\psi}$ domain wall by considering two disconnected lattices with close-by boundaries; see Eqs.~\eqref{eq:Spsi_main}-\eqref{eq:strivial} for illustration.
Note that there is a $\mathbb{Z}_2$ global symmetry on each side.
We can then obtain an invertible domain wall by decorating a nontrivial $\mathbb{Z}_2\times\mathbb{Z}_2$ SPT state on the close-by boundaries and gauging.
(For a trivial SPT, gauging would yield $S_m$.)
We can view this procedure as sewing together two boundaries using a 1d SPT state, hence the name \emph{SPT-sewing}.

\begin{table}[t]
\centering
\resizebox{0.8\columnwidth}{!}{
\begin{tabular}{|l|l|l|}
\hline
Subgroup $K$ & 2-cocycle & Domain wall \\ \hline
$\mathbb{Z}_2^{diag}$ & trivial & $S_1$\\ \hline
$\mathbb{Z}_2^{(1)}\times \mathbb{Z}_2^{(2)}$     & nontrivial & $S_{\psi}$\\ \hline
$\mathbb{Z}_2^{(1)}\times \mathbb{Z}_2^{(2)}$ & trivial & $S_m$\\ \hline
$\mathbb{Z}_2^{(1)}$ & trivial & $S_{me}$\\ \hline
$\mathbb{Z}_2^{(2)}$ & trivial & $S_{em}$\\ \hline
$1$ & trivial & $S_e$\\ \hline\end{tabular}%
}
\caption{Correspondence between the 1d gapped phases under $\mathbb{Z}_2^{(1)}\times \mathbb{Z}_2^{(2)}$ and the domain walls in the $\mathbb{Z}_2$ toric code.
$K$ is the unbroken subgroup, and the different 2-cocycles correspond to the different SPT phases under the unbroken $K$ symmetry.\label{table:Z_2}}
\end{table}

An emblematic example of a $\mathbb{Z}_2\times \mathbb{Z}_2$ SPT is the one-dimensional cluster state~\cite{Raussendorf2003}, where each $\mathbb{Z}_2$ symmetry acts separately on the set of even and odd sites.
By placing the even and odd sites on the left and right close-by boundaries, we obtain an SPT compatible with the global symmetry we imposed. After applying the gauging map to this setup, the stabilizers of the SPT are transformed as follows
\begin{equation}
\adjincludegraphics[width=0.8\columnwidth,valign=c]{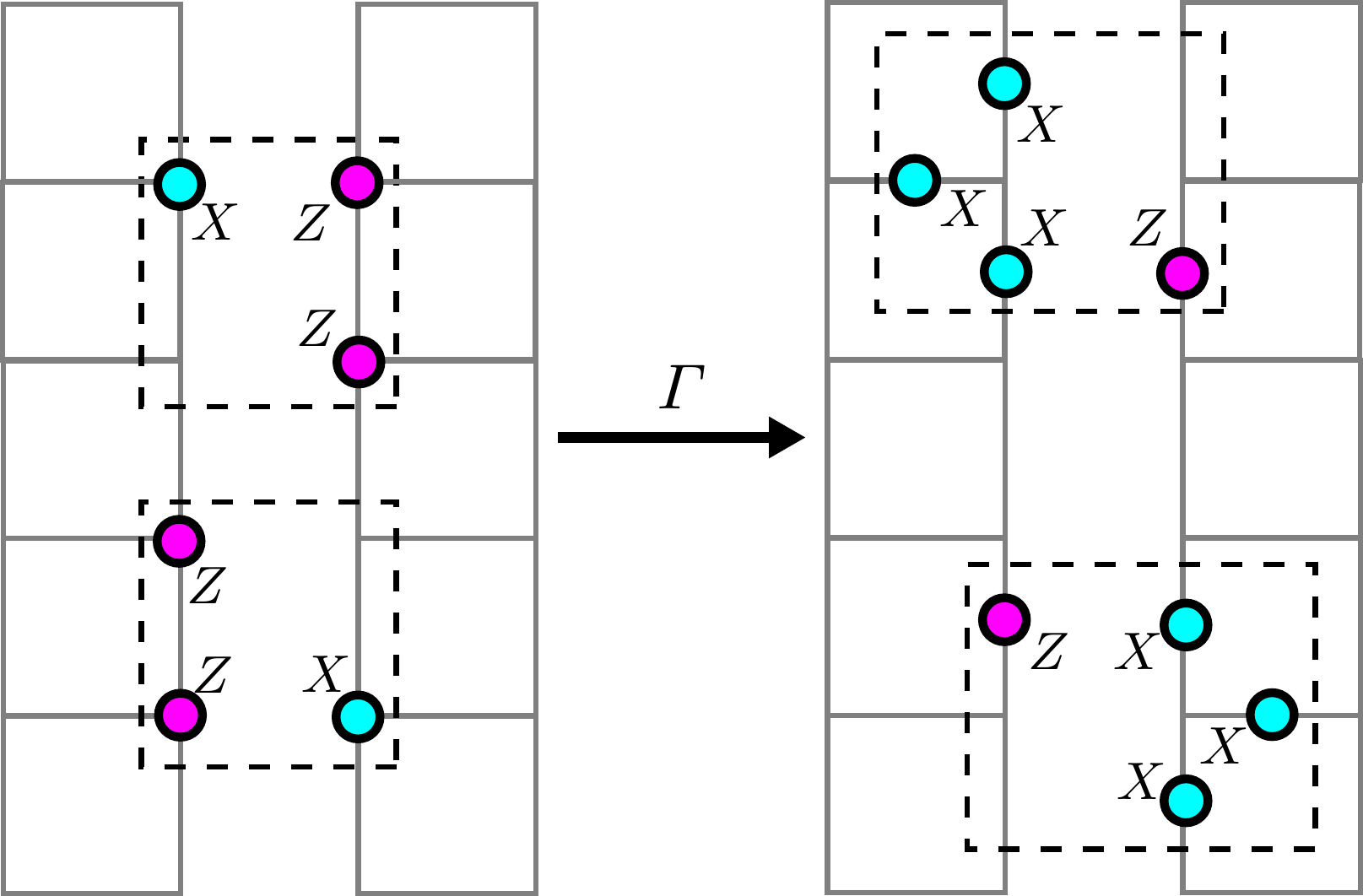}.\label{eq:Spsi_main}
\end{equation}

This domain wall is invertible because the $e$ and $m$ particles from one side can pass through. Upon passing, $e$ is exchanged with $m$, and vice versa.
Therefore, we realized the $S_{\psi}$ domain wall by gauging a 1d SPT under the $\mathbb{Z}_2\times \mathbb{Z}_2$ symmetry.

Because there is only one nontrivial 1d SPT under the $\mathbb{Z}_2\times \mathbb{Z}_2$ symmetry, one cannot hope to sew a $\mathbb{Z}_2\times \mathbb{Z}_2$ SPT to obtain the transparent domain. However, we can instead consider a lattice defect line with three  global $\mathbb{Z}_2$ symmetries. The following is a 1d $\mathbb{Z}_2\times \mathbb{Z}_2\times \mathbb{Z}_2$ SPT, which after gauging returns the trivial domain wall
\begin{equation}
\adjincludegraphics[width=0.8\columnwidth,valign=c]{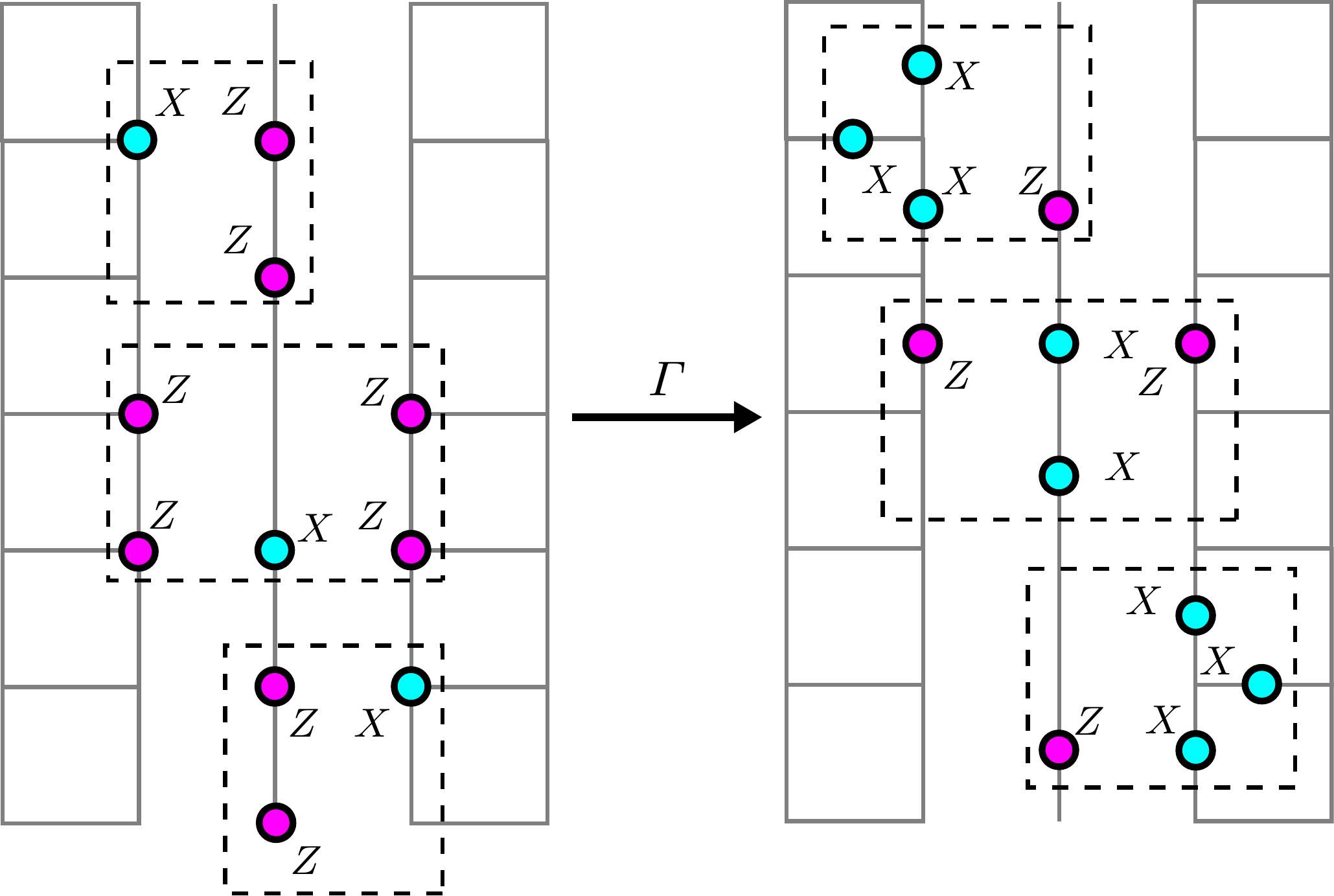}. \label{eq:strivial}
\end{equation}

The SPT wavefunctions in Eqs.~\eqref{eq:Spsi_main}-\eqref{eq:strivial} can be constructed from the topological action~\cite{wang2015field,tsui2020lattice,chen2023higher,li2023measuring}; see Appendix~\ref{appx:topologicalaction} for a pedagogical introduction.
For each $\mathbb{Z}_2$ symmetry, we introduce a $\mathbb{Z}_2$-valued background gauge field, denoted as $A_i, i\in \{1,2,3 \}$.
Then, the topological action for $S_{\psi}$ is
\beq
    S_{\text{top}}=\int\frac{1}{2}A_1\cup A_2.
    \label{eq:cluster top action}
\eeq
The topological action for the $S_1$ domain wall is
\beq
    S_{\text{top}}=\int\frac{1}{2}A_1\cup A_2 + \frac{1}{2}A_2\cup A_3.
\eeq
In both cases, the term $\frac{1}{2}A_i\cup A_j$ can be understood as a domain wall decoration between the $i$'th and the $j$'th $\mathbb{Z}_2$ symmetry~\cite{chen2014symmetry,wang2015field}. Importantly, this approach for constructing SPT wavefunctions readily generalizes to any Abelian symmetry group.

Attentive readers might notice a similarity between SPT-sewing and gauged-SPT defects~\cite{barkeshli2023codimension}, since both approaches construct domain walls by gauging a trivially symmetric state decorated by lower-dimensional SPTs.
However, there are important differences between the two, as we now explain.

First, for gauged-SPT defects the gauging map is applied throughout the bulk.
Consequently, the gauged-SPT defects are always invertible.
However, in SPT-sewing the gauging map is applied on two disconnected lattices and the obtained domain walls may or may not be invertible.

Second, the gauged-SPT defects arising from bosonic SPTs are always flux-preserving domain walls. On the other hand, the $S_{\psi}$ domain wall in the toric code is not flux-preserving. Thus, one should consider decorating a Kitaev Majorana fermion chain with a $\mathbb{Z}_2$ fermion parity symmetry, such that after gauging $S_{\psi}$ is obtained~\cite{barkeshli2023codimension}. However, in a generic topological order, it is not clear if all the invertible domain walls can be constructed this way. On the other hand, the SPT-sewing constructions of invertible domain walls only use bosonic SPTs, and should be readily generalizable to domain wall constructions in generic topological orders. We now turn to such generalizations.

\section{Gauging map}
\label{sec:gauging_map}

In this section, we review a generalization of gauging from Section~\ref{sec:2d_toric_code}.
Without loss of generality, we consider oriented graphs.
We use the convention where the degrees of freedom prior to and after gauging live on the vertices and edges, respectively.

The local Hilbert space $\mathbb C(G)$ is assumed to be finite-dimensional and its basis is labeled by the elements of a finite group $G$.
The action of the gauging map $\Gamma$ is defined on the basis states as follows
\begin{align}
\adjincludegraphics[width=0.6\columnwidth,valign=c]{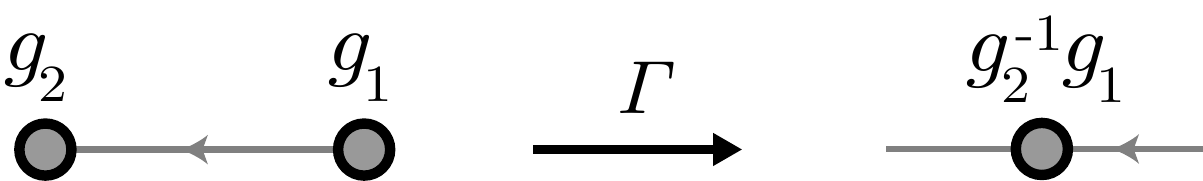}. \label{eq:gauging_5}
\end{align}

It is convenient to study the action of the gauging map on local operators.
We introduce the following standard left- and right-multiplication operators
\begin{equation}
    L_+^g|h\rangle = |gh\rangle, \quad L_-^g|h\rangle = |hg^{-1}\rangle.
\end{equation}
Prior to gauging, the action of $L^{g}_{-,v}$ can be described diagrammatically as follows
\begin{equation}
    L^{g}_{-,v}\ \adjincludegraphics[width=2.5cm,valign=c]{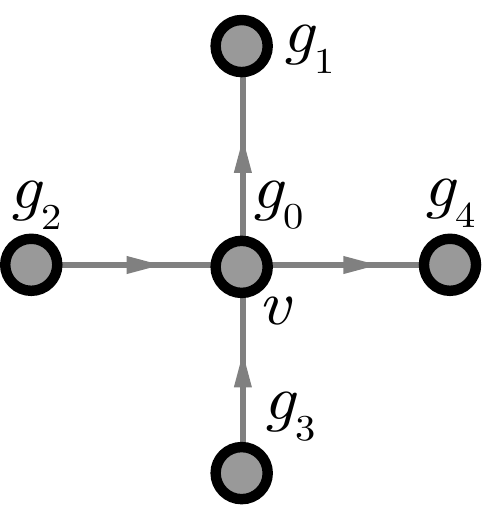} = \adjincludegraphics[width=2.5cm,valign=c]{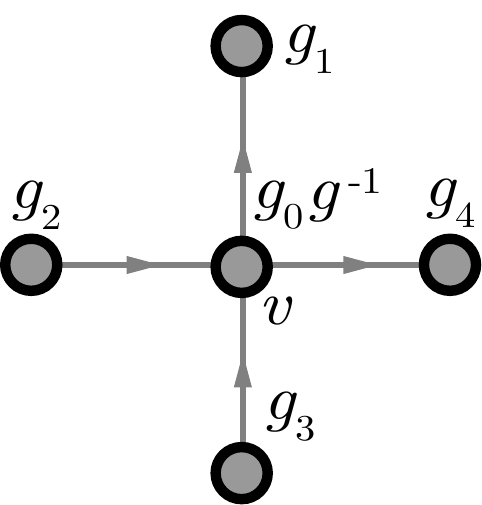}.
\end{equation}
Upon gauging, this state becomes
\begin{equation}
    \adjincludegraphics[width=2.85cm,valign=c]{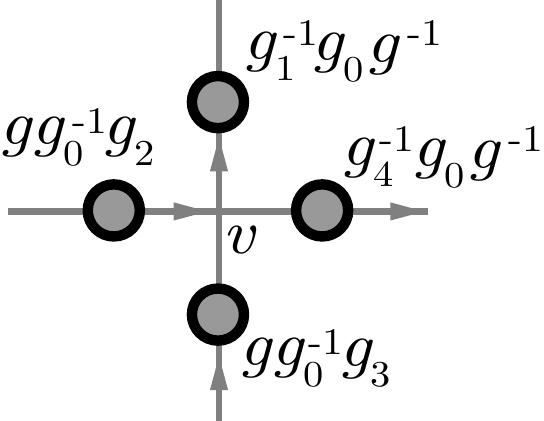} = A^g_v\ \adjincludegraphics[width=2.5cm,valign=c]{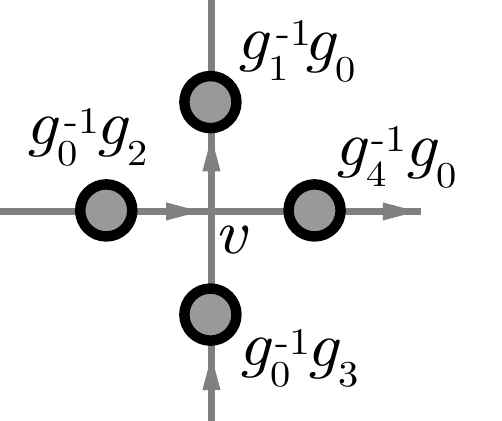},
\end{equation}
where $A_v^g$ is an appropriate product of $L_+^g$ and $L_-^g$ depending on the orientations~\cite{kitaev2003fault}. Therefore, upon gauging we obtain
\begin{align}
    L^{g}_{-,v} \xlongrightarrow{\Gamma} A^g_v.
\end{align}
The same approach works also at the boundary of the lattice.

\begin{figure}[t]
    \centering
    \includegraphics[width=\columnwidth]{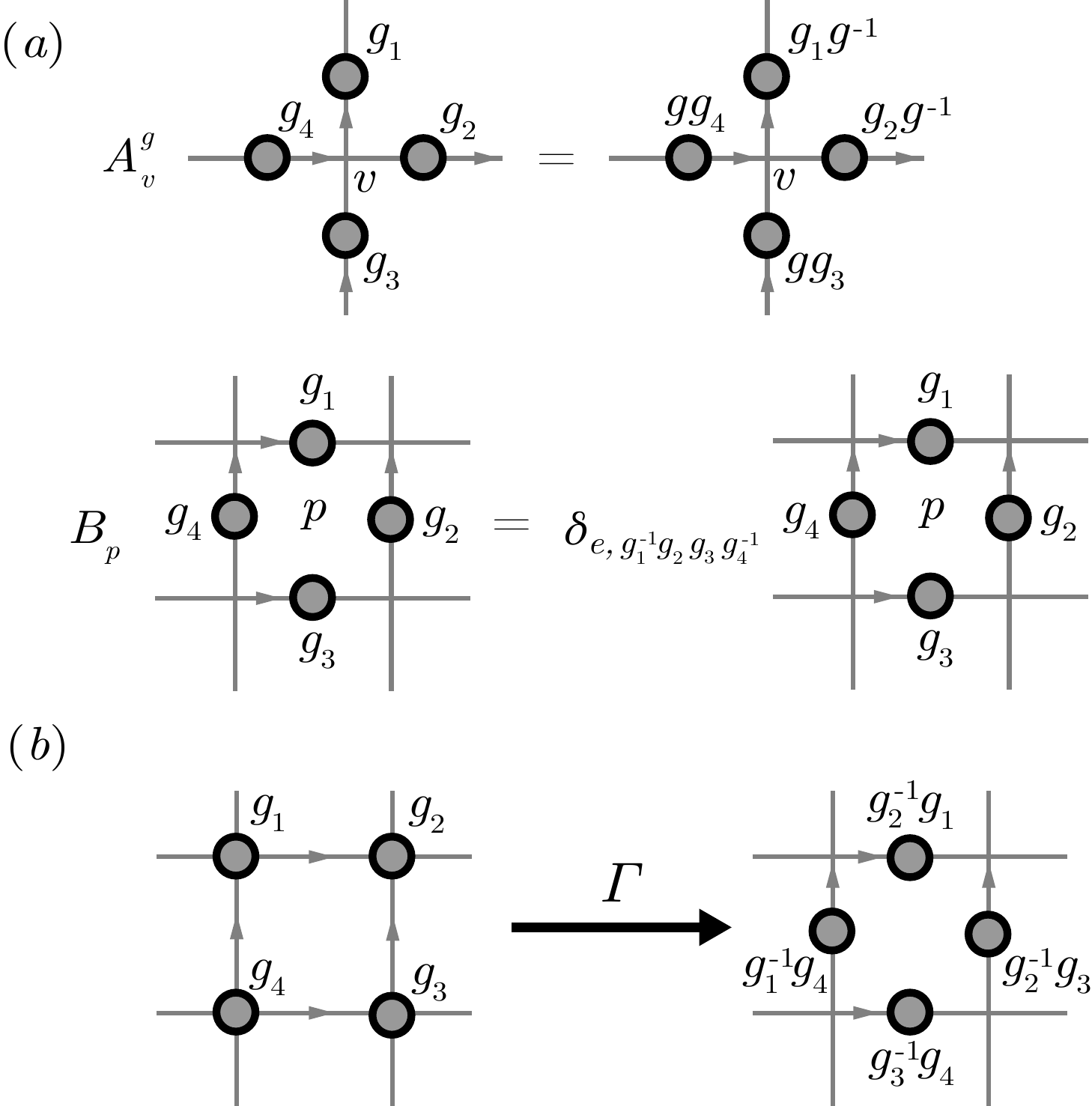}
    \caption{(a) The action of the operators $A_v^g$ and $B_p$. (b) Gauging maps any state to a zero-flux state.}
    \label{fig:gauging_2}
\end{figure}

Under gauging, the symmetric product state gets mapped to the ground state of the quantum double model~\cite{kitaev2003fault}. The symmetric product state
\beq
    \ket{+}^{\otimes V}\equiv\bigotimes_{v\in V} \frac{1}{\sqrt{|G|}}\sum_g \ket{g}_v
\eeq
is a +1-eigenstate of the operator $L^g_{-,v}$ on the vertex $v$.
After gauging, this operator becomes $A^g_v$, and the state becomes a +1-eigenstate of $A_v^g$.
By construction, the resulting state also satisfies the zero-flux condition; see Fig.~\ref{fig:gauging_2}(b).
That is, it is the $+1$-eigenstate of $B_p$ defined in Fig.~\ref{fig:gauging_2}(a).
The resulting state is the ground state of the following Hamiltonian 
\beq
H=-\sum_{v\in V} A_v -\sum_{p\in F} B_p,
\label{eq:quantum double}
\eeq
which is exactly the parent Hamiltonian of the quantum double model~\cite{kitaev2003fault}.
Thus, gauging maps the symmetric product state to a ground state of the quantum double model. 
A similar conclusion holds for open boundary conditions.
In that case, the resulting boundary of the quantum double is analogous to the smooth boundary of the toric code.

\section{Domain walls in the 2d Abelian quantum double} 
\label{sec:2d_Abelian_general}

In this section, we generalize SPT-sewing from Section~\ref{sec:2d_toric_code} to the 2d quantum double model for any Abelian group $G$. The main result of this section is that all the domain walls can be obtained from $G\times G\times G$ SPT-sewing.

\subsection{Gauged-SPT domain walls}
\label{sec:gauged_SPT_Abelian}

To study the general SPT-sewing construction of domain walls, we first review the domain walls in the $G$ quantum double obtained by gauging $G$ symmetric SPT phases. These domain walls are extensively studied in Ref.~\cite{barkeshli2023codimension}; we refer to them as the \emph{gauged-SPT} domain walls.
Using the gauged-SPT domain walls and the folding trick, we will be able derive the properties of the general SPT-sewn domain walls.

Let us first pick a 1d line on the lattice, where there is a natural global $G$ symmetry action.
We then put a symmetric product state $\ket{+}^{\otimes V}$ on the lattice, and decorate the line with a 1d SPT with a $G$ symmetry. The gauging map will take this decorated state into a ground state of the quantum double model, with a gapped domain wall on the line.

If we fuse together two such domain walls obtained by gauging SPT$_1$ and SPT$_2$, the resulting domain wall should be equivalent to the one obtained from gauging the stacked SPTs (usually denoted as $\text{SPT}_1\boxtimes\text{SPT}_2$).
For all the SPT states under finite symmetry group, stacking certain number of copies of the SPT state results in a trivial SPT, meaning the set of domain walls obtained this way should have a group structure. Therefore, the gauged-SPT domain walls are always invertible, i.e., all the anyons can pass through them. 

Similarly to the invertible $S_{\psi}$ in the toric code, there could be an anyon exchange after passing through an invertible domain wall. Ref.~\cite{barkeshli2023codimension} shows that the gauged-SPT domain walls correspond to some ``flux-preserving'' maps on anyons.
When $G$ is Abelian, 
any anyon is specified by a pair $(e,m)$, where $e$ is the charge and $m$ is the flux, and this map is given by the slant product of the 2-cocycle\footnote{In general, a 1d SPT state under a $G$ symmetry corresponds to a 2-cocycle representative $\nu$, where $\nu$ is a map from $G\times G$ to $U(1)$ satisfying the cocycle condition. Its slant product $i_g \nu$ is a representation of $G$ defined as $i_g\nu(h):=\frac{\nu(g,h)}{\nu(h,g)}$ for $g,h\in G$.}
\beq
    (e,m)\rightarrow(e \cdot i_m\nu,m),
    \label{eq:flux-preserving map}
\eeq
where $e$ and $i_m \nu$ are representations of $G$, and $m$ is an element in $G$. 

Every finite Abelian group is isomorphic to a product of cyclic groups, hence $G=\prod_i \mathbb{Z}_{n_i}$. For such an Abelian group, we can understand the 2-cocycle $\nu$ via its corresponding SPT topological action; see Appendix \ref{appx:topologicalaction} for more details.
The general form of the topological action is given by a sum over terms that couple between two cyclic subgroups
\beq
    S_{\text{top}}=\int\sum_{i<j}\frac{k_{i,j}}{\text{gcd}(n_i,n_j)}A^{(i)}\cup A^{(j)},
    \label{eq:G SPT}
\eeq
where $A^{(i)}$ is the $\mathbb{Z}_{n_i}$-valued background gauge field, and $k_{i,j}=0,1,\cdots,\text{gcd}(n_i,n_j)-1$. This SPT has a decoration of $k_{i,j}$-th power of the fundamental charge of the $\mathbb{Z}_{n_i}$ symmetry on the $\mathbb{Z}_{n_j}$ symmetry defect.

An anyon of the Abelian quantum double is specify by $(e_1 e_2 \cdots, m_1 m_2 \cdots)$, where $e_i\in\{0,1,...,n_i -1\}$ is a representation of the $\mathbb{Z}_{n_i}$ subgroup and $m_j\in\{0,1,...,n_j -1\}$ is an element of $\mathbb{Z}_{n_j}$.
According to Eq.~\eqref{eq:flux-preserving map}, the invertible gauged-SPT domain wall given by the topological action in Eq.~\eqref{eq:G SPT} corresponds to the following flux-preserving map
\beq
    (e&,m)\rightarrow(e'_1 e'_2 \cdots, m),\\
    e'_i&=(e_i + \sum_{j}k_{i,j}m_j) \text{ mod }n_i.
\eeq

Let us discuss a simple example of $G=\mathbb{Z}_2\times \mathbb{Z}_2$.
The quantum double model corresponds to the two copies of the toric code.
Therefore, the anyons are the compositions of $e_1,m_1,e_2,m_2$ particles, where the index $i=1,2$ denotes the copy. The only nontrivial gauged-SPT domain wall is given by the topological action in Eq.~\eqref{eq:cluster top action}, and the corresponding flux-preserving map is
\beq
    e_i\leftrightarrow e_i,\quad m_1\leftrightarrow m_1 e_2, \quad m_2\leftrightarrow m_2 e_1.
    \label{eq:logical_CZ}
\eeq

\subsection{$G\times G$ SPT-sewing}
\label{sec:GxG domain wall}

We start the SPT-sewing construction by taking two disconnected square lattices with close-by boundaries and putting the product state $\ket{+}^{\otimes V}$ on them. As we have already demonstrated, the gauging map will take this state to a quantum double ground state, with two adjacent smooth boundaries that condense all the flux anyons. This realizes a non-invertible domain wall of a general quantum double model. To SPT-sew this domain wall into an invertible one, we can consider putting a $G\times G$ SPT on the lattice defect. 

The topological action ($2$-cocycle) of the group $G\times G$ is composed of terms that couple two cyclic subgroups. These terms can be divided into (i) the terms that couple two cyclic groups within each copy of $G$ and (ii) the terms that couple cyclic groups from different copies of $G$; these are called type-I and type-II actions, respectively~\cite{propitius1995topological}. We denote the two type-I $2$-cocycles as $\omega_1$ and $\omega_2$ (corresponding to the first and second copy of $G$) and the type-II cocycle as $\eta$. Given any $G\times G$ 2-cocycle representative $\nu$, we can uniquely decompose it as $\nu = \omega_1 \eta \omega_2$.

\begin{figure}[t]
    \centering
    \includegraphics[width=\columnwidth]{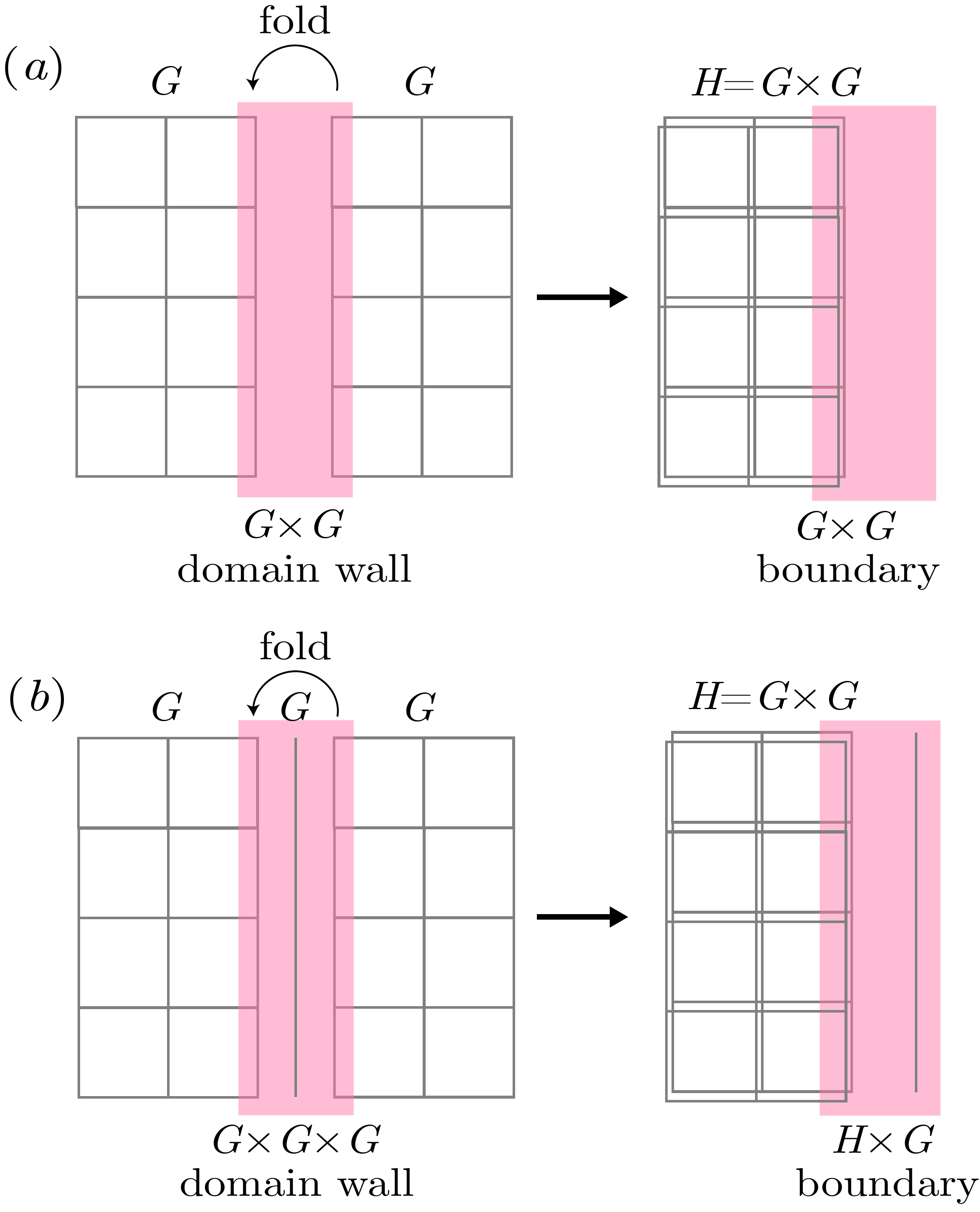}
    \caption{(a) After folding the system along the domain wall, the bulk becomes an $H=G\times G$ quantum double, and an SPT-sewn domain wall with a $G\times G$ symmetry becomes a gapped boundary, which is an $H$ gauged-SPT domain wall followed by a smooth boundary. (b) After folding, an SPT-sewn domain wall with a $G\times G$ symmetry becomes a $H\times G$ gauged-SPT domain wall followed by a smooth boundary. Note that by adding extra unentangled $G$ degrees of freedom along the boundary, we can reduce this scenario to the one in (a).}
    \label{fig:folding_1}
\end{figure}

We can use the folding trick to study the SPT-sewn domain walls with a $G\times G$ symmetry. Let us consider folding the lattice along the lattice defect, resulting in a square lattice where each edge is associated with the Hilbert space $\mathbb{C}(G)\otimes \mathbb{C}(G)$; see Fig.~\ref{fig:folding_1}(a). The SPT-sewing construction after folding is essentially decorating the boundary with a $G\times G$ SPT. The gauging map takes the bulk into a ground state of two copies of the $G$ quantum double, and the boundary can be regarded as a gauged-SPT domain wall with the $G\times G$ symmetry, fusing with a smooth boundary. After folding, the $G\times G$ 2-cocycle is given by $\nu_f = \omega_1 \eta \overline{\omega_2}$.

According to Eq.~\eqref{eq:flux-preserving map}, anyons passing through this gauged-SPT domain wall get mapped as follows
\beq
    ((e,e'),(m,m'))\rightarrow ((e,e')\cdot i_{mm'}\nu_f,(m,m')),
\eeq
where $e,e'$ (respectively, $m,m'$) are charges (fluxes) in the first and second copy.
Using the decomposition of the folded 2-cocycle $\nu_f=\omega_1\eta\overline{\omega_2}$ we can write 
\beq
    (e,e')\cdot i_{mm'}\nu_f = \left(e  (i_m\omega_1) (i_{m'}\eta), e' (i_{m}\eta) (i_{m'}\overline{\omega_2})\right),
\eeq
for any group elements $m,m'\in G$. The smooth boundary itself condenses all the flux anyons. Therefore, this boundary (viewed in the folded picture as a gauged-SPT domain wall followed by a smooth boundary) condenses the following anyons
\beq
    \left(\Big(  \overline{i_m\omega_1}\cdot \overline{i_{m'}\eta}, \overline{i_{m}\eta}\cdot i_{m'}\omega_2\Big), (m,m')\right),
\eeq
where we use $\bar{c}$ to denote the anti-particle of $c$ anyon. 

If we unfold this condensation picture, we conclude that the anyon  
\beq
     \left(  \overline{i_m\omega_1}\cdot i_{m'}\eta,  m\right),
\eeq
coming from the left of an SPT-sewn domain wall can be condensed together with another anyon
\beq
     \left(i_{m}\eta\cdot i_{m'}\omega_2,m'\right),
\eeq
coming from the right, for any $m,m'\in G$.

\begin{lemma}
    In a 2d quantum double (gauge theory) of a finite Abelian group $G$, an SPT-sewn domain wall constructed from a $G\times G$ SPT is invertible if and only if the type-II part of the 2-cocycle $\nu$ is non-degenerate.
 \label{lemma:1}
\end{lemma}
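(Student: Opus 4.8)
The plan is to use the condensation description derived just above the lemma statement, together with the standard characterization of invertible domain walls in terms of the anyon-condensation (Lagrangian-algebra) data after folding. Recall that, in the folded picture, the SPT-sewn domain wall is a gapped boundary of the $G\times G$ quantum double, and the set of anyons that condense on it is
\beq
    \mathcal{L} = \Big\{ \left(\big(\overline{i_m\omega_1}\cdot\overline{i_{m'}\eta},\ \overline{i_m\eta}\cdot i_{m'}\omega_2\big),\ (m,m')\right)\ :\ m,m'\in G \Big\}.
\eeq
An SPT-sewn domain wall is invertible precisely when every anyon of the quantum double on the left side can be transported through it, i.e.\ when for every anyon $(e,m)$ coming from the left there is a (necessarily unique) anyon coming from the right that condenses with it at the wall. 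In the unfolded formulation this is the condition derived above: the anyon $\left(\overline{i_m\omega_1}\cdot i_{m'}\eta,\ m\right)$ from the left condenses with $\left(i_m\eta\cdot i_{m'}\omega_2,\ m'\right)$ from the right. So invertibility is equivalent to the statement that the assignment $(m,m')\mapsto$ (left anyon) is a bijection onto the set of all anyons of the $G$-quantum double, and likewise for the right.

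The key computational step is to unpack this bijectivity condition. Fix the left flux to be $m$; then as $m'$ ranges over $G$ the left charge ranges over $\overline{i_m\omega_1}\cdot\{\, i_{m'}\eta : m'\in G\,\}$, which is a coset of the subgroup $\{\, i_{m'}\eta : m'\in G\,\}\subseteq \widehat{G}$ of $\widehat{G}=\mathrm{Rep}(G)$. For the left anyons to exhaust all charges at fixed flux $m$, this subgroup must be all of $\widehat{G}$, i.e.\ the map $m'\mapsto i_{m'}\eta$ from $G$ to $\widehat{G}$ must be \emph{surjective}. Since $|G|=|\widehat{G}|$, surjectivity is equivalent to injectivity, i.e.\ to $i_{m'}\eta$ being trivial only when $m'=e$. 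This is exactly the statement that the type-II pairing $\langle m,m'\rangle := i_{m}\eta(m')$ (a bicharacter on $G\times G$ built from $\eta$) is \emph{non-degenerate}. Conversely, if $\eta$ is non-degenerate, then at each fixed $m$ the left charges sweep out all of $\widehat{G}$, and varying $m$ as well gives a bijection $G\times \widehat{G}\cong G\times G$ onto the full anyon content (the count $|G|^2$ matches $|\mathcal{L}|$), so every left anyon lifts through the wall; the same argument applied to the right anyons $\left(i_m\eta\cdot i_{m'}\omega_2,m'\right)$ works because $m\mapsto i_m\eta$ is simultaneously a bijection. Hence invertibility $\iff$ non-degeneracy of the type-II part, and the type-I cocycles $\omega_1,\omega_2$ only shift the charge labels by the flux-dependent translations $\overline{i_m\omega_1}$ and $i_{m'}\omega_2$ and play no role in invertibility.

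I would organize the write-up as: (1) recall that invertibility of a domain wall is equivalent, via folding, to the condensed set $\mathcal{L}$ being the graph of a bijection between the left and right anyon lattices (equivalently, $\mathcal{L}$ is Lagrangian and projects isomorphically onto each factor); (2) from the explicit form of $\mathcal{L}$, reduce the left-projection being a bijection to surjectivity of $m'\mapsto i_{m'}\eta$; (3) identify this with non-degeneracy of the type-II bicharacter and note $|G|=|\widehat G|$ upgrades surjectivity to bijectivity; (4) check the right-projection is handled by the same non-degeneracy; and (5) observe the $\omega_i$-dependence drops out. The main obstacle I anticipate is step (1): making precise — without invoking the full machinery of Lagrangian algebras — why ``every left anyon can pass through'' is the correct lattice-level definition of invertibility and why it is symmetric in left/right (so that checking one projection suffices once the anyon counts match). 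The cleanest route is probably to argue directly that a domain wall with a condensable set $\mathcal L$ of size $|G|^2$ whose left-projection is injective is automatically invertible, since an injective map between the two anyon lattices of equal cardinality $|G|^2$ is a bijection, and then fusion with the inverse wall collapses $\mathcal L$ to the diagonal; the rest is the bicharacter bookkeeping, which is routine.
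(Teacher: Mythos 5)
Your proposal is correct and follows essentially the same route as the paper: both read off invertibility from the condensed anyon pairs in the folded/sewn picture and reduce it to injectivity (equivalently, since $|G|=|\widehat{G}|$, surjectivity) of the slant-product map $m\mapsto i_m\eta$, i.e.\ non-degeneracy of the type-II bicharacter. The only cosmetic difference is that the paper discharges $\omega_1,\omega_2$ by regarding the wall as a fusion of the $\eta$-sewn wall with two always-invertible gauged-SPT walls, whereas you keep them and observe that they merely translate the charge labels by flux-dependent shifts, which does not affect bijectivity.
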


\begin{proof} We can further regard this SPT-sewn domain wall as a fusing of a gauged-SPT domain wall characterized by $\omega_1$ with an SPT-sewn domain wall characterized by $\eta_{12}$ and another gauged-SPT domain wall characterized by $\omega_2$. Since the two gauged-SPT domain walls are always invertible, and their corresponding anyon maps are known, we can set $\omega_1=\omega_2=1$ and focus on the contribution from the SPT-sewing part. 

From above, this domain wall condenses an anyon $(i_{m'}\eta,  m)$ coming from the left of an SPT-sewn domain wall can be condensed together with another anyon $(i_{m}\eta,m')$ coming from the right, for any $m,m'\in G$. Therefore, this domain wall is always invertible, and corresponds to some charge-flux exchange map, unless there exists an element $g\in G$, such that $i_g \eta\equiv 1$.
\end{proof}

The anyon map given by such an invertible SPT-sewn domain wall characterized by a 2-cocycle $\nu=\omega_1 \eta \omega_2$ can be understood as follows. Suppose an anyon $(e,m)$ is coming from the left bulk. It will be mapped sequentially as
\beq
    (e,m)\overset{\omega_1}{\longrightarrow}(e',m)\overset{\eta }{\longrightarrow}(e'',m'')\overset{\omega_2}{\longrightarrow}(e''',m''),
\eeq
where the first and third arrows are some flux-preserving maps given by Eq.~\eqref{eq:flux-preserving map}, and the second arrow is a charge-flux exchange map given above.

\subsection{$G\times G\times G$ SPT-sewing}
\label{sec:GxGxG_SPT_sewing}
The invertible domain walls constructed above are richer than the gauged-SPT domain walls. However, there is still a large number of invertible domain walls that cannot be constructed from sewing with $G\times G$ SPTs. Thus, we now consider SPT-sewing with $G\times G\times G$ symmetry, and show that this construction gives rise to all the invertible domain walls for Abelian quantum double models.

Let us first consider $G=\mathbb{Z}_2^n$. The quantum double model for $G$ is therefore $n$ copies of the toric code. The anyons are compositions of the charge $e_i$ and flux $m_j$, which are their own anti-particles. The braiding between $e_i$ and $e_j$ (or $m_i$ and $m_j$) is trivial, whereas the braiding between $e_i$ and $m_j$ gives rise to a phase $(-1)^{\delta_{ij}}$, where $i,j=1,\cdots,n$. This is the same as the algebra of the Pauli operators in an $n$-qubit system, by equating $e_i$ with Pauli $Z_i$, $m_j$ with Pauli $X_j$, and equating the braiding between anyons with the commutator between two operators $UVU^{\dagger}V^{\dagger}$.

Furthermore, as we pointed out, the invertible domain walls are in a one-to-one correspondence with the maps on the anyons that are compatible with the anyon fusion and braiding, and are thus called braided autoequivalences~\cite{barkeshli2019symmetry,huston2023composing,xu20242}. It is straightforward to see that these anyon maps correspond to automorphisms of the $n$-qubit Pauli group. To be more precise, the group form by the invertible domain walls in $\mathbb{Z}_2^n$ quantum double is the quotient of the Clifford group by the Pauli group and phases (which is isomorphic to the symplectic group of degree $2n$ over $\mathbb{Z}_2$).

\begin{lemma}
    In a 2d quantum double (gauge theory) of $G=\Z_2^n$, all the invertible domain walls (invertible defects) can be constructed from SPT-sewing with $G\times G\times G$ symmetry.
 \label{lemma:2}
\end{lemma}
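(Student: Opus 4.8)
The plan is to use the dictionary recalled above: the invertible domain walls of $D(\Z_2^n)$ form the group $\mathrm{Sp}(2n,\Z_2)$ (the Clifford group modulo Pauli operators and phases), so it suffices to realize every element of $\mathrm{Sp}(2n,\Z_2)$ as a single $G\times G\times G$ SPT-sewn wall. Concretely, I will track the symplectic transformation induced on the anyons --- equivalently, on the $n$-qubit Pauli group under $e_i\leftrightarrow Z_i$, $m_i\leftrightarrow X_i$ --- by a general $G^3$ SPT decoration, and show it can be made arbitrary.

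First I would catalogue the ingredients coming from $G\times G$ SPT-sewing, which is the degenerate case $G_2$ trivial of the $G^3$ construction. By Lemma~\ref{lemma:1}, such a wall with non-degenerate type-II cocycle $\eta$ (of bicharacter matrix $N\in\mathrm{GL}(n,\Z_2)$) is invertible, with anyon map the composition of the charge--flux exchange $(e,m)\mapsto(N^{T}m,\,N^{-1}e)$ with flux-preserving maps $(e,m)\mapsto(e+Bm,\,m)$ ($B$ symmetric, zero-diagonal) supplied by the type-I data, as computed just before Lemma~\ref{lemma:1} and in Eqs.~\eqref{eq:flux-preserving map}--\eqref{eq:G SPT}. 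Combining and iterating these already generates a large subgroup of $\mathrm{Sp}(2n,\Z_2)$ --- the total exchange $H^{\otimes n}$, every block-diagonal basis change $\mathrm{diag}(A,A^{-T})$, every zero-diagonal flux-preserving map. The toric-code example warns, however, that this is not everything: the identity (transparent) wall is \emph{not} obtained from any $G\times G$ SPT and needs a third $\Z_2$ factor, cf.\ Eq.~\eqref{eq:strivial}.

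The third copy closes the gap. For a target wall I would decorate the defect line --- now carrying $G_1\times G_2\times G_3$ with $G_1,G_3$ the two bulk copies and $G_2=\Z_2^n$ an auxiliary copy initialized in $\ket{+}$ --- with a $G^3$ SPT assembled from type-I terms $\omega_1,\omega_2,\omega_3$ (within each factor) and type-II terms $\eta_{12},\eta_{13},\eta_{23}$ (between pairs), and then gauge. Folding along the defect (Fig.~\ref{fig:folding_1}) turns this into a gauged-SPT wall of the $\Z_2^{3n}$ gauge theory followed by a smooth boundary, whose Lagrangian algebra --- hence the induced anyon map --- is read off from the slant products of the folded cocycle exactly as in Lemma~\ref{lemma:1}. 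The new phenomenon is that the auxiliary copy lets one chain exchanges: with $\eta_{12},\eta_{23}$ non-degenerate and everything else trivial a left anyon is threaded through $G_2$ and out $G_3$ unchanged, which is the $n$-fold analogue of Eq.~\eqref{eq:strivial} and realizes the identity wall, while inserting further type-I twists on $G_2$ --- twists a single $\Z_2$ cannot carry since $H^2(\Z_2,U(1))=0$, but $\Z_2^n$ can --- dresses this with the phase-type and controlled-type Clifford generators. Varying $(\omega_i,\eta_{ij})$ over all admissible cocycles, one checks the induced map ranges over all of $\mathrm{Sp}(2n,\Z_2)$.

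Since every element of $\mathrm{Sp}(2n,\Z_2)$, i.e.\ every invertible domain wall of $D(\Z_2^n)$, is thus realized, the lemma follows. The main obstacle is the computation behind the previous paragraph: (a) expressing the induced symplectic transformation as an explicit function of the cocycle data $(\omega_i,\eta_{ij})$ via folding and the slant-product formula $(e,m)\mapsto(e\cdot i_m\nu_f,\,m)$; (b) verifying that this assignment is surjective onto $\mathrm{Sp}(2n,\Z_2)$, for which it suffices to hit a generating set --- the exchanges and flux-preserving maps above together with the identity wall and one ``diagonal'' generator; and (c) checking that gauging the auxiliary $\Z_2^n$ and condensing at the smooth boundary leaves no residual deconfined excitations, so that the result is genuinely a domain wall of $D(\Z_2^n)$ with itself rather than between distinct theories. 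A useful structural check throughout is that fusing two $G\times G\times G$ sewn walls gives another one: shrink the intervening bulk strip, stack the two SPTs on the merged defect line, and discard the $G$-copies left in an unentangled $\ket{+}$-state, each of which gauges to a trivial 1d theory.
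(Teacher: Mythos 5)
Your setup agrees with the paper's: you identify invertible walls with the quotient Clifford group acting on $e_i\sim Z_i$, $m_i\sim X_i$, and you propose to compute the anyon map of a sewn wall by folding plus the slant-product/condensation analysis of Lemma~\ref{lemma:1}. The gap is in how you obtain surjectivity. You reduce to hitting a generating set of the symplectic group and then invoke closure under fusion of single $G\times G\times G$ sewn walls. But the lemma is precisely the statement that \emph{every} element is realized by \emph{one} sewing, so this closure claim carries the full weight of the proof, and your sketch does not establish it: after shrinking the intervening strip, the merged defect line carries five or more copies of $G$ whose joint state is the two SPTs entangled through the gauged strip; the surplus chains are not generically left in unentangled $\ket{+}$ states, and re-expressing the merged decoration as a genuine $G^3$ SPT is exactly the nontrivial step (already for $n=1$, fusing two $S_{\psi}$ walls must reproduce the identity wall, which by Eq.~\eqref{eq:strivial} requires a specific three-copy SPT rather than a discarded chain).

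There is also a concrete error in your catalogue of single-wall ingredients: a type-I twist $\Gamma^b$ on the middle copy sandwiched between the two exchanges yields a flux-sector transvection $(e,m)\mapsto(e,\,m+\Gamma^b e)$ with $\Gamma^b$ symmetric and zero-diagonal, not the controlled-type maps $\mathrm{diag}(A,A^{-T})$; those arise instead from choosing \emph{different} nondegenerate bicharacters for $\eta_{12}$ and $\eta_{23}$, and the partial-Hadamard elements require routing some $\Z_2$ directions through the middle copy and others directly between the outer copies. This is what the paper's proof actually does: it invokes the canonical form $U=\Omega\Pi\Omega'$ of Eq.~\eqref{eq:omega pi omega} and, for each symplectic element, writes down a single explicit $G^3$ topological action (type-I data on the outer copies for $\Omega,\Omega'$; type-II couplings $A_i\cup B_{s(j)}$ and $B_{s(j)}\cup C_k$ for the $h_j=0$ directions versus direct $A_i\cup C_k$ couplings for $h_j=1$, with the CX layers encoded in the bicharacter matrices $\Delta'$ and $\overline{\Delta}$), and then verifies by the folded condensation computation of Appendix~\ref{app:proof_lemma2} that the induced map is exactly the canonical one—so no fusion-closure step is ever needed. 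To repair your argument you must either prove the closure statement independently (which appears to require comparable machinery) or, as in the paper, exhibit an explicit cocycle realizing every group element; the canonical form is the device that makes that direct surjectivity check tractable.
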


We sketch the proof idea; see Appendix~\ref{app:proof} for details.
Let us start from the conjugation action on the Pauli $Z_i$ and $X_j$ operators by any $n$-qubit Clifford group element.
According to Theorem~1 in Ref.~\cite{bravyi2021hadamard}, with some adjustments, any action on the Pauli operators is the conjugation by the following canonical unitary operator $U = \Omega \Pi \Omega'$, where
\beq
    \Omega =& \prod_{i,j=1}^n CZ_{i,j}^{\Gamma_{i,j}},\quad \Omega' = \prod_{i,j=1}^n CZ_{i,j}^{\Gamma'_{i,j}},\\
    \Pi =& \left(\prod_{i\neq j} CX_{i,j}^{\Delta_{i,j}}\right)S\left(\prod_{i=1}^n H_i^{h_i}\right)\left(\prod_{i\neq j} CX_{i,j}^{\Delta'_{i,j}}\right),
    \label{eq:omega pi omega}
\eeq
$h_i, \Gamma_{i,j}, \Gamma'_{i,j}=0,1$, and $S$ is a permutation operator for $n$ qubits. The matrices $\Delta, \Delta'$ are upper-triangular and unit-diagonal,
i.e., $\Delta_{i,j}=\Delta'_{i,j}=0$ for $i>j$ and $\Delta_{i,i}=\Delta'_{i,i} = 1$ for all $i$. The product of $CX$ gates is ordered, such that the control qubit index increases from the left to the right. For
example, when $n=4$,
$\prod_{i\neq j }CX_{i,j}^{\Delta_{i,j}} = CX_{1,2}^{\Delta_{1,2}} CX_{1,3}^{\Delta_{1,3}} CX_{1,4}^{\Delta_{1,4}} CX_{2,3}^{\Delta_{2,3}} CX_{2,4}^{\Delta_{2,4}} CX_{3,4}^{\Delta_{3,4}}.
$

As we argued, the conjugation by the Clifford operators on the Pauli $Z_i$ and $X_j$ operators corresponds to the braided autoequivalences of anyons $e_i$ and $m_j$. Thus, the unitaries $\Omega$ and $\Omega'$ correspond to the flux-preserving maps given by the gauged-SPT domain walls. We claim that the following SPT-sewn domain wall with $G\times G\times G$ symmetry gives rise to the corresponding anyon map 
\beq
    S_{\text{top}}=&\int \sum_{i,j}\frac{\Gamma_{i,j}}{2} C_{i}\cup C_{j}+\sum_{i,j}\frac{\Gamma'_{i,j}}{2} A_{i}\cup A_{j}\\
    &+\sum_{\substack{i,j,k,\\
    h_j=0}}\frac{\Delta'_{i,j}}{2}A_i \cup B_{s(j)} + \frac{\overline{\Delta}_{k,s(j)}}{2}B_{s(j)}\cup C_{k}\\
    &+\sum_{\substack{i,j,k,\\
    h_j=1}}\frac{\Delta'_{i,j}\overline{\Delta}_{k,s(j)}}{2}A_i \cup C_{k},
\eeq
where $\overline{\Delta}$ is the inverse matrix. $A_i$, $B_i$, and $C_i$ are the background gauge fields for the $i$-th $\mathbb{Z}_2$ subgroup in the first, second, and third $G$ symmetry.

To study the anyon map given by the above topological action, let us fold the topological order along the domain wall, such that the bulk becomes a quantum double of a group $G\times G$, and the domain wall becomes a gapped boundary, as shown in Fig.~\ref{fig:folding_1}(b). The set of anyons condensing on this boundary can be derived similarly as in Section~\ref{sec:GxG domain wall}. After unfolding, the anyon map is essentially from the component of a condensed anyon that is in the first copy, to the component that is in the second copy of the $G$ quantum double. 

In Appendix~\ref{app:proof_lemma2}, we derive the anyon map from the condensed anyons on the folded boundary of $G\times G$ quantum double, and show that it exactly corresponds to the conjugation of the canonical Clifford operator in Eq.~\eqref{eq:omega pi omega}. In fact, we can generalize the above result to an arbitrary finite Abelian group. We present it as the following theorem and defer the proof to Appendix~\ref{app:proof_theorem1}.

\begin{theorem}
    In a 2d quantum double (gauge theory) of a finite Abelian group $G$, all the invertible domain walls can be constructed from SPT-sewing with $G\times G\times G$ symmetry.
 \label{theorem:1}
\end{theorem}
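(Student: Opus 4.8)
\noindent\emph{Sketch of proof (details deferred to Appendix~\ref{app:proof_theorem1}).}
The plan is to reduce the statement to a ``canonical form'' theorem for the group of invertible-domain-wall maps, generalizing Theorem~1 of Ref.~\cite{bravyi2021hadamard} (and the proof of Lemma~\ref{lemma:2}) from $G=\mathbb{Z}_2^n$ to an arbitrary finite Abelian $G=\prod_i\mathbb{Z}_{n_i}$, and then to realize each factor of that canonical form by an explicit SPT-sewing wall and fuse them into a single $G\times G\times G$ SPT-sewn wall. Recall (Section~\ref{sec:GxGxG_SPT_sewing}) that an invertible domain wall is the same datum as a braided autoequivalence of $D(G)$, i.e.\ an automorphism $\varphi$ of the anyon group $A=\widehat G\oplus G$ preserving the braiding bicharacter and the topological spins (for $G=\mathbb{Z}_2^n$, an element of $Sp_{2n}(\mathbb{Z}_2)$ as recalled above). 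Two explicit families of such $\varphi$ are already available: (i) the \emph{flux-preserving} maps $(e,m)\mapsto(e\cdot\beta(m),m)$ with $\beta:G\to\widehat G$ a symmetric bihomomorphism, which by Eq.~\eqref{eq:flux-preserving map} together with the general topological action~\eqref{eq:G SPT} are exactly the maps realized by gauged-SPT domain walls; and (ii) the maps realized by a $G\times G$ SPT-sewn wall whose type-II cocycle $\eta$ is non-degenerate, which by Lemma~\ref{lemma:1} and the unfolding computation preceding it act as an ``electromagnetic-duality'' exchange of the charge and flux sectors (through the isomorphism $G\cong\widehat G$ determined by $\eta$), pre- and post-composed with flux-preserving maps.

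First I would prove the \emph{canonical form}: every invertible-domain-wall map factors as $\varphi=\Omega\,\Pi\,\Omega'$, where $\Omega,\Omega'$ are flux-preserving maps and $\Pi=\Delta\, S\, H_{G_H}\,\Delta'$ is assembled from a pair of ``CNOT-type'' maps $\Delta,\Delta'$ (unipotent automorphisms of $G$, acting on fluxes geometrically and on charges contragrediently, in the upper-triangular order of Eq.~\eqref{eq:omega pi omega}), an $\mathrm{Aut}(G)$/permutation factor $S$, and a \emph{partial} electromagnetic duality $H_{G_H}$ that swaps charge with flux only on a direct summand $G_H\le G$. Over $\mathbb{Z}_2^n$ this is precisely Ref.~\cite{bravyi2021hadamard}; for general $G$ the same Gaussian-elimination/Bruhat argument applies to the block matrix $\left(\begin{smallmatrix}a&b\\ c&d\end{smallmatrix}\right)$ of $\varphi$ with respect to the charge/flux splitting: column operations by $\Omega'$ and $S$ reduce the ``flux$\to$charge'' block $c$ to a partial isomorphism whose support singles out $G_H$, one peels off $H_{G_H}$, and the remaining block-triangular map is absorbed into $\Omega$ and the $\Delta$-factors after checking that the orthogonality relations for the braiding/spin form constrain the blocks exactly as required. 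I expect \textbf{this step to be the main obstacle}: the arithmetic must be carried out over the several rings $\mathbb{Z}_{n_i}$ simultaneously, tracking Smith-normal-form data and respecting both the prescribed CNOT ordering and the non-degeneracy hypothesis of Lemma~\ref{lemma:1}, which is the point where the single uniform case $\mathbb{Z}_2^n$ is genuinely easier than the mixed case.

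Given the canonical form, the realization follows the $\mathbb{Z}_2^n$ construction: $\Omega$ and $\Omega'$ become gauged-SPT walls with the topological action~\eqref{eq:G SPT} on the third and first copies (the data $\Gamma,\Gamma'$); the partial duality $H_{G_H}$ becomes a type-II cocycle coupling the first and third copies that is non-degenerate on the $G_H$-components (Lemma~\ref{lemma:1} applied summand-wise); and $\Delta,S,\Delta'$ acting on the complementary components $G_{\bar H}$ are implemented by ``routing through'' the second copy via the mixed type-II terms $A_i\cup B_{s(j)}$ and $B_{s(j)}\cup C_k$, with the inverse-matrix factor $\overline{\Delta}$ appearing in the displayed $G\times G\times G$ action precisely arranging that this $B$-mediated coupling is non-degenerate on the relevant sub-summand, so that Lemma~\ref{lemma:1} again certifies invertibility of that piece. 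Assembling all the terms yields a single $G\times G\times G$ topological action of the form displayed in Section~\ref{sec:GxGxG_SPT_sewing}, with the $\tfrac{1}{\gcd(n_i,n_j)}$ coefficients of Eq.~\eqref{eq:G SPT} in place of the $\tfrac12$'s.

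Finally I would fold the topological order along this wall, turning the bulk into a $G\times G$ quantum double and the wall into a gapped boundary (Fig.~\ref{fig:folding_1}(b)), the third copy becoming an ancillary unentangled $G$-register along the boundary that reduces the picture to Fig.~\ref{fig:folding_1}(a). Computing the Lagrangian subgroup of anyons that condense on this boundary by the same slant-product bookkeeping as in Section~\ref{sec:GxG domain wall} --- now with three background gauge fields --- and unfolding, one reads off that the induced anyon map is exactly $\Omega\,\Pi\,\Omega'=\varphi$. Since $\varphi$ was an arbitrary invertible-domain-wall map, this proves that every invertible domain wall of the $G$-quantum double arises from $G\times G\times G$ SPT-sewing; the routine verifications behind the last two paragraphs are the content of Appendix~\ref{app:proof_theorem1}.
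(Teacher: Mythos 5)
Your strategy is genuinely different from the paper's, and it stalls exactly where you predict it will. The paper does \emph{not} generalize the Bravyi--Maslov canonical form $U=\Omega\Pi\Omega'$ beyond $G=\mathbb{Z}_2^n$; that route is used only for Lemma~\ref{lemma:2}. Instead, Appendix~\ref{app:proof_theorem1} proves an intermediate statement (Theorem~\ref{theorem:2}): by the classification of gapped domain walls of a $G$ quantum double as gauged 1d phases with $G\times G$ symmetry, labelled by a pair $(K,\nu)$ with $K\subseteq G\times G$ the unbroken subgroup and $[\nu]\in \mathcal{H}^2(K,U(1))$~\cite{beigi2011quantum}, \emph{every} domain wall (invertible or not) arises from gauging such a 1d phase. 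Theorem~\ref{theorem:1} then follows by showing that any $(K,\nu)$ yielding an invertible wall has a partition function of the specific form \eqref{eq:partition_func}--\eqref{eq:delta_func} (delta-function constraints pairing the $A$- and $C$-fields plus a bilinear topological action), and that precisely such partition functions are obtained by projecting out the middle register of a $G\times G\times G$ SPT --- a projection that can only make it easier for anyons to traverse the wall, so invertibility and the anyon map are inherited by the un-projected SPT-sewn wall. No canonical form of the braided-autoequivalence group is ever needed.

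As written, your proposal has a genuine gap at its first step: the factorization $\varphi=\Omega\,\Pi\,\Omega'$ with a single partial electromagnetic duality supported on a direct summand $G_H$ is asserted, not proved, and for mixed $G=\prod_i\mathbb{Z}_{n_i}$ it is not a routine extension of Ref.~\cite{bravyi2021hadamard}: the Gaussian-elimination/Bruhat argument must be carried out over rings with zero divisors, the ``flux-to-charge'' block of $\varphi$ need not have a free image, and arranging its support to be a direct summand on which the duality is total (so that Lemma~\ref{lemma:1} can certify invertibility summand-wise) requires a Smith-normal-form analysis you do not supply. Everything downstream --- realizing each factor by cup-product actions with $1/\gcd(n_i,n_j)$ coefficients, fusing them into one $G\times G\times G$ action, and folding/unfolding to read off the anyon map --- mirrors Appendix~\ref{app:proof_lemma2} and would plausibly go through once the decomposition is in hand; but until that decomposition is established the argument is incomplete. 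If you want to keep your route, the missing ingredient is precisely a Bruhat-type decomposition of the orthogonal group of $\widehat G\oplus G$ preserving the braiding and spins; the paper's $(K,\nu)$ route is designed to sidestep exactly that piece of algebra.
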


\subsection{Example: $G=\mathbb{Z}_2 \times \mathbb{Z}_2$}

The $\mathbb{Z}_2 \times \mathbb{Z}_2$ quantum double is essentially the two copies of the $\mathbb{Z}_2$ toric codes.
It is also equivalent to the 2d color code by introducing ancilla qubits and applying local unitary transformations~\cite{bombin2006topological,bombin2007exact,kubica2015unfolding}.
The 2d color code has been extensively studied in the quantum information community due to its capability to support fault-tolerant quantum operations, in particular transversal logical gates~\cite{bombin2006topological}.
Recently, it was shown that by applying a non-destructive sequence of measurements (anyon condensations), fault-tolerant logical gates in the second level of the Clifford hierarchy can be realized~\cite{kesselring2024anyon, davydova2024quantum}.  
In addition to its computational advantages, the classification of gapped boundaries, domain walls, and twist defects of the 2d color code has also been studied~\cite{Kesselring2018boundariestwist, Bridgeman2017}.
In this subsection, we provide a new perspective on the gapped domain walls in this topological order through SPT-sewing.

The correspondence between anyons in the two copies of the toric code and the anyons in the 2d color code are listed in the following table.
\begin{table}[h]
\centering
\resizebox{0.7\columnwidth}{!}{
\begin{tabular}{|c|c|c|}
\hline
$e_1$ & $e_1 e_2$ & $e_2$ \\ \hline
$e_1 m_2$ & $f_1 f_2$ & $m_1 e_2$\\ \hline
$m_2$ & $m_1 m_2$ & $m_1$\\ \hline
\end{tabular}
\ $\longleftrightarrow$\ \
\begin{tabular}{|c|c|c|}
\hline
$r_x$ & $g_x$ & $b_x$ \\ \hline
$r_y$ & $g_y$ & $b_y$\\ \hline
$r_z$ & $g_z$ & $b_z$\\ \hline
\end{tabular}
}
\end{table}

The braided auto-equivalences in the model are given by the permutations of the rows and columns, corresponding to $S_3\times S_3$, and the transpose of the table, corresponding to $\mathbb{Z}_2$.
Thus, the braided auto-equivalences together form the group $(S_3 \times S_3) \rtimes \mathbb{Z}_2$, with 72 elements. We now list some examples of gauged-SPT and SPT-sewing constructions of the domain walls, and their corresponding anyon maps.

As we showed in Eq.~\eqref{eq:logical_CZ}, the gauged-SPT domain wall with a $\mathbb{Z}_2\times \mathbb{Z}_2$ symmetry given by the topological action
\beq
    S_{\text{top}}=\frac{1}{2}\int A_1\cup A_2
\eeq
gives rise to a flux-preserving map
\beq
    e_i\leftrightarrow e_i,\quad m_1\leftrightarrow m_1 e_2, \quad m_2\leftrightarrow m_2 e_1.
\eeq

The SPT-sewn domain walls with the $(\mathbb{Z}_2\times \mathbb{Z}_2)^2$ topological actions
\beq
    S_{\text{top}}&=\frac{1}{2}\int A_1\cup B_1 + A_2\cup B_2,\\
    S_{\text{top}}&=\frac{1}{2}\int A_1\cup B_2 + A_2\cup B_1,
\eeq
give rise to the following charge-flux exchange maps
\beq
    e_1\leftrightarrow m_1,\quad e_2\leftrightarrow m_2,\\
    e_1\leftrightarrow m_2,\quad e_2\leftrightarrow m_1.
\eeq

The above domain walls, and all the other invertible domain walls can be constructed from SPT-sewing with $(\mathbb{Z}_2\times \mathbb{Z}_2)^3$ symmetry. For instance, from the topological actions
\beq
    S_{\text{top}}=\frac{1}{2}\int A_1\cup B_2 + A_2\cup B_1 +\sum_i  B_i\cup C_i,\\
    S_{\text{top}}=\frac{1}{2} \int A_1\cup B_2 + \sum_i A_2\cup B_i + B_i\cup C_i,\\
\eeq
we can obtain the domain walls with the following anyon maps
\beq
    e_1&\leftrightarrow e_2,\quad m_1\leftrightarrow m_2; \\
    m_1&\rightarrow m_2,\ m_2\rightarrow m_1m_2,\ e_2\rightarrow e_1,\ e_1\rightarrow e_1 e_2.
\eeq

\section{Non-Abelian example: $S_3$ quantum double}
\label{sec:non-Abelian}

In this section, we discuss the construction of the domain walls in the non-Abelian quantum double models by taking $G=S_3$ as an example. In Section~\ref{sec:S_3_quantum_double}, we provide a brief introduction to the $S_3$ quantum double, including the anyonic excitations and the corresponding ribbon operators. Then, we discuss the gauged-SPT and SPT-sewn domain wall constructions for the $S_3$ quantum double. Specifically, in Section~\ref{sec:gauged_SPT_non-Abelian}, we show that there is no nontrivial gauged-SPT domain wall with an $S_3$ symmetry. In Section~\ref{sec:S3xS3_SPT_sewing}, we study the SPT-sewn domain wall with an $S_3\times S_3$ symmetry, and show that it is not invertible. In Section~\ref{sec:S3xRepxS3_SPT_sewing}, we show that the invertible domain walls, especially the $C\leftrightarrow F$ domain wall~\cite{beigi2011quantum,xu20242}, can be constructed from SPT-sewing with 
an $S_3 \times \mathrm{Rep}(S_3) \times S_3$ symmetry.
Based on these results, we conjecture that the invertible domain walls in a generic quantum double model can be constructed via $G \times \text{Rep}(G) \times G$ SPT-sewing.

\subsection{$S_3$ quantum double}
\label{sec:S_3_quantum_double}
The group of permutations on a set of three elements $S_3$ is isomorphic to the dihedral group $D_3$, the symmetry of an equilateral triangle. There are two generators $c$ and $t$ satisfying the following relations
\begin{align}
    c^3 = t^2 = e,\quad tct = c^2,
\end{align}
where $e$ is the identity element.

The local Hilbert space $\mathbb{C}(S_3)$ has an orthonormal basis $\{\ket{g}: g\in S_3\}$. Furthermore, we define the following operators
\begin{equation}
\begin{aligned}
    &L_+^g |h\rangle = |gh\rangle, \quad \quad T_+^g | h \rangle = \delta_{g,h} |h\rangle, \\
    &L_-^g |h\rangle = |h g^{-1}\rangle, \quad T_-^g | h \rangle = \delta_{g^{-1},h} |h\rangle,  \label{eq:hopf_1}
\end{aligned}
\end{equation}
where $g, h \in S_3$. $L_{\pm}^g$ operators can be understood as generalized Pauli $X$ operators, while the linear combinations of $T_{\pm}^g$ can be understood as generalized Pauli $Z$ operators. For the convenience of later discussion, we define the following operators
\begin{equation}
    \begin{aligned}
        \widetilde{Z}_{L} &= \sum_{j,k} e^{2\pi i j /3} T_{+}^{c^j t^k}, \\
        \widetilde{Z}_R &= \sum_{j,k} e^{2 \pi i j /3} T_{+}^{t^k c^j}, \\
        Z &= \sum_{j, k} (-1)^{k} T_{+}^{c^j t^k}. \label{eq:generalized_Pauli}
    \end{aligned}
\end{equation}
For more details about the operators in $\mathbb{C}(S_3)$, see Appendix~\ref{appx:gauging_S3}.

The $S_3$ quantum double model can be defined on a square lattice by associating each edge with the Hilbert space $\mathbb{C}(S_3)$ and taking the Hamiltonian 
\beq
H=-\sum_{v\in V} A_v -\sum_{p\in F} B_p,
\eeq
where $A_v=\frac{1}{|G|}\sum_g A_v^g$. The commuting local projectors $A_v^g$ and $B_p$ are defined in Fig.~\ref{fig:gauging_2}. The anyons in this non-Abelian quantum double are given by the pair $([g],\rho)$, where the flux $[g]=\{hgh^{-1}|h\in S_3\}$ is a conjugacy class, and the charge $\rho$ is an irreducible representation of the centralizer $Z_g=\{h|hg=gh\}$. They are shown in the following table~\cite{beigi2011quantum}
\begin{align*}
\begin{array}{|c|ccc|cc|ccc|}
\hline
      & A & B & C & D & E  & F & G & H   \\
\hline
\text{flux} & e & e & e & [t] & [t] &  [c] & [c] & [c]   \\
 \text{charge}     & \mathbf{1} & s & \pi & \mathbf{1} & -\mathbf{1} &  \mathbf{1} & \omega & \omega^{\ast}  \\
\hline
\end{array}
\end{align*}
where $s$ denotes a one-dimensional sign representation
\beq
    \rho_s(c)=1,\quad \rho_s(t)=-1,
    \label{eq:rep_s}
\eeq
and $\pi$ is a two-dimensional representation
\beq
    \rho_{\pi}(c)=\begin{pmatrix}
        e^{\frac{2\pi i}{3}} & 0 \\
        0 & e^{\frac{-2\pi i}{3}} 
    \end{pmatrix},\quad \rho_{\pi}(t)=\begin{pmatrix}
        0 & 1 \\
        1 & 0 
    \end{pmatrix}.
\eeq
Moreover, $-\mathbf{1}$ denotes the nontrivial representation of $Z_t=\{1,t\}$, and $\omega,\omega^{\ast}$ denote the nontrivial representations of $Z_c=\{1,c,c^2\}$.

These anyonic excitations can be created on the lattice by the so-called ribbon operators, generalizing the Pauli $X$ and $Z$ string operators in the toric code. On an oriented lattice, a site is given by a pair $s=(v, f)$, where $v$ is one of the vertices of a face $f$.
A ribbon is a ``path'' of neighboring sites. Given  a ribbon $\xi$, the action of the ribbon operator $F^{(h, g)}_{\xi}$ in the group basis is shown in Fig.~\ref{fig:Ribbon_1}~\cite{bombin2008family,beigi2011quantum}.

\begin{figure*}
    \centering
    \includegraphics[width=1.5\columnwidth]{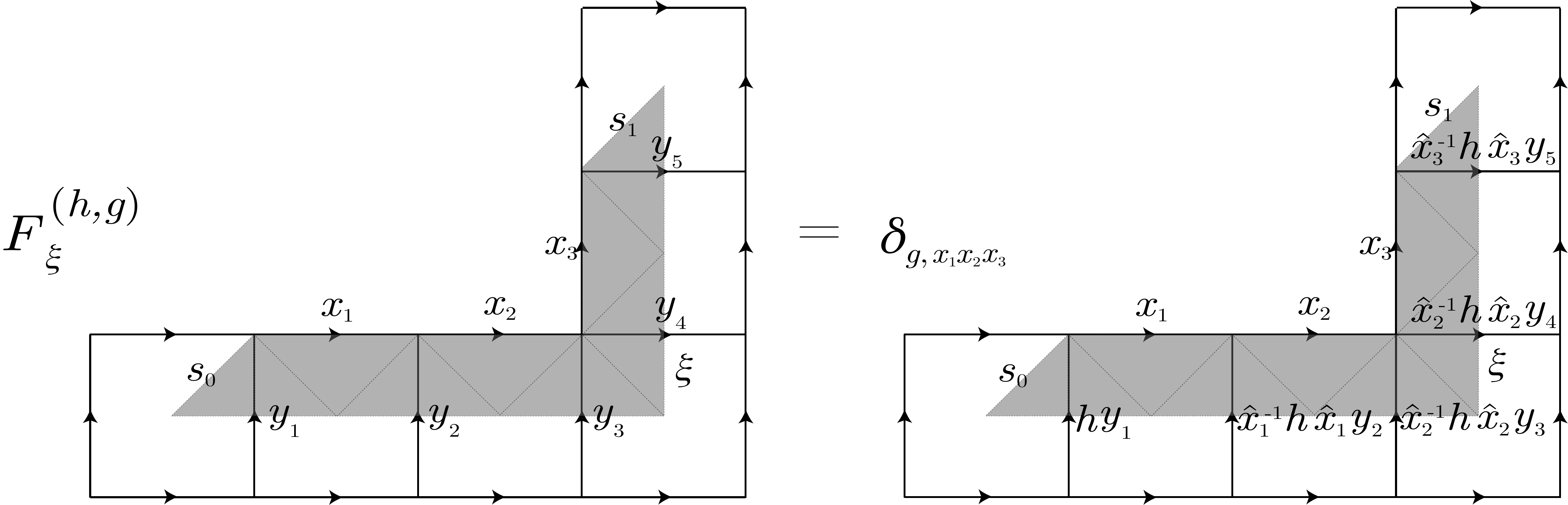}
    \caption{Ribbon $\xi$ starts from a site $s_0$ and ends at a site $s_1$. The ribbon operator $F^{(h,g)}_{\xi}$ on this ribbon is composed of some shift operators on the edges and a Kronecker delta.}
    \label{fig:Ribbon_1}
\end{figure*}

The multiplication of ribbon operators on the same ribbon $\xi$ is
\begin{align}
    F^{h_1, g_1}_{\xi} F^{h_2, g_2}_{\xi} = \delta_{g_1, g_2} F^{h_1 h_2, g_2}_{\xi}. \label{eq:ribbon_multiplication}
\end{align}
If the end of a ribbon $\xi_1$ is the start of another ribbon $\xi_2$, we can denote the composition of the two ribbons as $\xi = \xi_1 \xi_2$. The ribbon operator $F^{h,g}_{\xi}$ on ribbon $\xi$ obeys the co-multiplication rule
\begin{align}
    F^{h,g}_{\xi} = \sum_{k \in S_3} F^{h,k}_{\xi_1} F^{k^{-1} h k, k^{-1} g}_{\xi_2}. \label{eq:ribbon_comultiplication}
\end{align}

The ribbon operators that create a pair of certain anyons $([g],\rho)$ at the ends of the ribbon $\xi$ are given by a linear combination of $F^{h,g}_{\xi}$ for different $h,g\in S_3$. For example, both $C$ and $F$ in $S_3$ quantum double have two internal degrees of freedom, so they have quantum dimensions $d_C=d_F=2$. Correspondingly, their ribbon operators creating $C$ and $F$ can be viewed as $2\times 2$ matrices, with their entries given as follows
\beq
    (F_{\xi}^{C})_{ij} &= \frac{1}{3} \sum_{g\in S_3}\rho_{\pi}^{-1}(g)_{ij}F_{\xi}^{e,g},\\
    (F^{F}_{\xi})_{ij} &= \frac{1}{3} \sum_{k \in Z_c} F_{\xi}^{c^{-i}, t^{i} k t^{j}},
\eeq
where $Z_c=\{1,c,c^2\}$; see Appendix~\ref{appx:S3anyon} for details.

\subsection{$S_3$ gauged-SPT domain wall}
\label{sec:gauged_SPT_non-Abelian}
The gauged-SPT domain wall construction reviewed in Section~\ref{sec:gauged_SPT_Abelian} also works for non-Abelian groups. The resulting domain walls are always invertible and give rise to some flux-preserving anyon maps~\cite{barkeshli2023codimension}.
For $G=S_3$ there is no nontrivial $S_3$ SPT, since the second cohomology group is trivial, i.e.,
\beq
    \mathcal{H}^2(S_3, U(1))=1.
\eeq
Thus, the gauged-SPT domain wall construction for the $S_3$ quantum double gives no nontrivial results.

\subsection{$S_3\times S_3$ SPT-sewn domain wall} \label{sec:S3xS3_SPT_sewing}

Now we consider the SPT-sewn domain walls with a $S_3\times S_3$ symmetry. The SPT phases are classified by the second cohomology group, which is
\beq
    \mathcal{H}^{2}\left(S_3 \times S_3, U(1)\right)=\mathbb{Z}_2.\label{eq:s3xs3_h2}
\eeq
Eq.~\eqref{eq:s3xs3_h2} follows from the K\"unneth formula~\cite{spanier1989algebraic,wen2015construction}, which we review in Appendix~\ref{appx:kunneth}. 

Gauging different SPTs lead to different types of domain walls. Gauging the trivial SPT, we obtain smooth boundaries on both sides. For the nontrivial SPT, upon gauging we obtain a non-invertible domain wall. Only a subset of the anyons can pass through the domain wall. The action of the domain wall on these anyons is
\begin{align}
    A \leftrightarrow A,\quad B \leftrightarrow D,\quad E \leftrightarrow E.
    \label{eq:S3xS3_anyon_map}
\end{align}
The other anyons are condensed on the domain wall.

We now focus on the high-level reasoning behind Eq.~\eqref{eq:S3xS3_anyon_map}; we provide an explicit lattice construction of this domain wall in Appendix~\ref{appx:gaugedS3xS3}.
Using the K\"{u}nneth formula, we obtain
\begin{equation}
    \begin{aligned}
        \mathcal{H}^{2}\left(S_3 \times S_3, U(1)\right) = \mathcal{H}^1(S_3,\mathcal{H}^1(S_3, U(1))),
    \end{aligned}
    \label{eq:second_coho_kunneth}
\end{equation}
where $\mathcal{H}^1(S_3, U(1))$ is the group of one-dimensional representations of $S_3$. Eq.~\eqref{eq:second_coho_kunneth} implies that $\mathcal{H}^{2}\left(S_3 \times S_3, U(1)\right)$ is determined from a pairing between elements of $S_3$ and its one-dimensional representations. This is analogous to the case of the toric code in Section~\ref{sec:GxG domain wall}, where the $2$-cocycle $\eta$ represented a nontrivial pairing between elements of $\mathbb{Z}_2$ and its representations. The corresponding SPT-sewn domain wall gave rise to the $e\leftrightarrow m$ domain wall. Similarly, we can expect the pairing between the group elements and the group representations to correspond to a charge-flux exchange.

Since the only nontrivial one-dimensional representation of $S_3$ is the sign representation in Eq.~\eqref{eq:rep_s}, the elements $tc^i$ for $i=0,1,2$ have nontrivial pairings. Thus, an anyon with the $t$ flux passing through the domain wall becomes an anyon with the $s$ charge, and vice versa. The map is thus given in Eq.~\eqref{eq:S3xS3_anyon_map}, and all the other anyons cannot pass through this domain wall.

We remark that the $S_3$ quantum double can be reduced to the toric code model by applying a projection on each edge, leaving a state on a two-dimensional subspace spanned by $\ket{e}$ and $\ket{t}$. In doing so, $A \to 1$, $B \to e$, $D \to m$ and $E \to em$~\cite{ren2023topological}. Therefore, the above SPT-sewn domain wall becomes the $e\leftrightarrow m$ exchange domain wall in the toric code. 

\subsection{$S_3 \times \mathrm{Rep}(S_3) \times S_3$ SPT-sewn domain walls}
\label{sec:S3xRepxS3_SPT_sewing}

We found that the domain walls obtained from gauging a $S_3\times S_3$ SPT is noninvertible. It turns out the same conclusion holds for $S_3\times S_3\times S_3$ symmetry. Thus the approach used for Abelian quantum doubles no longer works for non-Abelian quantum doubles. In what follows, we describe an alternative method which does lead to an invertible domain wall. 

The crucial new ingredient we use is the \emph{non-invertible symmetry}~\cite{mcgreevy2023generalized,schafer2024ictp,shao2023s}. Specifically, we consider a non-invertible $S_3\times \mathrm{Rep}(S_3)\times S_3$ symmetry and apply our SPT-sewing method. The $\mathrm{Rep}(S_3)$ symmetry contains the trivial representation $\mathbf{1}$, a one-dimensional sign representation $s$, and a two-dimensional representation ${\pi}$. The fusion rules are given by the tensor product of representations~\cite{etingof2005fusion,etingof2015tensor}.
\begin{table}[h]
\centering
\resizebox{0.4\columnwidth}{!}{
\begin{tabular}{|c|c c c|}
\hline
$\otimes$ & $\mathbf{1}$ & $s$ & ${\pi}$\\ \hline
$\mathbf{1}$ & $\mathbf{1}$ & $s$ & ${\pi}$\\ 
$s$     & $s$ & $\mathbf{1}$ &${\pi}$\\ 
${\pi}$ & ${\pi}$ & ${\pi}$ & $\mathbf{1} \oplus s \oplus {\pi}$\\ \hline
\end{tabular}
}
\end{table}

As before, we consider three layers of 1d lattices, wherein each vertex is associated with a Hilbert space $\mathbb{C}(S_3)$. We use $a_i, b_i,$ and $c_i$ to label the vertices on the respective layer. The two $S_3$ symmetries are given by 
\begin{equation}
\begin{aligned}
    A_g := \prod_{i} L_{+,a_i}^{g}, \quad \forall g \in S_3, \\
    C_g := \prod_{i} L_{+,c_i}^g \sigma^g_{b_i}, \quad \forall g\in S_3,
\end{aligned}
\label{eq:S3symmetry}
\end{equation}
where the shift operator $L^g_+$ is defined in Eq.~\eqref{eq:hopf_1}, and $\sigma^g_{b_i}\equiv\sum_h \ket{g h g^{-1}}_{b_i}\bra{h}$ is a conjugation by element $g$ on qudit at vertex $b_i$. The $\mathrm{Rep}(S_3)$ symmetry is given by a matrix product operator (MPO)
\begin{align}
    B_{\rho} := \mathrm{Tr}\left(\prod_{i}Z_{\rho, b_i}\right), \quad \forall \rho \in \mathrm{Rep}(S_3),
    \label{eq:RepS3symmetry}
\end{align}
which deserves a further explanation. First of all, each matrix $Z_{\rho}$ is defined as
\begin{align}
    Z_{\rho} = \sum_{g \in S_3} \rho(g) \otimes |g\rangle \langle g|, \label{eq:Zrho}
\end{align}
where $\rho(g)$ acts on the auxiliary (virtual) space. The product of these operators are assumed to follow a particular order, i.e., $\prod_i Z_{\rho, b_i}  = \ldots Z_{\rho, b_2}Z_{\rho, b_1}$. The trace in Eq.~\eqref{eq:RepS3symmetry} is taken over the virtual space. 

In the literature, SPTs with respect to the $S_3 \times \mathrm{Rep}(S_3) \times S_3$ has not been studied to the best of our knowledge. Nonetheless, there are some related prior works. The $S_3 \times \mathrm{Rep}(S_3) \times S_3$ is a fusion category, which can be seen as an example of the non-invertible symmetry~\cite{mcgreevy2023generalized,schafer2024ictp,shao2023s}. Unlike the classification of bosonic SPT phases with an ordinary symmetry via group cohomology, the SPT phases with a non-invertible symmetry correspond to the fiber functors of a category, and their classifications are still elusive except for some special families of categories~\cite{thorngren2019fusion,inamura2021topological}. Nonetheless, there are interesting exactly solvable models that respect a related symmetry; see Ref.~\cite{brell2015generalized,fechisin2023non,inamura20241+} for studies on SPTs with $G\times \text{Rep}(G)$ symmetry.

We consider two types of SPTs, one being the trivial SPT and the other being the nontrivial one. Upon gauging these, we obtain two invertible domain walls of the $S_3$ quantum double. The first is the trivial domain wall, and the second type of domain wall corresponds to the braided autoequivalence that exchanges $C$ and $F$ anyons~\cite{beigi2011quantum,xu20242}.

\subsubsection{Trivial domain wall} 
Let us first start with the trivial domain wall. We consider an $S_3\times \mathrm{Rep}(S_3) \times S_3$ SPT state given by the wavefunction 
\beq
    \ket{SPT_1}=\sum_{\{h_{a_i},h_{c_i}\}}\ket{\cdots, h_{a_i},h_{b_{i}},h_{c_{i}},h_{a_{i+1}},\cdots},
    \label{eq:trivial_SPT}
\eeq
where $h_{b_{i}}\equiv h_{c_i}h_{a_i}^{-1}h_{a_{i-1}}h^{-1}_{c_{i-1}}$ for all vertex $i$. Eq.~\eqref{eq:trivial_SPT} can be obtained preparing states in $\{a_i\}$ and $\{c_i\}$ in the uniform superposition state and $\{b_i\}$ in $|e\rangle$, followed by a sequence of controlled multiplications~\cite{fechisin2023non,brell2015generalized}. Thus this state can be understood as a generalized cluster state. This state is symmetric under the symmetry operators defined in Eq.~\eqref{eq:S3symmetry} and Eq.~\eqref{eq:RepS3symmetry}.
We can write a Hamiltonian for this state as composed of the following commuting stabilizers,
\begin{equation}
    \begin{aligned}         \adjincludegraphics[width=0.6\columnwidth,valign=c]{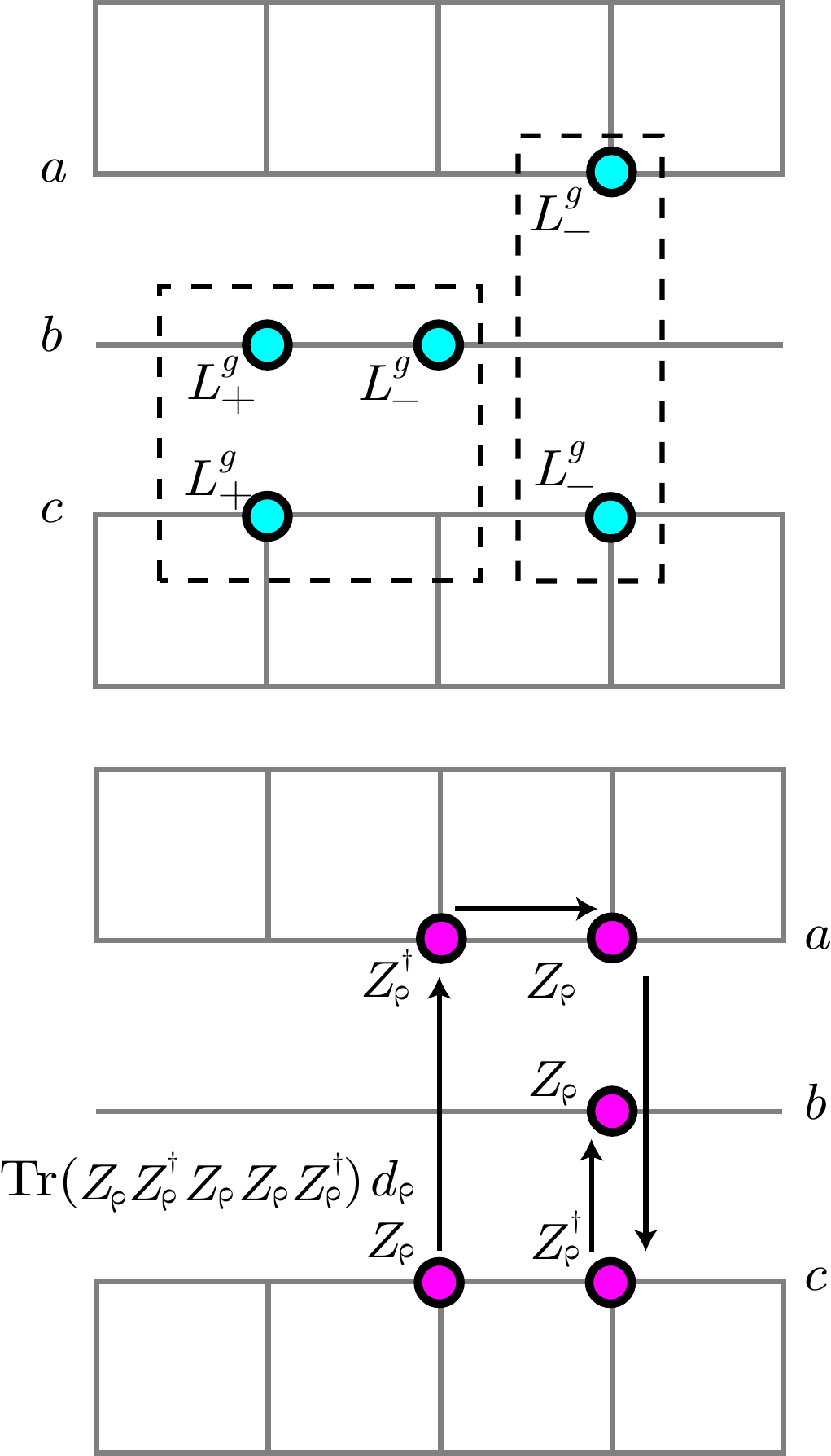} \label{eq:S3_dw1}
    \end{aligned}
\end{equation}
where $g \in S_3$, $\rho \in \mathrm{Rep}(S_3)$ and $d_{\rho}$ is the dimension of $\rho$. The order of the $Z_{\rho}$ operators are taken as directed in Eq.~\eqref{eq:S3_dw1}. This Hamiltonian has a unique ground state, and commutes with the symmetry operators, hence is an $S_3 \times \mathrm{Rep}(S_3) \times S_3$ SPT.

To see that this SPT gives rise to a trivial domain wall after gauging, we consider a local projection on each $b_i$ vertex to $\ket{e}$. Since this projection does not commute with some of the SPT stabilizers, the stabilizers for the projected state are given by some combinations or reductions of the original stabilizers. Thus the domain wall stabilizers from gauging the projected state are also given by combinations or reductions of the SPT-sewn domain wall stabilizers. It is then obvious that, if an anyon $a$ can pass through the former domain wall, then $a$ can also pass through the SPT-sewn domain wall. Hence, if the former domain wall is invertible, then the SPT-sewn domain wall is also invertible, and gives rise to same braided autoequivalence (anyon map).

After this projection, the $S_3\times S_3$ symmetry on the top and bottom layers are given by
\beq
    U^a_g := \prod_{i} L_{+,a_i}^{g}, \quad \forall g \in S_3, \\
    U^c_g := \prod_{i} L_{+,c_i}^g, \quad \forall g\in S_3.
    \label{eq:S3xS3_symmetry}
\eeq

After gauging, the stabilizers of the projected state becomes the domain wall stabilizers as follows
\begin{equation*}
    \begin{aligned}         \adjincludegraphics[width=0.9\columnwidth,valign=c]{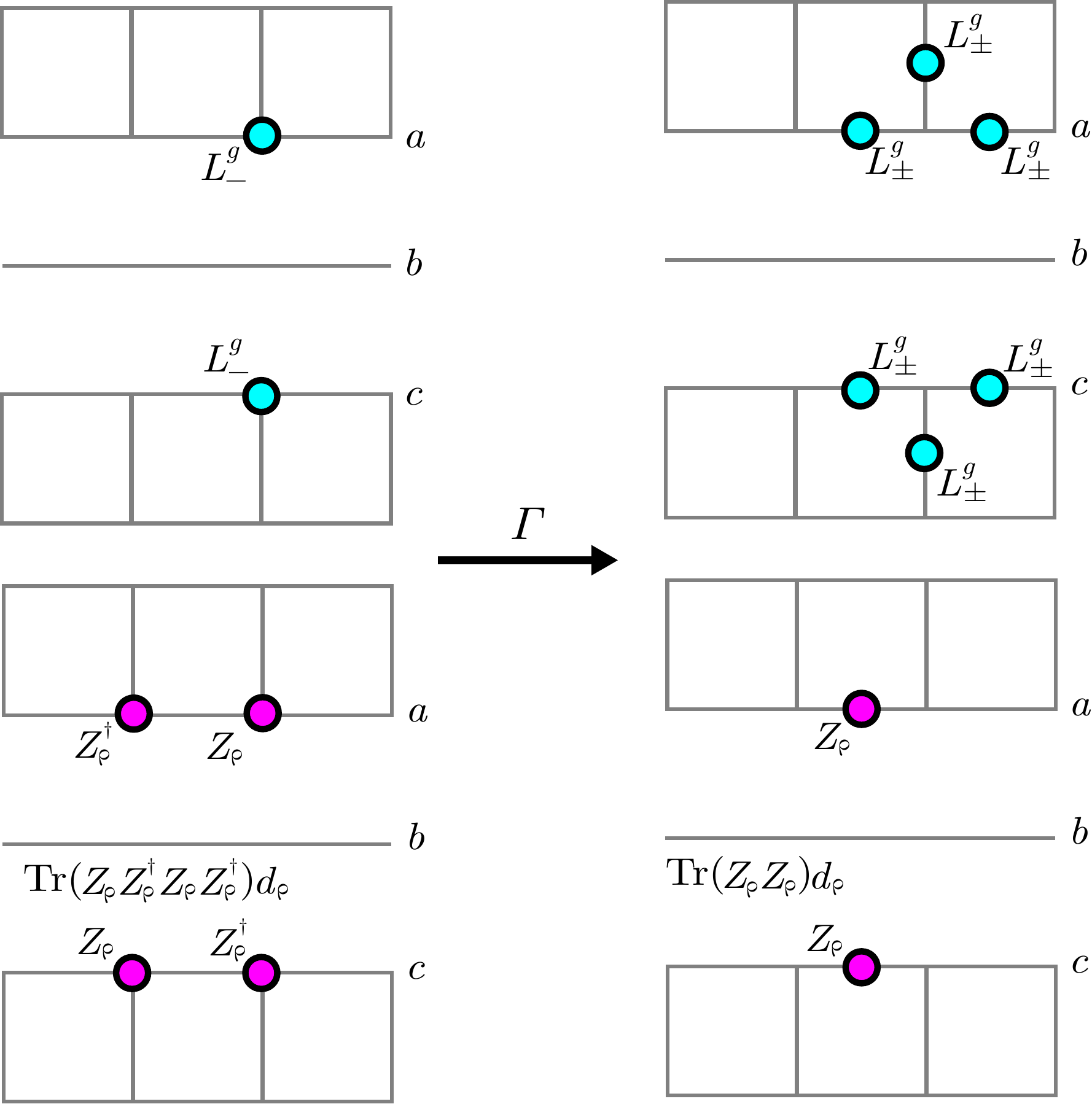},
    \end{aligned}
\end{equation*}
which gives rise to a trivial domain wall. The order of the $Z_{\rho}$ operators in the lower left are taken in the same order as Eq.~\eqref{eq:S3_dw1} without the last $Z_{\rho}$ on layer $b$. Therefore, the $S_3\times \mathrm{Rep}(S_3) \times S_3$ SPT-sewn domain wall given from Eq.\eqref{eq:trivial_SPT} is an invertible trivial domain wall. We note that this SPT state can be defined for any finite group $G$, and the SPT-sewing construction would always give rise to a trivial domain wall in a $G$ quantum double model.

\subsubsection{$C\leftrightarrow F$ domain wall}
\label{sec:CtoF}

\begin{figure*}
    \centering
    \includegraphics[width=1.6\columnwidth]{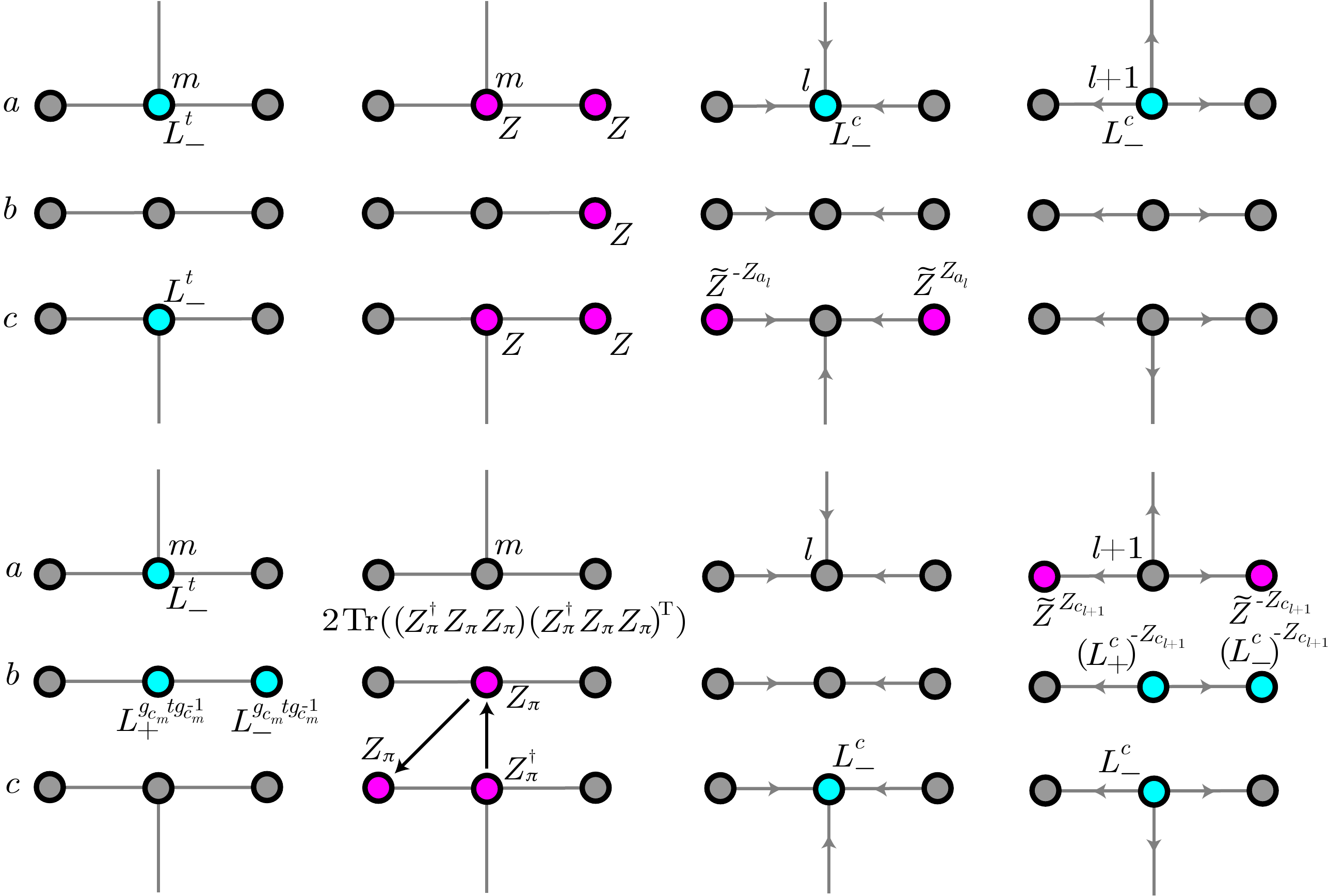}
    \caption{An $S_3 \times \mathrm{Rep}(S_3) \times S_3$ SPT is defined by a Hamiltonian composed of these commuting stabilizers; see Eq~\eqref{eq:generalized_Pauli} for the definition of the operators. In this figure, $m$ and $l$ are site labels for odd and even sites, and $a$, $b$, and $c$ are the labels of each layer. The term $g_{c_m}$ corresponds to the state at site $m$ in layer $c$ and $Z_{c_{l+1}}$ represents the eigenvalue of the Pauli $Z$ operator at site $l+1$ in layer $c$.}
    \label{fig:S3xRepxS3_SPT_2}
\end{figure*}
 
Now we consider another $S_3 \times \mathrm{Rep}(S_3) \times S_3$ SPT which becomes the domain wall exchanging $C$ and $F$ anyon after gauging. Given an $S_3$ group element, we can always decompose it as $g=nq$, where $n=c^i$ and $q=t^k$ for some integers $i$ and $k$. The SPT state is given by
\beq
    \ket{SPT_2}=\sum_{\{h_{a_i},h_{c_i}\}}\ket{\cdots, h_{a_i},h_{b_{i}},h_{c_{i}},h_{a_{i+1}},\cdots},
    \label{eq:CF_SPT}
\eeq
where $h_{b_{i}}\equiv h_{c_i}q_{a_i}^{-1}q_{a_{i-1}}h^{-1}_{c_{i-1}}$ for all vertex $i$. The stabilizers of its Hamiltonian are shown in Fig.~\ref{fig:S3xRepxS3_SPT_2}. These stabilizers commute with each other, and are symmetric under the $S_3\times \mathrm{Rep}(S_3)\times S_3$ symmetry. Thus they give rise to a non-invertible SPT phase.

To show that the SPT-sewn domain wall is the invertible $C\leftrightarrow F$ domain wall, we project each $b_i$ vertex to state $\frac{1}{\sqrt{3}}(\ket{e}+\ket{c}+\ket{c^2})$, such that $Z=1$ and $L_-^c=1$ acting on $b_i$ for all $i$. It turns out that the projected state spontaneously breaks the $S_3\times S_3$ symmetry into a $\left(\mathbb{Z}_3 \times \mathbb{Z}_3\right) \rtimes \mathbb{Z}_2$ symmetry. Furthermore, it corresponds to a nontrivial SPT phase under the unbroken symmetry. In Appendix~\ref{appx:CFribbon}, we show that the above stabilizers are consistent with the standard fixed-point SPT state construction using 2-cocycles~\cite{chen2014symmetry}.

We now list the stabilizers for the projected state. The first set of stabilizers gives rise to the spontaneous breaking of $\Z_2\times \Z_2$ generated by $U^a_t$ and $U^c_t$, to the diagonal subgroup generated by operator $U^a_t U^c_t$ defined in Eq.~\eqref{eq:S3xS3_symmetry}. The stabilizer on the left is the localized symmetry operators of the diagonal $\Z_2$ subgroup, and the stabilizer on the right is the symmetry breaking order parameter
\begin{align}
    \adjincludegraphics[width=0.85\columnwidth,valign=c]{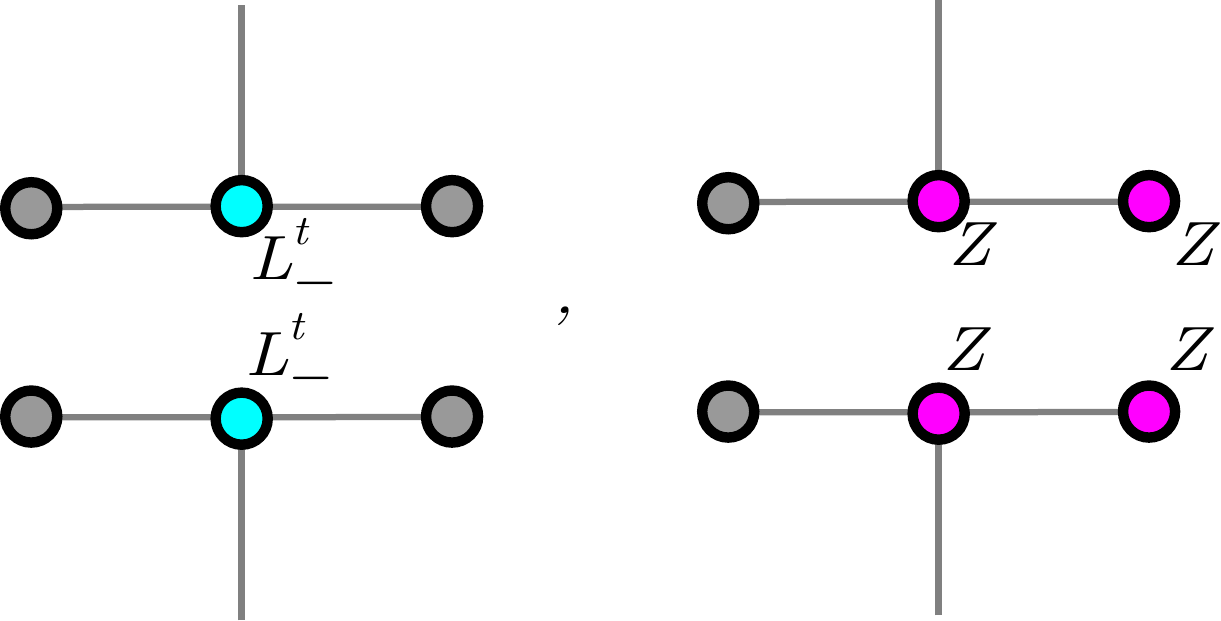}.
\end{align}
The above stabilizers are translational-invariant along the domain wall.

The other set of stabilizers gives rise to the SPT phases under the unbroken $\left(\mathbb{Z}_3 \times \mathbb{Z}_3\right) \rtimes \mathbb{Z}_2$ symmetry. This SPT entangles the $\Z_3$ part of the top qudits and the $\Z_3$ part of the bottom qudits, mediated by the $\Z_2$ part of qudits
\begin{align}
    \adjincludegraphics[width=0.85\columnwidth,valign=c]{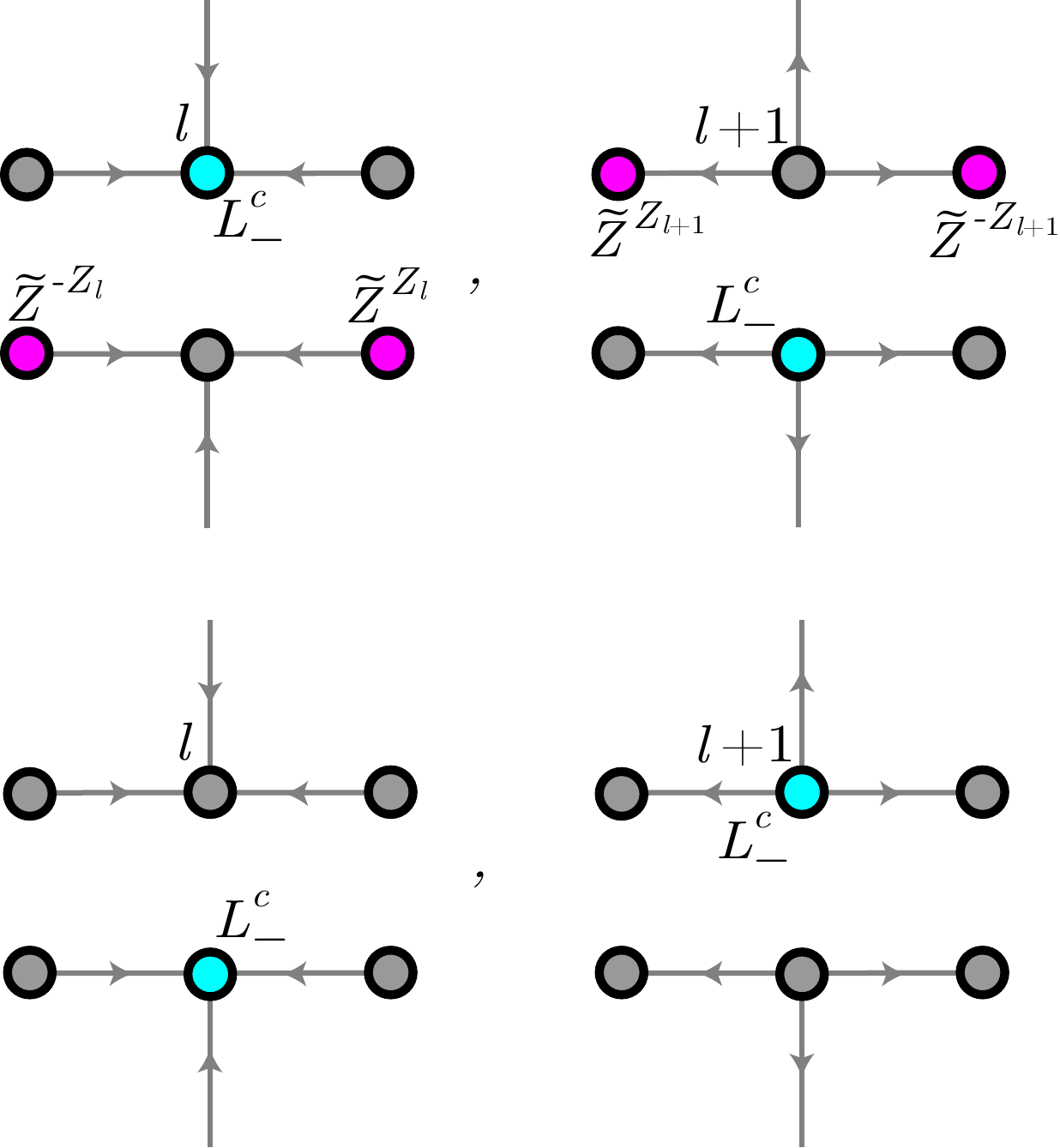}.
\end{align}
Here we assume $l$ is even. $Z_{l}$ is the eigenvalue of Pauli $Z$ operator at site $l$.

Now we gauge the \(S_3\) symmetry of the upper chain together with the product state $\ket{+}$ in the bulk above, and gauge the \(S_3\) symmetry of the lower chain together with the product state $\ket{+}$ in the bulk below. The gauging map will take this state to an $S_3$ quantum double ground state, with a gapped domain wall. The domain wall stabilizers are obtained from the above SPT stabilizers via the gauging map defined in Appendix~\ref{appx:gauging_map}, and the calculations are presented in Appendix~\ref{appx:CFribbon}.

We summarize all the stabilizers below,
\begin{align}   \adjincludegraphics[width=0.85\columnwidth,valign=c]{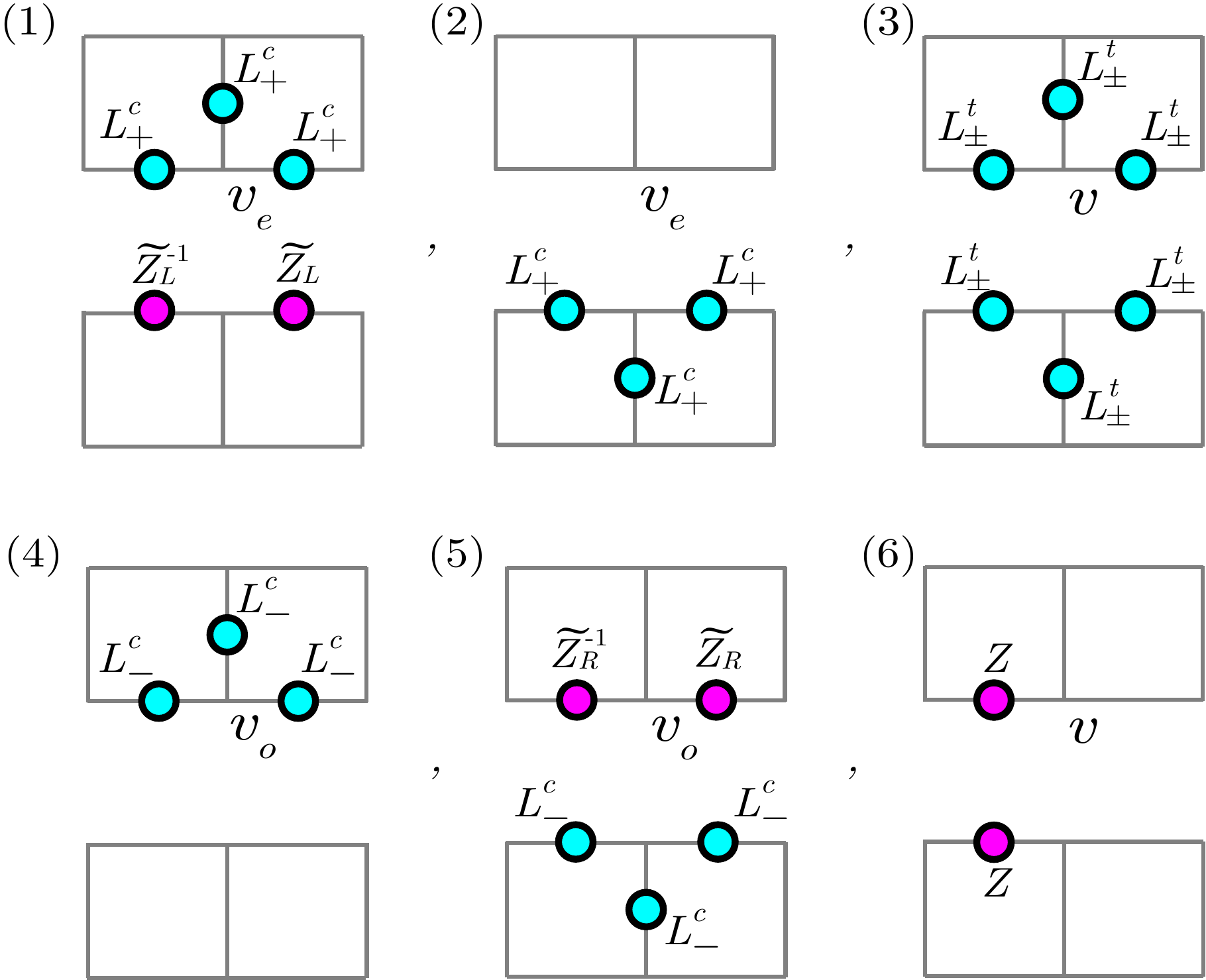}, \label{eq:CFdw_1}
\end{align}
in which $\widetilde{Z}_{L}$, $\widetilde{Z}_R$, and $Z$ are defined in Eq.~\eqref{eq:generalized_Pauli}. $v_e$ represents an even site, $v_o$ represents an odd site, and $v$ represents a general site, both even and odd.

If we take a ribbon $\xi$ to be the composition of a ribbon $\xi_1$ above the domain wall, and another ribbon $\xi_2$ below the domain wall as displayed below. In Appendix~\ref{appx:CFribbon}, we show that the following ribbon operator
\begin{align}
    F^{CF}_{\xi} = 3 F^{C}_{\xi_1} F^{F}_{\xi_2}, \label{eq:CFribbon}
\end{align}
where $\xi=\xi_1\xi_2$, commutes with all the domain wall stabilizers. Thus, this ribbon operator only creates an $F$ anyon at one end of $\xi_2$, and a $C$ anyon at the other end of $\xi_1$. In other words, anyon $C$ and $F$ are exchanged when passing through this domain wall. Therefore, the SPT-sewn domain wall given by stabilizers in Fig.~\ref{fig:S3xRepxS3_SPT_2} is also an invertible $C\leftrightarrow F$ domain wall.
\begin{align}   \adjincludegraphics[width=0.6\columnwidth,valign=c]{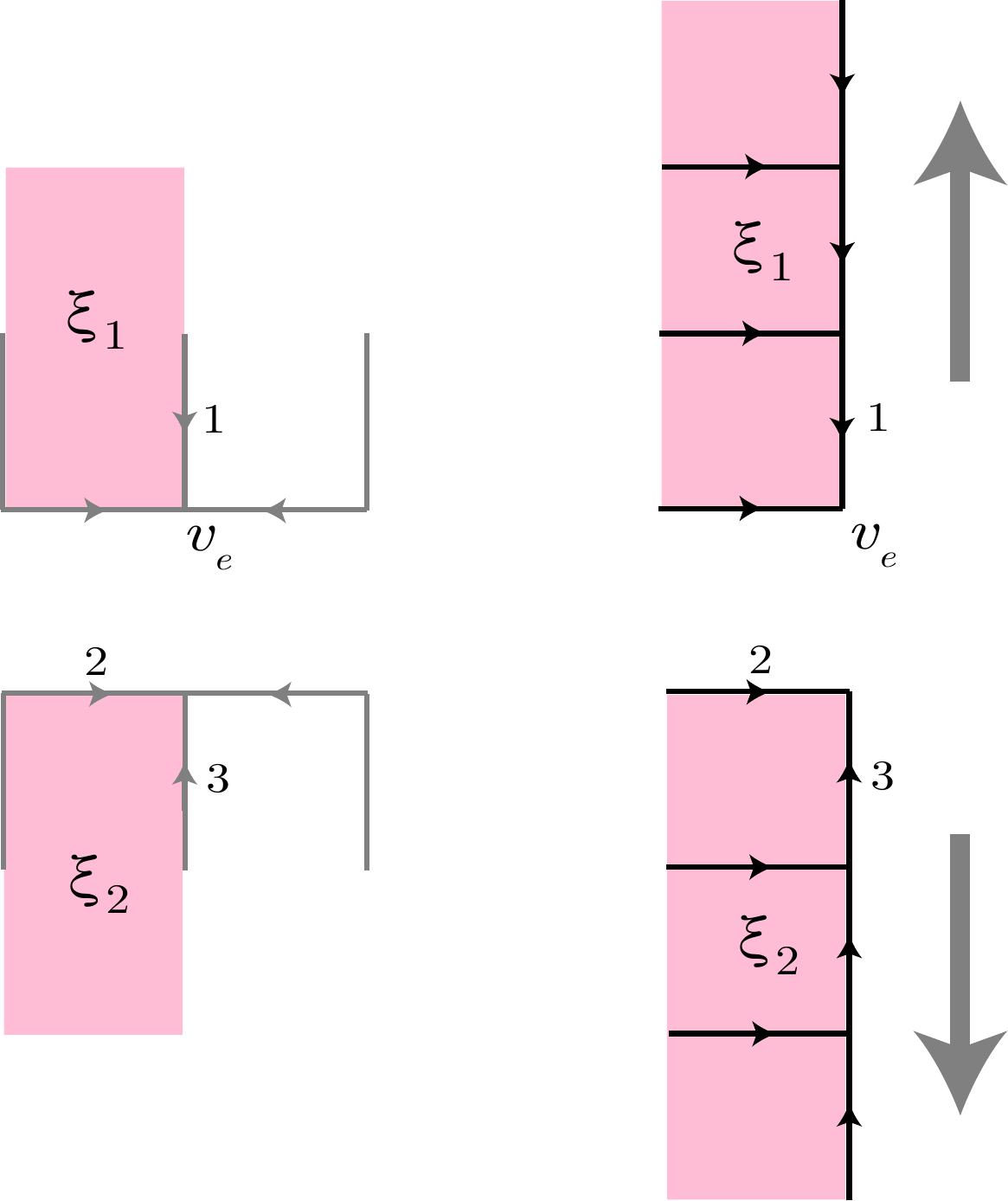}. \label{eq:CF_2}
\end{align}

We remark that if we only gauge the $\Z_3$ subgroup of $S_3$ on the top and bottom bulks, then the bulk becomes an SET state, i.e., a $\Z_3$ toric code ground state with a $\Z_2$ charge conjugation symmetry. The domain wall becomes an invertible domain wall in this $\Z_3$ quantum double that gives rise to the following anyon map
\beq
m\rightarrow e,\quad  e\rightarrow m^2.
\eeq
The remaining $\Z_2$ symmetry acts on the qutrits as a charge conjugation. After gauging this charge conjugation symmetry, an $S_3$ quantum double ground state is obtained in the bulk, and the above domain wall in the symmetry enriched topological (SET) order becomes the $C\leftrightarrow F$ domain wall as shown in Fig.~\ref{fig:CFDW}. This result agrees with Ref.~\cite{ren2023topological,xu20242}.

\begin{figure}
    \centering
    \includegraphics[width=\columnwidth]{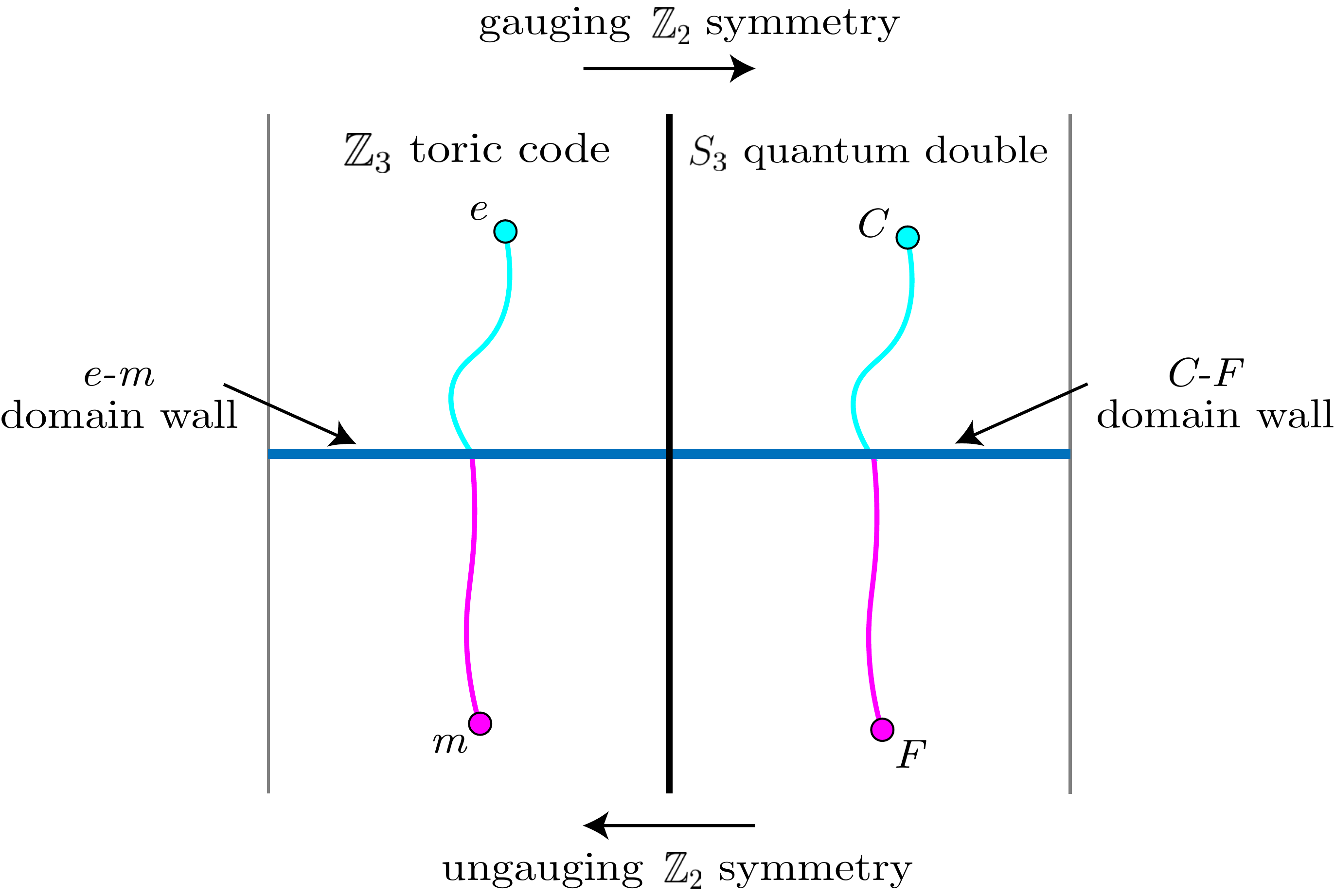}
    \caption{The $C\leftrightarrow F$ domain wall in the $S_3$ quantum double is obtained from the $e\leftrightarrow m$ domain wall in the $\Z_3$ quantum double via gauging the charge conjugation symmetry.}
    \label{fig:CFDW}
\end{figure}

Based on this construction, we conclude this subsection with a conjecture.
\begin{conjecture}
    In a 2d quantum double (gauge theory) of a finite group $G$, all the invertible domain walls can be constructed from SPT-sewing with $G\times \mathrm{Rep}(G)\times G$ symmetry.
 \label{conjecture:1}
\end{conjecture}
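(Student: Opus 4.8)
The plan is to reduce the statement to a structural question about braided autoequivalences and then realize a generating set of them one at a time. By the folding trick (Fig.~\ref{fig:folding_1}), an invertible domain wall in the $G$ quantum double is the same data as a braided autoequivalence $\phi$ of the Drinfeld center $Z(\mathrm{Vec}_G)$, equivalently a Lagrangian algebra of $Z(\mathrm{Vec}_G)\boxtimes\overline{Z(\mathrm{Vec}_G)}$ which for an invertible wall is the graph of $\phi$; so it suffices to show that every $\phi\in\mathrm{Aut}^{\mathrm{br}}(Z(\mathrm{Vec}_G))$ arises by gauging some $G\times\mathrm{Rep}(G)\times G$ SPT sewn along the wall. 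Exactly as in the Abelian proof of Theorem~\ref{theorem:1}, I would not treat a general $\phi$ head-on but instead: (i) fix a generating set of $\mathrm{Aut}^{\mathrm{br}}(Z(\mathrm{Vec}_G))$; (ii) realize each generator by an explicit SPT-sewing construction; and (iii) show the realizable walls are closed under fusion, or---better---establish a bounded-length factorization so that closure is never needed.

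For (i) I would invoke the description of the Brauer--Picard group of $\mathrm{Vec}_G$ (the group of invertible $\mathrm{Vec}_G$-bimodule categories, isomorphic to $\mathrm{Aut}^{\mathrm{br}}(Z(\mathrm{Vec}_G))$; cf.~Refs.~\cite{huston2023composing,xu20242}): it is generated by (a) automorphisms of $G$, (b) discrete-torsion twists labeled by $\mathcal{H}^2(G,U(1))$, and (c) ``partial electromagnetic dualities'' attached to abelian normal subgroups $A\trianglelefteq G$---gauge $A$, apply an electromagnetic-type braided autoequivalence inside the resulting abelian piece $Z(\mathrm{Vec}_A)$, which is covered by Theorem~\ref{theorem:1}, and re-gauge $G/A$. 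Generators of type (b) are precisely the gauged-SPT walls of Sections~\ref{sec:gauged_SPT_Abelian}--\ref{sec:gauged_SPT_non-Abelian} (flux-preserving), realized by decorating the two outer $G$ layers; type-(a) relabelings are obtained by twisting the controlled multiplications in the generalized cluster states by the automorphism; and the middle $\mathrm{Rep}(G)$ layer supplies the type-(c) generators. The $C\leftrightarrow F$ wall of Section~\ref{sec:CtoF} is the prototype of (c)---there $A=\Z_3$, one performs $e\leftrightarrow m$ in the $\Z_3$ quantum double and re-gauges the $\Z_2$ charge conjugation, and the whole recipe is repackaged as a single $S_3\times\mathrm{Rep}(S_3)\times S_3$ generalized cluster state of the form in Eqs.~\eqref{eq:trivial_SPT}--\eqref{eq:CF_SPT}. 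I would turn the ``gauge $A$ / abelian move / re-gauge $G/A$'' procedure into a general SPT-sewing ansatz and verify, by computing which ribbon operators commute with the resulting domain-wall stabilizers (as in Appendix~\ref{appx:CFribbon}), that it induces the claimed anyon map.

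For (iii), the cleaner option mimics the Bravyi--Maslov canonical form used in Lemma~\ref{lemma:2}: show that every $\phi$ factors as $\phi=\phi_L\circ\phi_{\mathrm{mid}}\circ\phi_R$ with $\phi_L,\phi_R$ flux-preserving and $\phi_{\mathrm{mid}}$ a single type-(c) map, so that one middle $\mathrm{Rep}(G)$ layer between two outer $G$ layers carrying $\phi_L$ and $\phi_R$ already suffices---the non-Abelian analog of ``$CZ$ circuits around one layer of Hadamards and one $S$ gate.'' Failing that, one argues directly that fusing two $G\times\mathrm{Rep}(G)\times G$ SPT-sewn walls produces a wall with a multi-layer internal slab whose $\mathrm{Rep}(G)\times G\times G\times\mathrm{Rep}(G)$ core can be condensed back to a single $\mathrm{Rep}(G)$ insertion.

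The hard part is (i)--(ii) for genuinely non-Abelian $G$. Unlike the Abelian case, there is no classification of SPT phases with the non-invertible symmetry $G\times\mathrm{Rep}(G)\times G$---fiber functors of this fusion category are not understood in general---so one cannot enumerate the inputs and match them against $\mathrm{Aut}^{\mathrm{br}}(Z(\mathrm{Vec}_G))$; each needed generator must be hand-built and checked. Moreover the generating-set claim itself is delicate when $G$ has many non-normal or non-Abelian subgroups: the type-(c) partial dualities can then compose in ways not captured by the simple ``abelian normal subgroup'' recipe, and it is a priori possible that additional generators---or even additional $\mathrm{Rep}$-type layers---are required. This is precisely why the statement is posed as Conjecture~\ref{conjecture:1} rather than a theorem, with $G=S_3$ furnishing the main supporting evidence.
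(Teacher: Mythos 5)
First, note that the paper does not prove this statement at all: it is posed as Conjecture~\ref{conjecture:1}, supported only by Theorem~\ref{theorem:1} in the Abelian case and by the explicit $S_3\times\mathrm{Rep}(S_3)\times S_3$ constructions of the trivial and $C\leftrightarrow F$ walls in Section~\ref{sec:S3xRepxS3_SPT_sewing}, with the general proof explicitly left for future work. Your proposal is therefore not being measured against a hidden proof in the paper; it is a research plan, and as you yourself concede in the final paragraph, it is not a proof. Its spirit is the same as the paper's (reduce to braided autoequivalences via folding, realize a canonical decomposition generator by generator, with the $C\leftrightarrow F$ wall as the prototype of the genuinely non-Abelian move), but the steps that would turn the conjecture into a theorem are precisely the ones left unestablished.

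The concrete gaps are these. In step (i) you assert that $\mathrm{Aut}^{\mathrm{br}}(Z(\mathrm{Vec}_G))$ is generated by group automorphisms, $\mathcal{H}^2(G,U(1))$ twists, and partial electromagnetic dualities along Abelian normal subgroups; this is not a theorem for general finite $G$. The Brauer--Picard group is computed only case by case, and there are known examples where it contains elements not visibly of this ``gauge an Abelian normal subgroup, dualize, re-gauge'' form, so your generating set may be incomplete exactly in the regime (many non-normal or non-Abelian subgroups) you flag as delicate. Step (ii) is also unproven: even granting the generators, you would need to exhibit, for each type-(c) generator, a concrete $G\times\mathrm{Rep}(G)\times G$ symmetric state whose gauging yields that wall, and then verify invertibility via ribbon-operator commutation as in Appendix~\ref{appx:CFribbon}; in the absence of any classification of fiber functors (SPTs) for the fusion category $G\times\mathrm{Rep}(G)\times G$, there is no guarantee such a state exists for every generator, nor a systematic way to construct it beyond the $S_3$ example. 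Finally, step (iii) --- the non-Abelian analogue of the Bravyi--Maslov canonical form, i.e., a bounded-length factorization $\phi=\phi_L\circ\phi_{\mathrm{mid}}\circ\phi_R$ with a single $\mathrm{Rep}(G)$ insertion --- is stated as a hope, not derived; without it one must instead show closure under fusion of sewn walls, which would require a condensation argument for the multi-layer slab that is nowhere carried out. So the proposal is a sensible and well-aligned strategy, but each of (i), (ii), and (iii) is an open step, and the statement remains a conjecture both in the paper and after your argument.
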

\noindent
When \( G \) is Abelian, then the representation category \( \mathrm{Rep}(G)\) is isomorphic to group $G$. In this case, all the invertible domain walls can be constructed through SPT-sewing with \( G \times G \times G \) symmetry [Theorem~\ref{theorem:1}]. Our result in this section provides a nontrivial evidence for this conjecture for non-Abelian groups. The proof of this conjecture for general \( G \) is left for future work.

\section{Exotic domain walls in the 3d toric code}
\label{sec:3ddomainwall}

In three spatial dimensions, the SPT-sewn domain walls can exhibit exotic properties and host anyons which are nontrivially related to the excitations in the bulk.
We obtain these domain walls by gauging a two-dimensional SPT embedded in a trivial product state in three dimensions.

This section discusses construction of these domain walls. This entails constructing certain 2d SPTs and gauging them (with respect to the 3d lattice). Note that there is a simple framework for constructing a fixed-point wavefunction of a 2d SPT on a triangular lattice~\cite{chen2013symmetry,yoshida2017gapped}. As such, it will be convenient to work in a slightly non-standard version of the 3d toric code that respects the triangular structure; see Fig.~\ref{fig:lattice_structure_3d} for our convention. For construction of similar models on a cubic lattice, see Section~\ref{sec:3d_generalized_pauli}.

\begin{figure*}
    \centering
    \includegraphics[width=0.8\linewidth]{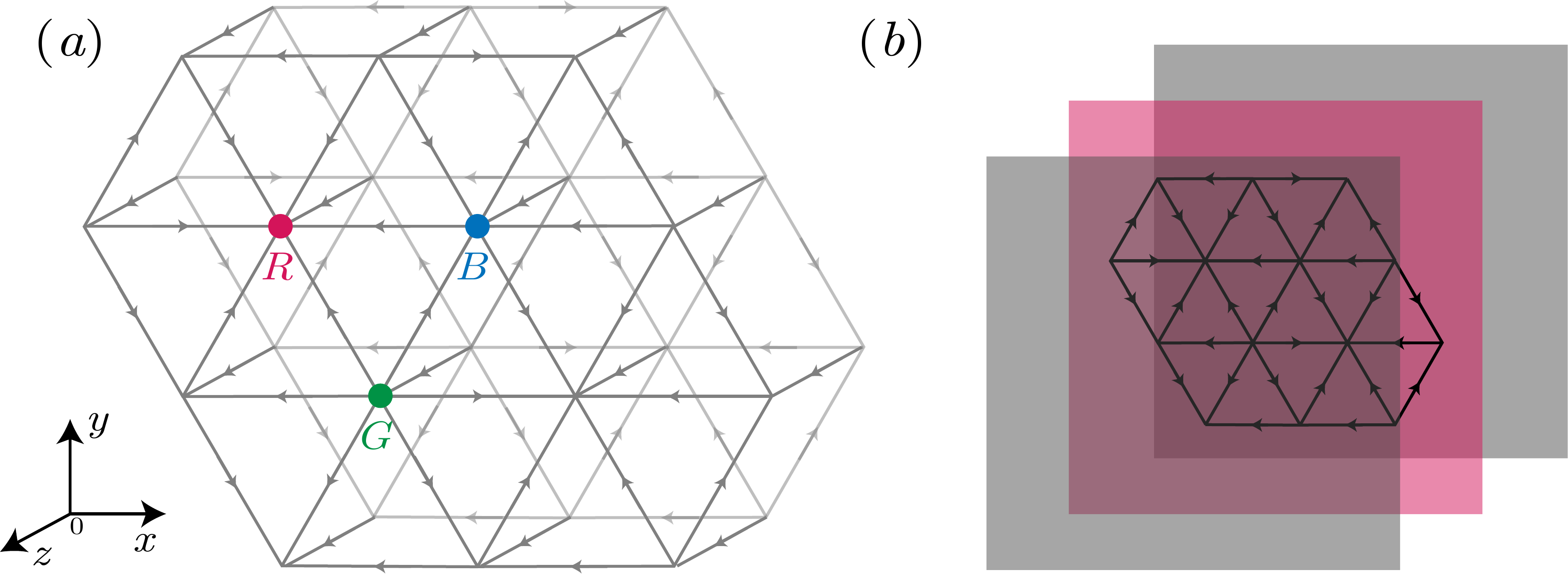}
    \caption{(a) Lattice structure used in the construction of the 3d models. Each triangular lattice is perpendicular to the $z$ direction; on each of them, there are three types of vertices, labeled $R$, $G$, and $B$, respectively.
    Each lattice perpendicular to the triangular lattices forms a square lattice.
    (b) Orientation of the domain wall (the pink layer).
    }
    \label{fig:lattice_structure_3d}
\end{figure*}

Upon gauging a trivial product state in three dimensions, we obtain the standard 3d toric code vertex and plaquette terms in the bulk
\beq
H&=-\sum_{v\in V} A_v -\sum_{p \in F} B_p\\
&=-\sum_{v\in V} \prod_{v\in e}X_e -\sum_{p\in F} \prod_{e\in p}Z_e.
\label{eq:3d toric code}
\eeq
The ground states are the simultaneous +1-eigenstates of all the terms. If a quantum state $\ket{\Psi}$ is the $-1$-eigenstate of a certain vertex term, we say that $\ket{\Psi}$ has a point-like $e$ excitation on that vertex. The $m$ excitation, however, is a loop-like object in the dual lattice, with $B_p\ket{\Psi}=-\ket{\Psi}$ for each plaquette $p$ the loop passes through.
A pair of point-like $e$ excitations can be created by applying an open string of Pauli $Z$ operators, whereas a loop-like $m$ excitation is created by a product of Pauli $X$ operators on a membrane bounded by the loop.

When we embed a two-dimensional SPT in a trivial product state, upon gauging, we always obtain the same bulk Hamiltonian away from the domain wall. However, what we obtain at the domain wall depends on the SPT we embed.

We consider three types of SPTs, which we refer to as type-I, type-II, and type-III.
They correspond to the nontrivial classes of $\mathcal H^{3} (\mathbb{Z}_2\times \mathbb{Z}_2 \times \mathbb{Z}_2, U(1))$~\cite{propitius1995topological,yoshida2015topological,yoshida2017gapped}. The corresponding cocycles (denoted as $\omega_I, \omega_{II},$ and $\omega_{III}$) are given as follows
\begin{equation}
    \begin{aligned}
        \omega_I &= (-1)^{p a^{(i)} b^{(i)} c^{(i)}}, \\
        \omega_{II} &= (-1)^{p a^{(i)} b^{(j)} c^{(j)}}, \\
        \omega_{III} &= (-1)^{p a^{(i)}b^{(j)}c^{(k)}},
    \end{aligned}
    \label{eq:cocycles_list}
\end{equation}
where $p \in \{0,1\}$, $A = (a^{(1)}, a^{(2)}, a^{(3)})$, $B = (b^{(1)}, b^{(2)}, b^{(3)})$, $C = (c^{(1)}, c^{(2)}, c^{(3)})$, $a^{(i)}, b^{(j)}, c^{(k)} \in \{0, 1\}$, and $(i)$ is the label for different copies.

Because the type-I SPT only involves a single copy of $\mathbb{Z}_2$, it is natural to place the SPT on one triangular lattice, sandwiched by other lattices.
Then, the gauging method of Section~\ref{sec:gauging_map} readily applies. However, for the type-II, it is more convenient to view it as a gapped boundary of two copies of the toric code, placing two qubits per site. Then, there is a natural $\mathbb{Z}_2\times \mathbb{Z}_2$ symmetry we can gauge. Similarly, for the type-III case, we can place three qubits at each site and gauge the $\mathbb{Z}_2\times \mathbb{Z}_2\times \mathbb{Z}_2$ symmetry. 

Before we describe our construction, we remark that the type-II and type-III domain walls are rather exotic.
When a charge in the 3d bulk passes through these domain walls, it leaves a semi-loop-like excitation anchored on them. Therefore, we refer to these domain walls as \emph{anchoring domain walls}.
The anchoring domain walls can be classified further in terms of the anyon model at the domain wall and its interplay with the excitations in the bulk.
The type-II domain wall hosts an Abelian anyon model, whereas the type-III hosts a non-Abelian anyon model.

The rest of this section is organized as follows. We review a method to construct a fixed-point SPT wavefunction in Section~\ref{subsec:2d_spt_wavefunction}, tailored to 2d.
We then describe the domain walls of type-I, type-II, and type-III in Section~\ref{subsec:3d_type-I},~\ref{subsec:3d_type-II}, and~\ref{subsec:3d_type-III}, respectively. 

\subsection{2d SPT}
\label{subsec:2d_spt_wavefunction}

Without loss of generality, consider one layer of a triangular lattice.
Note that the vertices of this lattice can be partitioned into three sets, labeled $R$, $G$, and $B$, so that no neighboring vertices are in the same set; see the $z=0$ plane in Fig.~\ref{fig:lattice_structure_3d}(a) for illustration and the orientation of the edges.

The SPT wavefunction can be obtained by applying the SPT entangler to the symmetric product state~\cite{chen2013symmetry,yoshida2017gapped}. This can be done by assigning a phase factor to each triangle in the following way
\begin{equation}
\begin{aligned}
    \adjincludegraphics[width=2.5cm,valign=c]{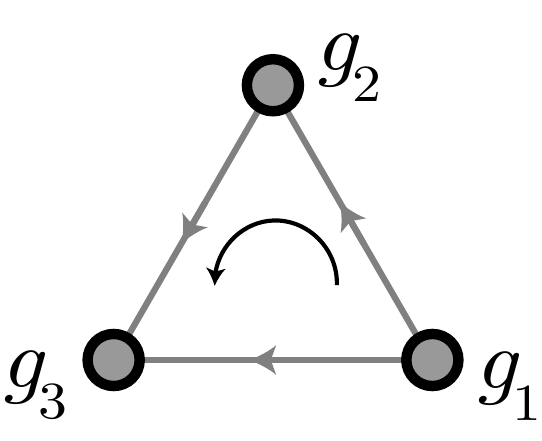} &= \omega (g_3, g_3^{-1} g_2, g_2^{-1} g_1), \\ \adjincludegraphics[width=2.5cm,valign=c]{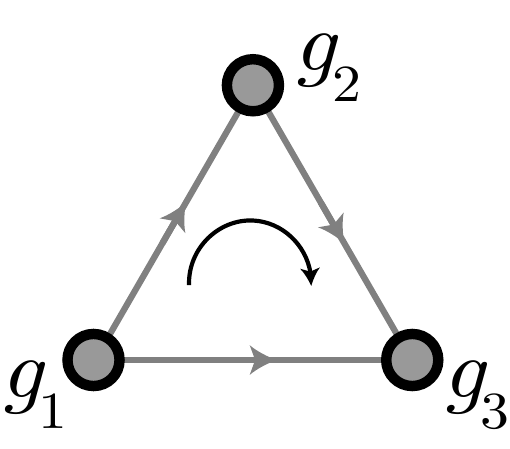} &= \omega^{-1} (g_3, g_3^{-1} g_2, g_2^{-1} g_1), 
\end{aligned}    
\label{eq:SPTentangler_1}
\end{equation}
where $\omega$ is the cocycle. The resulting wavefunction is
\begin{align}
    |\psi\rangle = \sum_{\{g_v\}} \prod_{\triangle_{123}\in F} \omega^{s(\triangle_{123})}(g_3, g_3^{-1} g_2, g_2^{-1} g_1) \bigotimes_{v\in V} | g_v\rangle,
\end{align}
where $g_v$ is the state at the site $v$, $\triangle_{123}$ is a 2-simplex (triangular face), and $s(\triangle_{123}) = \pm 1$ denotes the orientation of $\triangle_{123}$.

The phase factors in Eq.~\eqref{eq:SPTentangler_1} define a constant-depth circuit diagonal in the standard basis. By conjugating the stabilizers of the symmetric product state by this circuit, we obtain the stabilizers of the SPT wavefunction.

\subsection{Type-I: double semion domain wall}
\label{subsec:3d_type-I}

Applying the procedure described in Section~\ref{subsec:2d_spt_wavefunction} to the nontrivial cocycle $\omega_I \in \mathcal H^3(\mathbb{Z}_2, U(1))$ in Eq.~\eqref{eq:cocycles_list} we obtain
\begin{align}    
H = - \sum_{v} \adjincludegraphics[width=3cm,valign=c]{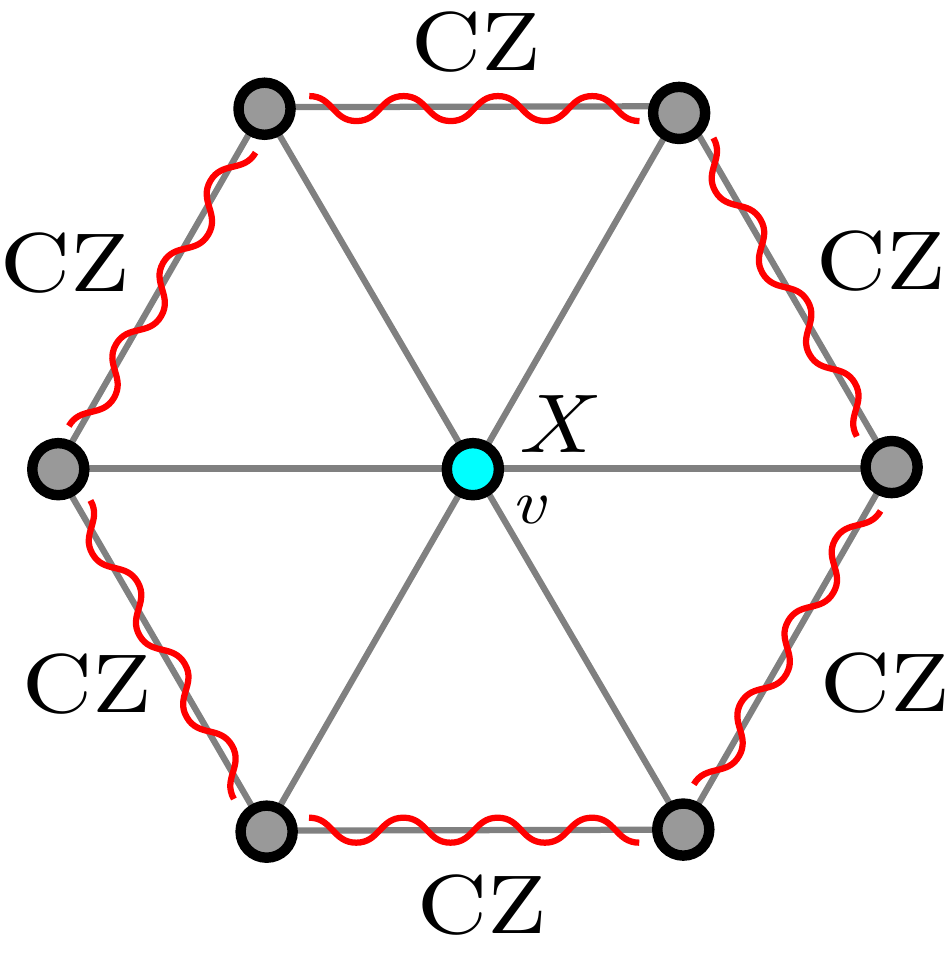}, \label{eq:type-ISPT_1}
\end{align}
where the red wavy lines represent $\mathrm{CZ}$ gates.

In the current setup, the Hamiltonian in Eq.~\eqref{eq:type-ISPT_1} is defined on a triangular lattice, sandwiched between other layers of the triangular lattice; see Fig.~\ref{fig:lattice_structure_3d}.
After gauging the global $\mathbb{Z}_2$ symmetry, we obtain the following Hamiltonian
\begin{align}
    H = -\sum_{v\in V} A_v W_v - \sum_{p\in F} B_p, \label{eq:Z2TQD_1}
\end{align}
where $A_v$ and $B_p$ are the same set of operators as in Eq.~\eqref{eq:3d toric code}. The operator $W_v$ is defined as 
\begin{align}   
W_v = \adjincludegraphics[width=0.35\linewidth,valign=c]{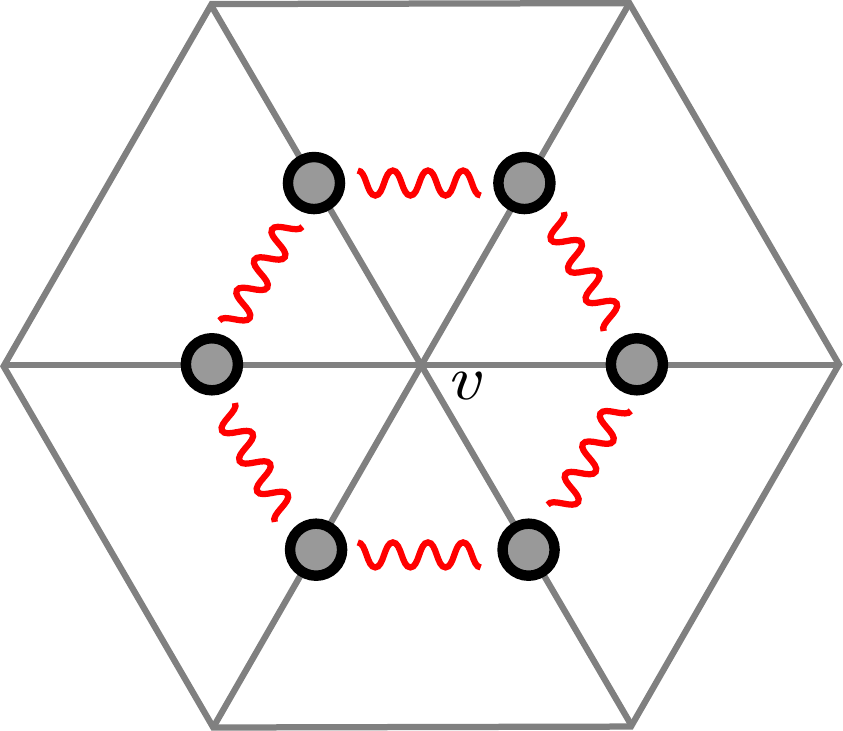},\label{eq:TQD_6}
\end{align}
where again the red wavy lines represent $\mathrm{CZ}$ gates.

If we were to remove the edges perpendicular to the plane in Eq.~\eqref{eq:Z2TQD_1}, our procedure would amount to gauging the Levin-Gu SPT~\cite{levin2012braiding}, which gives rise to the double semion model; see Appendix~\ref{appx:Z2TQD}. In this model, there are four anyons: $1$, $s$, $\bar{s}$ and  $s\bar{s}$. Their fusion rules are given by
\begin{align}
    s \times s = 1, \quad \bar{s} \times \bar{s} = 1, \quad s \times \bar{s} = s\bar{s}.
\end{align}
Their topological spins are
\begin{align}
    \theta(1) = 1,\ \theta(s) = i,\ \theta(\bar{s}) = -i,\ \theta(s\bar{s}) = 1.
\end{align}
The anyons $s$ and $\bar{s}$ are, respectively, a semion and an anti-semion, hence the name double semion.

Due to the edge in the perpendicular direction, there is a nontrivial interplay between the bulk excitations and the excitations at the domain wall. For instance, an $e$ particle entering the domain wall becomes an \( s\bar{s} \) boson in the double semion model. This boson can also leave the domain wall and become an $e$ particle on the other side as well. An $m$ string can also pass through this domain wall, with a semion \( s \) attached at the intersection of the $m$ string and the domain wall. The behavior of the excitations passing through the domain wall is shown in Fig.~\ref{fig:type-I}. For the construction of the operators that create excitations, see Appendix~\ref{appx:Z2TQD}.

\begin{figure}[h]
    \centering
    \includegraphics[width=0.5\columnwidth]{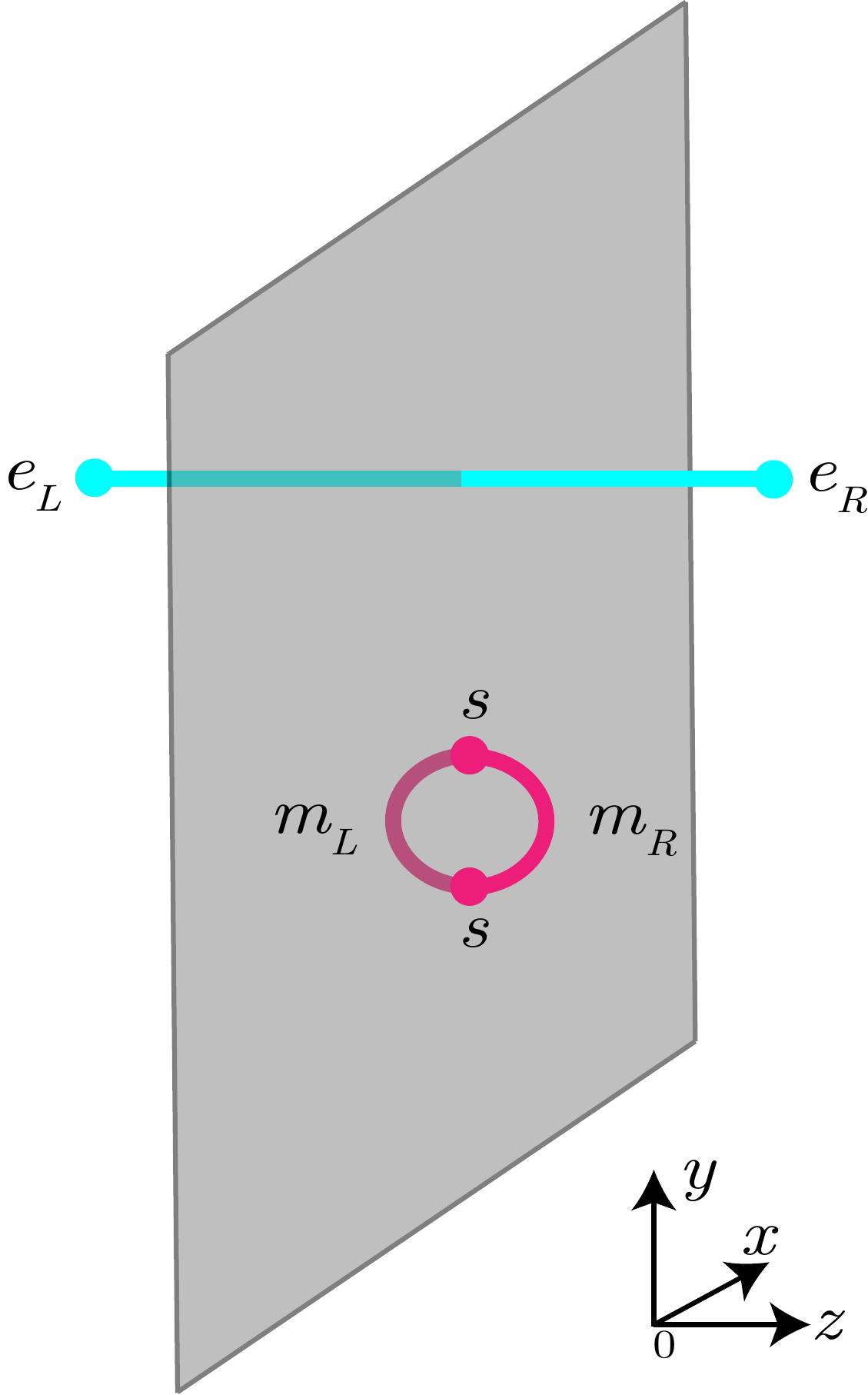}
    \caption{The excitations of the type-I domain wall.}
    \label{fig:type-I}
\end{figure}

The domain wall obtained here is similar to a gapped boundary known as the twisted smooth boundary of the 3d toric code~\cite{zhao2022string,ji2023boundary,luo2023gapped}. In that model, an electric charge from the bulk becomes a composite of a semion $s$ and an anti-semion $\bar{s}$. Furthermore a semi-loop of magnetic flux can terminate on it, with the endpoints attached to a pair of semions or anti-semions. However, note that our construction yields a domain wall between two copies of the 3d toric code, not a gapped boundary.

\subsection{Type-II: anchoring domain wall}
\label{subsec:3d_type-II}

Now we consider the cocycle $\omega_{II} \in \mathcal H^3(\mathbb{Z}_2\times \mathbb{Z}_2, U(1))$ in Eq.~\eqref{eq:cocycles_list}. Because this requires having two $\mathbb{Z}_2$ degrees of freedom, we assume that each site contains two qubits. After applying the corresponding SPT entangler, we obtain the Hamiltonian of the SPT
\begin{equation}
\begin{aligned}
    H = - \sum_{v\in R} \adjincludegraphics[width=3.5cm,valign=c]{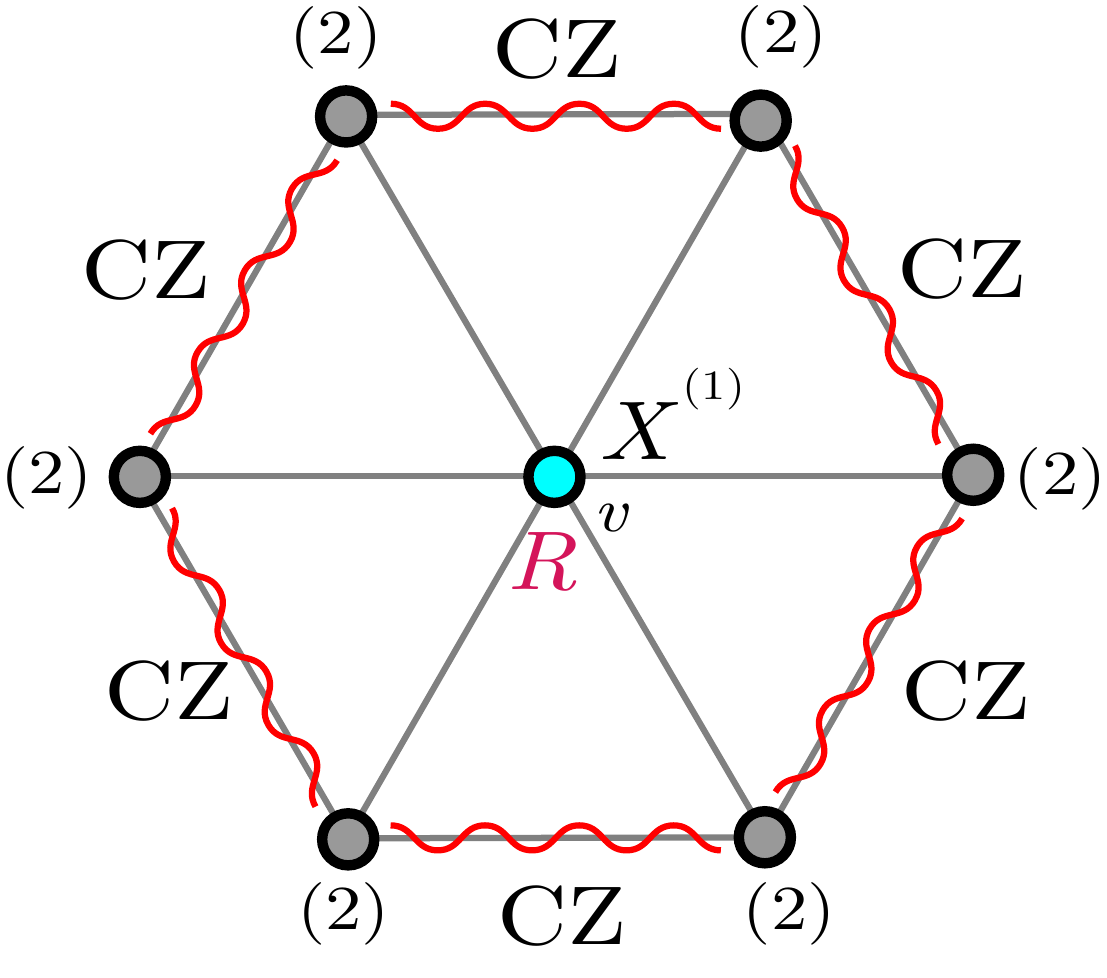} \\ - \sum_{v\in G} \adjincludegraphics[width=3.5cm,valign=c]{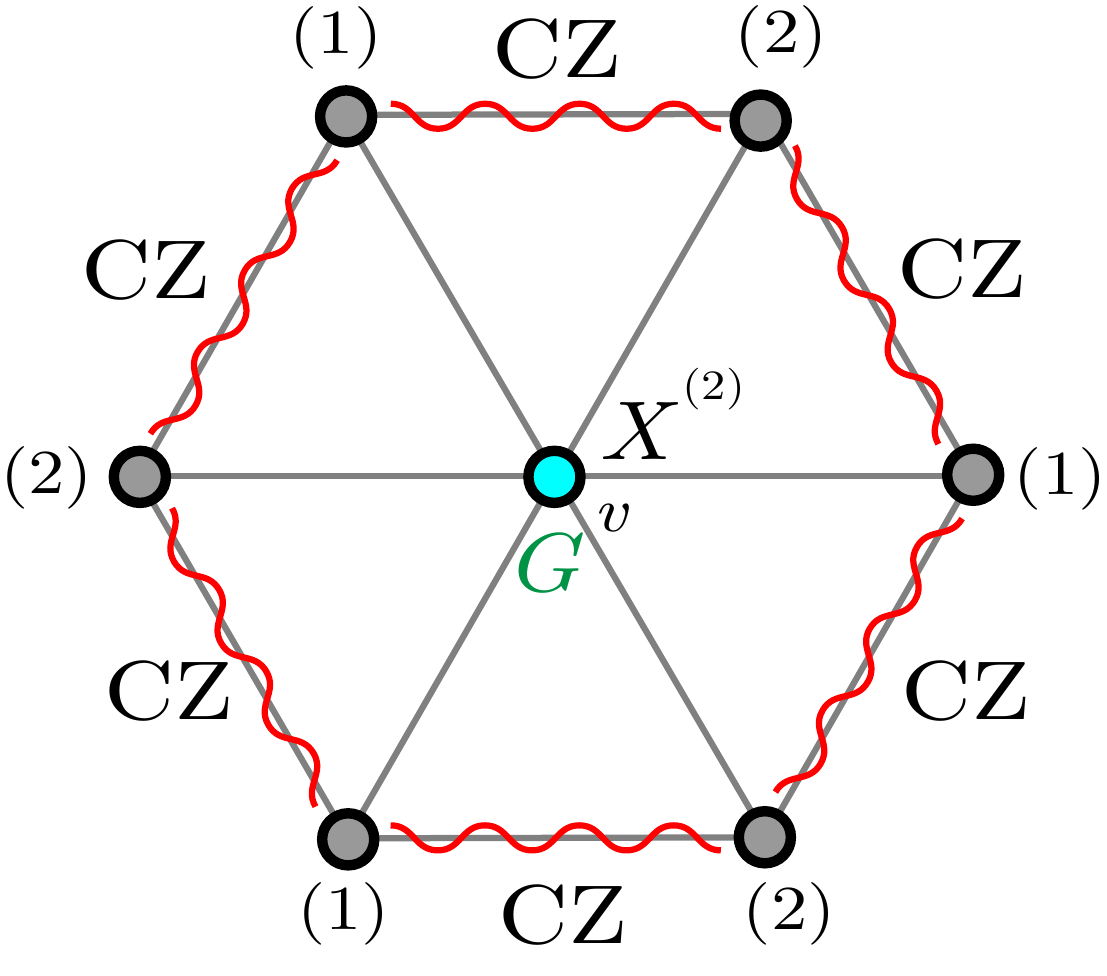} \\ - \sum_{v\in B} \adjincludegraphics[width=3.5cm,valign=c]{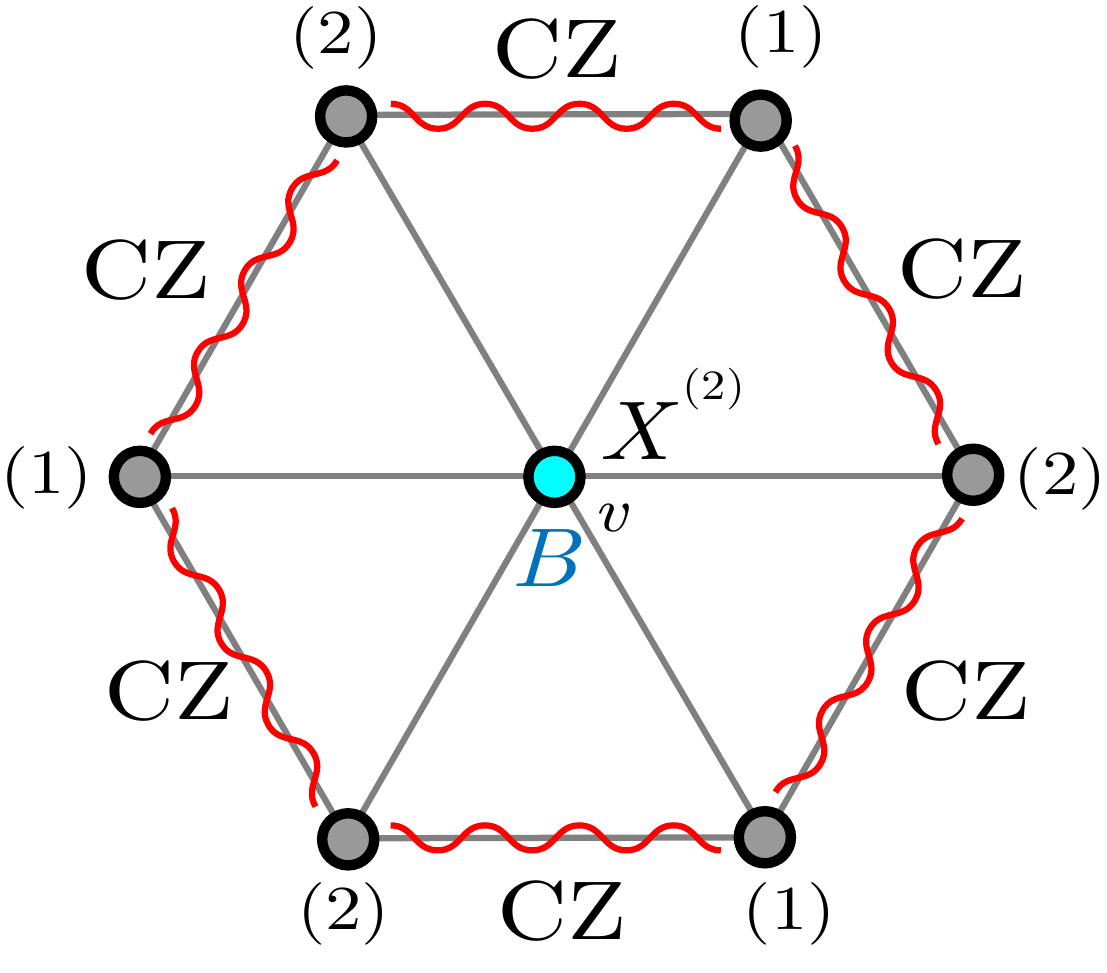},
\end{aligned}
\label{eq:spt_typeII_Hamiltonian}
\end{equation}
where $(i)$ labels the qubits at each site and each sum is over the vertices from $R, G,$ and $B$, respectively; see Fig.~\ref{fig:lattice_structure_3d}(a). For calculation details, see Appendix~\ref{appx:Z22TQD}.

The Hamiltonian in Eq.~\eqref{eq:spt_typeII_Hamiltonian} is embedded on a boundary of a 3d lattice in Fig.~\ref{fig:lattice_structure_3d}. After gauging we obtain a boundary for two copies of the 3d toric code. The resulting Hamiltonian is described in Appendix~\ref{appx:Z22TQD}. This boundary can be viewed alternatively as a domain wall between two 3d toric codes, which is the perspective we now take.

The anyon content of this domain wall can be inferred from a purely 2d model obtained by gauging Eq.~\eqref{eq:spt_typeII_Hamiltonian}. This is the twisted $\mathbb{Z}_2\times \mathbb{Z}_2$ quantum double model, which is in the same phase as the $\mathbb{Z}_4$ quantum double~\cite{hu2020electric}. The anyons for both models are listed in Table~\ref{tab:anyon_table_1} and Table~\ref{tab:anyon_table_2}, and there is a one-to-one correspondence between them. There are 16 types of anyons, generated by $e^{(1,0)}$, $e^{(0,1)}$, $m^{(1,0)}$, and $m^{(0,1)}$, where the superscripts represent layers the anyons belong to.

\begin{table}[h]
\centering
\begin{tabular}{|c | c | c | c|} 
 \hline
 $1$ & $m^{(1,0)}$ & $e^{(0,1)}$ & $m^{(1,0)}e^{(0,1)}$ \\
 \hline
 $m^{(0,1)}$ & $m^{(1,1)}$ & $m^{(0,1)} e^{(0,1)}$ & $m^{(1,1)}e^{(0,1)}$ \\ 
 \hline
 $e^{(1,0)}$ & $m^{(1,0)} e^{(1,0)}$ & $e^{(1,1)}$ & $m^{(1,0)} e^{(1,1)}$ \\
 \hline
 $m^{(0,1)} e^{(1,0)}$ & $m^{(1,1)} e^{(1,0)}$ & $m^{(0,1)} e^{(1,1)}$ & $m^{(1,1)} e^{(1,1)}$ \\
 \hline
\end{tabular}
\caption{Anyon table of the $\mathbb{Z}_2 \times \mathbb{Z}_2$ twisted quantum double.}
\label{tab:anyon_table_1}
\end{table}

\begin{table}[h]
\centering
\begin{tabular}{|c | c | c | c|} 
 \hline
 $1$ & $\widetilde{e}$ & $\widetilde{e}^2$ & $\widetilde{e}^3$ \\
 \hline
 $\widetilde{m}$ & $\widetilde{m}\widetilde{e}$ & $\widetilde{m} \widetilde{e}^2$ & $\widetilde{m} \widetilde{e}^3$ \\ 
 \hline
 $\widetilde{m}^2$ & $\widetilde{m}^2 \widetilde{e}$ & $\widetilde{m}^2 \widetilde{e}^2$ & $\widetilde{m}^2 \widetilde{e}^3$ \\
 \hline
 $\widetilde{m}^3$ & $\widetilde{m}^3 \widetilde{e}$ & $\widetilde{m}^3 \widetilde{e}^2$ & $\widetilde{m}^3 \widetilde{e}^3$ \\
 \hline
\end{tabular}
\caption{Anyon table of the $\mathbb{Z}_4$ quantum double.}
\label{tab:anyon_table_2}
\end{table}

To summarize, there are the following anyons.
\begin{itemize}
    \item Eight bosons: $1$, $m^{(1,0)}$, $m^{(0,1)}$, $e^{(1,0)}$, $e^{(0,1)}$, $e^{(1,1)}$, $m^{(1,0)}e^{(0,1)}$, and $m^{(0,1)} e^{(1,0)}$.
    \item Four fermions: $m^{(0,1)} e^{(0,1)}$, $m^{(1,0)} e^{(1,0)}$, $m^{(1,0)} e^{(1,1)}$, $m^{(0,1)} e^{(1,1)}$.
    \item Two semions: $m^{(1,1)}$, $m^{(1,1)} e^{(1,1)}$.
    \item Two anti-semions: $m^{(1,1)} e^{(0,1)}$, $m^{(1,1)} e^{(1,0)}$.
\end{itemize}

These anyons have a nontrivial interplay with the bulk excitations, as we explain below; see also Fig.~\ref{fig:type-II}. First of all, an $e$ particle from the right bulk, upon entering the domain wall, becomes an $e^{(0,1)}$ anyon on the domain wall. This anyon can be split into a pair of $m^{(1,0)}$ and $m^{(1,0)} e^{(0,1)}$, which are attached to a semi-loop excitation anchored on the opposite side of the domain wall. Thus the $e$ particle from the right bulk gets transformed to a semi-loop anchored on the domain wall on the opposite side. Similarly, an $e$-particle from the left bulk can get transformed to a semi-loop anchored on the domain wall on the opposite side. However, the domain wall anyon attached to the semi-loop is different. Instead of $m^{(1,0)}$, the $m^{(0,1)}$ anyons are attached. For the operators that create the excitations, see Appendix~\ref{appx:Z22TQD}.

\begin{figure}[h]
    \centering
    \includegraphics[width=0.5\columnwidth]{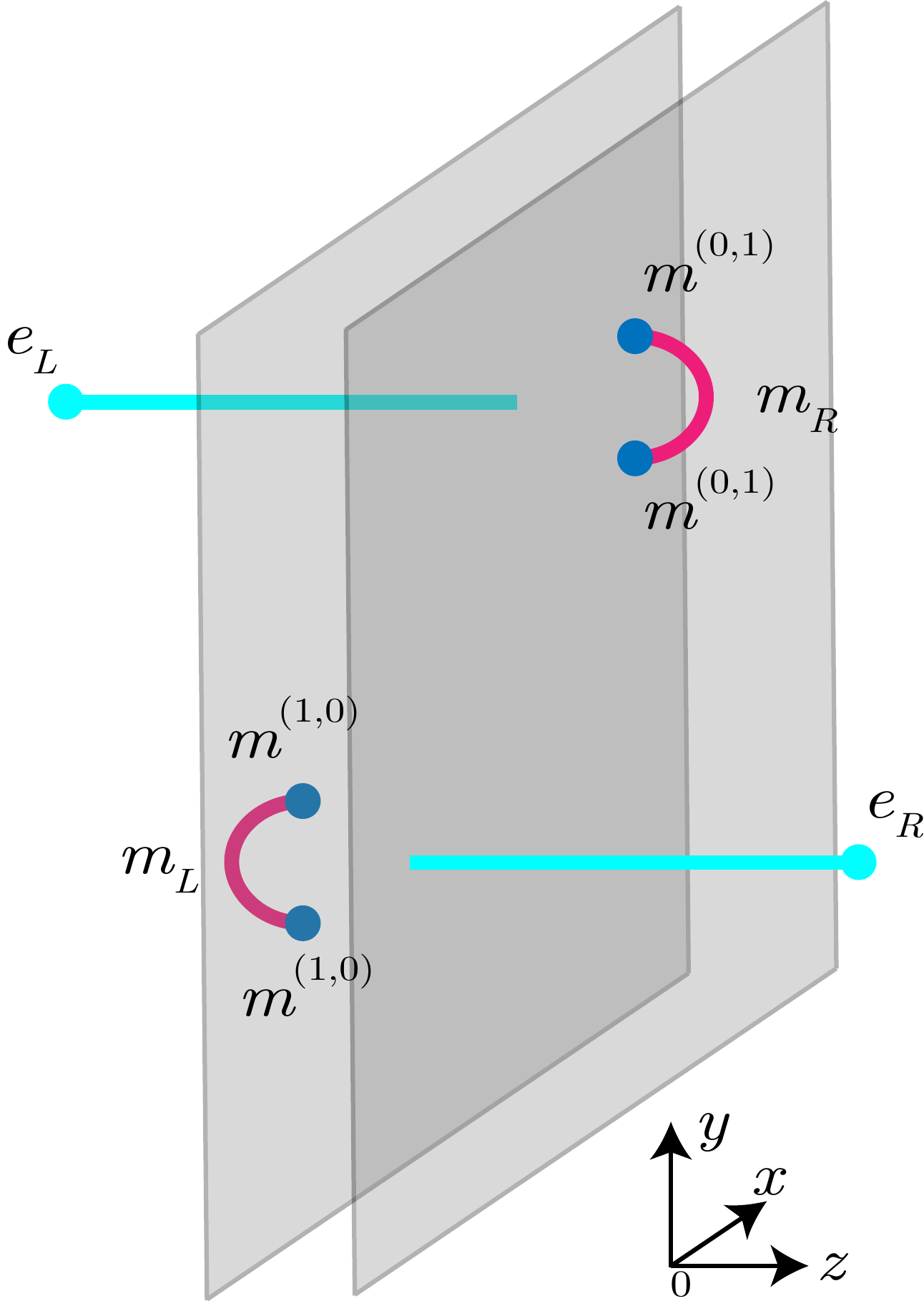}
    \caption{Excitations of the anchoring domain wall (type-II).
    }
    \label{fig:type-II}
\end{figure}

Thus, unlike the domain wall in Section~\ref{subsec:3d_type-I}, in this domain wall the bulk excitations change their types after passing through. To the best of our knowledge, such a domain wall has not been discussed before. This domain wall generalizes the $e \leftrightarrow m$ domain wall in the 2d toric code and is beyond the classification of the Lagrangian subgroup formalism. Potentially, it could be classified by the \textit{condensable algebra}~\cite{kong2014anyon,kong2020defects,zhao2022string,xu20242,kong2024higher}.

\subsection{Type-III: non-Abelian anchoring domain wall}
\label{subsec:3d_type-III}

Now we consider the cocycle $\omega_{III}\in \mathcal H^3(\mathbb{Z}_2\times \mathbb{Z}_2\times \mathbb{Z}_2, U(1))$ in Eq.~\eqref{eq:cocycles_list}. In this setup, on each site, we introduce three $\mathbb{Z}_2$ degrees of freedom. After applying the SPT entangler, we obtain
\begin{equation}
\begin{aligned}
    H = &- \sum_{v:R} \adjincludegraphics[width=3.2cm,valign=c]{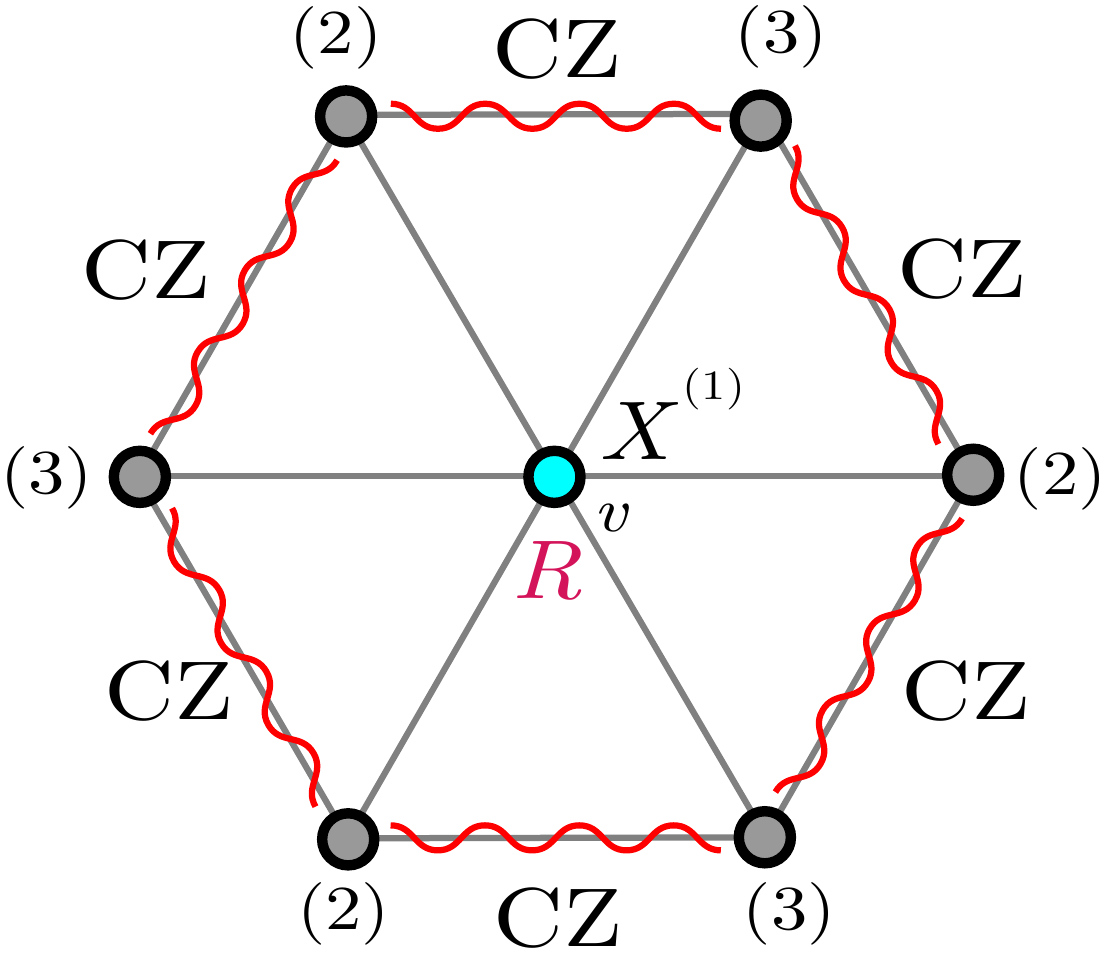} \\ &-\sum_{v:G} \adjincludegraphics[width=3.2cm,valign=c]{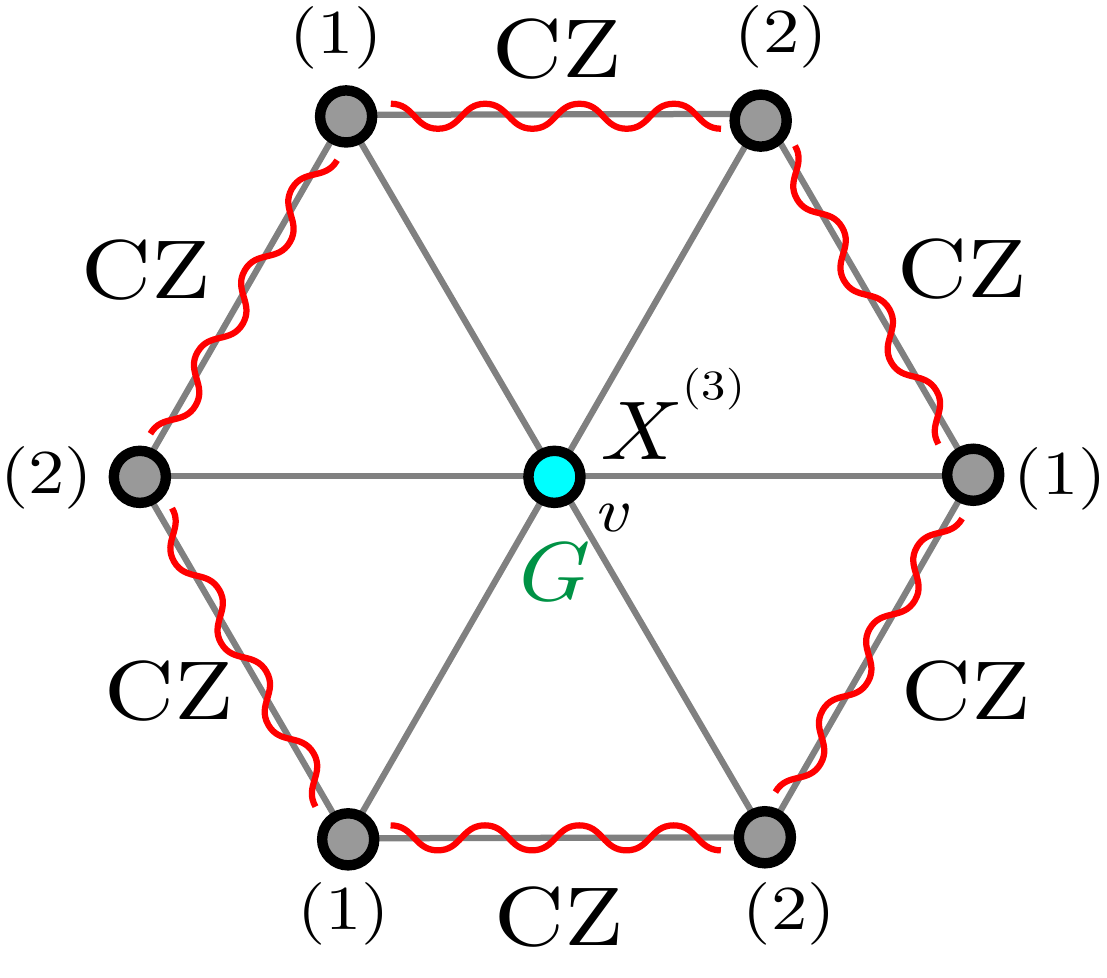} \\
    &-\sum_{v:B}\adjincludegraphics[width=3.2cm,valign=c]{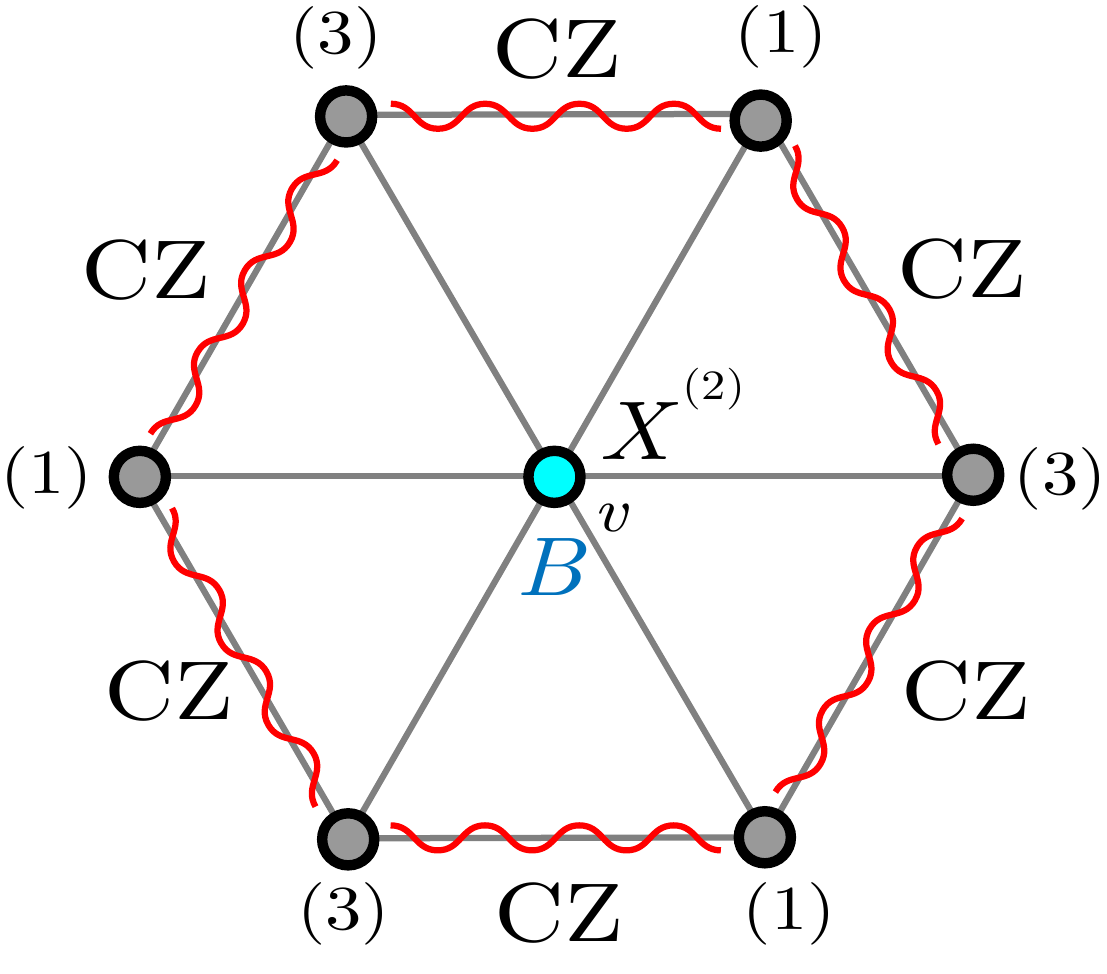}
\end{aligned}
\label{eq:type-III_spt}
\end{equation}
where $(i)$ labels the qubits at each site.

After gauging Eq.~\eqref{eq:type-III_spt}, we obtain the following
\begin{align}
    H = - \sum_{v\in V} \sum_{g \in G} A_v^g W_v^g - \sum_{p\in F} B_p, \label{eq:Z23TQD}
\end{align}
where $g = (a,b,c) \in G = \mathbb{Z}_2^3$ and $A_v^g$ is defined as
\begin{equation}
    A_v^g = \left(A_v^{(1)}\right)^a \left(A_v^{(2)}\right)^b \left(A_v^{(3)}\right)^c.
\end{equation}
The $B_p$ term is the same as in Eq.~\eqref{eq:3d toric code}, on all three copies. The additional term $W_v^g$ is defined as follows
\begin{equation}
    \begin{aligned}
        W_v^{(1,b,c)} &= \adjincludegraphics[width=3.2cm,valign=c]{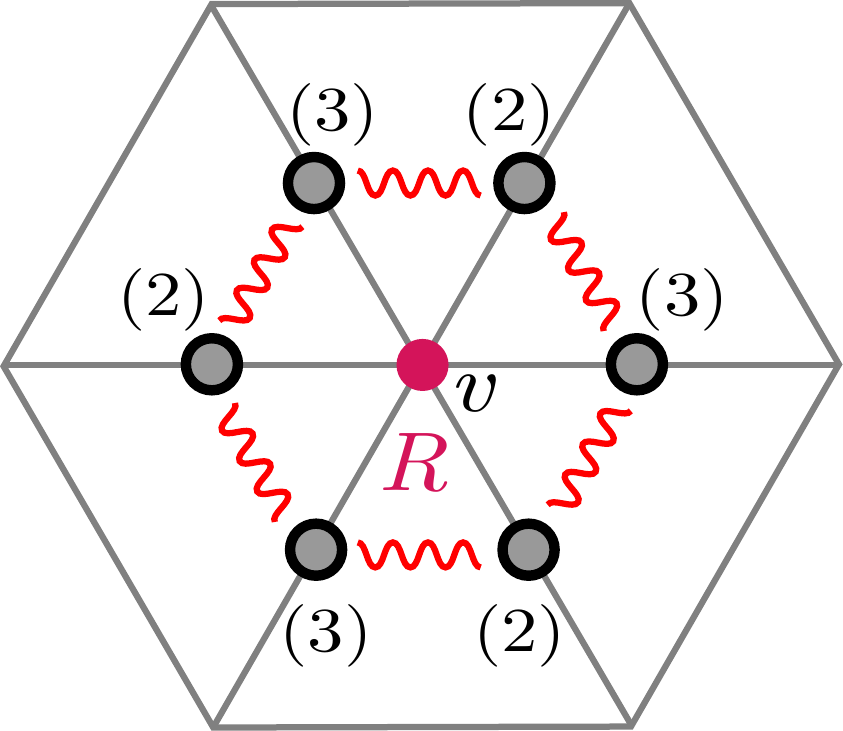},\\
W_v^{(a,b,1)} &= \adjincludegraphics[width=3.2cm,valign=c]{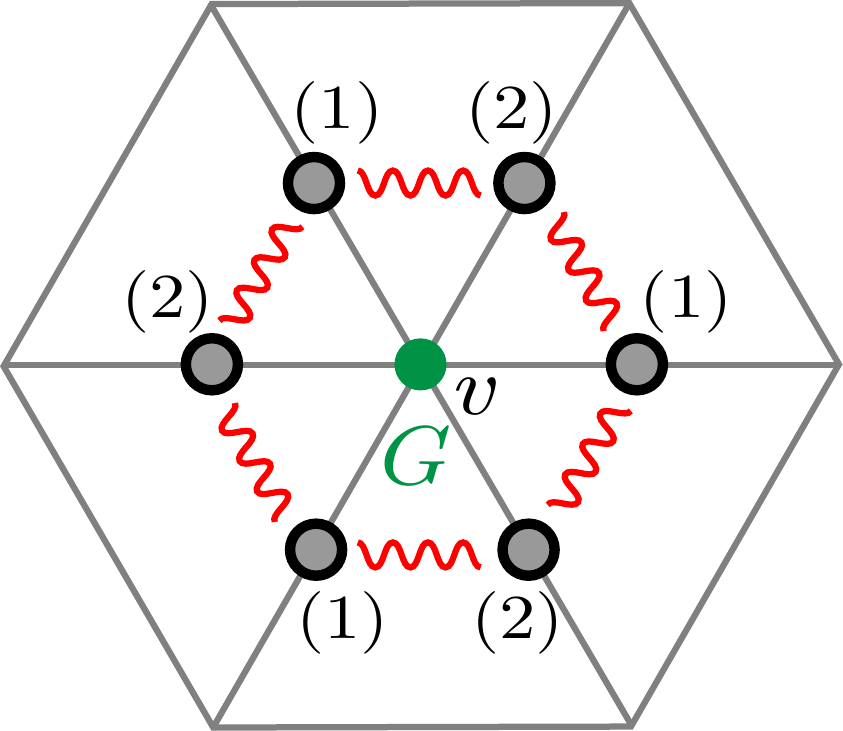},\\
W_v^{(a,1,c)} &= \adjincludegraphics[width=3.2cm,valign=c]{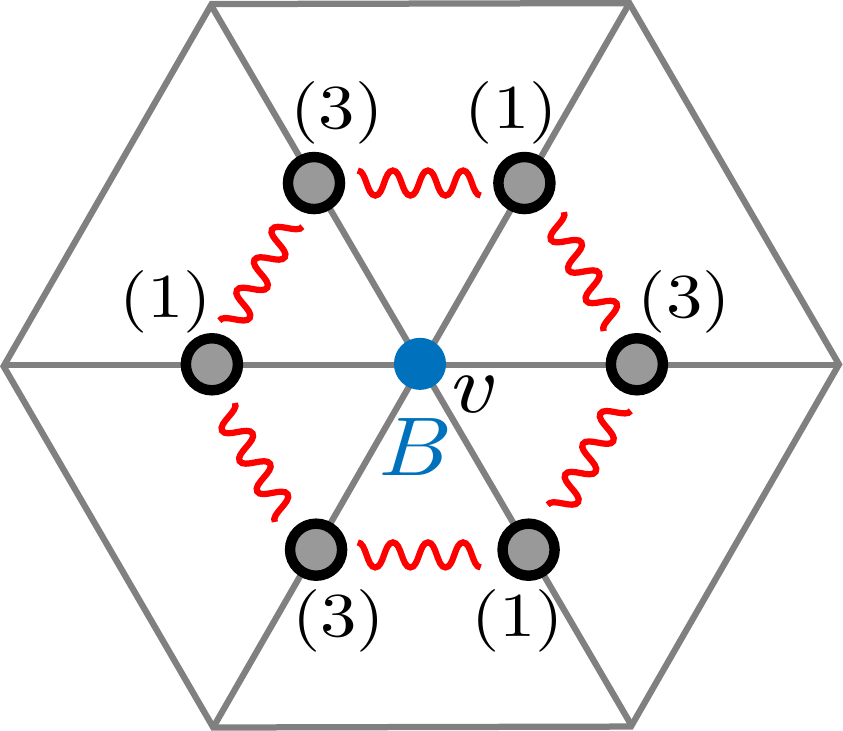},
\label{eq:gaugedZ23SPT_1}
    \end{aligned}
\end{equation}
where $a,b,c\in\{0,1\}$.
For calculation details, see Appendix~\ref{appx:Z23TQD}.

Similar to the other domain walls discussed so far, the anyon content of the domain wall can be inferred by gauging Eq.~\eqref{eq:type-III_spt} on a plane. This becomes the $\mathbb{Z}_2^3$ twisted quantum double, which has been widely studied recently in the context of gauged-SPT defects, exotic boundaries, and fault-tolerant quantum computation~\cite{barkeshli2023codimension, dua2023quantum, zhu2022topological, song2024magic, wen2024string}. The $\mathbb{Z}_2^3$ twisted quantum double is equivalent to the $D_4$ quantum double~\cite{propitius1995topological,hu2020electric}, a non-Abelian anyon model. Thus the SPT-sewn domain wall can host non-Abelian anyons. The excitations can be labeled by the magnetic fluxes, $A \in \mathbb{Z}_2^3$, and the associated projective representation corresponding to a 2-cocycle $c_A$, which is given by the slant product of the 3-cocycle
\begin{equation}
\begin{aligned}
    c_A(B, C) &:= i_A \omega(B,C) \\ &= \frac{\omega_{III}(A, B, C) \omega_{III}(B, C, A)}{\omega_{III}(B, A, C)}.
\end{aligned}
\end{equation}

The anyons are summarized as follows \cite{propitius1995topological}.
\begin{itemize}
    \item $A = (0, 0, 0)$. There are 8 anyons in this class: 1 vacuum and 7 electric charges.
    We label them as $1$, $e^{(100)}$, $e^{(010)}$, $e^{(001)}$, $e^{(110)}$, $e^{(011)}$, $e^{(101)}$ and $e^{(111)}$, where $e^{(abc)}\times e^{(a'b'c')} = e^{(a+ a' b+ b'c+ c')}$.
    \item $A = (1, 0, 0)$. There are two two-dimensional irreducible representations for $c_A$, one with bosonic and the other with fermionic self-statistics. Therefore, we label them as  $m^{(100)}$ and $f^{(100)}$. Similarly, we have $m^{(010)}$ and $f^{(010)}$ when $A = (0, 1, 0)$, and $m^{(001)}$ and $f^{(001)}$ when $A = (0, 0, 1)$.
    \item $A = (1, 1, 0)$. Similarly to the previous case, there are two two-dimensional irreducible representations for $c_A$, one with bosonic and the other with fermionic self-statistics. We use the same nomenclature to label the anyons in this class as $m^{(110)}$, $m^{(011)}$, $m^{(101)}$, $f^{(110)}$, $f^{(011)}$, and $f^{(101)}$.
    \item $A = (1, 1, 1)$. In this case, there are two two-dimensional irreducible representations for $c_A$, one with semionic and the other with anti-semionic self-statistics. Therefore, we label them as $s$ and $\bar{s}$.
\end{itemize}
In summary, there are 22 different types of anyons in $\mathbb{Z}_2^3$ twisted quantum double, including 1 vacuum, 7 electric charges, and 14 dyons.

Similar to the domain wall in Section~\ref{subsec:3d_type-II}, this is also an anchoring domain wall. However, an important difference is that this domain wall hosts non-Abelian anyons. For instance, at the intersection point of semi-loop (say, over the first copy) and the domain wall we get the $m^{(100)}$ anyon. These can be fused into electric charges or vacuum, which can then propagate into the bulk as an electric charge. Thus we see that the semi-loop can be transformed into electric charge on the two remaining copies, i.e.,
\begin{align}
    m_R \times m_R = 1 + e_{L1} + e_{L2} + e_{L1} e_{L2}.
\end{align}
Here we are considering a setup in which there is one copy of the 3d toric code on the right side of the domain wall and two copies of the 3d toric code on the left side. Then $m_R$ is the endpoint of the semi-loop on the right side and $e_{L1}$ and $e_{L2}$ are the electric charges on the two copies on the left side; see Fig.~\ref{fig:type-III}(a).

We can also change the fusion channel by braiding the domain wall magnetic anyons \cite{propitius1995topological,iqbal2024non}. For instance, we can consider the setup displayed in Fig.~\ref{fig:type-III}(b). We first create a magnetic semi-loop for each copy on the left bulk, resulting a pair of $m^{(100)}$ anyons and a pair of $m^{(010)}$ anyons on the domain wall. By braiding one $m^{(100)}$ and one $m^{(010)}$,  an $e^{(001)}$ particles are created on the domain wall, which can propagate to the bulk. This happens because braiding two magnetic fluxes of different types generates an electric charge of the third type at the intersection of the path~\cite{davydova2024quantum}. For the operators that create excitations, see Appendix~\ref{appx:Z23TQD}.

\begin{figure}[h]
    \centering
    \includegraphics[width=0.7\columnwidth]{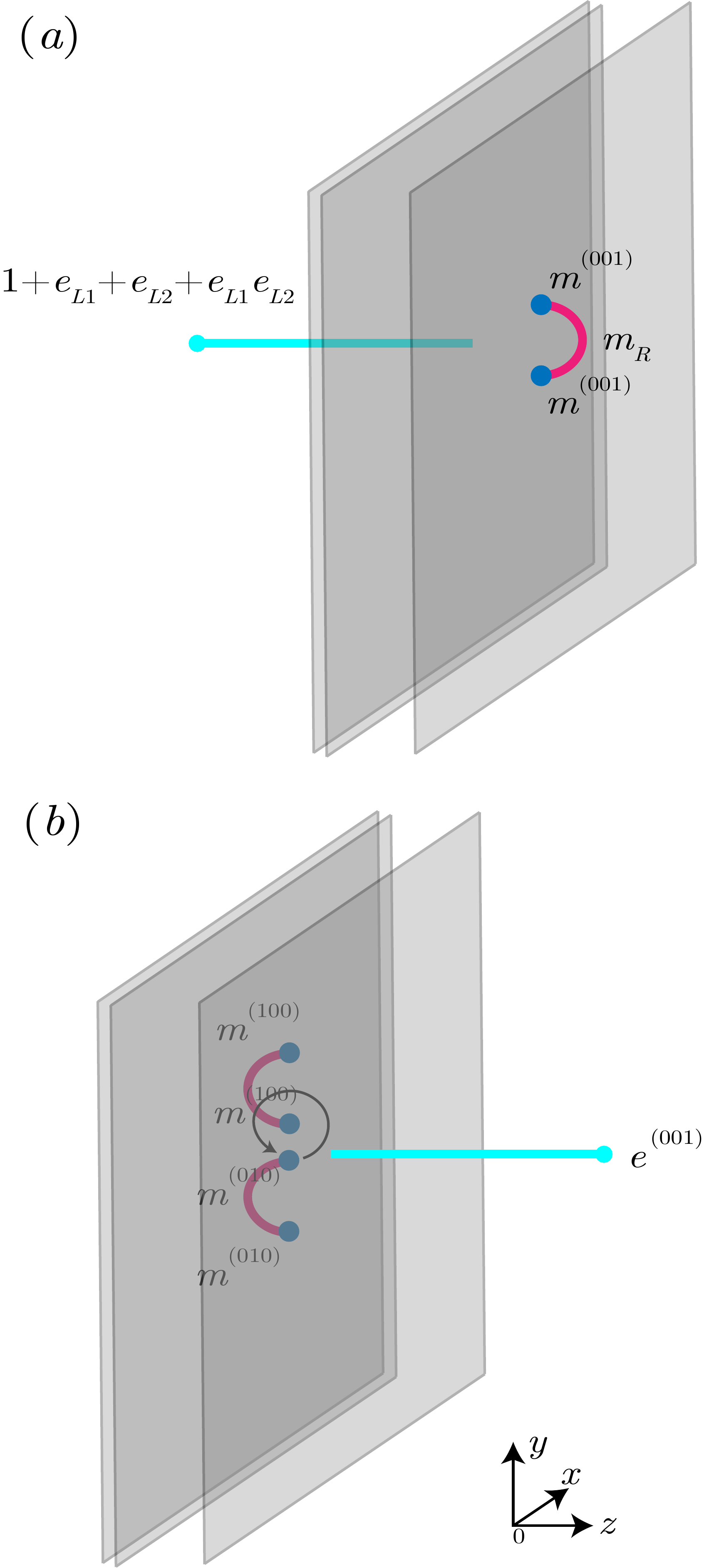}
    \caption{Excitations of the anchoring domain wall (type-III). (a) A semi-loop on one side and a direct sum of electric charges from the other side can condense on this domain wall. (b) Two semi-loops are attached to the domain wall. By braiding one of the $m^{(010)}$ with a $m^{(100)}$, one can obtain a $e^{(001)}$ charge on the other side.}
    \label{fig:type-III}
\end{figure}

\section{Simple models of the exotic domain walls}
\label{sec:3d_generalized_pauli}

In this section, we describe simplified models that produce the same physics as the ones described in Sections~\ref{subsec:3d_type-I} and~\ref{subsec:3d_type-II}.
There are two advantages of these simplified models. 
First, they are defined on a cubic lattice, similar to the well-known 3d toric code model.
Second, they are (generalized) Pauli models~\cite{ellison2022pauli}, which make them easy to study.

We consider a cubic lattice throughout this section. In the bulk, each edge hosts one qubit, and the 3d toric code model is given as in Eq.~\eqref{eq:3d toric code},
\beq
H&=-\sum_{v\in V} A_v -\sum_{p\in F} B_p,
\eeq
where $A_v$ is given by
\begin{align}
\adjincludegraphics[width=0.25\columnwidth,valign=c]{3dTC_3.pdf},
\end{align}
and $B_p$ for different plaquettes are given by
\begin{align}
\adjincludegraphics[width=0.75\columnwidth,valign=c]{3dTC_2.pdf}.
\end{align}

\subsection{Type-I domain wall}
\label{subsec:type-I_cubic}

We now describe a simple model of the type-I domain wall. Away from the domain wall, each edge hosts a single qubit. For the edges on the domain wall, we introduce two qubits per site. These two qubits can be viewed as a four-dimensional qudit. The set of operators acting on this qudit is generated by the generalized Pauli $\widetilde{X}$ and $\widetilde{Z}$ operators
\begin{align}
    \widetilde{X} = \sum_{h=0}^3| h+1\ \mathrm{mod}\ 4\rangle\langle h|, \ \widetilde{Z} = \sum_{h=0}^3 i^{h} |h\rangle\langle h|.
\end{align}

Because the stabilizers away from the domain wall are the same as that of the 3d toric code, we focus on the stabilizers at the domain wall. The domain wall stabilizers are given as follows
\begin{align*}   
\adjincludegraphics[width=\columnwidth,valign=c]{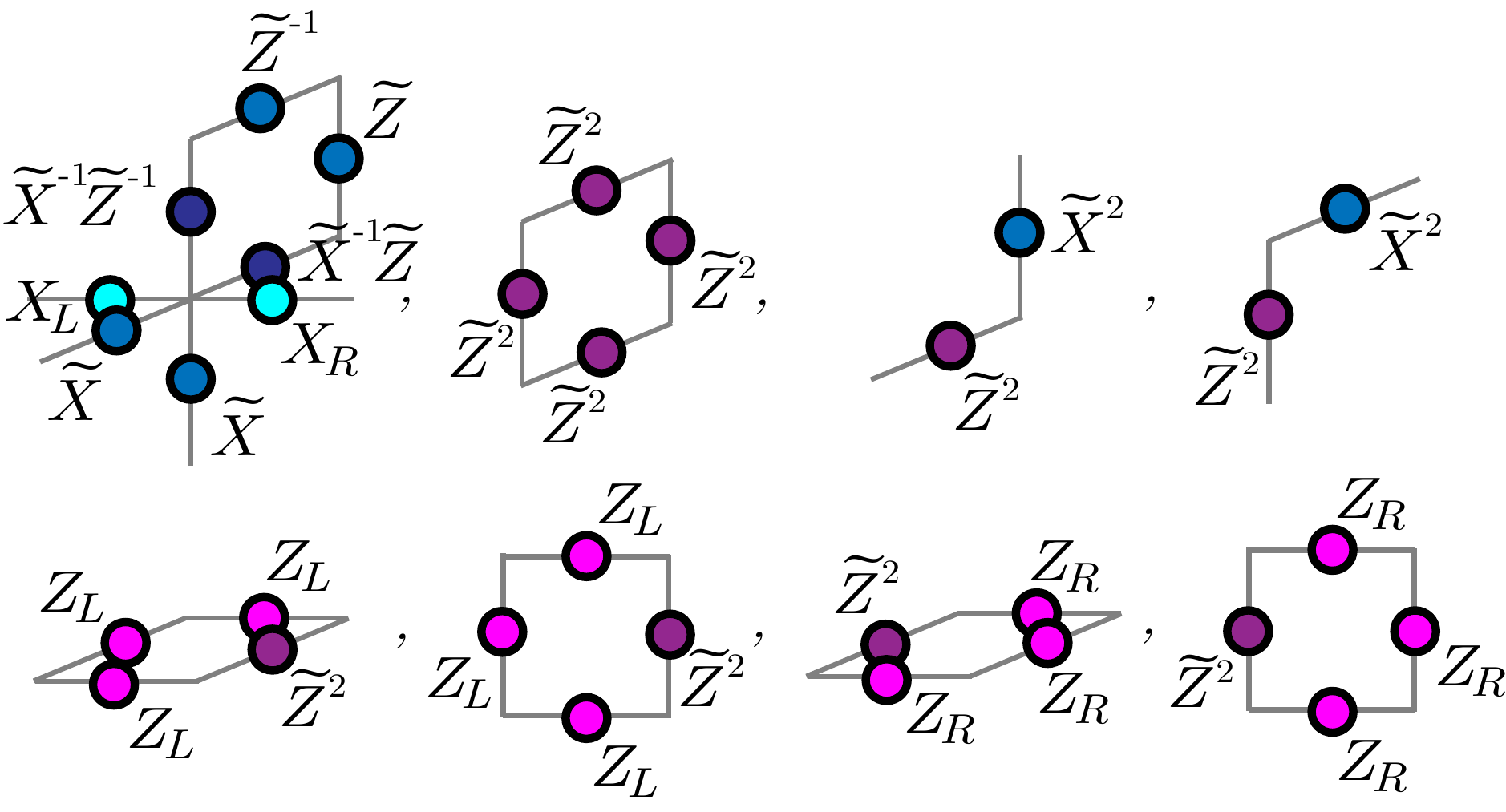}.
\end{align*}
The domain wall contains the plaquette depicted as the second term and its translations in the $x$ and $y$ directions. Here $L$ and $R$ represent the qubits lying to the left and the right of the domain wall.

On the domain wall there are four anyons: $1$, $s$, $\bar{s}$ and  $s\bar{s}$. The fusion rules are given by
\begin{align}
    s \times s = 1, \quad \bar{s} \times \bar{s} = 1, \quad s \times \bar{s} = s\bar{s}.
\end{align}
Their topological spins are
\begin{align}
    \theta(1) = 1,\ \theta(s) = i,\ \theta(\bar{s}) = -i,\ \theta(s\bar{s}) = 1.
\end{align}
The anyon \( s \) is a semion, while \( \bar{s} \) is an anti-semion.

On the domain wall, pairs of $s$, $\bar{s}$, and $s \bar{s}$ are created by the following operators\footnote{Here we only consider creating anyon pairs in a specific direction. For the full discussion we refer the reader to Ref.~\cite{ellison2022pauli}.}
\begin{equation}
    \begin{aligned}
        W^s = \adjincludegraphics[width=1.3cm, valign=c]{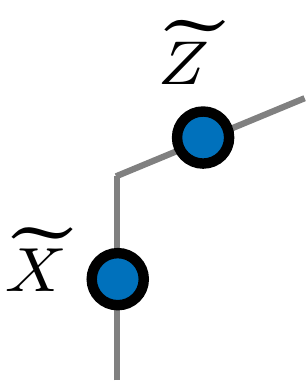},
        W^{\bar{s}} = \adjincludegraphics[width=1.3cm, valign=c]{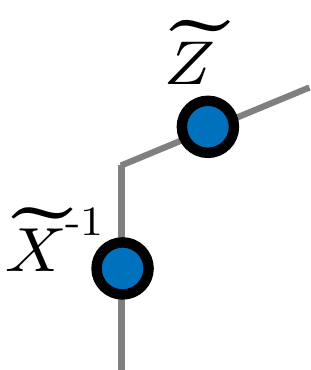},
        W^{s \bar{s}} = \adjincludegraphics[width=0.6cm, valign=c]{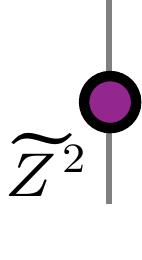}.
    \end{aligned}
\end{equation}
Note that when a pair of $s$ or $\bar{s}$ anyons is created, two semi-loops of magnetic flux are also generated in the bulk on both sides of the domain wall. A pair of $s \bar{s}$ anyons can be created solely on the domain wall. However, by applying a string of Pauli $Z$ operators, one can move an $s \bar{s}$ anyon into the bulk, where it becomes an electric charge. 

In terms of the bulk excitations, the above description can be rephrased as follows. An $e$ particle entering the domain wall becomes an \( s\bar{s} \) boson in the double semion model. This boson can further leave the domain wall and become an $e$ particle on the other side. Therefore, $e$ can pass through this domain wall unchanged. Similarly, an $m$ string can also pass through this domain wall, with a semion \( s \) anchored to the intersection of the $m$ string and the domain wall.

\subsection{SPT-sewn domain wall: $\mathbb{Z}_4$ quantum double}
\label{subsec:type-II_cubic}

Now we introduce a model of the type-II domain wall. The stabilizers of this model are shown below:
\begin{align}
\adjincludegraphics[width=0.85\columnwidth,valign=c]{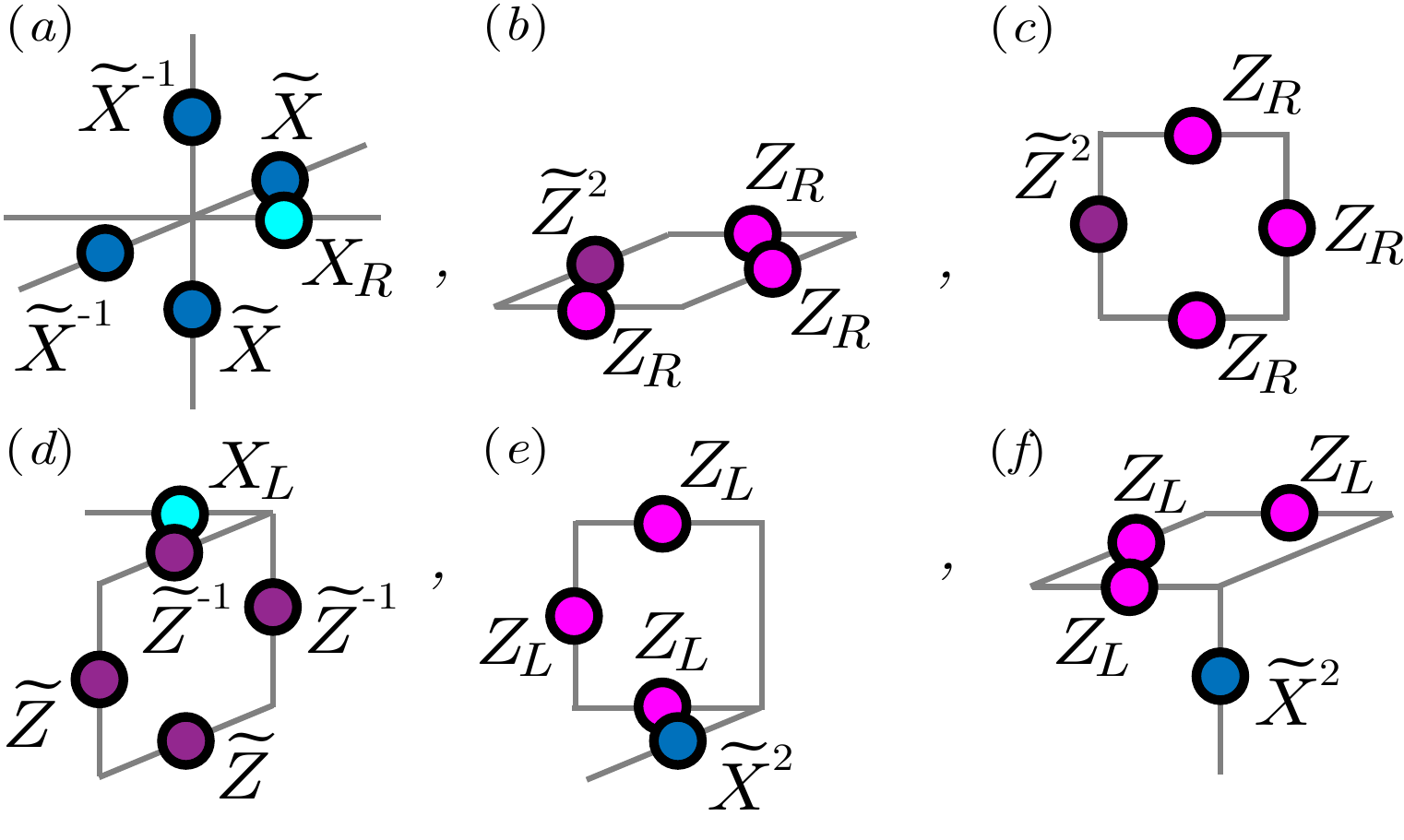}, \label{eq:gaugedZ4SPT}
\end{align}
following the convention used in Section~\ref{subsec:type-I_cubic}.

On the domain wall, the excitations are the anyons of the $\mathbb{Z}_4$ quantum double. The set of anyons is generated by $\{\tilde{e}, \tilde{m}\}$. Both of them have bosonic statistics. They have bosonic self-statistics
\begin{align}
    \theta(\tilde{e}) = \theta(\tilde{m}) = 1.
\end{align}
The combinations of $\tilde{e}$ and $\tilde{m}$ can have semionic or fermionic self-statistics, namely
\begin{align}
    \theta(\tilde{m}\tilde{e}) = i, \quad \theta(\tilde{m}^2\tilde{e}) = -1.
\end{align}

By applying the following operator on the domain wall
\begin{align}
W^{\tilde{e}^2} = \adjincludegraphics[width=1cm, valign=c]{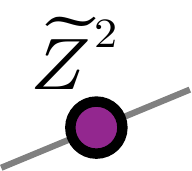}, \label{eq:TQD_3}
\end{align}
the neighboring stabilizers in Eq.~\eqref{eq:gaugedZ4SPT}(a) are violated, creating a pair of $\tilde{e}^2$ anyons on the corresponding vertices. By further applying Pauli $Z$ operators on the connected edges in the right bulk, this pair of $\tilde{e}^2$ anyons moves into the bulk and becomes a pair of $e$ charges. We conclude that an $e$-particle from the right bulk corresponds to a $\tilde{e}^2$ anyon on the domain wall.

Similarly, by applying the following operator on the domain wall
\begin{align}
W^{\tilde{m}^2} = \adjincludegraphics[width=0.85cm, valign=c]{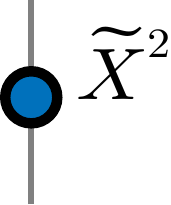},
\end{align}
the neighboring stabilizers in Eq.~\eqref{eq:gaugedZ4SPT}(d) are violated, creating a pair of $\tilde{m}^2$ anyons on the corresponding faces. By further applying Pauli $Z$ operators on the connected edges in the left bulk, this pair of $\tilde{m}^2$ anyons moves into the bulk and becomes a pair of $e$ charges. We conclude that an $e$ particle from the left bulk corresponds to an $\tilde{m}^2$ anyon on the domain wall.

Now, let us consider the following scenario. By applying the following operator on the domain wall
\begin{align}
W^{\tilde{e}} = \adjincludegraphics[width=1cm, valign=c]{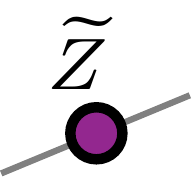}, \label{eq:TQD_2}
\end{align}
a pair of $\tilde{e}$ and $\tilde{e}^3$ anyons is created on the domain wall. As we discussed before, an $\tilde{e}^2$ anyon on the domain wall can become an $e$ charge in the right bulk. Therefore, we can apply Pauli $Z$ strings to move an $\tilde{e}^2$ anyon into the bulk, leaving a pair of $\tilde{e}$ anyons on the domain wall. Furthermore, the operator in Eq.~\eqref{eq:TQD_2} violates the stabilizer in Eq.~\eqref{eq:gaugedZ4SPT}(e). By further applying a Pauli $X$ operator in the left bulk, a semi-loop of magnetic flux is generated. Therefore, the resulting configuration is a semi-loop of magnetic flux anchored on the domain wall, with the endpoints attached to a pair of $\tilde{e}$ anyons, and an electric charge emerging on the other side.

Similar scenario happens when we apply the operator $W^{\tilde{m}}$ on the domain wall
\begin{align}
W^{\tilde{m}} = \adjincludegraphics[width=0.8cm, valign=c]{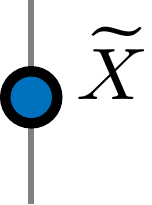}. \label{eq:TQD_5}
\end{align}
After applying this operator, a pair of $\tilde{m}$ and $\tilde{m}^3$ anyons are created on the domain wall. As we discussed before, an $\tilde{m}^2$ anyon on the domain wall can become an $e$ charge in the left bulk. Therefore, we can apply Pauli $Z$ strings to move an $\tilde{m}^2$ anyon into the bulk, leaving a pair of $\tilde{m}$ anyons on the domain wall. Furthermore, the operator in Eq.~\eqref{eq:TQD_5} violates the stabilizer in Eq.~\eqref{eq:gaugedZ4SPT}(c). A semi-loop of magnetic flux is thus generated in the right bulk. The nontrivial properties of this domain wall are depicted in Fig.~\ref{fig:type-II_2}.

\begin{figure}[h]
    \centering
    \includegraphics[width=0.5\columnwidth]{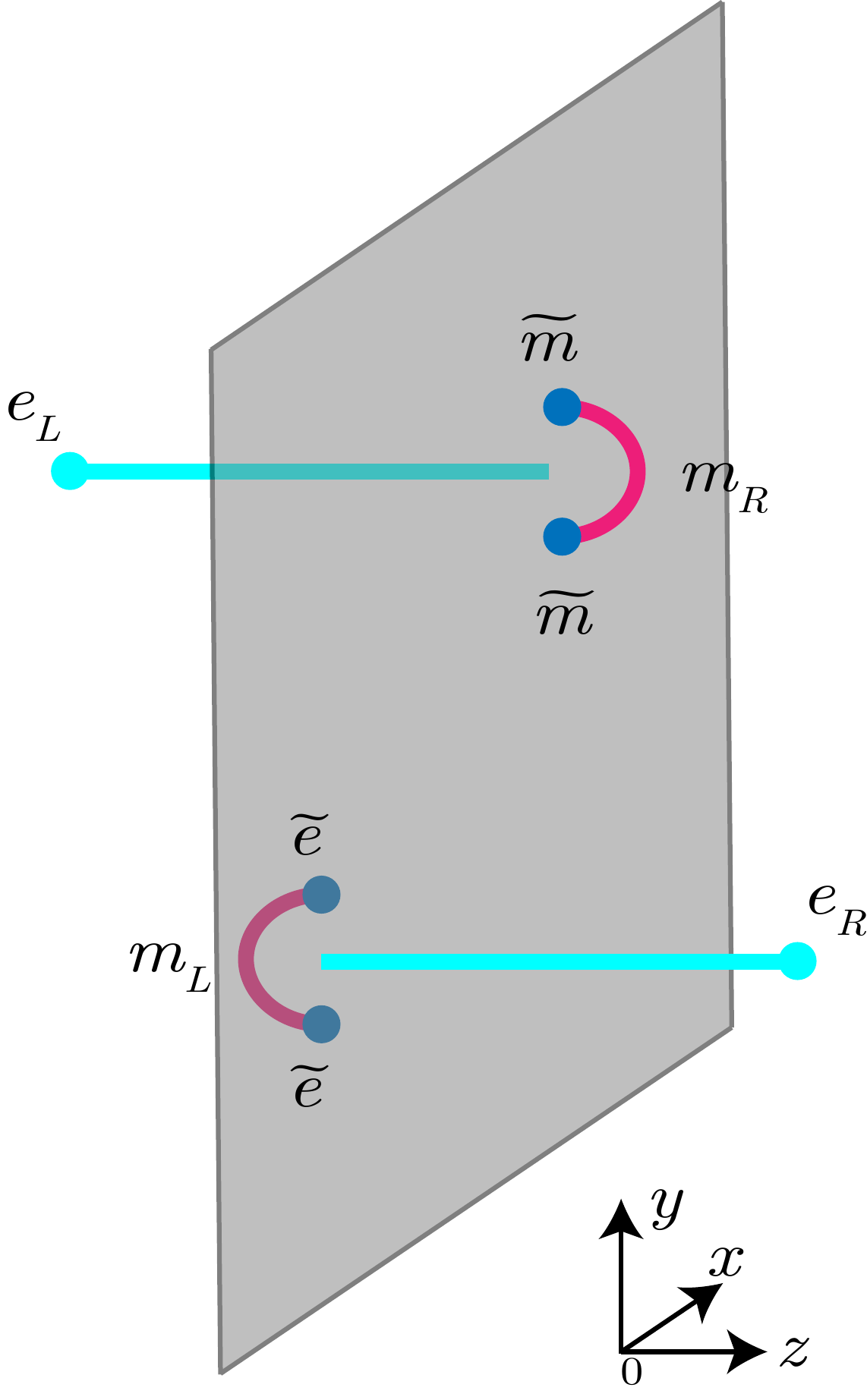}
    \caption{Excitations of the anchoring domain wall in terms of the anyons of the $\mathbb{Z}_4$ quantum double.
    }
    \label{fig:type-II_2}
\end{figure}

\section{Discussion}
\label{sec:discussion}

In this article, we have introduced \emph{SPT-sewing}, which is a systematic method of constructing gapped domain walls of topologically ordered systems. Compared to the prior works~\cite{barkeshli2023codimension,ren2024efficientpreparationsolvableanyons}, where gauging was used to create lower-dimensional defects from defects in a simpler topologically ordered (or trivial) phase, SPT-sewing uses a lower-dimensional SPT to entangle two disconnected topological orders.

Using SPT-sewing, we proved that all the invertible domain walls in the 2d Abelian $G$ quantum double model can be constructed from $G\times G\times G$ SPT-sewing, and illustrated the discussion with several examples.
We conjecture that all the invertible domain walls in a generic 2d quantum double can be constructed from SPT-sewing with a non-invertible $G\times \text{Rep}(G)\times G$ symmetry.
Our investigation of the $S_3$ quantum double model provides a nontrivial evidence of this conjecture.

We also used our method of SPT-sewing to construct \emph{anchoring domain walls}, which are exotic domain walls in the 3d toric code.
We provided the domain wall lattice models constructed by sewing the canonical 2d SPT states on a triangular lattice given by the 3-cocycles, as well as the equivalent but simpler domain wall lattice models on a cubic lattice.
These domain walls transform point-like excitations into semi-loop-like excitations anchored on them.
As such, anchoring domain walls appear to generalize the $e\leftrightarrow m$ exchange domain wall~\cite{Bombin2010} to higher dimensions. 

There are several future directions to pursue.
First, it is natural to expect that SPT-sewing will lead to efficient methods of creating gapped domain walls in topologically ordered lattice models. 
Since the gauging map for solvable groups can be implemented via constant-depth adaptive circuits~\cite{verresen2022efficientlypreparingschrodingerscat,Lu2022,bravyi2022adaptiveconstantdepthcircuitsmanipulating,Tantivasadakarn2024}, we anticipate the domain walls for Abelian models, the $C\leftrightarrow F$ domain wall and the exotic domain walls in 3d that we described in our work to be also implementable via constant-depth adaptive circuits.
Second, it is natural to ask whether there is a field-theoretical description of the SPT-sewing construction and the models we discovered.
To the best of our knowledge, the anchoring domain walls in the 3d toric code have not been studied before and they appear not to fit into the existing categorical classification of lower-dimensional defects in Ref.~\cite{kong2024higher}.\footnote{We thank Liang Kong for the related discussion.}
We leave their classifications and extensions to higher dimensions for future work. 
Third, it would be interesting to understand if the anchoring domain walls can be used to design a self-correcting quantum memory~\cite{Brown2016}.
The anchoring domain walls transform point-like excitations to semi-loop-like excitations, thereby inhibiting their propagation.
However, it is not yet clear how to create a model with a macroscopic energy barrier using the anchoring domain wall, which would be necessary for self-correction.

\acknowledgments
ZS thanks helpful discussions with Yuanjie Ren and Rohith Sajith. We thank Liang Kong for helpful discussions. This work was supported by the U.S. National Science Foundation under Grant No. NSF-DMR 2316598 (YL). ZS and IK were supported by funds from the UC Multicampus Research Programs and Initiatives of the University of California, Grant Number M23PL5936.

\bibliography{ref}
\newpage
\appendix
\widetext

\section{Introduction to group cohomology, topological action, and SPT states}

In this section, we provide a brief introduction to group cohomology, SPT topological actions, and the corresponding SPT fixed-point states.

\subsection{Group cohomology} \label{appx:groupcohomology}

An $n$-cochain is a map from group elements in $G^n = G \times ... \times G$ to a $U(1)$ value\footnote{In general, $U(1)$ can be replaced by any $G$-module $M$. Throughout this paper, we only study the case when $M = U(1)$.}. We denote it as $c_n(g_1, g_2, ..., g_n)$. The set of $n$-cochain forms a group $\mathcal{C}^n(G, U(1))$\footnote{In the literature, it is also denoted by $\mathcal{C}^n(X, M)$, representing the group of $M$-valued \( n \)-cochains on the topological space \( X \). This group consists of all maps from the \( n \)-simplices of \( X \) to \( M \).}, and the group elements have the following multiplication rule,
\begin{align}
    (c \cdot c')^n (g_1, g_2, ..., g_n) = c_n (g_1, g_2, ..., g_n) \cdot c'_n (g_1, g_2, ..., g_n).
\end{align}
The coboundary operator $\delta$ is a map from $\mathcal{C}^n (G, U(1))$ to $\mathcal{C}^{n+1}(G, U(1))$, $\delta: \mathcal{C}^n \to \mathcal{C}^{n+1}$, and the element $c_n \in \mathcal{C}^n(G, U(1))$ to $\delta c_{n+1} \in \mathcal{C}^{n+1}(G, U(1))$. The map is given by the following.
\begin{align}
    \delta c_{n+1} (g_1, g_2, ..., g_{n+1}) := c_n (g_2, g_3, ..., g_{n+1}) c_n (g_1, g_2, ..., g_n)^{(-1)^{n+1}} \prod_{i=1}^n c_n (g_1, ..., g_i g_{i+1}, ..., g_{n+1})^{(-1)^i}. \label{eq:coboundary}
\end{align}
One can check the coboundary operator $\delta$ is nilpotent, which satisfied $\delta^2 = 1$. We can further define two subgroups, $\mathcal{Z}^n (G, U(1)) \subset \mathcal{C}^n (G, U(1))$ and $\mathcal{B}^n (G, U(1)) \subset \mathcal{C}^n (G, U(1))$, such that
\begin{align}
    \mathcal{Z}^n (G, U(1)) := \{\omega_n \ |\ \delta \omega_n = 1\}, \label{eq:cocycle}
\end{align}
which is the $n$-cocycle, and
\begin{align}
    \mathcal{B}^n (G, U(1)) := \{b_n\ |\ b_n = \delta c_n, \quad \forall c_{n-1} \in \mathcal{C}^{n-1}\}, \label{eq:coboundary}
\end{align}
which is the $n$-coboundary. By the nilpotency property, one can check that $\mathcal{B}^n (G, U(1)) \subseteq \mathcal{Z}^n (G, U(1))$. The cohomology group can be defined as,
\begin{align}
    \mathcal{H}^n (G, U(1)) := \mathcal{Z}^n (G, U(1)) / \mathcal{B}^n (G, U(1)).
\end{align}

Let us discuss some simple examples. When $n = 1$, we have
\begin{align}
    \mathcal{Z}^1 (G, U(1)) = \{\omega_1 \ |\ \omega_1(g_1) \omega_1(g_2) = \omega_1(g_1 g_2)\}
\end{align}
and,
\begin{align}
    \mathcal{B}^1 (G, U(1)) = \{ 1\}.
\end{align}
Therefore, the first cohomology group  $\mathcal{H}^1 (G, U(1)) =  \mathcal{Z}^1 (G, U(1)) /  \mathcal{B}^1 (G, U(1))$ is the set of all the inequivalent 1d representations of $G$.

When $n = 2$, we have
\begin{align}
    \mathcal{Z}^2 (G, U(1)) = \{\omega_2 \ | \ \omega_2 (g_1, g_2 g_3) \omega_{2} (g_2, g_3) = \omega_2 (g_1 g_2, g_3) \omega_2 (g_1, g_2)\},
\end{align}
and 
\begin{align}
    \mathcal{B}^2 (G, U(1)) = \{\omega_2 \ | \ \omega_2(g_1, g_2) = \omega_1(g_1) \omega_1 (g_2) / \omega_1 (g_1 g_2)\}.
\end{align}
The cohomology group $\mathcal{H}^2 (G, U(1))$ classifies the inequivalent projective representations.

When $n = 3$, we have
\begin{align}
    \mathcal{Z}^3 (G, U(1)) = \{\omega_3\ | \ \frac{\omega_3 (g_2, g_3, g_4) \omega_3 (g_1, g_2 g_3, g_4) \omega_3 (g_1, g_2, g_3)}{\omega_3 (g_1 g_2, g_3, g_4) \omega_3 (g_1, g_2, g_3 g_4)} = 1\}, \label{eq:3-cocycle}
\end{align}
and
\begin{align}
    \mathcal{B}^3 (G, U(1)) = \{\omega_3 \ | \ \omega_3(g_1, g_2, g_3) = \frac{\omega_2 (g_1, g_2 g_3) \omega_{2} (g_2, g_3)}{\omega_2 (g_1 g_2, g_3) \omega_2 (g_1, g_2)}\},
\end{align}
The cohomology group $\mathcal{H}^3 (G, U(1))$ classifies inequivalent classes of $F$-symbols of the fusion category $\mathrm{Vec}_{G}$.

Besides the coboundary operator $\delta$, one could also define the slant product $i_g: \mathcal{C}^n \to \mathcal{C}^{n-1}$, such that
\begin{align}
    i_g c_{n-1} (g_1, g_2, ..., g_{n-1}) := c_{n} (g, g_1, g_2, ..., g_{n-1})^{(-1)^{n-1}} \prod_{i=1}^{n-1} c_{n} (g_1, ..., g_i, g, g_{i+1}, ..., g_{n-1})^{(-1)^{n-1+i}}.
\end{align}
The slant product and the coboundary operator have the following relation,
\begin{align}
    \delta(i_g c)_n = i_g(\delta c)_n, \quad \forall c_n \in \mathcal{C}^n. 
\end{align}
For the above equation, we can also see that if $c^n$ is an $n$-cocycle, then $i_g c_{n-1}$ is an $(n-1)$-cocycle. Therefore, the slant product gives rise to a homomorphism $i_g: \mathcal{H}^n(G, U(1)) \to \mathcal{H}^{n-1} (G, U(1))$, $\forall g \in G$. 

Physically, the slant product of the 3-cocycle $\omega^3 \in \mathcal{H}^3 (G, U(1))$ corresponds to the definition of the dyon charges of the $G$ twisted quantum double. We refer the readers to Ref.~\cite{propitius1995topological} for more details.

\subsection{Cup product} \label{appx:cup}

Instead of the algebraic interpretation of group cohomology we introduced above, there is also a geometric interpretation. In this interpretation, $\mathcal{C}^n(G,M)$ is the group of $n$-cochains satisfying the following condition,
\begin{align}
    \mathcal{C}^n(G,M) := \{\nu_n | g \nu_n (g_0, g_1, ..., g_n) = \nu_n (g g_0, g g_1, ..., g g_n)\},
\end{align}
in which $\nu_n$ is a map such that $\nu_n: G^{n+1} \to M$. The relation between $\nu_n$ and $c_n$ is given as follows:
\begin{equation}
    \begin{aligned}
        c_n(g_1, g_2, ..., g_n) &= \nu_n (1, g_1, g_1 g_2, ..., g_1 g_2 \cdot \cdot \cdot g_n) \\
        &= \nu_n (1, \tilde{g_1}, \tilde{g_2}, ..., \tilde{g_n})
    \end{aligned}
\end{equation}
There is a map between a $G$-valued $n$-simplex and the $n$-cochain $\nu_n$. When $n = 1$, we have
\begin{align}
\adjincludegraphics[width=0.12\columnwidth,valign=c]{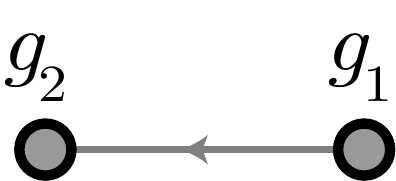} \longrightarrow \nu_1(g_1, g_2). \label{eq:branching_1}
\end{align}
When $n=2$, we have the following
\begin{align}
\adjincludegraphics[width=0.17\columnwidth,valign=c]{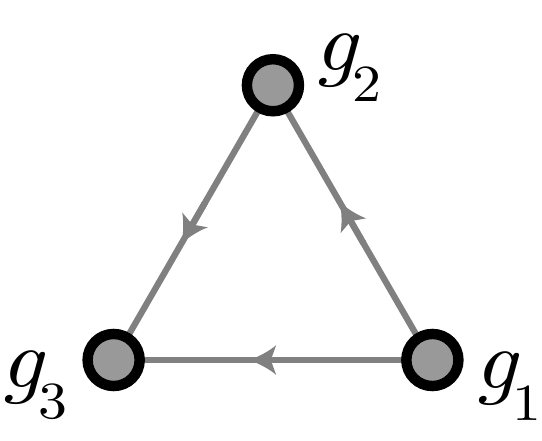} \longrightarrow \nu_2(g_1, g_2, g_3). \label{eq:branching_2}
\end{align}
More examples can be found in Ref.~\cite{chen2013symmetry}. Because of this geometric interpretation, the expression for coboundary operation becomes more compact. We have
\begin{align}
    \delta \nu_{n+1} (g_0, ..., g_{n+1}) = \prod_{i = 0}^{n+1} \nu_n^{(-1)^{i}} (g_0, ..., g_{i-1}, g_{i+1}, ..., g_{n+1})
\end{align}
The coboundaries and cocycles are defined similarly as in Eq.~\eqref{eq:coboundary} and Eq.~\eqref{eq:cocycle}.

Consider we have two cochains $\nu_{n_1} \in \mathcal{C}^{n_1} (G, U(1))$ and $\nu_{n_2} \in \mathcal{C}^{n_2} (G, U(1))$. We can construct a map such that $\mathcal{C}^{n_1} (G, U(1)) \times \mathcal{C}^{n_2} (G, U(1)) \to \mathcal{C}^{n_1 + n_2} (G, U(1))$\footnote{In general, the cup product gives rise to the following map, $\mathcal{C}^{n_1} (G, M_1) \times \mathcal{C}^{n_2} (G, M_2) \to \mathcal{C}^{n_1 + n_2} (G, M_1 \otimes_{\mathbb{Z}} M_2)$, where $M_1$ and $M_2$ are $G$-modules, and $\otimes_{\mathbb{Z}}$ maps two modules $M_1$ over $\mathbb{Z}$ and $M_2$ over $\mathbb{Z}$ to another module $M_3$ over $\mathbb{Z}$ such that $M_3 = M_1 \otimes_{\mathbb{Z}} M_2$. When $M_1 = M_2 = U(1)$, $M_3 = U(1)$. We only consider this case throughout this paper.}:
\begin{align}
    \nu_{n_1} (g_0, g_1, ..., g_{n_1}) \cup \nu_{n_2} (g_{n_1}, g_{n_1+1}, ..., g_{n_1 + n_2}) = \nu_{n_1+n_2} (g_0, g_1, ..., g_{n_1 + n_2}).
\end{align}
The product $\cup$ is called the {\it cup product}. The cup product satisfies the Leibniz Rule.
\begin{align}
    \delta (\nu_{n_1} \cup \nu_{n_2}) = \delta ( \nu_{n_1}) \cup \nu_{n_2} + (-1)^{n_1} \nu_{n_1} \cup \delta (\nu_{n_2}).
\end{align}
From the Leibniz rule we can see that the product of two cocycles gives rise to another cocycle.

\subsection{Topological action and SPT state} \label{appx:topologicalaction}

In this subsection, we provide a brief introduction to the topological action, and its corresponding fixed-point SPT wavefunctions.

For a $(d+1)$-dimensional $G$-SPT state, the topological action is given by
\begin{align}
    S^{\mathrm{top}} \left[M, A\right] = 2 \pi i \int_M \mathcal{L}[A],
\end{align}
in which $A$ is a background gauge field that is a $G$-valued 1-cocycle, and $M$ is the spacetime manifold. The Lagrangian $\mathcal{L}[A]$ is a topological invariant and satisfies,
\begin{align}
    \delta \mathcal{L} = 0,
\end{align}
where the coboundary operator $\delta$ is defined in Eq.~\eqref{eq:coboundary}. Two Lagrangians differed by a coboundary $\delta \Theta$ are equivalent,
\begin{align}
    \mathcal{L}' = \mathcal{L} + \delta \Theta,
\end{align}
where $\Theta$ is a $d$-cochain. Distinct classes of inequivalent Lagrangians give rise to different bosonic $G$-SPT phases, which are classified by the cohomology group $\mathcal{H}^{d+1}(G, U(1))$.

To obtain the corresponding SPT state, one could write the physical action of the theory in terms of the physical fields (matter fields) from the topological action. When the symmetry group is $\mathbb{Z}_n$, for example, we can replace the the $\mathbb{Z}_n$-valued gauge field $A$ by $[\delta \phi]_n$, where $[x]_n:= x\ \mathrm{mod}\ n$. $\phi$ is the physical field of the SPT which is a $\mathbb{Z}_n$-valued 0-cochain. The global symmetry is given by the following transformation,
\begin{align}
    \phi \to \phi + c,
\end{align}
where $c$ is a $\mathbb{Z}_n$-valued 0-cocycle.

Upon this replacement, the physical Lagrangian is obtained as a coboundary $\mathcal{L}[\delta \phi] = \delta \omega[\phi]$. On an manifold $M$ with boundary $\partial M$, the path integral of the theory gives rise to a quantum state.  Therefore, the SPT state defined on such manifold is independent of the bulk, and can be fully characterized by its boundary $\partial M$,
\begin{align}
    | \Psi \rangle_{SPT} \propto \sum_{\phi \in \mathcal{C}^0(\partial M, G)} e^{2 \pi i \int_{\partial M} \omega[\phi]} | \phi \rangle,
\end{align}
in which $\phi$ is a $G$-valued 0-cochain on $\partial M$, and $\omega[\phi]$ is the Lagrangian on $\partial M$.

In the remainder of this subsection, we provide an example to illustrate how to write down the SPT state given a topological action on a lattice. Let us consider the topological action given by Eq.~\eqref{eq:cluster top action},
\begin{align}
    \mathcal{L} = \frac{1}{2} A_1 \cup A_2,
\end{align}
where $A_1$ and $A_2$ are $\mathbb{Z}_2$-valued 1-cocycles. A discussion of the cup product is provided in Appendix~\ref{appx:cup}.  After we substitute the gauge fields by the corresponding physical fields, $\delta \phi_1$ and $\delta \phi_2$, we have,
\begin{align}
    S = 2 \pi i \int_{M} \frac{1}{2} \delta \phi_1 \cup \delta \phi_2 = 2 \pi i \int_M \delta \left(\frac{1}{2} \phi_1 \cup \delta \phi_2\right),
\end{align}
where $\phi_1$ and $\phi_2$ are $\mathbb{Z}_2$-valued 0-cocycles. One can check this action has a $\mathbb{Z}_2 \times \mathbb{Z}_2$ symmetry given by the following transformation,
\begin{align}
    \phi_1 \to \phi_1 + c_1, \quad \phi_2 \to \phi_2 + c_2,
\end{align}
where $c_1$ and $c_2$ are 0-cocycles. Using the Stokes' theorem, the SPT state on the spatial dimension is thus given by the path integral,
\begin{align}
    | \Psi\rangle_{SPT} = \sum_{\phi_1, \phi_2} e^{2 \pi i \int_{\partial M} \frac{1}{2} \phi_1 \cup \delta \phi_2} | \phi_1, \phi_2 \rangle. \label{eq:sptpath}
\end{align}

On a 1d spin chain with $n$ sites, where each site hosts two qubits, $\phi_1, \phi_2 \in \{0,1\}^n$ and $|\phi_1, \phi_2\rangle$ corresponds to a configuration of the qubits in the computational basis. We denote the values of the fields at site $i$ as $\phi_1(i)$ and $\phi_2(i)$. Let us consider the following branching structure of the chain,
\begin{align}
\adjincludegraphics[width=0.45\columnwidth,valign=c]{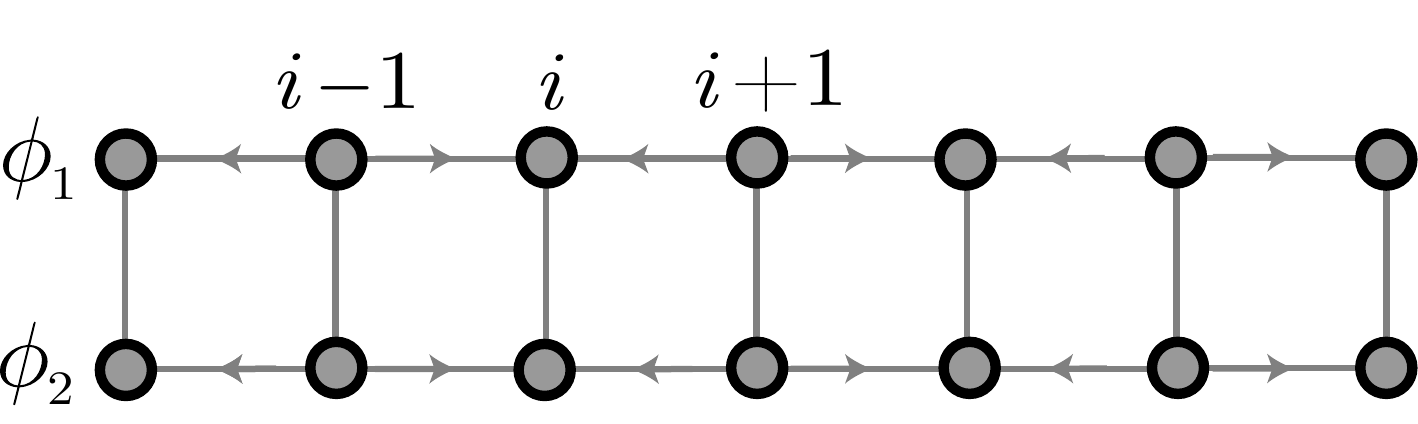}. \label{eq:appx_1}
\end{align}
On each site, there are two qubits corresponding to $\phi_1$ and $\phi_2$ respectively. We first combine the terms on edge $\langle i-1, i \rangle$ and edge $\langle i, i+1 \rangle$ for even $i$,
\begin{equation}
    \begin{aligned}
        &\quad \left(\phi_1 \cup \delta \phi_2 \right)_{\langle i-1, i \rangle} + \left(\phi_1 \cup \delta 
    \phi_2\right)_{\langle i, i+1 \rangle}\\
    &= \phi_1(i)(\phi_2(i) - \phi_2(i-1)) + \phi_1(i)(\phi_2(i+1) - \phi_2(i)) \\
    &= \phi_1(i)(\phi_2(i-1) + \phi_2(i+1)).
    \end{aligned}
\end{equation}
We see that $\phi_1(odd)$ and $\phi_2(even)$ are decoupled from the rest. Therefore, we can ignore the decoupled qubits, such that effectively there is only one qubit at each site, as depicted below,
\begin{align}
\adjincludegraphics[width=0.45\columnwidth,valign=c]{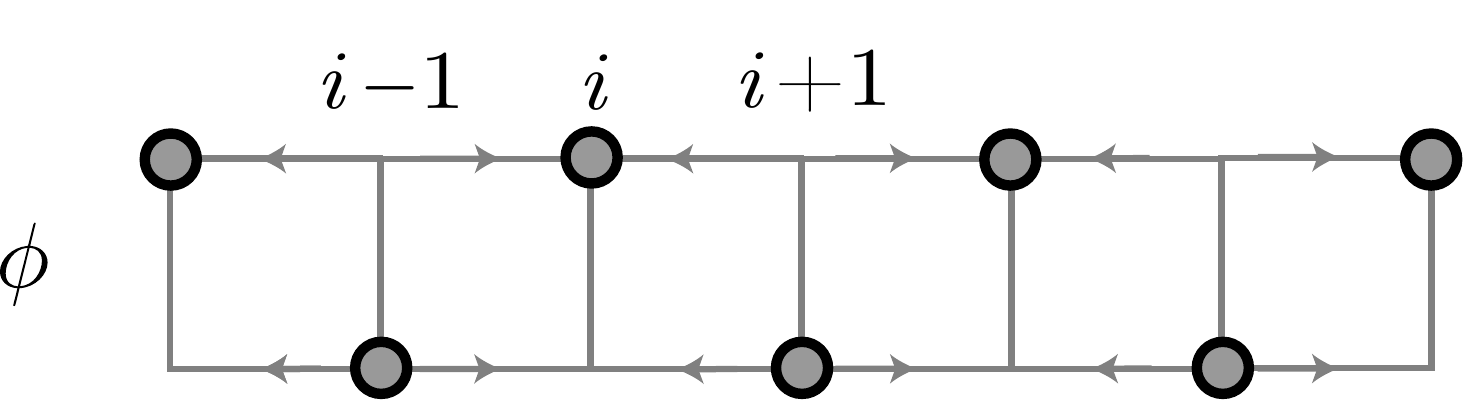}. \label{eq:appx_2}
\end{align}
The path integral in Eq.~\eqref{eq:sptpath} can be written in the following form
\begin{equation}
    \begin{aligned}
        |\Psi\rangle_{SPT} &= \sum_{\{\phi(i)\}} e^{\pi i \sum_{i\ \mathrm{even}} \phi_1(i)(\phi_2(i-1) + \phi_2(i+1))} |\{\phi(i)\}\rangle \\
        &= (-1)^{\sum_{i\ \mathrm{even}} \frac{1-Z_i}{2} \frac{1-Z_{i-1}}{2} +\frac{1-Z_i}{2} \frac{1-Z_{i+1}}{2}} |+\rangle^{\otimes n} \\
        &= \prod_{i\ \mathrm{even}} \mathrm{CZ}_{i,i-1} \mathrm{CZ}_{i,i+1} |+\rangle^{\otimes n} 
    \end{aligned}
\end{equation}
This is the ground state wavefunction of 1d $\mathbb{Z}_2 \times \mathbb{Z}_2$ cluster state and the Hamiltonian can be written as follows,
\begin{align}
    H_{\mathrm{cluster}} = - \sum_i Z_{i-1} X_{i} Z_{i+1}.
\end{align}

\subsection{$G$-SPT phases in $2+1$D} \label{appx:G-SPT}

In this subsection, we construct the $2+1$D $G$-SPT Hamiltonian and ground state wavefuncion. Consider a lattice $\mathcal{L}$ with vertices $v \in \mathcal{V}$. We assign one qudit per vertex. The qudit state is given by $|g\rangle$, $g \in G$. The trivial $G$-symmetric ground state can be written as,
\begin{align}
    |\Psi\rangle = \left(\frac{1}{|G|} \sum_{g \in G} |g\rangle\right)^{\otimes N}. \label{eq:product_state_1}
\end{align}
The global symmetry operator is given by the tensor product of $L_+^g$ operators on every qudit, $\prod_{v} L_+^g$.

The local Hamiltonian is given by the right group multiplication. We have
\begin{align}
    H = - \sum_{v} L^{g}_{-,v}. \label{eq:trivial_SPT_2}
\end{align}

In 2+1D bosonic qudit systems with finite global symmetry group $G$, it has been shown that these systems can be classified according to the third group cohomology $H^3(G, U(1))$~\cite{chen2013symmetry}. Different phases are labeled by cohomology classes of 3-cocycles $[\omega] \in H^3(G, U(1))$, and cannot be transformed into one another by finite-depth, symmetry-preserving local unitary circuits~\cite{chen2010local}. The trivial class in the cohomology group corresponds to the ground state given in Eq.~\eqref{eq:product_state_1} and the Hamiltonian in Eq.~\eqref{eq:trivial_SPT_2}. It is also referred to as the trivial symmetry-protected topological (SPT) phases. The nontrivial classes in the cohomology group give rise to the nontrivial SPT phases.

When there exists nontrivial 3-cocycles, $\omega_3 (g_1, g_2, g_3)$, the corresponding SPT ground state can be obtained by applying the SPT entangler $U_{SPT}$ on the product state in Eq.~\eqref{eq:product_state_1}. The SPT entangler is a phase gate. The phases correspond to different states are given as follows.
\begin{align}
    \adjincludegraphics[width=2.5cm,valign=c]{2simplex_1.pdf} =  \omega_3 (g_3, g_3^{-1} g_2, g_2^{-1} g_1), \quad \adjincludegraphics[width=2.5cm,valign=c]{2simplex_2.pdf} = \omega_3^{-1} (g_3, g_3^{-1} g_2, g_2^{-1} g_1). \label{eq:SPTentangler_non-Abelian}
\end{align}
The fixed-point SPT wavefunction can be written as \cite{chen2013symmetry,yoshida2017gapped},
\begin{align}
    |\Psi\rangle_{\text{SPT}} = U_{\text{SPT}} \left(\frac{1}{|G|} \sum_{g \in G} |g\rangle\right)^{\otimes N} = \sum_{\{g_i\}} \prod_{\triangle_{123}} \omega_3^{s(\triangle_{123})} (g_3, g_3^{-1} g_2, g_2^{-1} g_1) |\{g_i\}\rangle, \label{eq:SPT_GS}
\end{align}
in which $\prod_{\triangle_{123}} \omega_3$ is the product of the phase factors $\omega_3$ of a certain spin configuration, the sum is over all the spin configurations, and $s(\triangle_{123}) = \pm 1$ depends on the orientation of the triangle, as discussed in Eq.~\eqref{eq:SPTentangler_non-Abelian}. 

The SPT Hamiltonian is given by
\begin{align}
    H_{\text{SPT}} = U_{\text{SPT}} H U_{\text{SPT}}^{\dagger},
\end{align}
where $H$ is given in Eq.~\eqref{eq:trivial_SPT_2}.

The SPT ground state Eq.~\eqref{eq:SPT_GS} is symmetric under global symmetry transformation, $\prod_{v} L^{g}_{+, v}$. We have
\begin{equation}
    \begin{aligned}
        \prod_{v} L^{g}_{+, v} |\Psi\rangle_{\text{SPT}} &= \sum_{\{g_i\}} \prod_{\triangle} \omega_3^{s(\triangle)} (g_3, g_3^{-1} g_2, g_2^{-1} g_1)  \prod_{v} L^{g}_{+, v} |\{g_i\}\rangle \\
        &= \sum_{\{g_i\}} \prod_{\triangle} \omega_3^{s(\triangle)} (g_3, g_3^{-1} g_2, g_2^{-1} g_1)  |\{g g_i\}\rangle \\
        &= \sum_{\{g_i\}} \prod_{\triangle} \omega_3^{s(\triangle)} (g^{-1} g_3, g_3^{-1} g_2, g_2^{-1} g_1) |\{g_i\}\rangle,
    \end{aligned}
\end{equation}
By applying the 3-cocycle condition in Eq.~\eqref{eq:3-cocycle}, we obtain the following relation,
\begin{align}
    \omega_3 (g^{-1} g_3, g_3^{-1} g_2, g_2^{-1} g_1) = \frac{\omega_3 (g^{-1}, g_2, g_2^{-1} g_1) \omega_3 (g^{-1}, g_3, g_3^{-1} g_2)}{\omega_3 (g^{-1}, g_3, g_3^{-1} g_1)}\omega_3 (g_3, g_3^{-1} g_2, g_2^{-1} g_1)
\end{align}
One can show that, on a closed manifold, we have,
\begin{align}
    \prod_{\triangle} \omega_3^{s(\triangle)} (g_3, g_3^{-1} g_2, g_2^{-1} g_1) = \prod_{\triangle} \omega_3^{s(\triangle)} (g^{-1} g_3, g_3^{-1} g_2, g_2^{-1} g_1).
\end{align}
Therefore, we have
\begin{align}
    \prod_{v} L^{g}_{+, v} |\Psi\rangle_{\text{SPT}} = |\Psi\rangle_{\text{SPT}},
\end{align}
when the manifold is closed, and $\prod_{v} L^{g}_{+, v}$ is the global symmetry of the SPT phases.

\subsection{$1+1$D $(\mathbb{Z}_3 \times \mathbb{Z}_3) \rtimes \mathbb{Z}_2$ SPT phases} \label{appx:Z3xZ3rxZ2SPT}

As an illustrative example, and also for the benefit of later discussion, let us consider constructing the lattice model for the $1+1$D $(\mathbb{Z}_3 \times \mathbb{Z}_3) \rtimes \mathbb{Z}_2$ SPT. Consider we have a $1+1D$ $S_3 \times S_3$ state, and the $\mathbb{Z}_2 \times \mathbb{Z}_2$ subgroup breaks in the the diagonal group $\mathbb{Z}_2$. The corresponding nontrivial SPTs are classified by the nontrivial classes $[\omega] \in H^2((\mathbb{Z}_3 \times \mathbb{Z}_3) \rtimes \mathbb{Z}_2, U(1))$. In this subsection, we show in details on how to construct the ground state wavefunction and the corresponding stabilizers. This SPT will later have an important role. After gauging, it becomes the $C \leftrightarrow F$ exchange domain wall of $S_3$ quantum double; see Appendix~\ref{appx:CFribbon}.

The group elements of $\left(\mathbb{Z}_3 \times \mathbb{Z}_3\right) \rtimes \mathbb{Z}_2$ are generated by $c_1, c_2$, and $t$. The group presentation is
\begin{align}
    c_1^3 = c_2^3 = 1,\quad t c_1 t = c_1^2,\quad t c_2 t = c_2^2,\quad c_1 c_2 = c_2 c_1.
\end{align}
The group cohomology is given by
\begin{align}
    \mathcal{H}^2 \left(\left(\mathbb{Z}_3 \times \mathbb{Z}_3\right) \rtimes \mathbb{Z}_2, U(1)\right) = \mathbb{Z}_3.
\end{align}
The 2-cocycles can be written as follows,
\begin{align}
    \omega_2 \left(c_1^{i_1} c_2^{j_1} t^{k_1}, c_1^{i_2} c_2^{j_2} t^{k_2}\right) = \exp\left(\frac{2 \pi i}{3} n i_1 j_2 (-1)^{k_1}\right). \label{eq:Z3xZ3rxZ2cocycle_1}
\end{align}
Similar to the $2+1$D cases, one can define the SPT entangler for the $1+1$D SPT. The action of the entangler on $1$D is given by
\begin{align}
  \adjincludegraphics[width=1.8cm,valign=c]{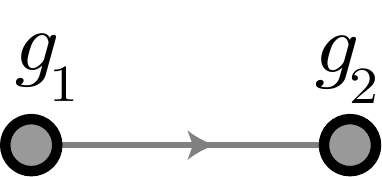} = \omega_2 (g_2, g_2^{-1} g_1), \quad \adjincludegraphics[width=1.8cm,valign=c]{branching_1.pdf} = \omega_2^{-1} (g_2, g_2^{-1} g_1). \label{eq:SPT_12}
\end{align}
The SPT-entangler can thus be written as,
\begin{equation}
    \begin{aligned}
        \omega_2 \left(c_1^{i_2} c_2^{j_2} t^{k_2}, t^{-k_2} c_2^{-j_2} c_1^{-i_2} c_1^{i_1} c_2^{j_1} t^{k_1}\right) &= \omega_2 \left( c_1^{i_2} c_2^{j_2} t^{k_2}, c_1^{(i_1 - j_2)(-1)^{k_2}} c_2^{(j_1 - j_2)(-1)^{k_2}} t^{k_1 - k_2} \right) \\
        &= \exp \left(\frac{2 \pi i}{3} n i_2 (j_1 - j_2)\right).
    \end{aligned}
\end{equation}

Let us consider the following lattice with the SPT entangler,
\begin{align}
  \adjincludegraphics[width=0.5\columnwidth,valign=c]{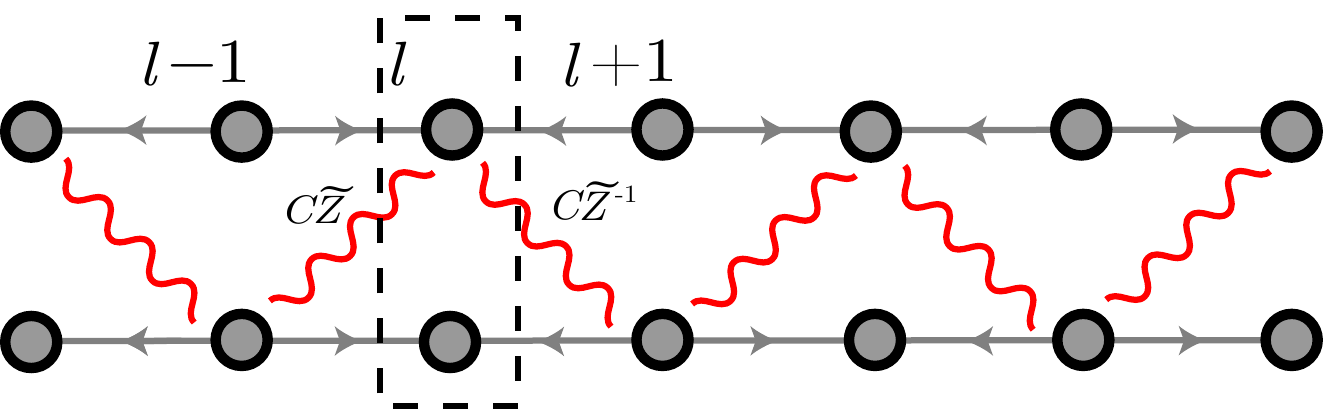},\label{eq:gauging_16}
\end{align}
in which $l$ is the label of site. Without loss of generality, we assume $l$ is even throughout the paper. At each site $l$, there is a $(\mathbb{Z}_3 \times \mathbb{Z}_3) \rtimes \mathbb{Z}_2$ qudit. The qudit state can be written as $| c_1^{i_l} c_2^{j_l} t^{k_l} \rangle$. $|c_1^{i_l} \rangle$ represents the state of the upper qutrit, $|c_2^{j_l} \rangle$ represents the state of the lower qutrit, and $| t^{k_l} \rangle$ represents the state of the qubit.

After applying the SPT entangler, the original stabilizers are no longer stabilizers. To find the new stabilizers, we do the following calculation.\footnote{Here we assume $n = 1$. The $n = 2$ result can be obtained similarly.}
\begin{equation}
    \begin{aligned}
        &L^{c_1}_{-,l} \sum_{i, j} \exp \left( \frac{2 \pi i}{3} \sum_{m} i_m (j_{m-1} - j_{m+1})\right) | ...,  c^{i_m}_1 c^{j_m}_2 t^{k_m}, ... \rangle \\
        = &\sum_{i, j} \exp \left( \frac{2 \pi i}{3} \sum_{m} i_m (j_{m-1} - j_{m+1})\right) | ...,  c^{i_l- Z_l}_1 c^{j_l}_2 t^{k_l}, ... \rangle \\
        = &\sum_{i, j} \exp \left( \frac{2 \pi i}{3} \left( (i_l + Z_l) (j_{l-1} - j_{l+1}) + \sum_{m \neq l} i_m (j_{m-1} - j_{m+1})\right)\right)| ...,  c^{i_l}_1 c^{j_l}_2 t^{k_l}, ... \rangle \\
        = &\sum_{i, j} \exp \left( \frac{2 \pi i}{3} \left( Z_l (j_{l-1} - j_{l+1})+ \sum_{m} i_m (j_{m-1} - j_{m+1})\right)\right) | ...,  c^{i_l}_1 c^{j_l}_2 t^{k_l}, ... \rangle.
    \end{aligned}
\end{equation}
Reorganizing the above equation, we get
\begin{equation}
    \begin{aligned}
         &L^{c_1}_{-,l} \left(\widetilde{Z}_{l+1, 2} \widetilde{Z}^{-1}_{l-1,2}\right)^{Z_l} \sum_{i, j} \exp \left( \frac{2 \pi i}{3} \sum_{m} i_m (j_{m-1} - j_{m+1})\right) | ...,  c^{i_m}_1 c^{j_m}_2 t^{k_m}, ... \rangle \\
         = &\sum_{i, j} \exp \left( \frac{2 \pi i}{3} \sum_{m} i_m (j_{m-1} - j_{m+1})\right) | ...,  c^{i_m}_1 c^{j_m}_2 t^{k_m}, ... \rangle.
    \end{aligned}
\end{equation}
in which $\widetilde{Z}_{l+1,2} = \exp \left(2 \pi i j_{l+1} /3\right)$, $\widetilde{Z}^{-1}_{l-1,2} = \exp \left(2 \pi i j_{l-1} /3\right)$ correspond to the eigenvalues of the generalized Pauli $Z$ operator on the qutrit states $|j_{l+1} \rangle$ and $|j_{l-1} \rangle$, and $Z_l = (-1)^{k_l}$ corresponds to the eigenvalue of Pauli $Z$ on the state $|k_l\rangle$ Therefore, $L^{c_1}_{-,l} \left(\widetilde{Z}_{l+1, 2} \widetilde{Z}^{-1}_{l-1,2}\right)^{Z_l}$ is a stabilizer of the SPT. Similarly, one can obtain all the other stabilizers, which are given as follows,
\begin{align}   \adjincludegraphics[width=0.7\columnwidth,valign=c]{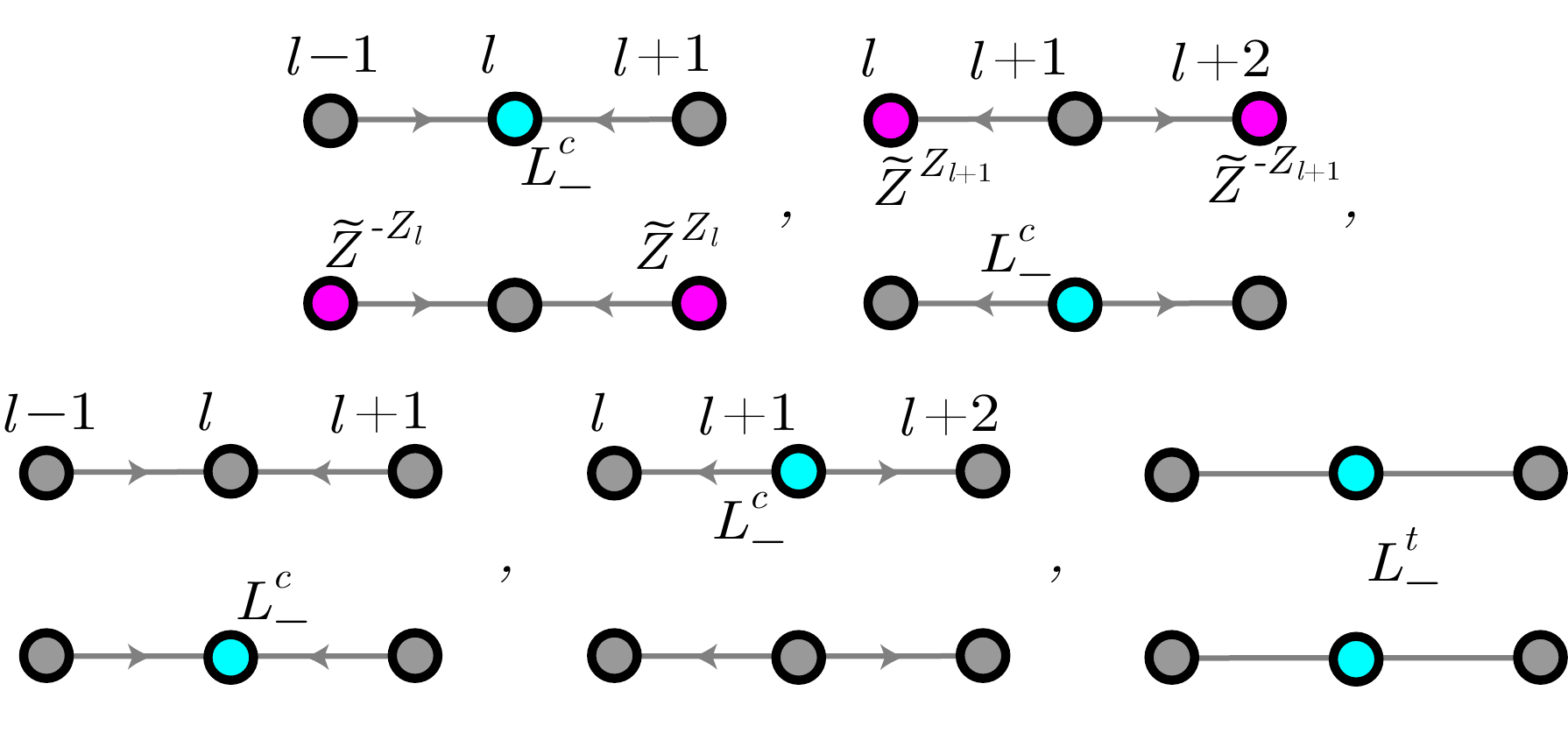}\label{eq:gauging_37}
\end{align}
The last $L_-^t$ operator is applied on the qubit at each site.

\subsection{K\"{u}nneth formula and the classification of 1d domain walls} \label{appx:kunneth}

When $M$ is Abelian, finitely-generated, and a trivial $G \times G'$ module, we have the following K\"{u}nneth formula~\cite{wen2015construction}.
\begin{align}
    \mathcal{H}^{d}\left(G \times G', M\right) = \bigoplus_{k=0}^{d} \mathcal{H}^k \left(G, \mathcal{H}^{d-k}\left(G', M\right)\right). \label{eq:Kunneth_formula_1}
\end{align}
In 1+1d, the K\"{u}nneth formula can help us understand the domain wall classifications. Let us look at one example. When $d = 2$, $M = U(1)$, and $G = G'$ are finite Abelian groups, we have
\begin{equation}
    \begin{aligned}
         \mathcal{H}^2(G \times G, U(1)) &= \mathcal{H}^0(G, \mathcal{H}^2(G, U(1))) \oplus \mathcal{H}^1(G, \mathcal{H}^1(G, U(1))) \oplus \mathcal{H}^2 (G, \mathcal{H}^0(G,U(1)))\\
    &= \mathcal{H}^2(G, U(1)) \oplus \mathcal{H}^1(G, \hat{G}) \oplus \mathcal{H}^2(G, U(1)),
    \end{aligned}
\end{equation}
in which $\hat{G}$ is the representation of $G$. The $\mathcal{H}^2 (G, U(1))$ classifies gauged $G$-SPT defects, while the $\mathcal{H}^1(G, \hat{G})$ classifies the maps between electric and magnetic sectors of $G$ quantum double. As we mentioned in the previous subsection Appendix~\ref{appx:groupcohomology}, the first cohomology group $\mathcal{H}^1 (G, U(1))$ classifies inequivalent 1d representations, while $\mathcal{H}^1 (G, \hat{G})$ classifies group homomorphism between $G$ and $\hat{G}$. In quantum double models, magnetic fluxes are labeled by different conjugacy classes while the electric charges are labeled by different irreducible representations. When $G$ is Abelian, each elements corresponds to a conjugacy class and the representation of $G$ is just itself. Therefore, the group homomorphism between $G$ and $\hat{G}$ just tells us the map of anyons between two quantum doubles with symmetry $G$. When $\hat{G}$ is trivial, this corresponds to the map that maps all magnetic fluxes to identity (vacuum), which gives rise to the smooth boundary. When it is nontrivial, it corresponds to a map between magnetic fluxes and electric charges between two $G$ quantum doubles.

When we have a $G^{(1)} \times G^{(2)} \times G^{(3)}$ gauged SPT as a domain wall, the K\"{u}nneth formula is as follows,
\begin{equation}
    \begin{aligned}
        \mathcal{H}^2 (G^{(1)} \times G^{(2)} \times G^{(3)}, U(1)) &= \mathcal{H}^2(G^{(1)}, U(1)) \oplus \mathcal{H}^2(G^{(2)}, U(1)) \oplus \mathcal{H}^2(G^{(3)}, U(1)) \\
    &\oplus \mathcal{H}^1(G^{(1)}, \hat{G}^{(2)}) \oplus \mathcal{H}^1 (G^{(3)}, \hat{G}^{(2)}) \oplus \mathcal{H}^1 (G^{(1)}, \hat{G}^{(3)}).
    \end{aligned}
\end{equation}
For a gauged $G^{(1)} \times G^{(2)} \times G^{(3)}$ SPT domain wall, it is classified by the gauged $G^{(i)}$-SPT defect, as well as the anyon maps between every two $G$ quantum doubles. Similar decomposition holds for more $G$ symmetries added.

When $G$ is non-Abelian, there exists higher-dimensional representations. Therefore, the domain wall given by $\mathcal{H}^1(G, \mathrm{Rep}^{\rm 1d}(G'))$ is not invertible since the fluxes which correspond to higher-dimensional representations maps to 1 (vacuum). For example, when $G = G' = S_3$, and $M = U(1)$, we have
\begin{equation}
    \begin{aligned}
         \mathcal{H}^{2}\left(S_3 \times S_3, U(1)\right) &= \mathcal{H}^0 \left(S_3, \mathcal{H}^2\left(S_3, U(1)\right)\right) \oplus \mathcal{H}^1 \left(S_3, \mathcal{H}^1\left(S_3, U(1)\right)\right) \oplus \mathcal{H}^2 \left(S_3, \mathcal{H}^0\left(S_3, U(1)\right)\right) \\&= \mathcal{H}^2(S_3, U(1)) \oplus \mathcal{H}^1(S_3, \mathrm{Rep}^{\rm 1d}(S_3)) \oplus \mathcal{H}^2(S_3, U(1)),
    \end{aligned}
\end{equation}
The nontrivial class in $\mathcal{H}^1(S_3, \mathrm{Rep}^{\rm 1d}(S_3))$ gives rise to the anyon maps $[e] \to 1$, $[c] \to 1$, and $[t] \to -1$. Therefore, it realizes the invertible domain wall that swaps $D$ and $B$ anyons, while is non-inveritible for $C$, $F$, $G$, and $H$. An in-depth discussion of the domain walls in $S_3$ quantum double is provided in Appendix~\ref{appx:S3domainwall}.

\section{Proof of Lemma \ref{lemma:2} and Theorem \ref{theorem:1}}
\label{app:proof}
\subsection{Proof of Lemma \ref{lemma:2}}
\label{app:proof_lemma2}
According to the theorem 1 in Ref.~\cite{bravyi2021hadamard}, with some adjustments, any action on the Pauli operators can given by the conjugation of the following canonical unitary operator $U = \Omega \Pi \Omega'$. The three unitary operators are
\beq
    \Omega =& \prod_{i,j=1}^n CZ_{i,j}^{\Gamma_{i,j}},\quad \Omega' = \prod_{i,j=1}^n CZ_{i,j}^{\Gamma'_{i,j}},\\
    \Pi =& (\prod_{i\neq j} CX_{i,j}^{\Delta_{i,j}})S(\prod_{i=1}^n H_i^{h_i})(\prod_{i\neq j} CX_{i,j}^{\Delta'_{i,j}}),
\eeq
where $h_i, \Gamma_{i,j}, \Gamma'_{i,j}=0,1$, and $S$ is a permutation between n qubits. The matrices $\Delta, \Delta'$ are upper-triangular and unit-diagonal,
i.e., $\Delta_{i,j}=\Delta'_{i,j}=0$ for $i>j$ and $\Delta_{i,i}=\Delta'_{i,i} = 1$ for all $i$. The product of $CX$ gates is ordered such that the control qubit index increases from the left to the right. For
example when $n=4$, $\prod_{i\neq j }CX_{i,j}^{\Delta_{i,j}}$ is
\beq
CX_{1,2}^{\Delta_{1,2}} CX_{1,3}^{\Delta_{1,3}} CX_{1,4}^{\Delta_{1,4}} CX_{2,3}^{\Delta_{2,3}} CX_{2,4}^{\Delta_{2,4}} CX_{3,4}^{\Delta_{3,4}}.
\label{eq:4-qubit CX}
\eeq

As we argued, the conjugation of the Clifford operators on the Pauli $Z_i$ and $X_j$ operators has a correspondence to the braided autoequivalences of anyon $e_i$ and $m_j$. We claim that the following SPT-sewing with $G\times G\times G$ symmetry gives rise to the corresponding anyon map 
\beq
    S_{\text{top}}=&\int \sum_{i,j}\frac{\Gamma_{i,j}}{2} C_{i}\cup C_{j}+\sum_{i,j}\frac{\Gamma'_{i,j}}{2} A_{i}\cup A_{j}\\
    &+\sum_{\substack{i,j,\\
    h_j=0}}\frac{\Delta'_{i,j}}{2}A_i \cup B_{s(j)} + \sum_{\substack{j,k,\\
    h_j=0}}\frac{\overline{\Delta}_{k,s(j)}}{2}B_{s(j)}\cup C_{k}+\sum_{\substack{i,j,k,\\
    h_j=1}}\frac{\Delta'_{i,j}\overline{\Delta}_{k,s(j)}}{2}A_i \cup C_{k},
\eeq
where $A_i$, $B_i$, and $C_i$ are the background gauge fields for the $i$-th $\mathbb{Z}_2$ subgroup in the first, the second, and the third $G$ symmetry respectively. To show the anyon map from this SPT-sewing, we first notice that the unitaries $\Omega$ and $\Omega'$ correspond to the flux-preserving maps of gauged-SPT domain walls given by the first two terms in the topological action.

For the conjugation of the unitary $\Pi$, we first notice that the anyon map from a unitary $(\prod_{i\neq j} CX_{i,j}^{\Delta_{i,j}})$ is given by
\beq
    m_i\rightarrow \prod_{j=1}^n m_j^{\Delta_{i,j}},\quad e_i\rightarrow \prod_{j=1}^{n}e_j^{\overline{\Delta}_{j,i}},
\eeq
where $\overline{\Delta}$ is the inverse matrix. As two special cases of the unitary $\Pi$, when $h_i\equiv 0$, the corresponding anyon map is given by
\beq
    m_i\rightarrow \prod_{j,k=1}^n m_k^{\Delta'_{i,j}\Delta_{s(j),k}}, e_i\rightarrow \prod_{j,k=1}^n e_k^{\overline{\Delta}'_{j,i}\overline{\Delta}_{k,s(j)}}.
\eeq
We can use the following SPT-sewing to achieve this map
\beq
    S_{\text{top}}=\int \sum_{i,j}\frac{\Delta'_{i,j}}{2}A_i\cup B_{s(j)}+\sum_{j,k}\frac{\overline{\Delta}_{k,s(j)}}{2}B_{s(j)}\cup C_k.
\eeq

When $h_i\equiv 1$, the corresponding anyon map is given by
\beq
    m_i\rightarrow \prod_{j,k=1}^n e_k^{\Delta'_{i,j}\overline{\Delta}_{k,s(j)}}, e_i\rightarrow \prod_{j,k=1}^n m_k^{\overline{\Delta}'_{j,i}\Delta_{s(j),k}}.
\eeq
We can use the following SPT-sewing to achieve this map
\beq
    S_{\text{top}}=\int \sum_{i,j,k}\frac{\Delta'_{i,j}\overline{\Delta}_{k,s(j)}}{2}A_i\cup C_k.
\eeq

In the general case of $h_i$, the conjugation of $\Pi$ gives rise to the following anyon map
\beq
    m_i\rightarrow \prod_{k, h_j=0}m_{k}^{\Delta'_{i,j}\Delta_{s(j),k}}\prod_{k,h_j=1} e_{k}^{\Delta'_{i,j}\overline{\Delta}_{k,s(j)}},\quad  e_i\rightarrow \prod_{k, h_j=0}e_{k}^{\overline{\Delta}'_{j,i}\overline{\Delta}_{k,s(j)}}\prod_{k, h_j=1} m_{k}^{\overline{\Delta}'_{j,i}\Delta_{s(j),k}}.
    \label{eq:general anyon map}
\eeq

To see how the last three terms of the topological action give rise to the above anyon map, let us fold the topological order along the domain wall, such that the bulk becomes a quantum double of group $G\times G$, and the domain wall becomes a gapped boundary. To study the anyon map give by the SPT-sewn domain wall, we can look at the anyons that are condensed on this boundary. After unfolding, the anyon map is essentially from the component of a condensed anyon that is in the first copy, to the component that is in the second copy of the $G$ quantum double.

If we stack on top of the boundary another $G$ symmetry breaking state, we can think of this folded boundary as a SPT-sewn domain wall with $(G^{(A)}\times G^{(C)})\times (G^{(B)}\times G^{(D)})$ symmetry, fusing with a smooth boundary. The folded SPT topological action is given by
\beq
    S_{\text{top}}=\int \sum_{\substack{i,j,k,\\
    h_j=1}}\frac{\Delta'_{i,j}\overline{\Delta}_{k,s(j)}}{2}A_i \cup C_{k}
    +\sum_{\substack{i,j,\\
    h_j=0}}\frac{\Delta'_{i,j}}{2}A_i \cup B_{s(j)} + \sum_{\substack{i,j,\\
    h_j=0}}\frac{\overline{\Delta}_{i,s(j)}}{2}C_{i}\cup B_{s(j)}.
\eeq
The first term gives rise to a gauged-SPT domain wall, and the last two terms give rise to an SPT-sewn domain wall. We can thus write the 2-cocycle of this SPT as $\nu=\omega_1\eta$, the anyon map is given in Section~\ref{sec:GxG domain wall}. We denote the anyons in copy $A$, $B$ and copy $C$, $D$ of the $G$ quantum double as $c$ and $\Tilde{c}$ respectively. The anyon map from $\omega_1$ is flux-preserving,
\beq
    m_{i}\overset{\omega_1}{\longrightarrow} m_i\prod_{\substack{k},h_j=1}\Tilde{e}_k^{\Delta'_{i,j}\overline{\Delta}_{k,s(j)}},\quad \Tilde{m}_i\overset{\omega_1}{\longrightarrow} \Tilde{m}_i\prod_{k,h_j=1}e_k^{\Delta'_{k,j}\overline{\Delta}_{i,s(j)}},\quad e_i,\Tilde{e}_i\overset{\omega_1}{\longrightarrow} e_i,\Tilde{e}_i.
    \label{eq:omega1 map}
\eeq
The anyon map from $\eta$ is a flux-charge exchange,
\beq
    m_i\overset{\eta}{\longrightarrow} \prod_{h_j=0} e_{s(j)}^{\Delta'_{i,j}},\quad e_i\overset{\eta}{\longrightarrow} \prod_{h_j=0} m_{s(j)}^{\overline{\Delta}'_{j,i}},\quad \Tilde{m}_i\overset{\eta}{\longrightarrow} \prod_{h_j=0} e_{s(j)}^{\overline{\Delta}_{i,s(j)}},\quad \Tilde{e}_i\overset{\eta}{\longrightarrow} \prod_{h_j=0} m_{s(j)}^{\Delta_{s(j),i}}.
    \label{eq:eta map}
\eeq
We notice that since $\eta$ is a degenerate 2-cocycle, which means the sewing of two smooth boundaries are not dense enough. As a result, the anyon map given above is not invertible, in that some fluxes might be condensed on, and some charges cannot pass through this SPT-sewing defect. For example, the product of charges on the RHS of the first two terms in Eq.~\eqref{eq:omega1 map} are two anyons that cannot pass through
\beq        
    \prod_{k,h_j=1}\Tilde{e}_k^{\Delta'_{i,j}\overline{\Delta}_{k,s(j)}}&\overset{\eta}{\longrightarrow}\prod_{k,h_j=1,h_r=0} m_{s(r)}^{\Delta_{s(r),k}\Delta'_{i,j}\overline{\Delta}_{k,s(j)}}=1, \\ \prod_{k,h_j=1}e_k^{\Delta'_{k,j}\overline{\Delta}_{i,s(j)}}&\overset{\eta}{\longrightarrow}\prod_{k,h_j=1,h_r=0}m_{s(r)}^{\overline{\Delta}'_{r,k}\Delta'_{k,j}\overline{\Delta}_{i,s(j)}}=1.
\eeq
Since our aim is to find all the anyons that are condensed on this boundary, we can just take the following bulk anyons,
\beq
    m_i\prod_{\substack{k},h_j=1}\Tilde{e}_k^{\Delta'_{i,j}\overline{\Delta}_{k,s(j)}},\quad \Tilde{m}_i\prod_{k,h_j=1}e_k^{\Delta'_{k,j}\overline{\Delta}_{i,s(j)}},
\eeq
which will become $m_i$ and $\Tilde{m}_i$ after the flux-preserving map given by $\omega_1$. Therefore either they can pass through this $\eta$ SPT-sewn domain wall, or they are condensed on this domain wall already. Combining the two anyon maps, we find that a combination of these two anyons is condensed on the domain wall,
\beq
    m_i\prod_{\substack{k},h_j=1}\Tilde{e}_k^{\Delta'_{i,j}\overline{\Delta}_{k,s(j)}}\overset{\omega_1\eta}{\longrightarrow}\prod_{h_j=0} e_{s(j)}^{\Delta'_{i,j}},\quad 
    \Tilde{m}_i\prod_{k,h_j=1}e_k^{\Delta'_{k,j}\overline{\Delta}_{i,s(j)}}\overset{\omega_1\eta}{\longrightarrow} \prod_{h_j=0} e_{s(j)}^{\overline{\Delta}_{i,s(j)}}.
\eeq
Thus, taking the following combination
\beq
    &\quad m_i\prod_{\substack{k},h_j=1}\Tilde{e}_k^{\Delta'_{i,j}\overline{\Delta}_{k,s(j)}}\prod_{t,h_r=0}\Tilde{m}_t^{\Delta'_{i,r}\Delta_{s(r),t}}\\
    &=\Big(m_i\prod_{\substack{k},h_j=1}\Tilde{e}_k^{\Delta'_{i,j}\overline{\Delta}_{k,s(j)}}\Big)\prod_{t,h_r=0}\Big(\Tilde{m}_t\prod_{k,h_j=1}e_k^{\Delta'_{k,j}\overline{\Delta}_{t,s(j)}}\Big)^{\Delta'_{i,r}\Delta_{s(r),t}}\\
    &\overset{\omega_1\eta}{\longrightarrow}\Big(\prod_{h_j=0} e_{s(j)}^{\Delta'_{i,j}}\Big)\prod_{t,h_r=0}\Big(\prod_{h_j=0} e_{s(j)}^{\overline{\Delta}_{t,s(j)}}\Big)^{\Delta'_{i,r}\Delta_{s(r),t}}=1.
\eeq
After unfolding, the above condensed anyon gives the first term in Eq.~\eqref{eq:general anyon map}. To find the other set of condensed anyon, we take use of the fact that all the flux anyons on the smooth boundary are condensed, and they can be pushed through the $\eta$ SPT-sewn domain wall into the bulk,
\beq
    1\overset{\text{smooth}}{\longleftarrow}\prod_{h_j=0}m_{s(j)}^{\overline{\Delta}'_{j,i}} \overset{\omega_1\eta}{\longleftarrow} \prod_{h_j=0} \Big(\prod_t e_t^{\Delta'_{t,j}}\prod_k\Tilde{e}_k^{\overline{\Delta}_{k,s(j)}}\Big)^{\overline{\Delta}'_{j,i}},\\
    1\overset{\eta}{\longleftarrow}\prod_{k,h_j=1}\Tilde{m}_k^{\overline{\Delta}'_{j,i}\Delta_{s(j),k}}\overset{\omega_1}{\longleftarrow} \prod_{h_j=1}\Big(\prod_k\Tilde{m}_k^{\overline{\Delta}_{s(j),k}}\prod_{t}e_t^{\Delta'_{t,j}}\Big)^{\overline{\Delta}'_{j,i}}.
\eeq
Taking a product of the above, we obtain the following condensed anyon,
\beq
    e_i\prod_{k,h_j=0}\Tilde{e}_k^{\overline{\Delta}'_{j,i}\overline{\Delta}_{k,s(j)}}\prod_{k,h_j=1}\Tilde{m}_k^{\overline{\Delta}'_{j,i}\Delta_{s(j),k}}
\eeq
After unfolding, the above condensed anyon gives the second term in Eq.~\eqref{eq:general anyon map}.

\subsection{Proof of Theorem \ref{theorem:1}}
\label{app:proof_theorem1}

In this section, we prove that $G\times G\times G$ SPT-sewing gives rise to all the invertible domain walls in a 2d $G$ quantum double for any Abelian group $G$. An Abelian group is always isomorphic to a product of cyclic group, $G=\Z_{n_1}\times\Z_{n_2}\times\cdots \Z_{n_q}$. Thus the most general form of topological action that characterizes 1d $G\times G\times G$ SPT is given by
\beq
    S_{\text{top}}=\sum_{i,j=1}^q &\int \frac{\Gamma^a_{i,j}}{gcd(n_i,n_j)} A_i \cup A_j + \frac{\Gamma^b_{i,j}}{gcd(n_i,n_j)} B_i \cup B_j + \frac{\Gamma^c_{i,j}}{gcd(n_i,n_j)} C_i \cup C_j \\
    &+ \frac{\Delta^{ab}_{i,j}}{gcd(n_i,n_j)} A_i \cup B_j + \frac{\Delta^{ac}_{i,j}}{gcd(n_i,n_j)} A_i \cup C_j + \frac{\Delta^{bc}_{i,j}}{gcd(n_i,n_j)} B_i \cup C_j,
\eeq
where $\Gamma_{i,j}, \Delta_{i,j}=0,1,\cdots, gcd(n_i,n_j)-1$, and $\Gamma_{i,i}=0$. It turns out that like the case when $G=\Z_2^n$, to construct invertible domain walls, we do not need all those SPT phases. In the following, we will focus on the $\Gamma^b_{i,j}=0$ cases.

Let us project the SPT state given by the topological action into the subspace of $X_i^{b,v}\equiv 1$, where $X_i^{b,v}$ is the generalized Pauli X operator for the $\Z_{n_i}$ subgroup of the second $G$ symmetry on vertex $v$. (While the projection is not necessary, it simplifies the analysis.) After the projection, the resultant state has symmetry $G^a\times G^c$, and is possibly in the spontaneously symmetry broken phase. The partition function of this phase is given by~\cite{li2023measuring}
\beq
    Z(A,C)=&\prod_{j=1}^q \delta\left(\sum_{i,k=1}^q \frac{\Delta^{ab}_{i,j}}{gcd(n_i,n_j)} A_i+ \frac{\Delta^{bc}_{j,k}}{gcd(n_j,n_k)} C_k\quad \text{mod}\ 1\right)\\
    &\exp \left(2\pi i\sum_{i,j=1}^q \int \frac{\Gamma^a_{i,j}}{gcd(n_i,n_j)} A_i \cup A_j + \frac{\Gamma^c_{i,j}}{gcd(n_i,n_j)} C_i \cup C_j + \frac{\Delta^{ac}_{i,j}}{gcd(n_i,n_j)} A_i \cup C_j\right).
    \label{eq:partition_func}
\eeq

The delta functions in the partition function is a symmetry breaking term. For example, if we take $A_1$ as a $\Z_2$-valued background gauge field, and $A_2$ as a $\Z_4$-valued background gauge field. A delta function $\delta(\frac{1}{2}A_1\ \text{mod}\ 1)$ in the partition function indicates the spontaneously breaking of symmetry from $\Z_2\times \Z_4$ to the $\Z_4$ subgroup. A delta function $\delta(\frac{1}{2}A_1+\frac{1}{4}A_4\ \text{mod}\ 1)$ in the partition function indicates the symmetry breaking from $\Z_2\times \Z_4$ to a $\Z_2=\langle t_1 t_2^2 \rangle$ subgroup, where $t_1$ is the generator of $\Z_2$ and $t_2$ is the generator of $\Z_4$. In comparison, a delta function $\delta(\frac{1}{2}A_1+\frac{1}{2}A_4\ \text{mod}\ 1)$ in the partition function indicates the symmetry breaking from $\Z_2\times \Z_4$ to $\Z_2^2=\langle t_1 t_2, t_2^2 \rangle$.

After we apply the gauging map, this (possibly symmetry breaking) 1d $G^a\times G^b$ phase becomes a gapped domain wall in the $G$ quantum double. Comparing this domain wall with the domain wall obtained from the SPT state before projection, we notice that with the projection the stabilizers need to be updated appropriately. Nonetheless, if an anyon $a$ can pass through the former domain wall, it can also pass through the SPT-sewn domain wall. Hence if the former domain wall is invertible, the SPT-sewn domain wall is also invertible, giving rise to the same braided autoequivalence (anyon map).

To study the SPT-sewing construction of invertible domain walls, we now focus on the domain walls from gauging the projected states, which has a $G\times G$ symmetry (that can include symmetry-breaking cases). 
\begin{theorem}
    In a 2d quantum double (gauge theory) of a finite group $G$, all the domain walls can be constructed from gauging 1d phases with $G\times G$ symmetry.
 \label{theorem:2}
\end{theorem}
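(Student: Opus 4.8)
The plan is to combine the folding trick with the classification of gapped boundaries of quantum doubles and the classification of 1d gapped phases. First I would fold the lattice along the domain wall, exactly as in Fig.~\ref{fig:folding_1}: each edge of the folded lattice carries $\mathbb{C}(G)\otimes\mathbb{C}(G)$, the bulk terms become those of the $H$ quantum double with $H=G\times G$, and a gapped domain wall $\mathcal{D}$ of the $G$ quantum double becomes a gapped boundary $\mathcal{B}$ of the $H$ quantum double (unfolding being the inverse). It therefore suffices to prove that every gapped boundary of the $H$ quantum double is obtained by gauging the $H$ symmetry of a suitable 1d gapped phase placed on the boundary sites and embedded in the trivial product state $\ket{+}^{\otimes V}$ in the bulk.

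Next I would recall the two classifications that make the statement tractable. On the one hand, gapped boundaries of the $H$ quantum double are in bijection with conjugacy classes of pairs $(K,\psi)$, where $K\le H$ is a subgroup and $\psi\in\mathcal{H}^2(K,U(1))$ (equivalently, with indecomposable module categories over $\mathrm{Vec}_H$); on the lattice, the boundary $\mathcal{B}_{(K,\psi)}$ condenses the magnetic fluxes lying in $K$, dressed by the $\psi$-projective representations, together with the charges that restrict trivially to $K$, and its set of condensed anyons is the Lagrangian algebra $\mathcal{A}_{(K,\psi)}\subset D(H)$. On the other hand, gapped phases of a 1d chain with an on-site $H$ symmetry are classified by the same data: $K$ is the unbroken subgroup and $\psi\in\mathcal{H}^2(K,U(1))$ is the SPT class under $K$. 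Matching these two lists is the content of the theorem, so the task is to show that gauging realizes the correspondence on the nose.

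Concretely, for each $(K,\psi)$ I would write down a fixed-point 1d chain of $\mathbb{C}(H)$ qudits along the boundary built from two commuting ingredients: a spontaneous breaking of $H$ down to $K$, realized by a generalized cat state over the coset space $H/K$ together with its order parameters, and, within the unbroken $K$ symmetry, the 1d SPT labeled by $\psi$, realized by a generalized cluster state built from a $2$-cocycle representative of $\psi$ in the manner of Appendices~\ref{appx:topologicalaction} and~\ref{appx:Z3xZ3rxZ2SPT}. Applying the gauging map $\Gamma$ of Section~\ref{sec:gauging_map}, the localized $K$-symmetry operators and the $\psi$-cluster stabilizers become vertex- and plaquette-type boundary stabilizers while the $H/K$ order parameters become operators condensing the coset charges; comparing the resulting boundary stabilizer group --- equivalently the condensed algebra --- with $\mathcal{A}_{(K,\psi)}$ identifies the gauged phase as $\mathcal{B}_{(K,\psi)}$. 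Unfolding then gives back $\mathcal{D}$, and since every $(K,\psi)$ occurs, every gapped domain wall of the $G$ quantum double arises this way. For $G$ Abelian this final comparison can be done explicitly in the topological-action language, extending the partition-function computation around Eq.~\eqref{eq:partition_func}; for general $G$ one instead argues categorically, that gauging implements the equivalence between 1d $H$-symmetric phases and module categories over $\mathrm{Vec}_H$ and hence surjects onto the gapped boundaries of $Z(\mathrm{Vec}_H)=D(H)$.

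The hard part is the verification in the last step: one must track the $2$-cocycle twist $\psi$ faithfully through the gauging map and check that the symmetry-breaking order parameters and the $\psi$-twisted cluster stabilizers stay mutually commuting after gauging, so that the gauged model has exactly the condensed algebra $\mathcal{A}_{(K,\psi)}$ --- no condensed anyon missing and none spuriously added. The symmetry-breaking ingredient is what is genuinely new here relative to the pure SPT-sewing of Lemma~\ref{lemma:1} and Section~\ref{sec:GxG domain wall}, and getting its interplay with the SPT cocycle correct is where most of the work sits.
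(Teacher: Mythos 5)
Your proposal follows essentially the same route as the paper: match the classification of 1d $G\times G$-symmetric gapped phases by pairs $(K,\nu)$ with $K\subseteq G\times G$, $\nu\in\mathcal{H}^2(K,U(1))$ to the classification of gapped domain walls, and realize each wall by gauging the corresponding fixed-point symmetry-broken/SPT state. The one step you flag as the ``hard part'' --- checking that the gauged fixed-point state condenses exactly the right algebra --- is precisely what the paper dispatches by citing the explicit $(K,\nu)$ domain-wall Hamiltonians of Beigi--Shor--Whalen, together with a short shallow-depth-circuit argument that states in the same 1d phase yield the same wall type.
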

\begin{proof}
    The 1d phases under $G\times G$ symmetry are classified by a pair $(K,\nu)$, where $K\subseteq G\times G$ is the unbroken subgroup, and the 2-cocycle $[\nu]\in H^2(K,U(1))$ characterizes the SPT order with the unbroken $K$ symmetry. In Ref.~\cite{beigi2011quantum}, the domain wall Hamiltonian in a $G$ quantum double on square lattice are given, which corresponds to gauging the fixed-point $K$ SPT states broken from the $G\times G$ symmetry. In Ref.~\cite{yoshida2017gapped,bullivant2017twisted}, several examples and extensions of this results are considered. We note that, if two states are in the same phase under $G\times G$ symmetry, then there exists a shallow depth symmetric circuit to map one to the other. As a result, the two domain walls obtained after gauging are also related by a shallow depth circuit, hence of the same type. 
\end{proof}
\noindent
From Theorem~\ref{theorem:2}, Theorem~\ref{theorem:1} immediately follows.

Let us make a few remarks on the properties of a domain wall obtained from gauging given a pair $(K,\nu)$, focusing on an Abelian group $G$. If $K=G\times G$, the domain walls are indeed the $G\times G$ SPT-sewn domain wall, discussed in Sec.~\ref{sec:GxG domain wall}. When $K=1$, we have a state breaking all the symmetries before gauging. The corresponding domain wall is a rough boundary for both left and right bulk quantum double, as we showed for toric code model in Table \ref{table:Z_2}. Furthermore, when $K=G^a$, we obtain after gauging a rough boundary of the right bulk, and a boundary of the left bulk that is a $\nu$ gauged-SPT domain wall followed by a smooth boundary. 

Suppose we write $G=\Z_{n_1}\times\Z_{n_2}\times\cdots \Z_{n_q}$, where each $\Z_{n_i}$ is a $p$-group, i.e., $n_i$ is a prime number to some power. We can similarly decompose the Abelian subgroup $K$ into a product of $p$-groups. The partition function of the $(K,0)$ symmetry breaking phase is in general given by 
\beq
\prod_s\delta(\sum_{i}\frac{k_{s,i}}{n_i}A_i+\frac{k'_{s,i}}{n_i}C_i\quad \text{mod}\ 1).
\eeq
It is straightforward to show that each delta function gives rise to an anyon map. Namely, an anyon $\prod_i e_i^{k_{s,i}}$ from the left bulk passing through this domain wall becomes $\prod_i e_i^{k'_{s,i}}$ in the right bulk. We denote the generator of $\Z_{n_i}$ as $t_i$. Elements $\{g_s\equiv \prod_i t_i^{k_{s,i}}\}$ and $\{g'_s\equiv \prod_i t_i^{k'_{s,i}}\}$ respectively generate subgroups of $G$, dubbed $H$ and $H'$ below. For an invertible $(K,0)$ domain wall, the two groups should be isomorphic. We can write both groups as product of $p$-cyclic groups. Furthermore, we can always decompose the group $G$ as a product of $p$-cyclic groups, such that each $p$-cyclic subgroup of $H$ ($H'$) is a subgroup of a certain $p$-cyclic subgroup of $G$. Different decompositions of group $G$ might be different from our original decomposition by some isomorphisms. The two decompositions given from $H$ and $H'$ thus correspond to two linear maps of the background gauge fields $\Tilde{A}_r=\sum_i k_{i,r}A_i$, and $\Tilde{C}_r=\sum_k k'_{r,k}C_k$. The coefficients are given by $k_{i,r}=\Delta^a_{i,r}N_r/gcd(n_i,N_r)$ and $k'_{r,k}=\Delta^c_{r,k}N_r/gcd(n_k,N_r)$, where $\Delta$'s are integers, and $N_r$ is the order of the $r$-th cyclic subgroup of $H\subset G$. The delta function can thus be rewritten as
\beq
    \prod_r \delta\Big(\frac{\Tilde{k}_r}{n_r}\Tilde{A}_r+\frac{\Tilde{k}_r}{n_r}\Tilde{C}_r\quad \text{mod}\ 1\Big)=\prod_r \delta\Big(\sum_{i,k} \frac{\Delta^a_{i,r}}{gcd(n_i,N_r)}A_i+\frac{\Delta^c_{r,k}}{gcd(n_k,N_r)}C_k\quad \text{mod}\ 1\Big).
    \label{eq:delta_func}
\eeq
This is the general partition functions of $(K,0)$ phases, if they give rise to invertible domain walls after gauging. We can embed the subgroup $H$ into $G^b$, such that the above partition function can always be written as in Eq.~\eqref{eq:partition_func} by choosing the corresponding values for parameters $\Delta^a_{i,r}$ and $\Delta^c_{r,k}$.

Now we consider a generic pair $(K,\nu)$. Given the subgroup $K$, we can construct two isomorphisms on $G$ as described above, such that the background gauge fields for $G^a$ and $G^c$ becomes $\Tilde{A}_i$ and $\Tilde{C}_k$. As we show above, the delta functions in the partition function become ``diagonal'' as on the left hand side of Eq.~\eqref{eq:delta_func}. The gauge fields for the unbroken subgroup $K$ are essentially given by $\sum_i l_i \Tilde{A}_i +l'_i \Tilde{C}_i$, such that the vector formed by components $l_i$ is orthogonal to vector $(\cdots,0,\Tilde{k}_r,0,\cdots,0,\Tilde{k}_r,0,\cdots)$ for all $r$. Given a 2-cocycle $\nu$ for group $K$, we can write down its corresponding topological action using the $K$ gauge fields. In the end, we can always write the topological action given from $\nu$ as a bilinear form of $A_i$ and $C_j$, which is written as in Eq.~\eqref{eq:partition_func} by choosing the corresponding parameters.

To summarize, the key point is that all the domain walls in a $G$ quantum double can be obtained by gauging a 1d phase characterized by a pair $(K, \nu)$, where $K\subseteq G\times G$ [Theorem~\ref{theorem:2}]. In the Abelian case, one can see this more explicitly; as long as $(K,\nu)$ gives rise to an invertible domain wall, we can write its partition function as in Eq.~\eqref{eq:delta_func}. Thus, all the invertible domain walls can be obtained from gauging 1d phases with such partition functions. Recall that phases with such partition functions can be obtained by applying projections on some $G\times G\times G$ SPT states. Therefore, from SPT-sewing $G\times G\times G$ SPT states, we can construct all the invertible domain walls in an Abelian $G$ quantum double.

\section{Domain walls in $\mathbb{Z}_3$ toric code from $\mathbb{Z}_3\times \mathbb{Z}_3$ phases} \label{app:Z_3 domain wall}

In this section, we provide an example of the domain walls in $\mathbb{Z}_3$ quantum double. The relations between the subgroups $K \in \mathbb{Z}_3 \times \mathbb{Z}_3$, the corresponding 2-cocycles, and the anyon automorphisms are displayed in Table~\ref{table:Z_3}. Instead of discussing the symmetry-breaking picture, we focus on constructing the invertible domain walls from SPT-sewing.

First, let us discuss the domain walls from $\mathbb{Z}_3 \times \mathbb{Z}_3$ SPT-sewing. The calculation of 2-cocycles shows that there are two nontrivial classes of $\mathbb{Z}_3 \times \mathbb{Z}_3$-SPT.
\begin{align}
    \mathcal{H}^2 (\mathbb{Z}_3 \times \mathbb{Z}_3, U(1)) = \mathbb{Z}_3.
\end{align}
They correspond to the $e \leftrightarrow m$ and $e \leftrightarrow m^{-1}$ domain walls $S_{\psi_1}$ and $S_{\psi_2}$. Here we use the former one as an example. The Hamiltonians and the gauging map are given by,
\begin{align}
    \adjincludegraphics[width =0.4\columnwidth,valign=c]{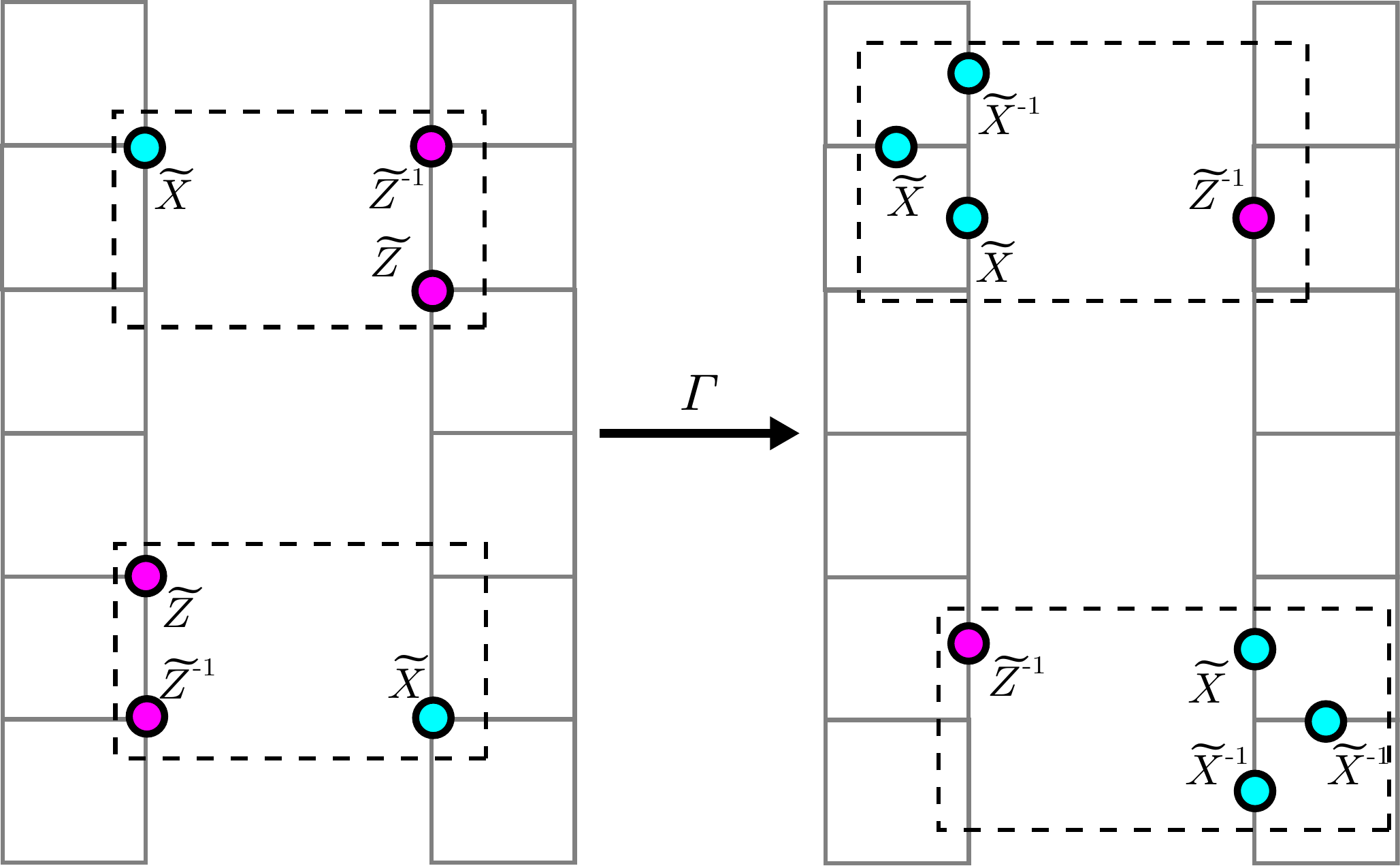}. \label{eq:Z3_4}
\end{align}

The second invertible domain wall is the trivial domain wall $S_1$, which maps every anyon to itself. The Hamiltonian of the SPT-sewing defect, the domain wall, and the gauging map between them are given as follows,
\begin{align}
    \adjincludegraphics[width =0.4\columnwidth,valign=c]{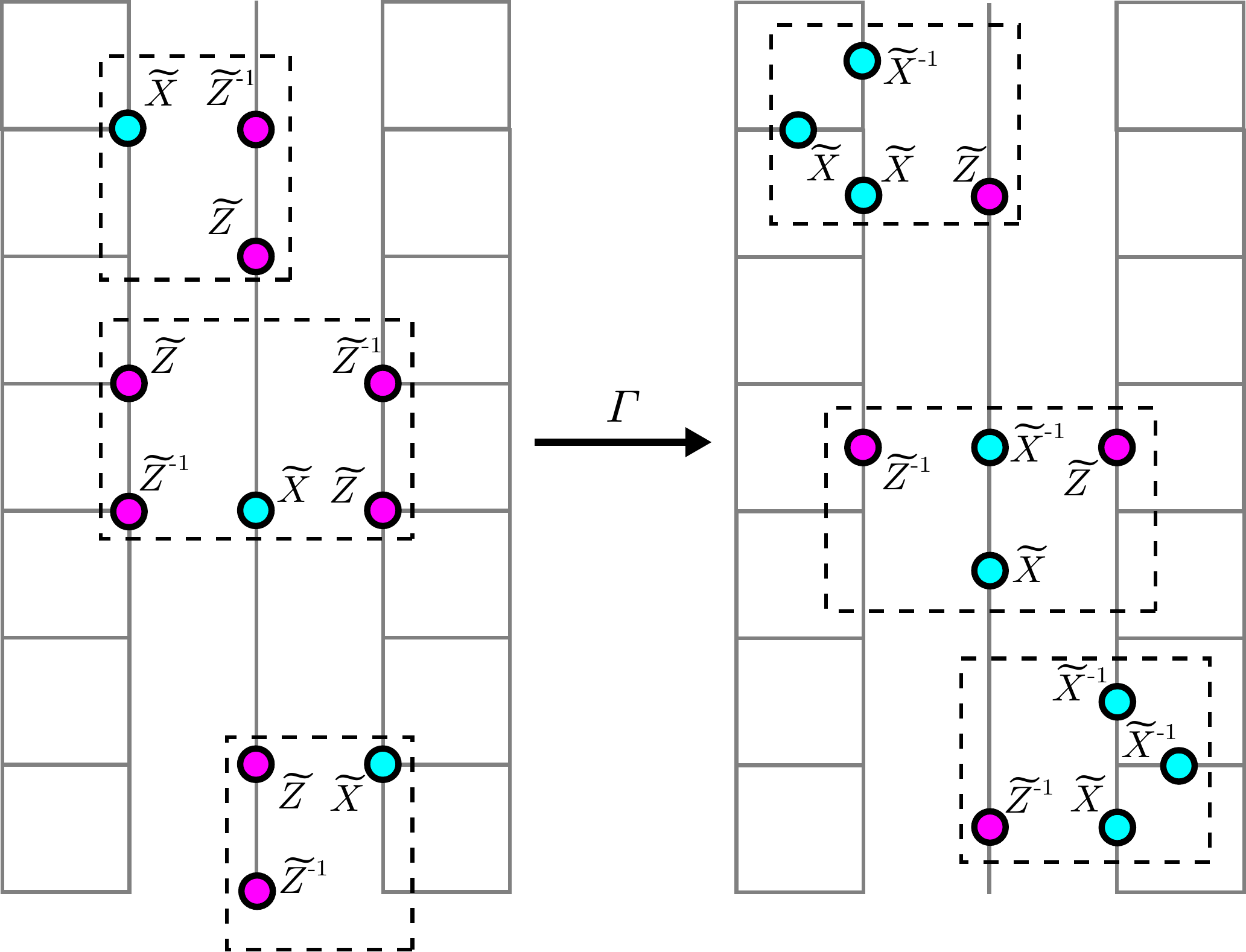}. \label{eq:Z3_1}
\end{align}
Here the $\widetilde{X}$ and $\widetilde{Z}$ are generalized Pauli operators and they have the following actions on states.
\begin{equation}
    \begin{aligned}
        \widetilde{X} | j \rangle = |j + 1\ \text{mod}\ 3\rangle,\quad
        \widetilde{Z} | j \rangle = e^{2 \pi i j/3}|j\rangle,
    \end{aligned}
\end{equation}
in which $j \in \{0,1,2\}$, $\widetilde{X}^2 = \widetilde{X}^{-1}$, and $\widetilde{Z}^2 = \widetilde{Z}^{-1}$.

One can see that after projecting (measuring) all the qutrits in the middle layer to the $\widetilde{X} = +1$ eigenstate, the symmetry breaks into the diagonal subgroup and the domain wall corresponding to $K = \mathbb{Z}_3^{diag_1}$ and $\omega_2 = 1$.

The last example is the the $e \leftrightarrow e^{-1}$ domain wall $S_{cc}$, which is also called the charge conjugation domain wall. The construction of this domain wall is given as follows,
\begin{align}
    \adjincludegraphics[width =0.4\columnwidth,valign=c]{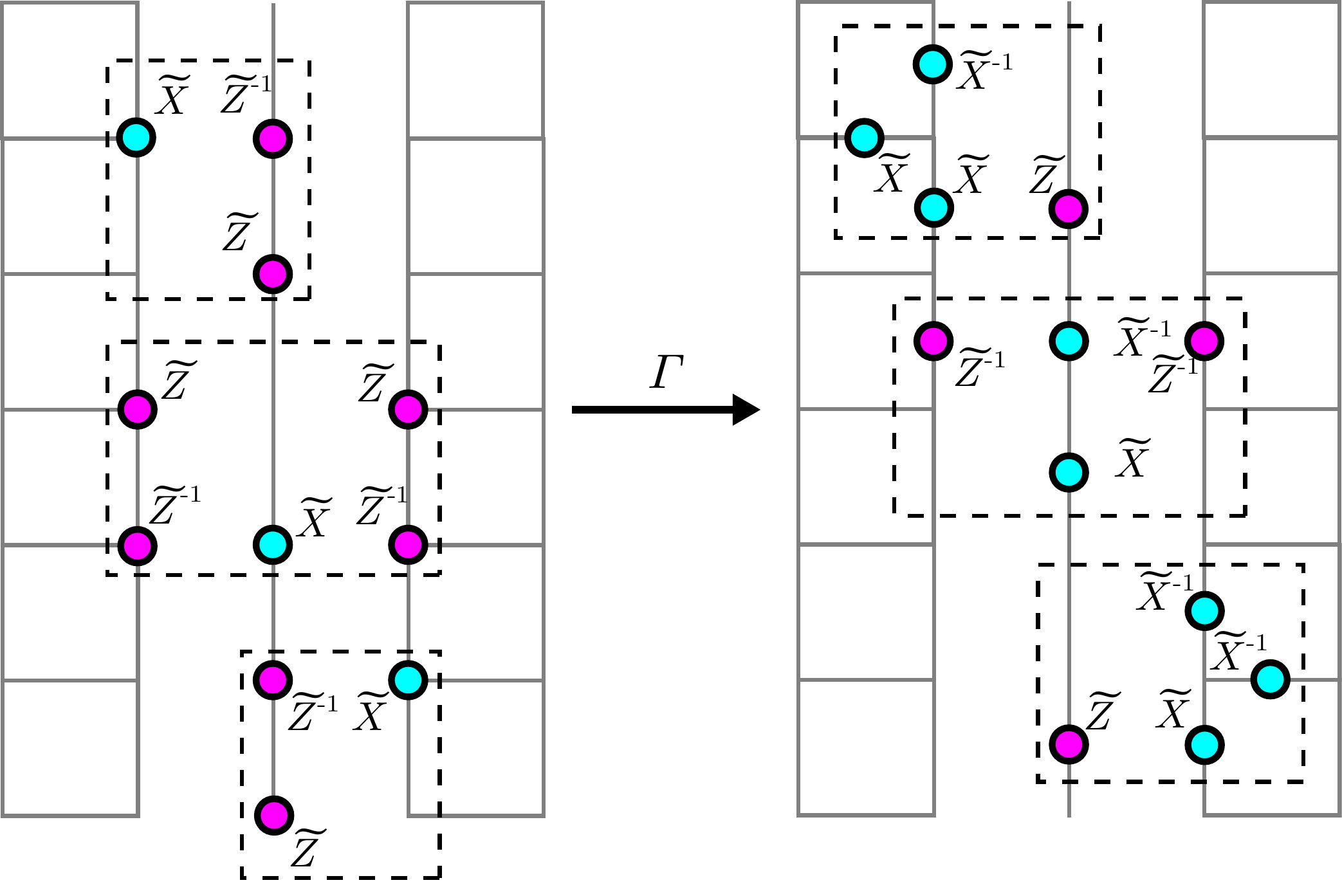}. \label{eq:Z3_2}
\end{align}
Similarly, projecting all the qutrit states in the middle layer to the $\widetilde{X} = +1$ eigenstate gives rise to the symmetry breaking domain wall corresponding to $K = \mathbb{Z}_3^{diag_2}$ and $\omega_2 = 1$.

As we mentioned earlier, the SPT state corresponds to a topological action. For the $\mathbb{Z}_3$ quantum double, the topological actions of the corresponding SPTs are given by
\begin{align}
    S_1 = \int \frac{1}{3} A_1 \cup \widetilde{A}_3 + \frac{1}{3} A_2 \cup A_2, \\
    S_2 = \int \frac{1}{3} A_1 \cup A_2 + \frac{1}{3} \widetilde{A}_2 \cup A_3, \\
    S_3 = \int \frac{1}{3} A_1 \cup A_2 + \frac{1}{3} A_2 \cup A_3,
\end{align}
in which $\widetilde{A}_i = - A_i$. $S_1$ corresponds to the $e \leftrightarrow m$ domain wall, $S_2$ corresponds to the trivial domain wall, and $S_3$ corresponds to the charge-conjugation domain wall.

\begin{table}[h]
\centering
\resizebox{0.4\columnwidth}{!}{
\begin{tabular}{|l|l|l|}
\hline
Subgroup $K$ & 2-cocycle & Domain wall \\ \hline
$\mathbb{Z}_3^{diag_1}$ & trivial & $S_1$\\ \hline
$\mathbb{Z}_3^{diag_2}$ & trivial & $S_{cc}$\\ \hline
$\mathbb{Z}_3^{(1)}\times \mathbb{Z}_3^{(2)}$ & nontrivial & $S_{\psi_1}$\\ \hline
$\mathbb{Z}_3^{(1)}\times \mathbb{Z}_3^{(2)}$ & nontrivial & $S_{\psi_2}$\\ \hline
$\mathbb{Z}_3^{(1)}\times \mathbb{Z}_3^{(2)}$ & trivial & $S_m$\\ \hline
$\mathbb{Z}_3^{(1)}$ & trivial & $S_{me}$\\ \hline
$\mathbb{Z}_3^{(2)}$ & trivial & $S_{em}$\\ \hline
$\mathbb{Z}_1$ & trivial & $S_e$\\ \hline\end{tabular}%
}
\caption{Correspondence between the gapped phases under $\mathbb{Z}_3^{(1)}\times \mathbb{Z}_3^{(2)}$, and the gapped domain walls in $\mathbb{Z}_3$ quantum double. $K$ is the unbroken subgroup, and different 2-cocycles $\left[\omega\right] \in \mathcal{H}^2(K, U(1))$ correspond to different SPT phases under the unbroken $K$ symmetry.\label{table:Z_3}}
\end{table}

\section{Gauging global symmetries}

\subsection{Gauging map} \label{appx:gauging_map}

In this subsection, we provide a brief discussion of the gauging map for ground state wavefunctions and operators in systems with a finite symmetry group $G$.

Gauging is a bijective, isometric duality map from wavefunctions with global symmetries to wavefunctions with gauge (local) symmetries~\cite{yoshida2017gapped}. Consider a lattice $\mathcal{L}$ with vertices $v \in \mathcal{V}$ and edges $e \in \mathcal{E}$. We assign a $G$-qudit to each vertex and edge, where each qudit state is represented by $|g\rangle$ with $g \in G$. The gauging map $\Gamma$ between states on the vertices and states on the edges is given as follows:
\begin{align}
\adjincludegraphics[width=0.35\columnwidth,valign=c]{gauging_5.pdf}. \label{eq:gauging_5}
\end{align}
Therefore, for a state on the square lattice, the gauging map is,
\begin{align}
\adjincludegraphics[width=0.4\columnwidth,valign=c]{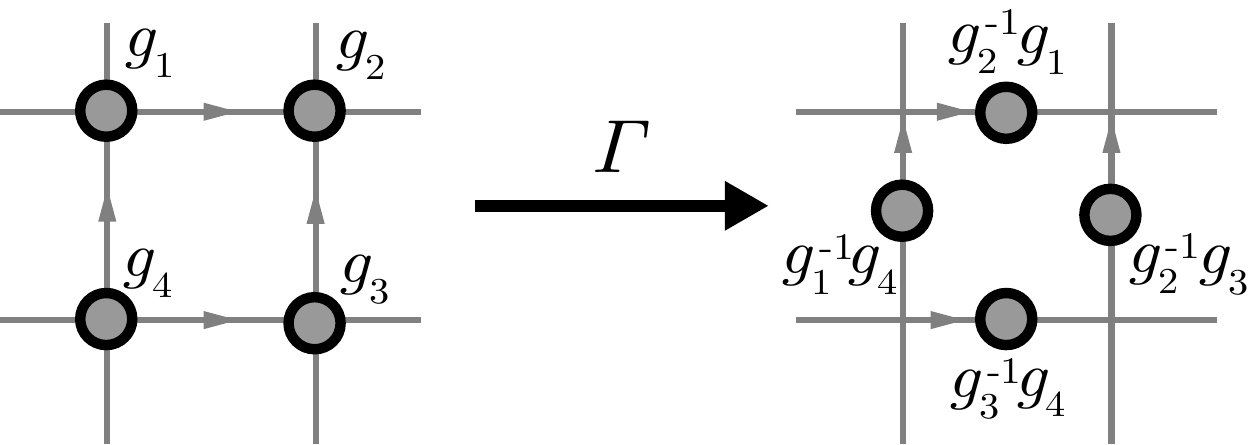}. \label{eq:gauging_6}
\end{align}
Consider the symmetric ground state 
\begin{align}
    |\Psi\rangle = \left(\frac{1}{|G|} \sum_{g \in G} |g\rangle\right)^{\otimes N}, \label{eq:product_state}
\end{align}
which is equivalent to the superposition of all the possible spin configurations. The global symmetry is given by the tensor product of left group multiplication, $\prod_{v} L^{g}_{+, v}$. One can check that after gauging, this global symmetry operator maps to identity.
\begin{align}
   \prod_{v} L^{g}_{+, v} \xlongrightarrow{\Gamma} I
\end{align}
The local Hamiltonian is given by the right group multiplication. We have
\begin{align}
    H = - \sum_{v} L^{g}_{-,v}. \label{eq:trivial_SPT_1}
\end{align}
After gauging, the $L^{g}_{-,v}$ operator becomes the star term of the quantum double model. We have
\begin{align}
L^{g}_{-,v}\ \adjincludegraphics[width=2.5cm,valign=c]{gauging_7.pdf} = \adjincludegraphics[width=2.5cm,valign=c]{gauging_8.pdf} \xlongrightarrow{\Gamma} \adjincludegraphics[width=2.85cm,valign=c]{gauging_9.pdf} = A^g_v\ \adjincludegraphics[width=2.5cm,valign=c]{gauging_10.pdf}
\label{eq:gauging_7}
\end{align}
Therefore, we have the following map of operator after gauging:
\begin{align}
    L^{g}_{-,v} \xlongrightarrow{\Gamma} A^g_v.
\end{align}
The zero-flux condition is automatically guaranteed by the gauging map. For example, for the state in Eq.~\eqref{eq:gauging_6}, we have,
\begin{align}
    (g_3^{-1} g_4)^{-1}(g_2^{-1} g_3)^{-1}(g_2^{-1} g_1)(g_1^{-1} g_4) = 1.
\end{align}
Therefore, after gauging the state in Eq.~\eqref{eq:product_state}, we can get the ground state wavefunction of $G$ quantum double with the following Hamiltonian~\cite{kitaev2003fault},
\begin{align}
    H = - \sum_{v} A_v - \sum_{p} B_p,
\end{align}
in which
\begin{align}
    A_v = \frac{1}{|G|} \sum_g A_v^g, \quad B_p = \sum_{g_1 g_2 ... g_k = 1} \prod_{i = 1}^{k} T^{g_i}_{\pm, e \in \partial p}.
\end{align}

As demonstrated in this example, gauging the trivial SPT phase gives rise to the quantum double $D(G)$. More generally, gauging the nontrivial SPT phases give rise to the twisted quantum doubles $D^{\omega}(G)$~\cite{levin2012braiding,hu2013twisted}.

\subsection{Gauging $S_3$ global symmetry} \label{appx:gauging_S3}

In this subsection, we take the $S_3$ symmetry as an example to show how to sequentially gauge the $\mathbb{Z}_3$ and $\mathbb{Z}_2$ subgroups, and finally obtain the ground state and Hamiltonian of the $S_3$ quantum double.

The $S_3$ group is the semidirect product of $\mathbb{Z}_3$ and $\mathbb{Z}_2$, $S_3 = \mathbb{Z}_3 \rtimes \mathbb{Z}_2$. It is a non-Abelian group consisting of six elements, $\{e, c, c^2, t, ct, c^2t\}$, where $e$ is the identity element, $c^3 = t^2 = e$, $t c t = c^2$ and $t c^2 t = c$. Since $S_3$ has a semidirect product structure, we can choose a basis consisting of product states over $\mathbb{C}^3 \otimes \mathbb{C}^2$:
\begin{align}
    |g\rangle \to |c^n t^q\rangle =  |c^n\rangle \otimes |t^q\rangle,
\end{align}
where $n \in \{0,1,2\}$, $q \in \{0,1\}$, $g \in S_3$, $c^n \in \mathbb{Z}_3$ and $t^q \in \mathbb{Z}_2$. The left and right multiplications are
\begin{equation}
    \begin{aligned}
        L^c_+ |c^n t^q\rangle = |c^{n+1} t^q\rangle &, \quad L^c_- |c^n t^q\rangle = |c^{n - (-1)^q} t^q\rangle \\
       L^t_+ |c^{n} t^{q}\rangle = |c^{-n} t^{q+1}\rangle &, \quad L^c_- |c^n t^q\rangle = |c^{n} t^{q-1}\rangle
    \end{aligned}
\end{equation}

We can also define the generalized Pauli $Z$ operators as linear combinations of $T_{+}^g$ operators,
\begin{equation}
    \begin{aligned}
        \widetilde{Z}_{L} &= \sum_{j,k} e^{2\pi i j /3} T_{+}^{c^j t^k}, \\
        \widetilde{Z}_R &= \sum_{j,k} e^{2 \pi i j /3} T_{+}^{t^k c^j}, \\
        Z &= \sum_{j, k} (-1)^{k} T_{+}^{c^j t^k}.
    \end{aligned}
\end{equation}

Consider $S_3$ qudits are assigned to the vertices of a square lattice, and each qudit is in the $+1$-eigenstate of $L_{-}^g$ operator. A spin configuration can be depicted as follows,
\begin{align}
    \adjincludegraphics[width=3cm,valign=c]{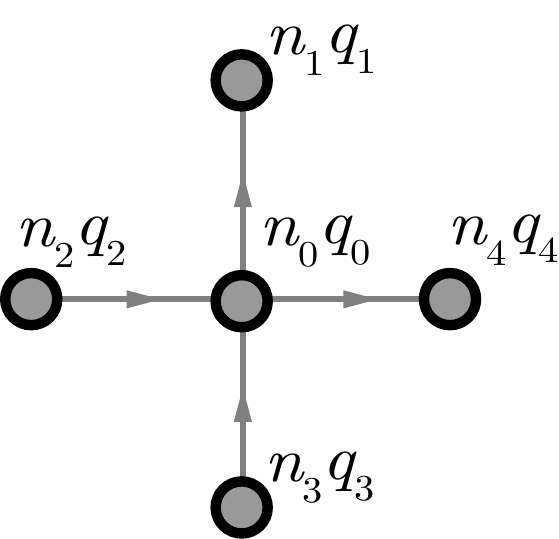}. \label{eq:gauging_11}
\end{align}
For the sake of simplicity, we denote the $S_3$ state in terms of $n_i q_i$ instead of $c^{n_i} t^{q_i}$.

Let us first gauge the $\mathbb{Z}_3$ symmetry. After gauging, the ground state becomes the following,
\begin{align}
    \adjincludegraphics[width=3.5cm,valign=c]{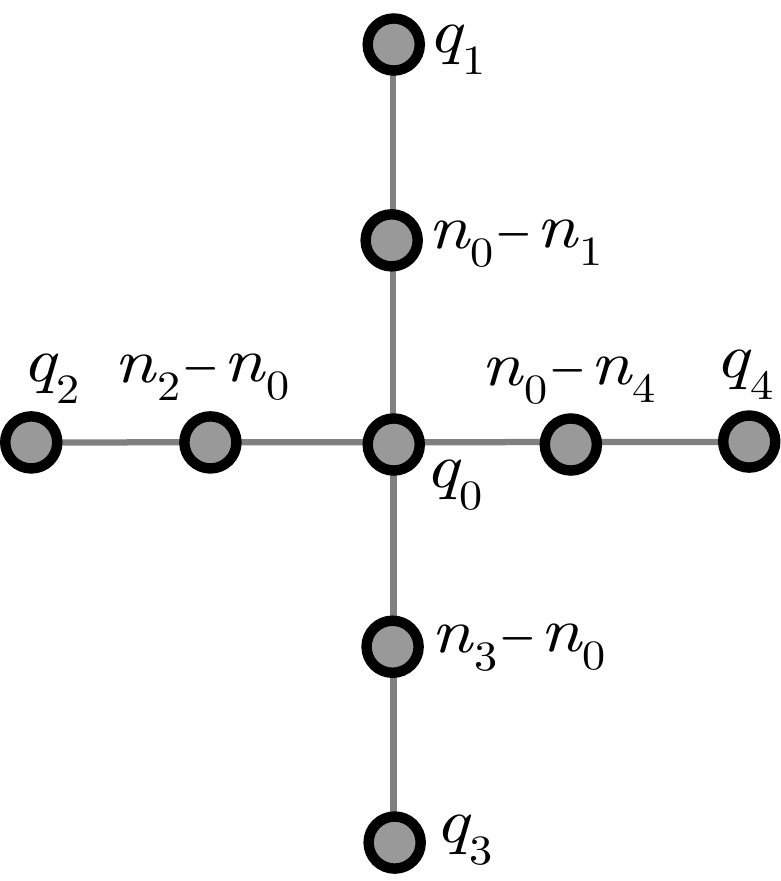}, \label{eq:gauging_12}
\end{align}
where the $\mathbb{Z}_2$ qubits are on the vertices, and the $\mathbb{Z}_3$ qutrits are on the edges. By applying the same method as Eq.~\eqref{eq:gauging_7}, one can find the local stabilizers become,
\begin{align}
    L^c_{-,v} \xlongrightarrow{\Gamma_{\mathbb{Z}_3}} \left(\adjincludegraphics[width=2.5cm,valign=c]{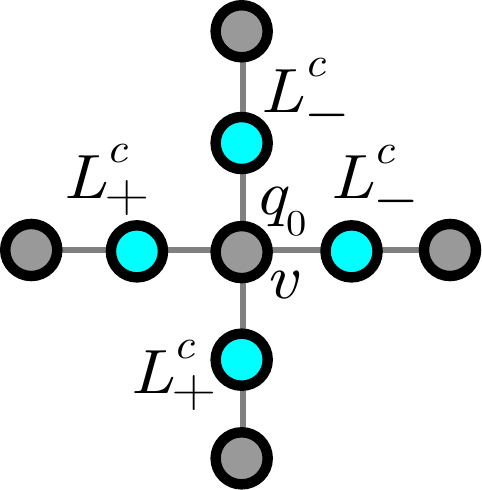}\right)^{Z_0}, \quad L^t_{-,v} \xlongrightarrow{\Gamma_{\mathbb{Z}_3}} L^t_{-,v}, \label{eq:gauging_13}
\end{align}
in which $Z_0 = (-1)^{q_0}$ corresponds to the eigenvalue of the Pauli $Z$ operator on the state $|q_0\rangle$. These are the stabilizers of the symmetry-enriched topological (SET) phases. Then we can further gauge the remaining $\mathbb{Z}_2$ symmetry. We get
\begin{align}
    \left(\adjincludegraphics[width=2.5cm,valign=c]{gauging_13.pdf}\right)^{Z_0} \xlongrightarrow{\Gamma_{\mathbb{Z}_2}} \adjincludegraphics[width=2.2cm,valign=c]{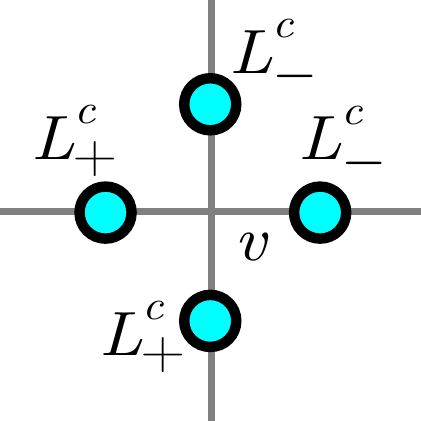}, \quad L^t_{-,v} \xlongrightarrow{\Gamma_{\mathbb{Z}_2}} \adjincludegraphics[width=2.2cm,valign=c]{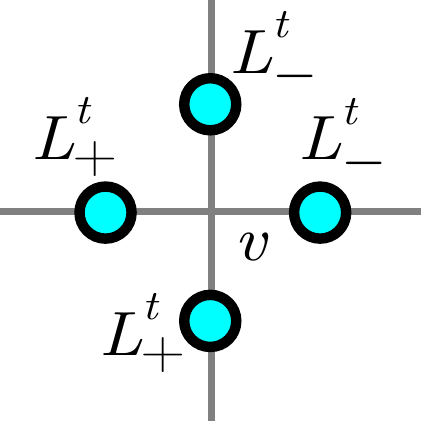},\label{eq:gauging_14}
\end{align}
which are the star terms of $S_3$ quantum double.

In the presence of the smooth boundary, after sequentially gauge the $\mathbb{Z}_3$ and $\mathbb{Z}_2$ symmetry, the generalized Ising stabilizers maps to the following:
\begin{align}
    \adjincludegraphics[width=2.2cm,valign=c]{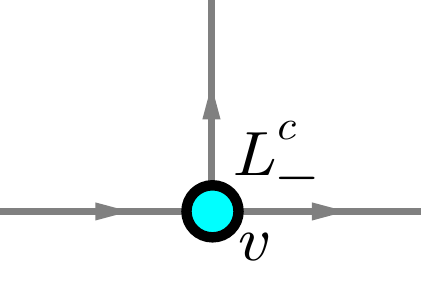} \xlongrightarrow{\Gamma_{S_3}} \adjincludegraphics[width=2.2cm,valign=c]{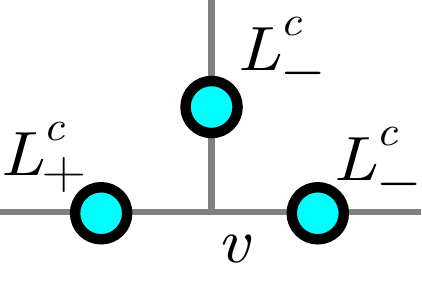}, \quad \adjincludegraphics[width=2.2cm,valign=c]{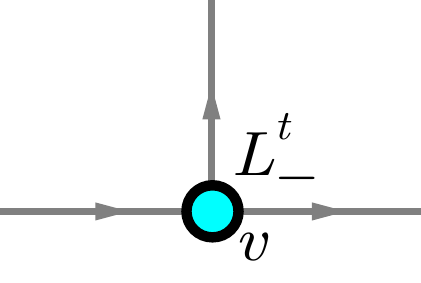} \xlongrightarrow{\Gamma_{S_3}} \adjincludegraphics[width=2.2cm,valign=c]{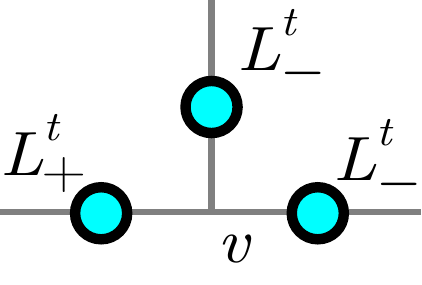}.\label{eq:gauging_38}
\end{align}

\subsection{Ribbon operators of $S_3$ quantum double} \label{appx:S3anyon}

Given a ribbon $\xi$, one can define a ribbon operator $F^{(h, g)}_{\xi}$ such that it commutes with all the stabilizers except for the stabilizers at the endpoints~\cite{kitaev2003fault}. The ribbon operator of quantum doubles is defined in Fig.~\ref{fig:Ribbon_1} and the multiplication and comultiplication rules are provided in Eq.~\eqref{eq:ribbon_multiplication} and Eq.~\eqref{eq:ribbon_comultiplication}.

In general, a ribbon operator that create anyonic excitation labeled by $(C, R)$ is given by~\cite{kitaev2003fault,bombin2008family,bravyi2022adaptiveconstantdepthcircuitsmanipulating}
\begin{align}
    F^{(C,R); (\mathbf{u},\mathbf{v})}_{\xi} := \frac{d_{\pi}}{|E(C)|} \sum_{k \in E(C)} \left(\Gamma_{\pi}^{-1}(k) \right)_{j,j'} F_{\xi}^{(c_i^{-1}, p_i k p_{i'}^{-1})},
\end{align}
in which $E(C)$ is the centralizer group of an element $r_C \in C$, $d_{\pi}$ is the dimension of the irreducible representation of the centralizer group. Here $\mathbf{u}= (i,j)$ and $\mathbf{v} = (i', j')$, where $i, i' \in \{1, \ldots, |C| \}$ and $j, j' \in \{1,\ldots, d_{\pi} \}$ are the indices for the elements of the conjugacy class and the matrix row/column indices respectively. $\Gamma_{\pi}(k)$ is the $d_{\pi}$ dimensional irreducible representation of $k \in E(C)$. Lastly, we choose $\{p_i\}_{i=1}^{|C|} \in G$ such that $c_i = p_i r_C p_i^{-1} \in C$.

We can write down all the eight ribbon operators for anyons $A$ to $H$ of $S_3$ quantum double. (We denote the third root of unity as $\omega$ below.) The ribbon operators for the three pure electric charges are given by:
\begin{align}
        F^{(\left[e\right], 1)}_{\xi} &= \frac{1}{6} \left(F_{\xi}^{e,e} + F_{\xi}^{e,c} + F_{\xi}^{e,c^2} + F_{\xi}^{e,t} + F_{\xi}^{e,tc} + F_{\xi}^{e,tc^2}\right), \\
        F^{(\left[e\right], -1)}_{\xi} &= \frac{1}{6} \left(F_{\xi}^{e,e} + F_{\xi}^{e,c} + F_{\xi}^{e,c^2} - F_{\xi}^{e,t} - F_{\xi}^{e,tc} - F_{\xi}^{e,tc^2}\right), \label{eq:B_anyon}\\
        F_{\xi}^{(\left[e\right], \pi)} &= \frac{1}{3} \left[ \begin{pmatrix}
        1 & 0 \\
        0 & 1 
    \end{pmatrix} F_{\xi}^{e,e} + \begin{pmatrix}
        \bar{\omega} & 0 \\
        0 & \omega 
    \end{pmatrix} F_{\xi}^{e,c} + \begin{pmatrix}
        \omega & 0 \\
        0 & \bar{\omega} 
    \end{pmatrix} F_{\xi}^{e,c^2} + \begin{pmatrix}
        0 & 1 \\
        1 & 0 
    \end{pmatrix} F_{\xi}^{e,t} + \begin{pmatrix}
        0 & \omega \\
        \bar{\omega} & 0 
    \end{pmatrix} F_{\xi}^{e,tc} + \begin{pmatrix}
        0 & \bar{\omega} \\
        \omega & 0 
    \end{pmatrix} F_{\xi}^{e,tc^2}\right],
\end{align}
corresponding to the ribbon operators for the $A, B$, and $C$ anyons.

The ribbon operators for anyons with $\left[t\right]$ flux are given by:
\begin{align}
    F^{(\left[t\right], 1)}_{\xi} &= \frac{1}{2} \left[\begin{pmatrix}
        F_{\xi}^{t,e} & F_{\xi}^{t,c^2} & F_{\xi}^{t,c} \\
        F_{\xi}^{tc,c} & F_{\xi}^{tc,e} & F_{\xi}^{tc,c^2} \\
        F_{\xi}^{tc^2,c^2} & F_{\xi}^{tc^2,c} & F_{\xi}^{tc^2,e}
    \end{pmatrix} + \begin{pmatrix}
        F_{\xi}^{t,t} & F_{\xi}^{t,tc^2} & F_{\xi}^{t,tc} \\
        F_{\xi}^{tc,tc^2} & F_{\xi}^{tc,tc} & F_{\xi}^{tc,t} \\
        F_{\xi}^{tc^2,tc} & F_{\xi}^{tc^2,t} & F_{\xi}^{tc^2,tc^2}
    \end{pmatrix}\right], \label{eq:D_anyon}\\
    F^{(\left[t\right], -1)}_{\xi} &= \frac{1}{2} \left[\begin{pmatrix}
        F_{\xi}^{t,e} & F_{\xi}^{t,c^2} & F_{\xi}^{t,c} \\
        F_{\xi}^{tc,c} & F_{\xi}^{tc,e} & F_{\xi}^{tc,c^2} \\
        F_{\xi}^{tc^2,c^2} & F_{\xi}^{tc^2,c} & F_{\xi}^{tc^2,e}
    \end{pmatrix} - \begin{pmatrix}
        F_{\xi}^{t,t} & F_{\xi}^{t,tc^2} & F_{\xi}^{t,tc} \\
        F_{\xi}^{tc,tc^2} & F_{\xi}^{tc,tc} & F_{\xi}^{tc,t} \\
        F_{\xi}^{tc^2,tc} & F_{\xi}^{tc^2,t} & F_{\xi}^{tc^2,tc^2}
    \end{pmatrix}\right], \label{eq:E_anyon}
\end{align}
for the $D$ and $E$ anyons.

Finally, we have the ribbon operators for anyons with $\left[c\right]$ flux:
\begin{align}
    F^{(\left[c\right], 1)}_{\xi} &= \frac{1}{3} \left[\begin{pmatrix}
        F_{\xi}^{c^2,e} & F_{\xi}^{c^2, t} \\
        F_{\xi}^{c,t} & F_{\xi}^{c,e}
    \end{pmatrix} + \begin{pmatrix}
        F_{\xi}^{c^2,c} & F_{\xi}^{c^2, tc^2} \\
        F_{\xi}^{c,tc} & F_{\xi}^{c,c^2}
    \end{pmatrix} + \begin{pmatrix}
        F_{\xi}^{c^2,c^2} & F_{\xi}^{c^2, tc} \\
        F_{\xi}^{c,tc^2} & F_{\xi}^{c,c}
    \end{pmatrix}\right], \\
    F^{(\left[c\right], \omega)}_{\xi} &= \frac{1}{3} \left[\begin{pmatrix}
        F_{\xi}^{c^2,e} & F_{\xi}^{c^2, t} \\
        F_{\xi}^{c,t} & F_{\xi}^{c,e}
    \end{pmatrix} + \bar{\omega}\begin{pmatrix}
        F_{\xi}^{c^2,c} & F_{\xi}^{c^2, tc^2} \\
        F_{\xi}^{c,tc} & F_{\xi}^{c,c^2}
    \end{pmatrix} + \omega \begin{pmatrix}
        F_{\xi}^{c^2,c^2} & F_{\xi}^{c^2, tc} \\
        F_{\xi}^{c,tc^2} & F_{\xi}^{c,c}
    \end{pmatrix}\right], \\
    F^{(\left[c\right], \bar{\omega})}_{\xi} &= \frac{1}{3} \left[\begin{pmatrix}
        F_{\xi}^{c^2,e} & F_{\xi}^{c^2, t} \\
        F_{\xi}^{c,t} & F_{\xi}^{c,e}
    \end{pmatrix} + \omega \begin{pmatrix}
        F_{\xi}^{c^2,c} & F_{\xi}^{c^2, tc^2} \\
        F_{\xi}^{c,tc} & F_{\xi}^{c,c^2}
    \end{pmatrix} + \bar{\omega}\begin{pmatrix}
        F_{\xi}^{c^2,c^2} & F_{\xi}^{c^2, tc} \\
        F_{\xi}^{c,tc^2} & F_{\xi}^{c,c}
    \end{pmatrix}\right]
\end{align}
for the $F, G,$ and $H$ anyon.

The comultiplication rule of anyonic ribbon operators can be obtained by matrix product of the internal states up to a constant factor for normalization.
\begin{align}
    F^{(C, R)}_{\xi} =  \frac{|E(C)|}{d_{\pi}} F^{(C, R)}_{\xi_1} F^{(C, R)}_{\xi_2}, \label{eq:anyon_comultiplicaton}
\end{align}
in which $\xi = \xi_1 \xi_2$. 

Here we show the comultipilication rules of $C$ and $F$ anyons as examples. The $C$ anyon ribbon operator is given by,
\begin{equation}
    \begin{aligned}
        F_{\xi}^{(\left[e\right],\pi)} = \frac{1}{3} \begin{pmatrix}
        F_{\xi}^{e,e} + \bar{\omega} F_{\xi}^{e,c} + \omega F_{\xi}^{e,c^2}& F_{\xi}^{e,t} + \omega F_{\xi}^{e,tc} + \bar{\omega} F_{\xi}^{e,tc^2} \\
        F_{\xi}^{e,t} + \bar{\omega} F_{\xi}^{e,tc} + \omega F_{\xi}^{e,tc^2} & F_{\xi}^{e,e} + \omega F_{\xi}^{e,c} + \bar{\omega} F_{\xi}^{e,c^2}
    \end{pmatrix}
    \end{aligned}
\end{equation}
By applying the comultiplication rule Eq.~\eqref{eq:ribbon_comultiplication}, the $\left( F_{\xi}^{(\left[e\right],\pi)}\right)_{11}$ term can be equivalently written as
\begin{equation}
    \begin{aligned}
        \frac{1}{3}\left(F_{\xi}^{e,e} + \bar{\omega} F_{\xi}^{e,c} + \omega F_{\xi}^{e,c^2}\right) &= \frac{1}{3} \sum_{k \in S_3} \left(F_{\xi_1}^{e,k} \otimes F_{\xi_2}^{e, k^{-1}} + \bar{\omega} F^{e,k}_{\xi_1} \otimes F^{e, k^{-1}c}_{\xi_2} + \omega F^{e,k}_{\xi_1} \otimes F^{e, k^{-1}c^2}_{\xi_2}\right) \\
    &=\frac{1}{3}\left( F_{\xi_1}^{e,e} + \bar{\omega} F_{\xi_1}^{e,c} + \omega F_{\xi_1}^{e,c^2}\right) \otimes \left( F_{\xi_2}^{e,e} + \bar{\omega} F_{\xi_2}^{e,c} + \omega F_{\xi_2}^{e,c^2}\right) \\
    & + \frac{1}{3}\left(F_{\xi_1}^{e,t} + \omega F_{\xi_!}^{e,tc} + \bar{\omega} F_{\xi_1}^{e,tc^2}\right) \otimes \left(F_{\xi_2}^{e,t} + \bar{\omega} F_{\xi_2}^{e,tc} + \omega F_{\xi_2}^{e,tc^2}\right) \\
    &= 3 \left( F_{\xi_1}^{(\left[e\right],\pi)}\right)_{11} \left( F_{\xi_2}^{(\left[e\right],\pi)}\right)_{11} + 3 \left( F_{\xi_1}^{(\left[e\right],\pi)}\right)_{12} \left( F_{\xi_2}^{(\left[e\right],\pi)}\right)_{21}.
    \end{aligned}
\end{equation}
The $\left( F_{\xi}^{(\left[e\right],\pi)}\right)_{12}$ term can be written as
\begin{equation}
    \begin{aligned}
        \frac{1}{3}\left(F_{\xi}^{e,t} + \omega F_{\xi}^{e,tc} + \bar{\omega} F_{\xi}^{e,tc^2} \right) &= \frac{1}{3} \sum_{k \in S_3} \left(F^{e, k}_{\xi_1} \otimes F^{e, k^{-1}t}_{\xi_2} + \omega F^{e, k}_{\xi_1} \otimes F^{e, k^{-1}tc}_{\xi_2} + \bar{\omega} F^{e, k}_{\xi_1} \otimes F^{e, k^{-1}tc^2}_{\xi_2}\right) \\
        &= \frac{1}{3} \left(F_{\xi_1}^{e,e} + \bar{\omega} F_{\xi_1}^{e,c} + \omega F_{\xi_1}^{e,c^2}\right) \otimes \left(F_{\xi_2}^{e,t} + \omega F_{\xi_2}^{e,tc} + \bar{\omega} F_{\xi_2}^{e,tc^2}\right) \\
        &+ \frac{1}{3} \left(F_{\xi_1}^{e,t} + \omega F_{\xi_1}^{e,tc} + \bar{\omega} F_{\xi_1}^{e,tc^2}\right) \otimes \left( F_{\xi_2}^{e,e} + \omega F_{\xi_2}^{e,c} + \bar{\omega} F_{\xi_2}^{e,c^2}\right) \\
        &= 3 \left( F_{\xi_1}^{(\left[e\right],\pi)}\right)_{11} \left( F_{\xi_2}^{(\left[e\right],\pi)}\right)_{12} + 3 \left( F_{\xi_1}^{(\left[e\right],\pi)}\right)_{12} \left( F_{\xi_2}^{(\left[e\right],\pi)}\right)_{22}.
    \end{aligned}
\end{equation}
The decomposition of the other terms can be obtained similarly. Finally, we have
\begin{align}
    F_{\xi}^{(\left[e\right],\pi)} = 3 F_{\xi_1}^{(\left[e\right],\pi)} F_{\xi_2}^{(\left[e\right],\pi)}
\end{align}

Next, let us check the $F$ ribbon operator. We have,
\begin{align}
    F^{(\left[c\right], 1)}_{\xi} &= \frac{1}{3} \begin{pmatrix}
        F_{\xi}^{c^2,e} + F_{\xi}^{c^2,c} + F_{\xi}^{c^2,c^2} & F_{\xi}^{c^2, t} + F_{\xi}^{c^2, tc^2} + F_{\xi}^{c^2, tc} \\
        F_{\xi}^{c,t} + F_{\xi}^{c,tc} +  F_{\xi}^{c,tc^2} & F_{\xi}^{c,e} +  F_{\xi}^{c,c^2} + F_{\xi}^{c,c}
    \end{pmatrix}
\end{align}
The first term $\left(F^{(\left[c\right], 1)}_{\xi}\right)_{11}$ can be written as
\begin{equation}
    \begin{aligned}
         \frac{1}{3}\left(F_{\xi}^{c^2,e} + F_{\xi}^{c^2,c} + F_{\xi}^{c^2,c^2} \right) &= \frac{1}{3} \sum_{k \in S_3} \left(F^{c^2, k}_{\xi_1} \otimes F^{k^{-1} c^2 k, k^{-1}}_{\xi_2} + F^{c^2, k}_{\xi_1} \otimes F^{k^{-1} c^2 k, k^{-1} c}_{\xi_2} + F^{c^2, k}_{\xi_1} \otimes F^{k^{-1} c^2 k, k^{-1} c^2}_{\xi_2}\right) \\
         &= \frac{1}{3} \left( F_{\xi_1}^{c^2,e} + F_{\xi_1}^{c^2,c} + F_{\xi_1}^{c^2,c^2} \right) \otimes \left( F_{\xi_2}^{c^2,e} + F_{\xi_2}^{c^2,c} + F_{\xi_2}^{c^2,c^2} \right) \\
         &+ \frac{1}{3} \left(F_{\xi_1}^{c^2, t} + F_{\xi_1}^{c^2, tc^2} + F_{\xi_1}^{c^2, tc}\right) \otimes \left(F_{\xi_2}^{c,t} + F_{\xi_2}^{c,tc} +  F_{\xi_2}^{c,tc^2}\right)
    \end{aligned}
\end{equation}
The decompostion of the other terms can be obtained similarly. We conclude that
\begin{align}
    F_{\xi}^{(\left[c\right],1)} = 3 F_{\xi_1}^{(\left[c\right],1)} F_{\xi_2}^{(\left[c\right],1)}
\end{align}

\section{Domain walls in $S_3$ quantum double} \label{appx:S3domainwall}

In this section, we provide more details about the construction of the domain walls introduced in Sec.~\ref{sec:non-Abelian}. The structure of this section is as follows: In Appendix~\ref{appx:gaugedS3xS3}, we will construct the $S_3 \times S_3$ SPT-sewn domain wall, which corresponds to the $B \leftrightarrow D$ domain wall discussed in Sec.~\ref{sec:S3xS3_SPT_sewing}. In Appendix~\ref{appx:CFribbon}, we will construct the $C \leftrightarrow F$ domain wall. The corresponding $(\mathbb{Z}_3 \times \mathbb{Z}_3) \rtimes \mathbb{Z}_2$ SPT can be obtained from measuring out the middle layer of the $S_3 \times \mathrm{Rep}(S_3) \times S_3$ SPT-sewn domain wall as discussed in Sec.~\ref{sec:S3xRepxS3_SPT_sewing}.

\subsection{$S_3 \times S_3$ SPT-sewn domain wall} \label{appx:gaugedS3xS3}

Consider we have two $S_3$ Ising models with smooth boundaries in the symmetric phase. We can sew the two smooth boundaries together by applying the SPT entangler, which corresponds to the nontrivial 2-cocycles $\omega_2 \in H^2(S_3 \times S_3, U(1))$. We have
\begin{align}
    H^2(S_3 \times S_3, U(1)) = \mathbb{Z}_2,
\end{align}
The nontrivial 2-cocycle is given by
\begin{align}
    \omega_2 \left((t^{i_1} c^{j_1}, t^{i_2} c^{j_2}), (t^{i_1'} c^{j_1'}, t^{i_2'} c^{i_2'})\right) = (-1)^{i_1 j_2'}. \label{eq:2cocycleS3xS3}
\end{align}

By scrutinizing the exact form of the 2-cocycle, we find the nontrivial 2-cocycle of $S_3 \times S_3$ in fact comes form the $\mathbb{Z}_2 \times \mathbb{Z}_2$ subgroup. The 2-cocyle of the nontrivial $\mathbb{Z}_2 \times \mathbb{Z}_2$ SPT is given by
\begin{align}
    \nu_2 \left((t^{i_1}, t^{i_2}), (t^{i_1'}, t^{i_2'})\right) = \left(-1\right)^{i_1 i_2'}, \label{eq:2cocycleZ2xZ2}
\end{align}
which is exactly the same as Eq.~\eqref{eq:2cocycleS3xS3}.

Therefore, the $B \leftrightarrow D$ domain wall can be obtained by sewing the $\mathbb{Z}_2 \times \mathbb{Z}_2$ subgroup of the smooth boundaries with $S_3 \times S_3$ symmetry and gauging the global $S_3 \times S_3$ symmetry. The SPT Hamiltonian and the gauging map are shown as follows. 

For the $\mathbb{Z}_2$ layer, we have
\begin{align}
    \adjincludegraphics[width = 0.7\columnwidth,valign=c]{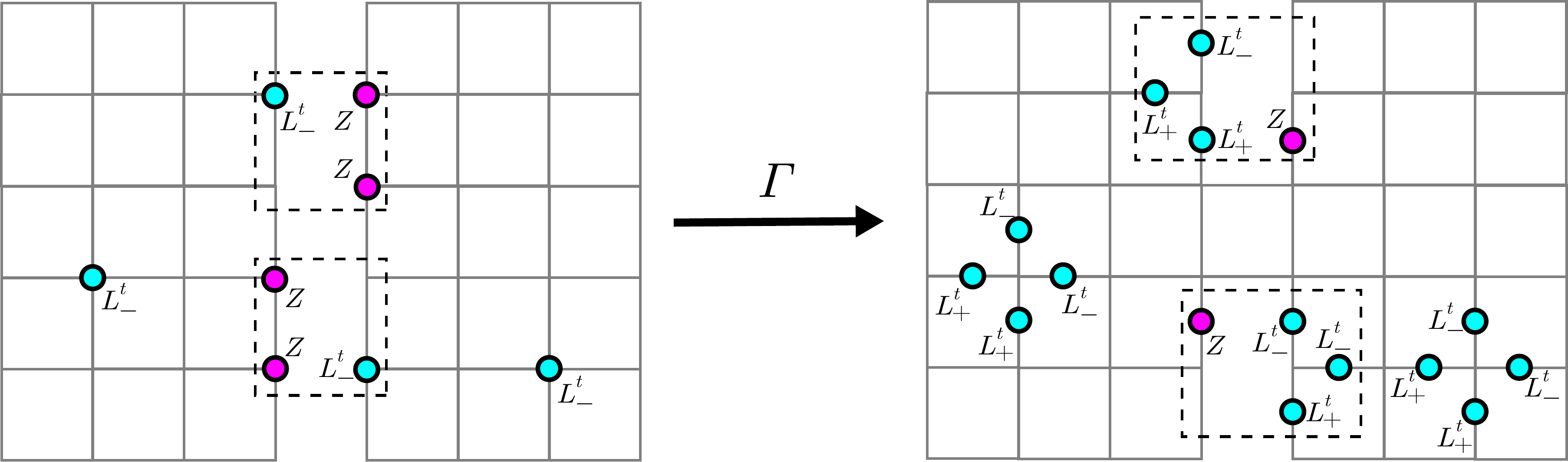}. \label{SS3_BD_1}
\end{align}
For the $\mathbb{Z}_3$ layer, we have
\begin{align}
    \adjincludegraphics[width = 0.7\columnwidth,valign=c]{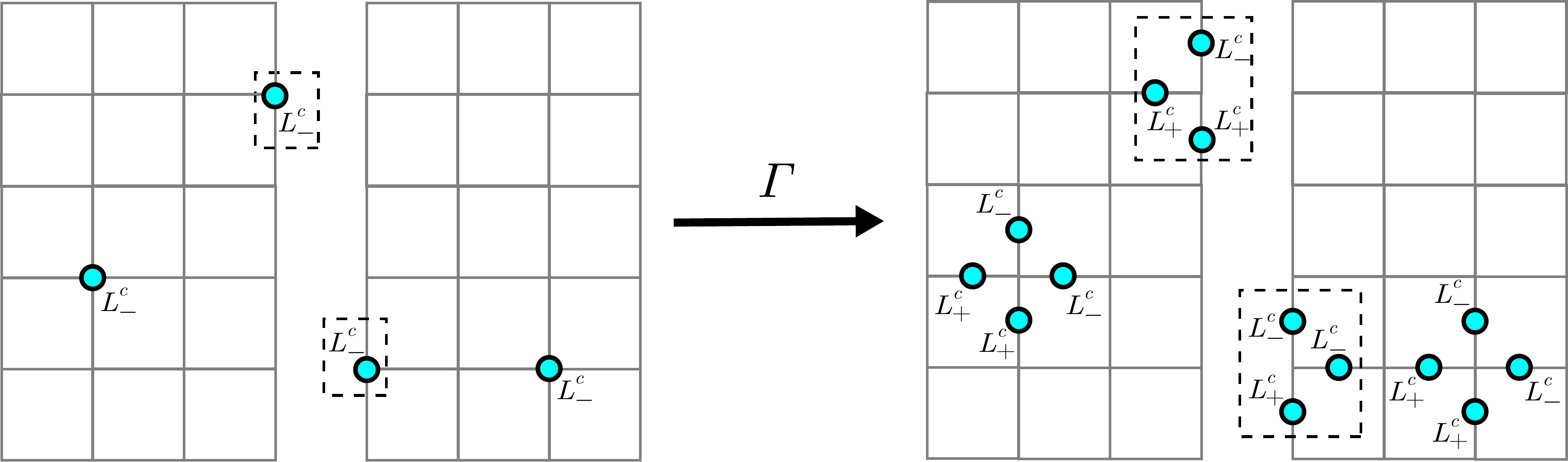}. \label{SS3_BD_2}
\end{align}

To show this domain wall condenses $B$ and $D$ anyons, we can show that $F^{(\left[e\right], -1)}_{\xi} \otimes F^{(\left[t\right], 1)}_{\xi}$ and $F^{(\left[t\right], 1)}_{\xi} \otimes F^{(\left[e\right], -1)}_{\xi}$ commute with the stabilizers on the domain wall. The nontrivial stabilizers on the domain wall are the following:
\begin{align}
    \adjincludegraphics[width = 0.4\columnwidth,valign=c]{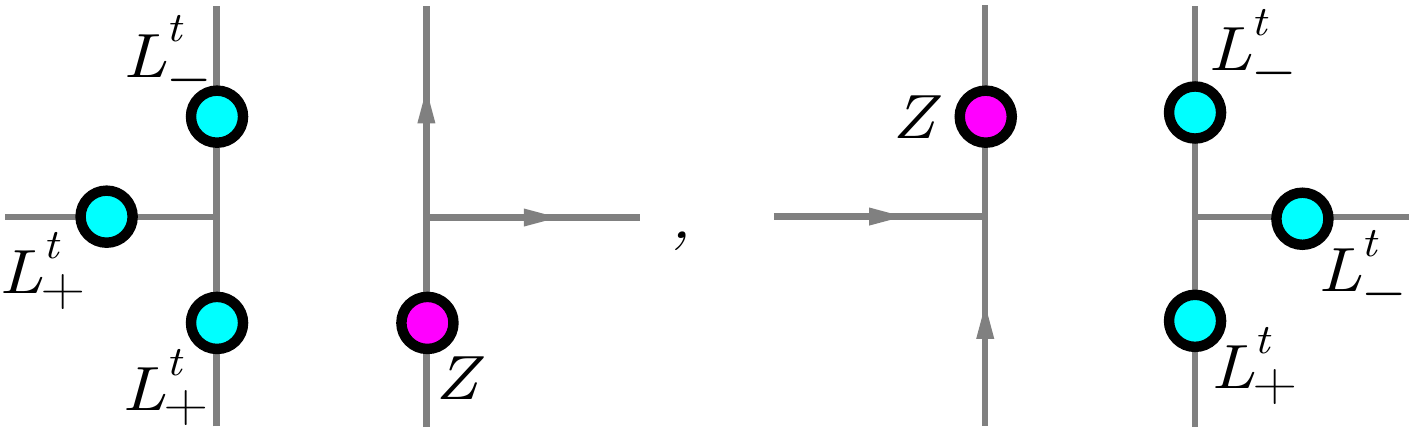}. \label{eq:SPT_11}
\end{align}
For the simplicity of discussion, we consider a ribbon $\xi$ passing across the domain wall. The left part of the ribbon is labeled as $\xi_L$, while the right part of the ribbon is labeled as $\xi_R$. We denote $\xi = \xi_L \xi_R$.

By definition, $A$ anyon (the vacuum) can go through the domain wall. Pairs of $B$ anyons in the left bulk are created by the ribbon operator $F^{([\mathrm{e}], -1)}_{\xi_L}$, which is a string of Pauli $Z$ operators on qubits. Since the Pauli $Z$ string attaches to the domain wall from the left bulk, the first term in Eq.~\eqref{eq:SPT_11} is violated. One can further apply the ribbon operator of the $D$ anyon on the other side, namely, $F^{([\mathrm{t}], 1)}_{\xi_R}$. We find that each entry of $F^{([\mathrm{t}], 1)}_{\xi_R}$ contains a $\mathbb{Z}_2$ shift operator at the endpoint attached to the domain wall, which anti-commutes with the Pauli $Z$ operator in the first term in Eq.~\eqref{eq:SPT_11} as well. Therefore, the product $F^{([\mathrm{e}], -1)}_{\xi_L}F^{([\mathrm{t}], 1)}_{\xi_R}$ commutes with the stabilizers on the domain wall. One can also verify $F^{([\mathrm{t}], 1)}_{\xi_L} F^{([\mathrm{e}], -1)}_{\xi_R}$ commutes with the domain wall stabilizers as well. 

Now consider the scenario where two $E$ anyon ribbon operators are attached to the same site on the domain wall, $F^{([\mathrm{t}], -1)}_{\xi_L} F^{([\mathrm{t}], -1)}_{\xi_R}$. We have the following fusion rule~\cite{beigi2011quantum}: 
\begin{align}
    B \times D = E
\end{align}
For the ribbon operator, according to the multiplication rule in Eq.~\eqref{eq:ribbon_multiplication}, we have
\begin{align}
    F^{([\mathrm{e}], -1)}_{\xi} F^{([\mathrm{t}], 1)}_{\xi} = F^{([\mathrm{t}], -1)}_{\xi}.
\end{align}
One can check it by substituting Eq.~\eqref{eq:B_anyon} and Eq.~\eqref{eq:D_anyon} into the above equation.

Therefore, we can consider aligning the ribbon operators for the $B$ and $D$ anyons together on each side of the domain wall. Specifically, we have the ribbon operators $F^{([\mathrm{e}], -1)}_{\xi_L} F^{([\mathrm{t}], 1)}_{\xi_R}$, and $F^{([\mathrm{t}], 1)}_{\xi_L} F^{([\mathrm{e}], -1)}_{\xi_R}$. By multiplying them together, we obtain $F^{([\mathrm{t}], -1)}_{\xi_L} F^{([\mathrm{t}], -1)}_{\xi_R}$. As we have shown earlier this subsection, both $F^{([\mathrm{e}], -1)}_{\xi_L} F^{([\mathrm{t}], 1)}_{\xi_R}$ and $F^{([\mathrm{t}], 1)}_{\xi_L} F^{([\mathrm{e}], -1)}_{\xi_R}$ commute independently with the domain wall stabilizers. Therefore, their multiplication also commutes with the stabilizers.

We conclude that the $S_3 \times S_3$ SPT-sewn domain wall is invertible for $A$, $B$, $D$, and $E$ anyons. For all the other anyons, this domain wall is not traversable.

\subsection{$C \leftrightarrow F$ domain wall}
\label{appx:CFribbon}

In this subsection, we construct the $C \leftrightarrow F$ domain wall by gauging the $(\mathbb{Z}_3 \times \mathbb{Z}_3) \rtimes \mathbb{Z}_2$ SPT on a codimension-1 submanifold of the 2d $S_3$ Ising model. The $(\mathbb{Z}_3 \times \mathbb{Z}_3) \rtimes \mathbb{Z}_2$ SPT is calculated in Appendix~\ref{appx:Z3xZ3rxZ2SPT}. Unlike in Appendix~\ref{appx:Z3xZ3rxZ2SPT}, where we assume there is one qubit per site, here we assume there are two qubits per site. However, since the $\mathbb{Z}_2 \times \mathbb{Z}_2$ group breaks into the diagonal subgroup, the two setups are equivalent up to isometry. Therefore, the qubit state at each site can be written as $| t_1^{k_l} t_2^{k_l} \rangle$, where $| t_1^{k_l} \rangle$ is the qubit state above, while $| t_2^{k_l} \rangle$ is the qubit state below.

After gauging the $\mathbb{Z}_3$ symmetry, the $\mathbb{Z}_3$ stabilizers on the domain wall becomes the following,
\begin{equation}
    \begin{aligned}
        \adjincludegraphics[width=0.9\columnwidth,valign=c]{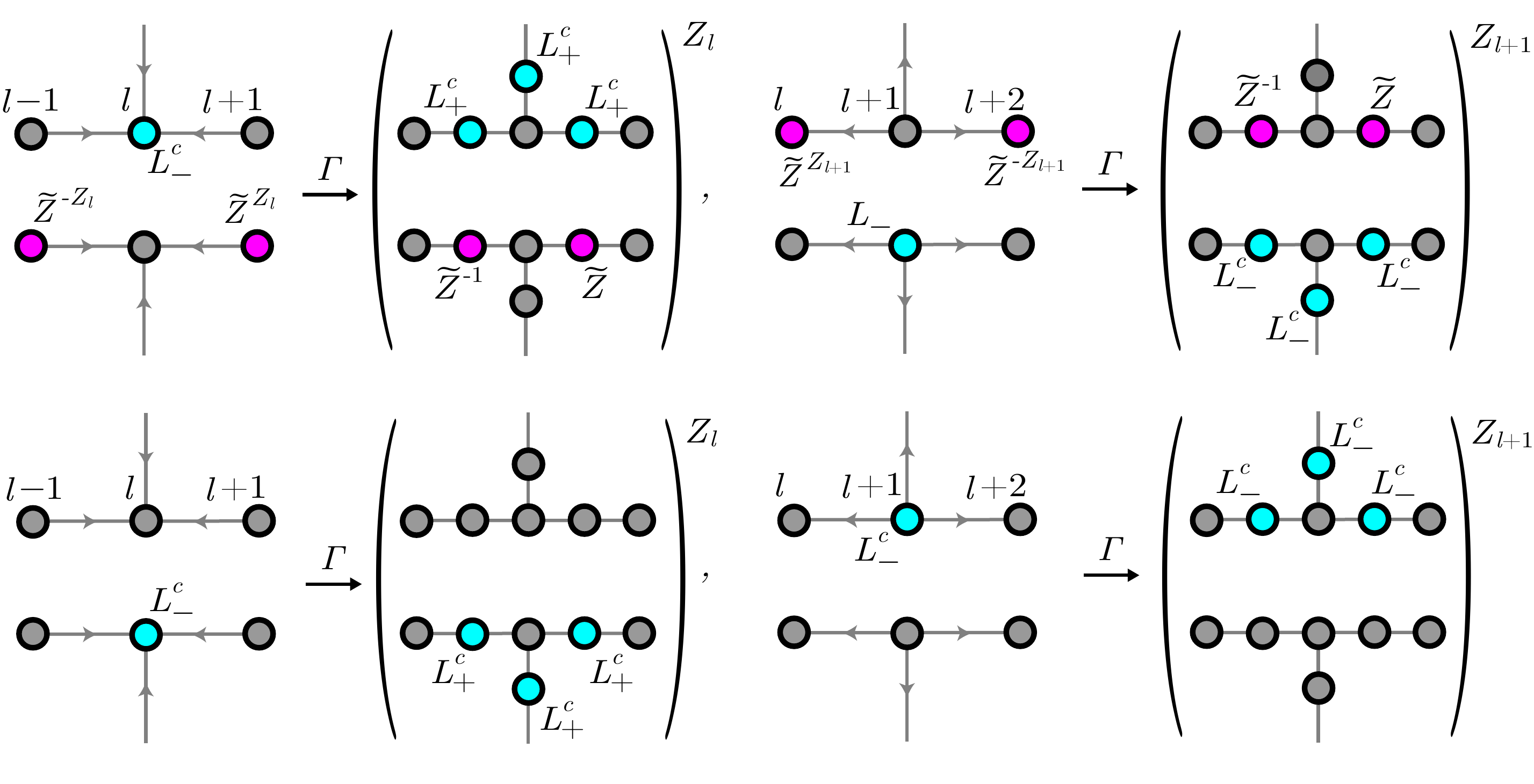}. \label{eq:gauging_28}
    \end{aligned}
\end{equation}
And the $\mathbb{Z}_2$ operators are unchanged. The ground state of the above Hamiltonian corresponds to the domain wall Hamiltonian of a symmetry-enriched topological (SET) phase. Specifically, it corresponds to the $\mathbb{Z}_3$ quantum double enriched by the $\mathbb{Z}_2$ charge conjugation symmetry. The ground state wavefunction can be written as follows:
\begin{align}
    |\Psi\rangle_{\text{SET}} \propto \sum_{\{n_i\}, \{q_j\}} | c^{s(\triangle)(n_{i'} - n_i)} \rangle \otimes | t^{q_j} \rangle,
\end{align}
where $i$ and $i'$ are neighboring vertices, $s(\triangle) = \pm 1$ depending on the orientation of the edge, $n_i \in \{0, 1, 2\}$, and $q_j \in \{0, 1\}$. Along the domain wall, we require an additional constraint that the top and bottom qubit states are the same in the group basis.

Then we further gauge the remaining $\mathbb{Z}_2$ symmetry. We use the first stabilizer in Eq.~\eqref{eq:gauging_28} as an example. We have
\begin{equation}
    \begin{aligned}
       \adjincludegraphics[height=3.5cm,valign=c]{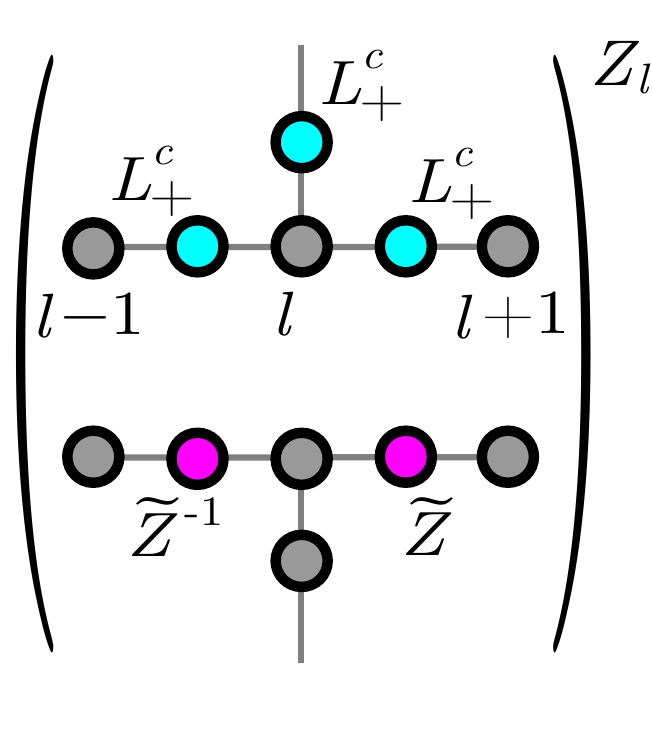} | \Psi \rangle_{\text{SET}} &= \sum_{\{i_l, j_l, k_l\}} \exp \left(\frac{2 \pi i}{3} Z_l (j_{l+1,l} - j_{l-1,l})\right) \adjincludegraphics[height=4.2cm,valign=c]{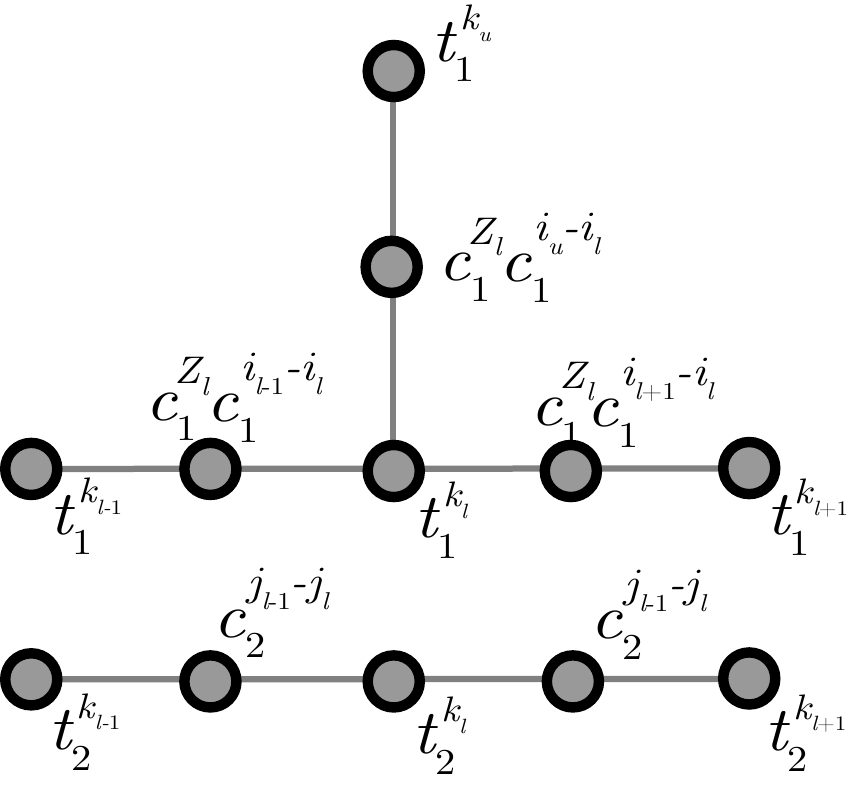} \\
       &\xlongrightarrow{\Gamma} \sum_{\{i_l, j_l, k_l\}} \exp \left(\frac{2 \pi i}{3} Z_l (j_{l+1, l} - j_{l-1,l})\right) \adjincludegraphics[height=3.5cm,valign=c]{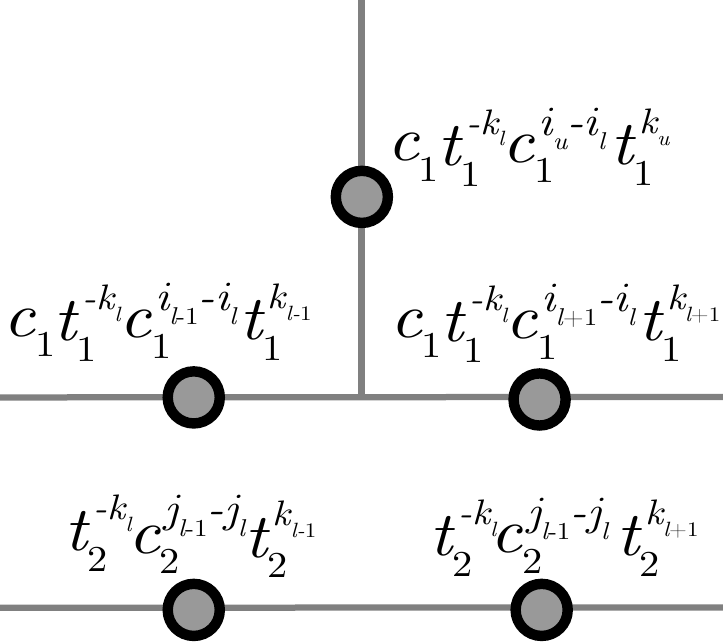} \\
       &= \adjincludegraphics[height=3.5cm,valign=c]{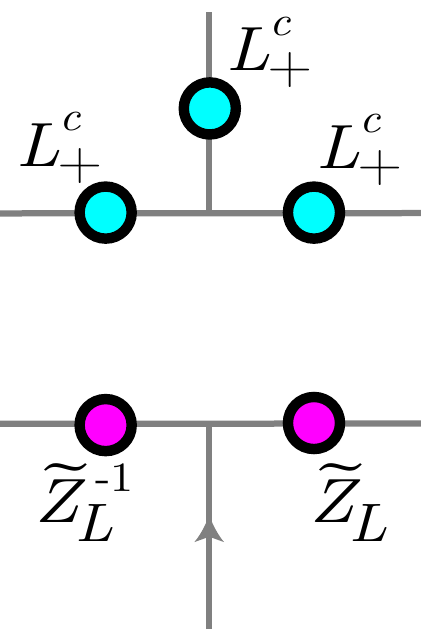}\ | \Psi \rangle_{\text{QD}} 
       \label{eq:gauging_29}
    \end{aligned}
\end{equation}
In which $j_{l+1, l} = j_{l+1} - j_l$, $i_l, j_l \in \{0, 1, 2\}$, $k_l \in \{0, 1\}$, and $|\Psi\rangle_{\text{QD}}$ is the ground state wavefunction for the $S_3$ quantum double with the above domain wall. Similarly, one can get the other stabilizers. We summarize all the stabilizers below.
\begin{align}   \adjincludegraphics[width=0.7\columnwidth,valign=c]{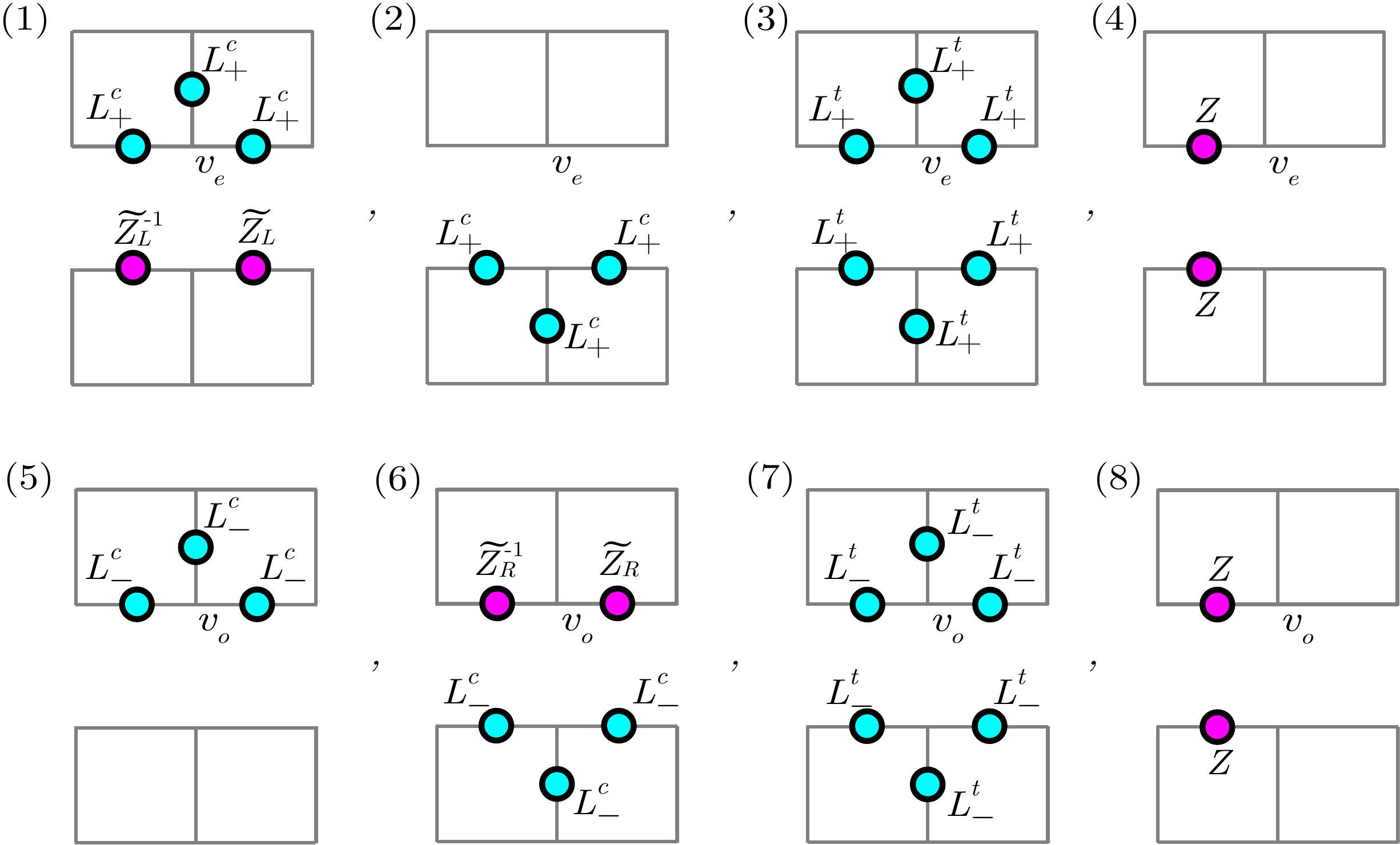}, \label{eq:CFdw_1}
\end{align}
in which $\widetilde{Z}_{L}$, $\widetilde{Z}_R$, and $Z$ are defined in Eq.~\eqref{eq:generalized_Pauli}. Although Eq.~\eqref{eq:gauging_29}(1) and (3) do not commute with each other, conjugating (1) by (3) yields, the inverse of (1). Thus they are frustration-free. The same conclusion can be applied between (5) and (7). If needed, these can be turned into a commuting Hamiltonian by summing the terms over $c$ and $t$.


Following the discussion in Appendix~\ref{appx:S3anyon}, The $CF$ ribbon operator can be written as follows,
\begin{align}
    F^{CF}_{\xi} = 3 F^{C}_{\xi_1} F^{F}_{\xi_2}, \label{eq:CFribbon}
\end{align}
in which $\xi = \xi_1 \xi_2$ and the intersection of $\xi_1$ and $\xi_2$ is at the domain wall; see Eq.~\eqref{eq:CF_1} for an illustration. 

There are two types of stabilizers on the domain wall that connect the two bulks: Eq.~\eqref{eq:CFdw_1}(1) and Eq.~\eqref{eq:CFdw_1}(6), and Eq.~\eqref{eq:CFdw_1}(3) and Eq.~\eqref{eq:CFdw_1}(7). One can check that all the other stabilizers commute trivially with the $CF$ ribbon operator. We use Eq.~\eqref{eq:CFdw_1}(1) and Eq.~\eqref{eq:CFdw_1}(3) as nontrivial examples to illustrate that the $CF$ ribbon operator commutes with the stabilizers on the domain wall. The commutation for the remaining cases can be shown similarly.

The branching structure and labels of ribbons and qudits are displayed below, as well as the ribbon we choose.
\begin{align}   \adjincludegraphics[width=0.4\columnwidth,valign=c]{CF_2.pdf}. \label{eq:CF_1}
\end{align}
The arrow in the above equation shows the direction of the anyons.

We place the $C$ ribbon operator on $\xi_1$ and $F$ ribbon operator on $\xi_2$. Note the ribbon operator has a nontrivial overlap with domain wall stabilizers only on qudit $1, 2,$ and $3$. Thus it suffices to check the commutation relation with respect to a ``short'' ribbon operator supported on these sites; the commutation relations for the longer ribbon operators then follow from the comultiplication rule [Appendix~\ref{appx:S3anyon}]. 

The nontrivial part of the calculation can be summarized as follows. The relevant matrix entries of Eq.~\eqref{eq:CFribbon} are
\begin{equation}
    \begin{aligned}
         3 \left(F^{CF}_{\xi}\right)_{11} = &\left(F^{e,e}_{\xi_1} + \bar{\omega} F^{e,c}_{\xi_1} + \omega F^{e, c^2}_{\xi_1}\right) \otimes \left(F^{c^2, e}_{\xi_2} + F^{c^2, c}_{\xi_2} + F^{c^2, c^2}_{\xi_2}\right)\\ + &\left(F^{e,t}_{\xi_1} + \omega F^{e,tc}_{\xi_1} + \bar{\omega} F^{e, tc^2}_{\xi,1}\right) \otimes \left(F^{c, t}_{\xi_2} + F^{c, tc}_{\xi_2} + F^{c, tc^2}_{\xi_2}\right), \\
         3 \left(F^{CF}_{\xi}\right)_{12} = &\left(F^{e,e}_{\xi_1} + \bar{\omega} F^{e,c}_{\xi_1} + \omega F^{e, c^2}_{\xi_1}\right) \otimes \left(F^{c^2, t}_{\xi_2} + F^{c^2, tc}_{\xi_2} + F^{c^2, tc^2}_{\xi_2}\right) \\
         + &\left(F^{e,t}_{\xi_1} + \bar{\omega} F^{e,tc^2}_{\xi_1} + \omega F^{e, tc}_{\xi_1}\right) \otimes \left(F^{c, e}_{\xi_2} + F^{c, c}_{\xi_2} + F^{c, c^2}_{\xi_2}\right). \label{eq:CF_11}
    \end{aligned}
\end{equation}
Before we proceed, we point out a subtlety on the orientation of the ribbon operator. If we rotate Eq.~\eqref{eq:CF_1} counterclockwise so that it becomes horizontal, and compare it with the ribbon operator defined in Fig.~\ref{fig:Ribbon_1}, we find the orientation of the horizontal line reverses in the middle. If the reversed orientation is applied everywhere, the delta function defined in Fig.~\ref{fig:Ribbon_1} would become $\delta_{g^{-1}, x_1 x_2 x_3}$. 

First, let us check the commutation relation between Eq.~\eqref{eq:CFdw_1}(1) and the first term in Eq.~\eqref{eq:CF_11}. Because $F^{(e,g)}_{\xi_1}$ projects the state on the qudit $1$ to $g^{-1}$, we get
\begin{align}
    L^c_{+, 1} F^{(e, g)}_{\xi_1} = F^{(e, g c^{-1})}_{\xi_1} L^c_{+, 1}, \label{eq:ribbon_1}
\end{align}
Furthermore, we get
\begin{align}
    \widetilde{Z}_{L, 2}^{-1} F^{(c^2, g)}_{\xi_2} = \omega F^{(c^2, g)}_{\xi_2} \widetilde{Z}_{L, 2}^{-1}, \quad \widetilde{Z}_{L, 2}^{-1} F^{(c, g)}_{\xi_2} = \bar{\omega} F^{(c, g)}_{\xi_2} \widetilde{Z}_{L, 2}^{-1},
\end{align}
From these relations, it follows that $\left(F^{CF}_{\xi}\right)_{11}$ and $\left(F^{CF}_{\xi}\right)_{12}$ are invariant under the conjugation by Eq.~\eqref{eq:CFdw_1}(1). Similarly, one can show that the other matrix elements commute with Eq.~\eqref{eq:CFdw_1}(1) as well.

The next nontrivial commutation relation is between Eq.~\eqref{eq:CFdw_1}(3) and $F^{CF}_{\xi}$. To that end, we note 
\begin{align}
    L_{+,1}^t F_{\xi_1}^{e,g} = F_{\xi_1}^{e,gt} L_{+,1}^t,
\end{align}
and
\begin{align}
    \left(L_{+,2}^t \otimes L_{+,3}^t\right) F_{\xi_2}^{c^2,g} = F_{\xi_2}^{c,gt}  \left(L_{+,2}^t \otimes L_{+,3}^t\right), \quad \left(L_{+,2}^t \otimes L_{+,3}^t\right) F_{\xi_2}^{c,g} = F_{\xi_2}^{c^2,gt}  \left(L_{+,2}^t \otimes L_{+,3}^t\right).
\end{align}
Thus conjugation by $\left(L^t_{+, 1} \otimes L^t_{+, 2} \otimes L^t_{+, 3}\right)$ effectively exchanges the first and the second term of $\left(F_{\xi}^{CF}\right)_{11}$ [Eq.~\eqref{eq:CF_11}]; the same conclusion applies to $\left(F_{\xi}^{CF}\right)_{12}$ as well. It is also straightforward to check that  $\left(F^{CF}_{\xi}\right)_{21}$ and $\left(F^{CF}_{\xi}\right)_{22}$ also commute with $\left(L^t_{+, 1} \otimes L^t_{+, 2} \otimes L^t_{+, 3}\right)$. We conclude that $F^{CF}_{\xi}$ commutes with Eq.~\eqref{eq:CFdw_1}(3).

The other stabilizers on the domain wall—specifically (2), (4), (5), (6), (7), and (8) in Eq.~\eqref{eq:CFdw_1}—commute trivially with $F^{CF}_{\xi}$. Therefore, the $CF$ ribbon operator $F^{CF}_{\xi}$ [Eq.~\eqref{eq:CF_1}] commutes with the domain wall stabilizers Eq.~\eqref{eq:CFdw_1}. By using the same method, one can check that $F_{\xi}^{FC}$ commutes with the domain wall stabilizers as well. We conclude that the Hamiltonian Eq.~\eqref{eq:CFdw_1} corresponds to the domain wall Hamiltonian representing the $C \leftrightarrow F$ domain wall.

\section{Domain walls in the 3d toric code: calculation details } \label{appx:TQD}

In this section, we provide calculation details for the domain walls  discussed in Section~\ref{sec:3ddomainwall}. In all the cases, we will start from a trivial SPT, apply the SPT entanglers, and then gauge the entire system. We discuss the type-I, II, and III domain walls in Section~\ref{appx:Z2TQD}, ~\ref{appx:Z22TQD}, and~\ref{appx:Z23TQD}.

Let us remind the readers of our convention. We shall employ a lattice structure in Fig.~\ref{fig:lattice_structure_3d}. The vertices of this lattice can be partitioned into three sets $R$, $G$, and $B$. Vertices in these sets and the orientation of the edges are described as follows:
\begin{align}    
\adjincludegraphics[width=0.3\linewidth,valign=c]{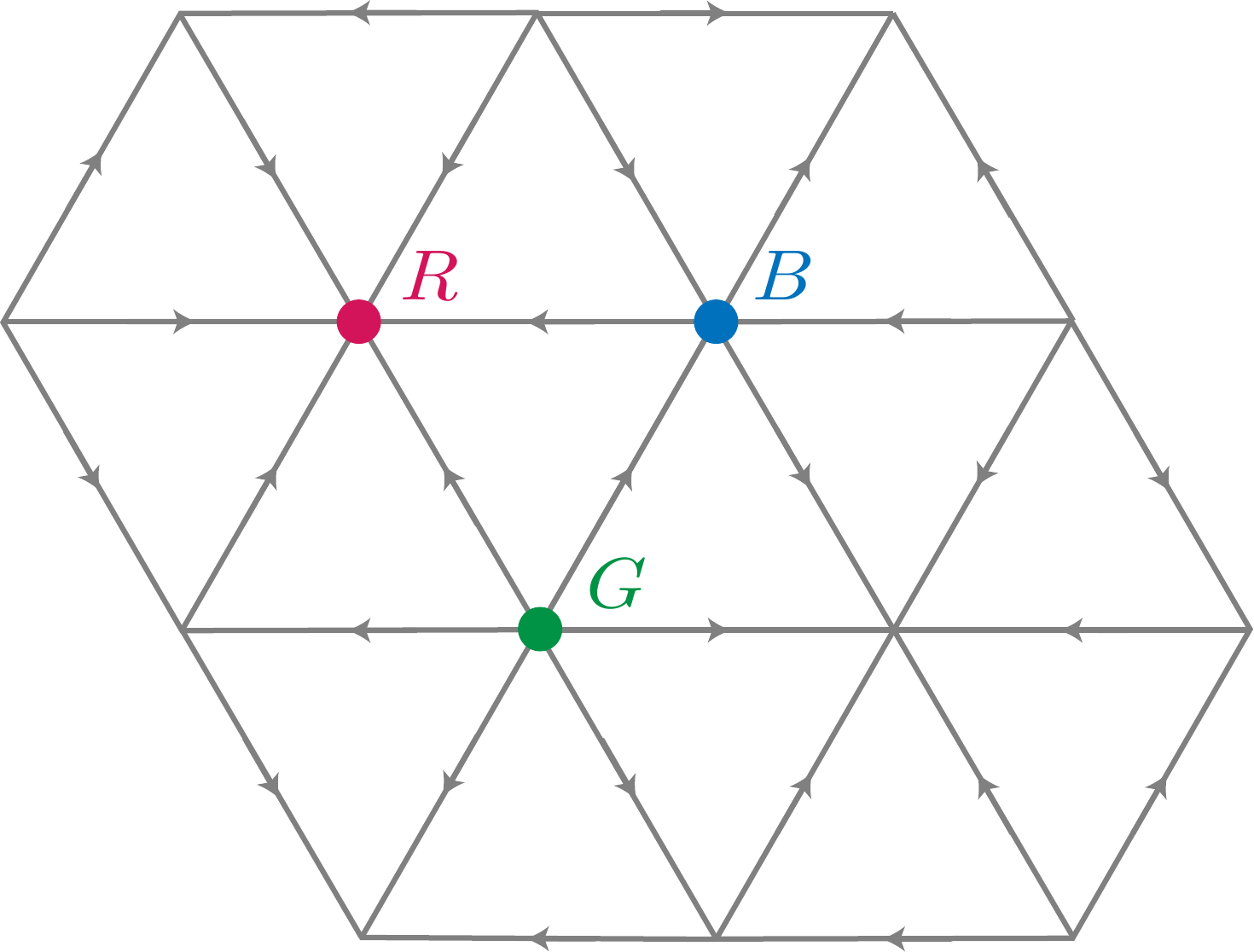}.\label{eq:branching_3}
\end{align}
With respect to this lattice, the SPT entanglers are constructed using the method of Ref.~\cite{yoshida2015topological,yoshida2017gapped}, which amounts to assigning the following phase factor to each triangle:
\begin{align}
    \adjincludegraphics[width=2.5cm,valign=c]{2simplex_1.pdf} = \omega (g_3, g_3^{-1} g_2, g_2^{-1} g_1), \quad \adjincludegraphics[width=2.5cm,valign=c]{2simplex_2.pdf} = \omega^{-1} (g_3, g_3^{-1} g_2, g_2^{-1} g_1), \label{eq:SPTentangler}
\end{align}
where $\omega$ is an appropriate cocycle.

\subsection{Type-I domain wall} \label{appx:Z2TQD}

We begin by constructing a nontrivial $\mathbb{Z}_2$ SPT on a codimension-1 submanifold. The SPT on this submanifold is called the type-I SPT, which is equivalent to the Levin-Gu SPT~\cite{levin2012braiding}. The corresponding type-I 3-cocycle is given by~\cite{yoshida2017gapped,propitius1995topological}
\begin{align}
    \omega_I (g_1, g_2, g_3) = \exp\left(\frac{p \pi i}{2} g_1 (g_2 + g_3 - \left[g_2 + g_3\right])\right),
\end{align}
where $g_1, g_2, g_3 \in \{0,1\}$ correspond to the states at three vertices of a triangle, and $\left[g_2 + g_3\right] = g_2 + g_3 \mod 2$. Equivalently, the nontrivial 3-cocycle can be written in the following form
\begin{align}
    \omega_{I}(g_1, g_2, g_3) = (-1)^{g_1 g_2 g_3}, \label{eq:type_I_entangler}
\end{align}
which will be more convenient for us.

Letting $\omega=\omega_I$, we obtain the SPT entangler. For the purpose of obtaining the stabilizers of the SPT Hamiltonian, it suffices to obtain the entangler acting nontrivially on a single site, such as the following: 
\begin{align}    
\adjincludegraphics[width=0.25\linewidth,valign=c]{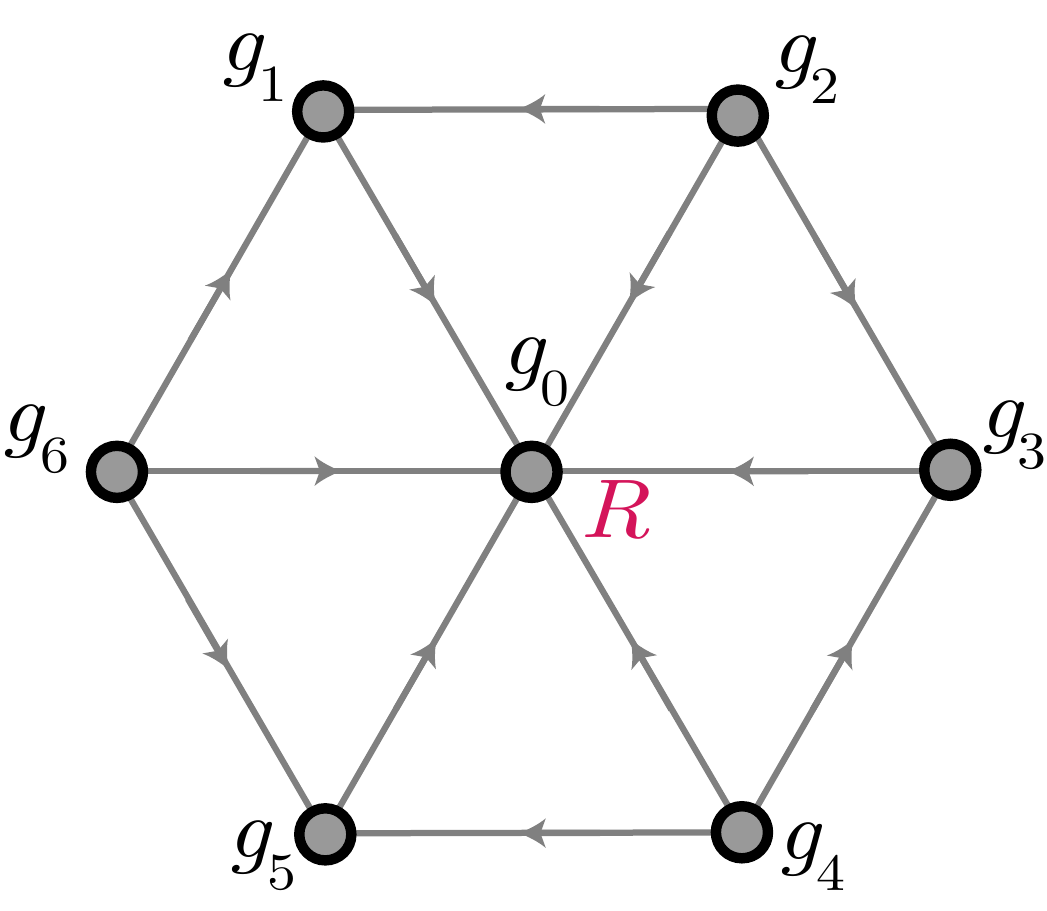}.\label{eq:branching_4}
\end{align}
With respect to these sites, we obtain the following SPT entangler (specifically, consisting only of gate acting nontrivially on $g_0$)
\begin{equation}
    \begin{aligned}
        &\quad \frac{\omega_I (g_0, g_0^{-1} g_1, g_1^{-1} g_2)  \omega_I (g_0, g_0^{-1} g_3, g_3^{-1} g_4)  \omega_I (g_0,g_0^{-1} g_5, g_5^{-1} g_6)}{\omega_I (g_0, g_0^{-1} g_3, g_3^{-1} g_2)  \omega_I (g_0, g_0^{-1} g_5, g_5^{-1} g_4)  \omega_I (g_0, g_0^{-1} g_1, g_1^{-1} g_6)} \\
    &= \exp\left(\pi i g_0 (g_1 g_2 - g_2 g_3 + g_3 g_4 - g_4 g_5 + g_5 g_6 - g_6 g_1)\right) \\
    &= \mathrm{CCZ}_{0,1,2} \mathrm{CCZ}_{0,2,3} \mathrm{CCZ}_{0,3,4} \mathrm{CCZ}_{0,4,5} \mathrm{CCZ}_{0,5,6} \mathrm{CCZ}_{0,6,1}, \label{eq:entangler_1}
    \end{aligned}
\end{equation}
in which $\mathrm{CCZ}_{i, j, k}$ is the $\mathrm{CCZ}$ gate between qubits $i$, $j$, and $k$. The same result applies to every vertex, independent of their type ($R, G,$ and $B$). Thus the SPT entangler is a product of $\mathrm{CCZ}$ on every plaquette. Therefore, the fixed-point SPT wavefunction is given by the following,
\begin{align}
    |\psi\rangle = \sum_{\{g_v\}} \prod_{\triangle_{123}} \omega^{s(\triangle_{123})}_{I}(g_3, g_3^{-1} g_2, g_2^{-1} g_1) \bigotimes_v | g_v\rangle,
\end{align}
in which $g_v$ is the state at site $v$, $\triangle_{123}$ is a 2-simplex (triangular face), and $s(\triangle_{123}) = \pm 1$ denotes the orientation of the triangle.

Conjugating the local stabilizers of the trivial SPT by the SPT entangler, we obtain the Hamiltonian of the SPT:
\begin{align}    
H = - \sum_{v} \adjincludegraphics[width=3cm,valign=c]{entangler_1.pdf}, \label{eq:type-ISPT}
\end{align}

After applying the gauging map [Section~\ref{sec:gauging_map}], we obtain stabilizers of 3d toric code in the bulk. However, on the domain wall, we obtain the Hamiltonian of the type-I $\mathbb{Z}_2$ twisted quantum double, which is similar to the Hamiltonian of the quantum double model but includes an extra phase factor $W^g_v$ associated with each vertex~\cite{hu2013twisted,li2023symmetry}. The Hamiltonian on the domain wall is given by
\begin{align}
    H = -\sum_{v}\sum_{g \in G} A_v^g W^g_v - \sum_p B_p. \label{eq:Z2TQD}
\end{align}
Here $A_v$ and $B_p$ are the vertex operator and the plaquette operator. (We use the convention in which raising by the power of $0$ yields an identity operator.)

We now describe the remaining operator $W_v$. To that end, recall that the gauging maps the vertex degrees of freedom to the edge degrees of freedom [Section~\ref{sec:gauging_map}]. Specifically, it gives the map:
\begin{align}
    \Gamma: |\{g_i\}\rangle_v \to |\{g_i g_j^{-1}\}\rangle_e. \label{eq:gaugingmap}
\end{align}
When $g_i \in \mathbb{Z}_n$, we have $g_i g_j^{-1} = g_i - g_j$. Thus we get $|g_{ij}\rangle = |g_i - g_j\rangle$ on the edges after gauging. Applying this relation to every $\mathrm{CZ}$ gate in Eq.~\eqref{eq:type-ISPT}, we obtain
\begin{align}   \adjincludegraphics[width=0.3\linewidth,valign=c]{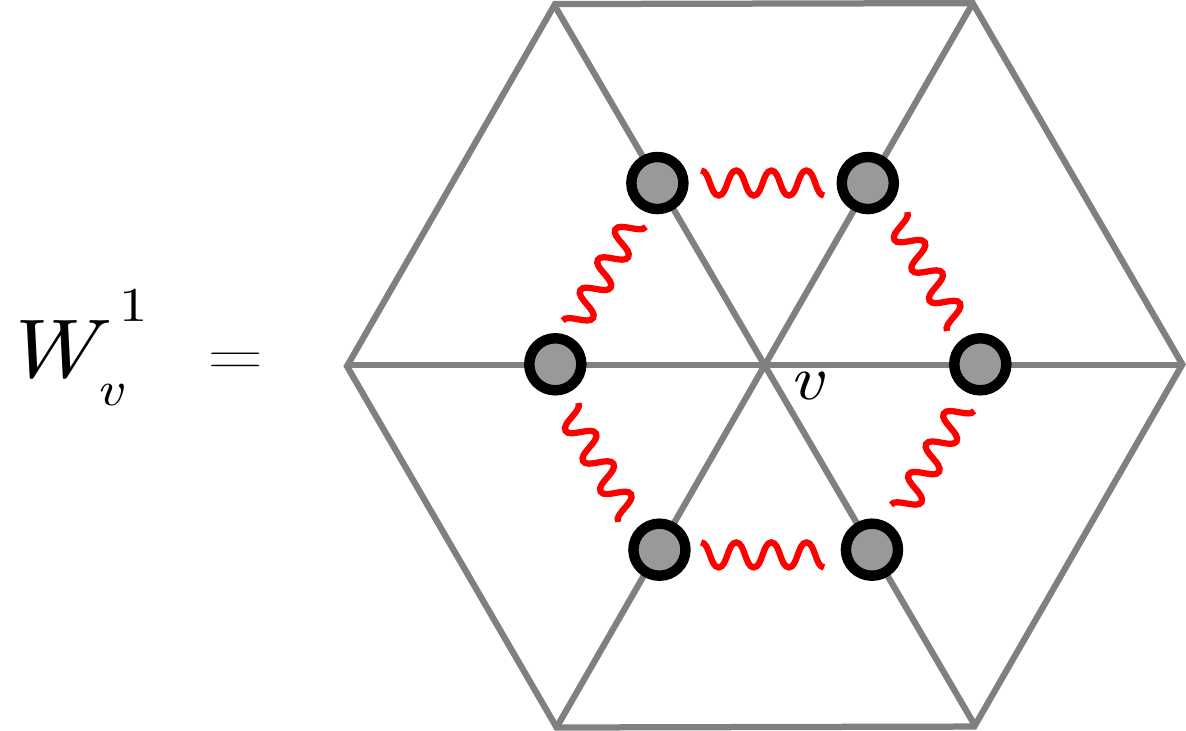},\label{eq:phaseTQD}
\end{align}
where the red wavy lines represent $\mathrm{CZ}$ gates between the connected two qubits.

While Eq.~\eqref{eq:Z2TQD} is not known in the literature in its exact form, we note that it can be related to a known exactly solvable model of double semions. More specifically, if we remove the edges perpendicular to the triangular lattice, its ground state becomes exactly the ground state of the model in Ref.~\cite{ellison2022pauli}. To see this, note the following relation between a $\mathrm{CZ}$ gate and a product of $S$ and $S^{\dagger}$ gates on a flux-free triangle:
\begin{align}    \adjincludegraphics[width=0.3\linewidth,valign=c]{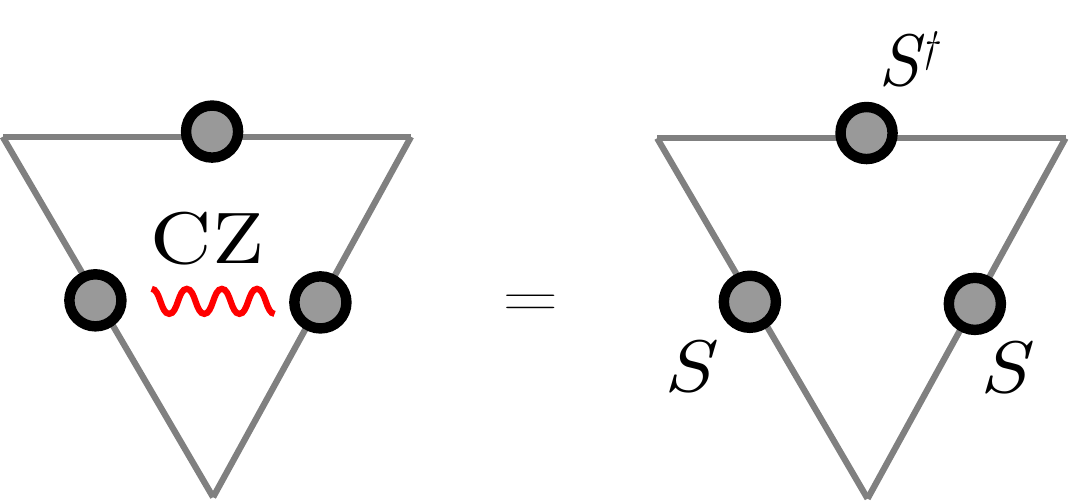}.\label{eq:CZS}
\end{align}
Using these relations, we can change the star operator as follows:
\begin{align}
\adjincludegraphics[width=0.65\linewidth,valign=c]{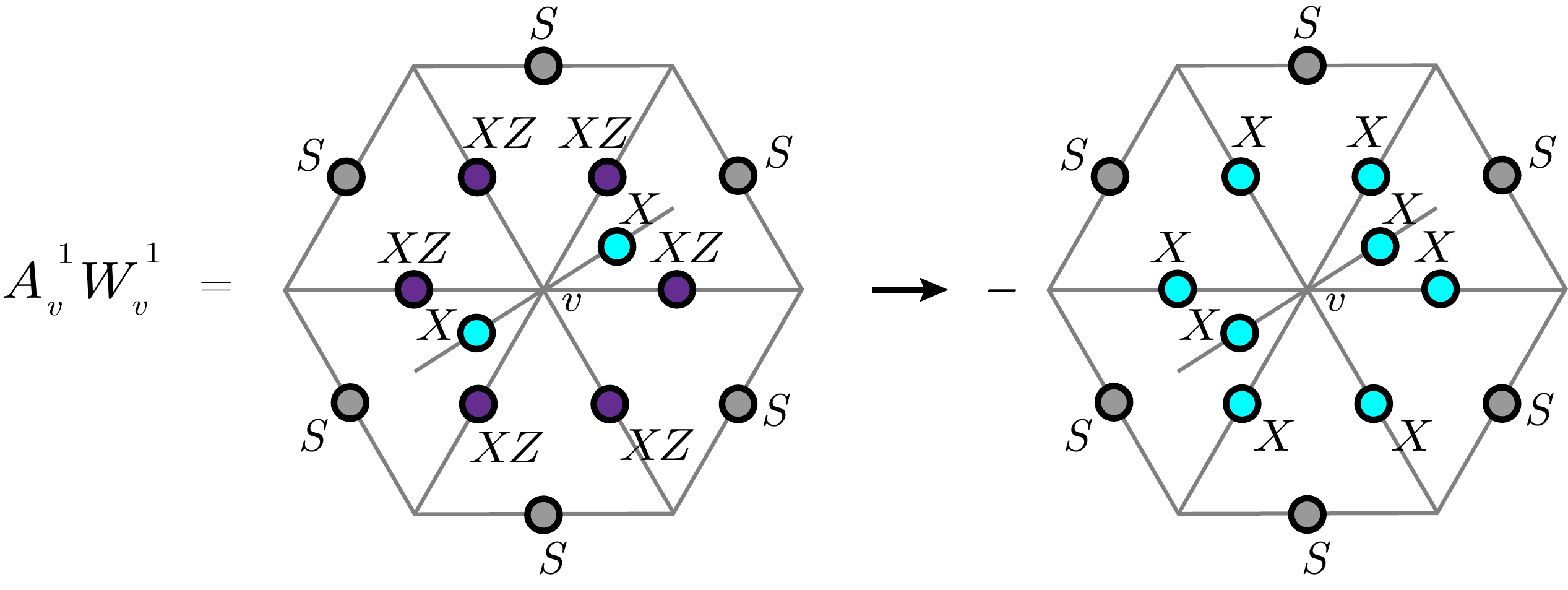}.\label{eq:phaseTQD_2}
\end{align}
Here we conjugated each site on the triangular lattice by $S$ and $S^{\dagger}$ to remove Pauli $Z$ operators. At each edge connected to $v$, a factor of $i$ is introduced. The product of these factors gives rise to a minus sign. If we only consider the qubits on the triangular lattice, Eq.~\eqref{eq:phaseTQD_2} is exactly the generalized star term defined in Ref.~\cite{levin2012braiding,ellison2022pauli} as the stabilizer of the double semion model.

Similar to quantum doubles, ribbon operators can be defined for twisted quantum doubles as well. A detailed discussion can be found in Ref.~\cite{mesaros2013classification}. Here, we present the relevant results.

An open ribbon $\xi$ can be defined on the triangular lattice with endpoints $(i_A, t_A)$ and $(i_B, t_B)$, as displayed below
\begin{align}
\adjincludegraphics[width=0.35\linewidth,valign=c]{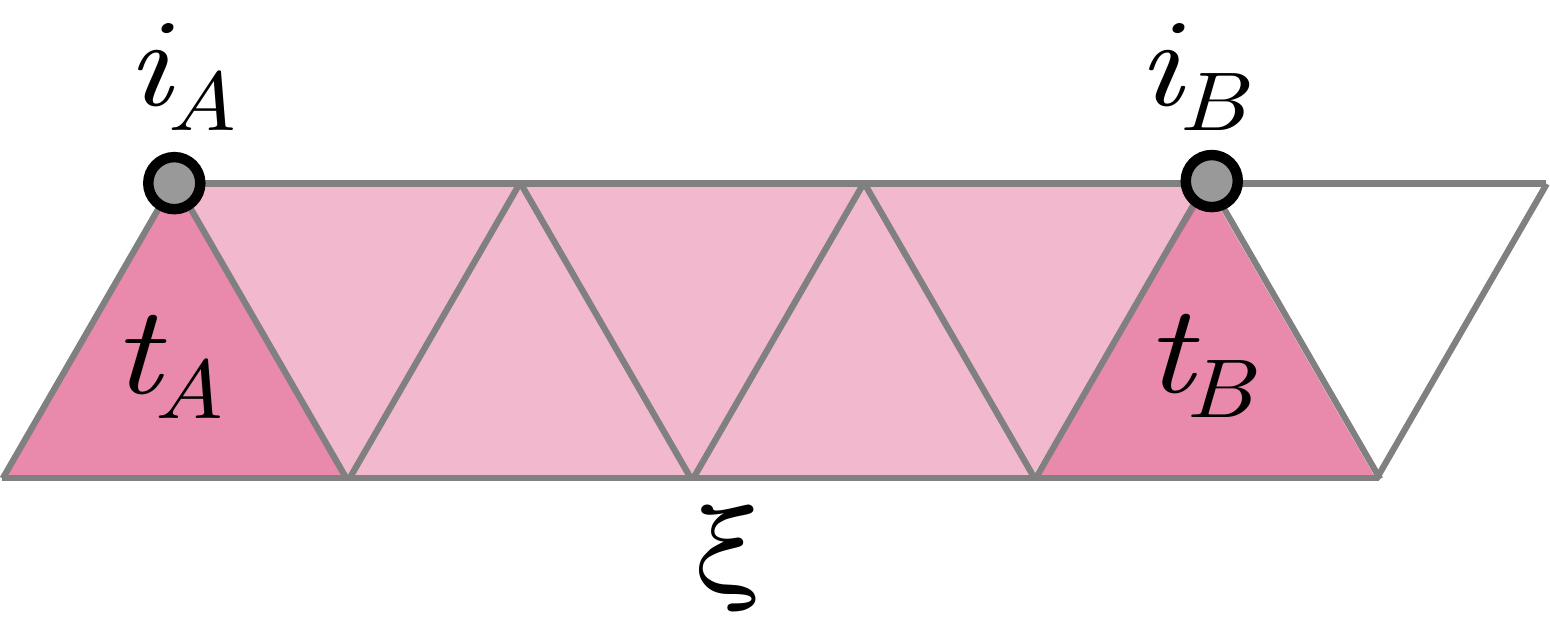}.
\end{align}
A ribbon operator $F^{(h, g)}_{\xi}$ acting on the ribbon $\xi$ can be defined as follows:
\begin{align}
    \langle \{h_{ij}'\} | F^{(h. g)}_{\xi} | \{h_{ij}\} \rangle = f_A \cdot f_B \cdot f_{AB} \cdot w^{\xi}_h (g). \label{eq:TQD_ribbon_1}
\end{align}
in which $|\{h_{ij}\}\rangle$ is the set of states inside the ribbon and $h_{ij}' = h h_{ij}$ when the group is Abelian. The right hand side of Eq.~\eqref{eq:TQD_ribbon_1} is a phase factor. The definitions of each components can be found in Ref.~\cite{mesaros2013classification}. Here, instead of writing down the explicit form of the ribbon operator, we analyze the structure of the ribbon operator, and discuss the phenomenology of anyons.

The anyons of the TQD can be labeled by the magnetic fluxes, $A \in \mathbb{Z}_2$, and the associated projective representation corresponding to a 2-cocycle $c_A$, which is given by the slant product of the 3-cocycle.
\begin{equation}
\begin{aligned}
    c_A(B, C) &:= i_A \omega(B,C) = \frac{\omega_{I}(A, B, C) \omega_{I}(B, C, A)}{\omega_{I}(B, A, C)}.
\end{aligned}
\end{equation}
When $A = 0$, $c_A$ vanishes. Therefore, anyons without magnetic fluxes are classified by the irreducible representations of $\mathbb{Z}_2$, which corresponds to the the vacuum $1$ and the boson $s\bar{s}$. The ribbon operator creating a pair of $s\bar{s}$ can be represented as follows,
\begin{align}
\adjincludegraphics[width=0.3\linewidth,valign=c]{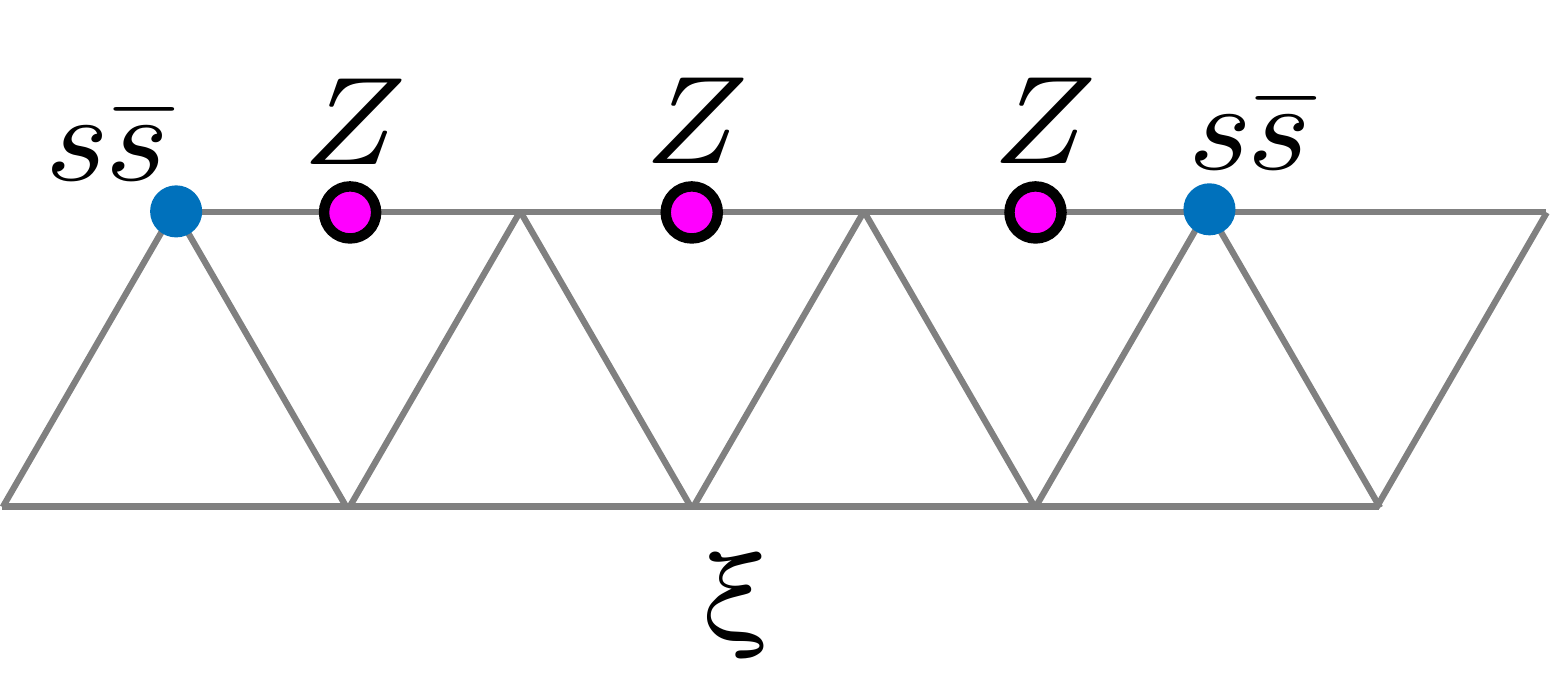}.
\end{align}

By applying the operator above, the star terms at the endpoints are violated, creating a pair of $s\bar{s}$. One can then apply Pauli $Z$ operators on the neighboring edges in the bulk to move the $s\bar{s}$ pairs into the bulk, where they become two electric charges.

When $A = 1$, $C_A = (-1)^{BC}$. The corresponding projective representations give rise to the semion $s$ and the anti-semion $\bar{s}$. The ribbon operators can be written in the following form:
\begin{equation}
    \begin{aligned}
        &\adjincludegraphics[width=0.35\linewidth,valign=c]{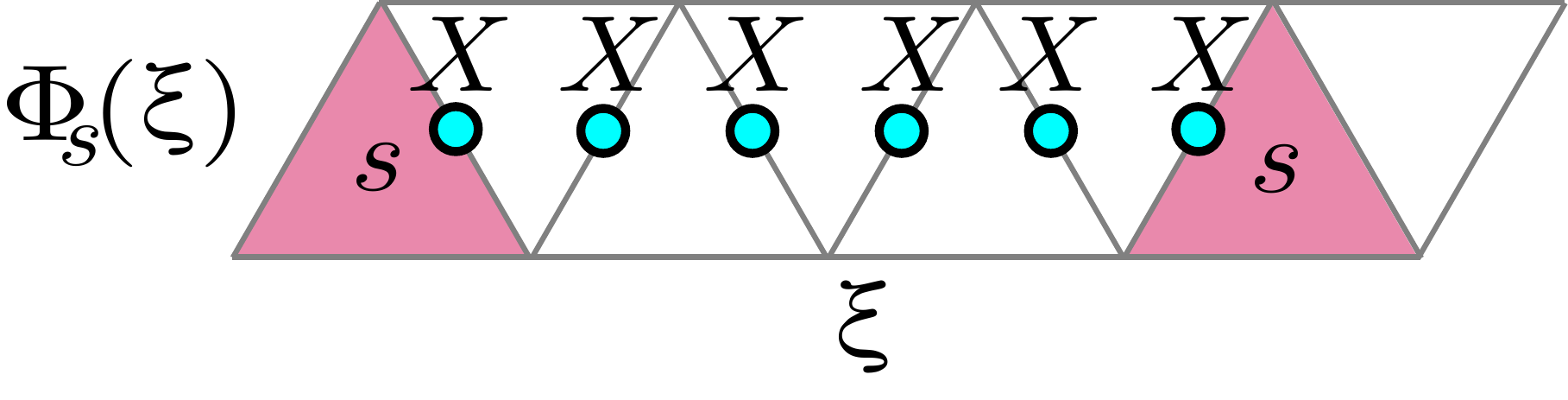}, \\
        &\adjincludegraphics[width=0.35\linewidth,valign=c]{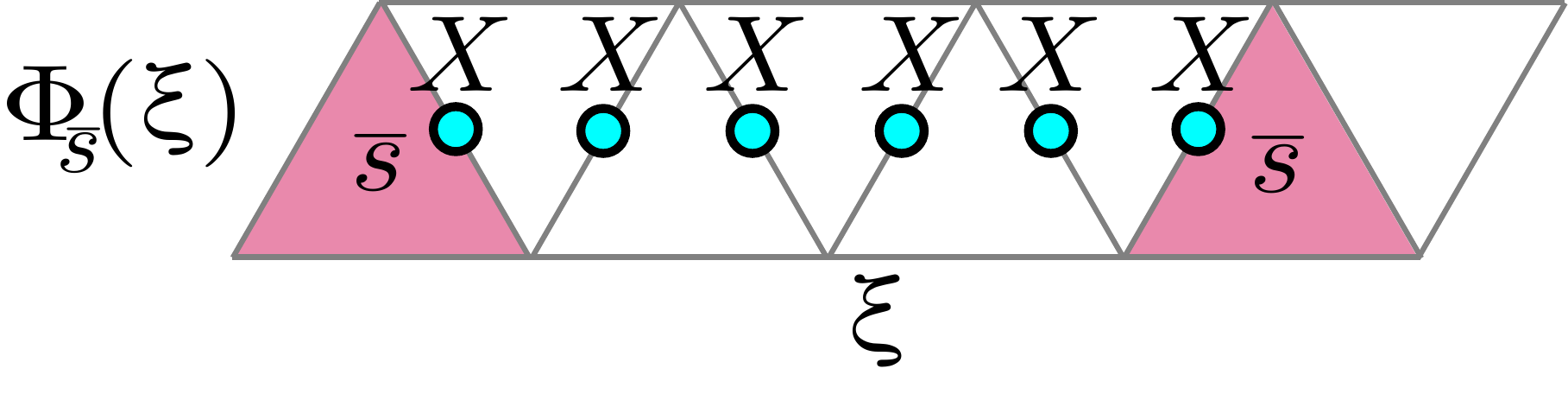},
    \end{aligned}
\end{equation}
in which $\Phi_{s}(\xi)$ and $\Phi_{\bar{s}}(\xi)$ are phase gates satisfying Eq.~\eqref{eq:TQD_ribbon_1}. On the domain wall, a pair of $s$ or $\bar{s}$ anyons can be created by the above ribbon operator. At the same time, this operator also violates the connected plaquette stabilizers in the bulk, generating semi-loops of magnetic fluxes on both sides. Therefore, the excitation it creates is a magnetic loop anchored on the domain wall, with intersections attached to a pair of semions or anti-semions.

\subsection{Type-II domain wall} \label{appx:Z22TQD}

In this subsection, we discuss the $2+1$D $\mathbb{Z}_2 \times \mathbb{Z}_2$ SPT-sewn domain wall. Restricted to the domain wall, the model is effectively a $\mathbb{Z}_2 \times \mathbb{Z}_2$ twisted quantum double (TQD), which is also equivalent to the $\mathbb{Z}_4$ quantum double (QD)~\cite{hu2020electric}. We consider a lattice construction similar to that shown in Eq.~\eqref{eq:branching_3}. In the bulks on both sides of the domain wall, there are 3d toric codes, and the domain wall itself is the $\mathbb{Z}_2 \times \mathbb{Z}_2$ SPT-sewn domain wall. Therefore, there are two qubits per site on the domain wall, each belonging to a different bulk on either side.

Let us first discuss the ungauged model. The nontrivial SPT that couples two $\mathbb{Z}_2$ symmetries together is called the type-II SPT~\cite{propitius1995topological,yoshida2015topological,yoshida2017gapped}. The corresponding 3-cocycle can be written as,
\begin{align}
    \omega_{II} (g_1, g_2, g_3) = \exp \left(\frac{p \pi i}{2} g_1^{(1)} (g_2^{(2)} + g_3^{(2)} - \left[g_2^{(2)} + g_3^{(2)}\right])\right),
\end{align} in which $g_1$, $g_2$, and $g_3$ are $\mathbb{Z}_2 \times \mathbb{Z}_2$ states at each vertex of a triangle on the domain wall, and we have $g_1 = (g_1^{(1)}, g_1^{(2)})$, $g_2 = (g_2^{(1)}, g_2^{(2)})$, $g_3 = (g_3^{(1)}, g_3^{(2)})$, $p, g_1^{(i)}, g_2^{(j)}, g_3^{(k)} \in \{0, 1\}$, and $\left[g_2^{(2)} + g_3^{(2)}\right]$ represents the sum of $g_2^{(2)}$ and $ g_3^{(2)}$ modulo 2. Equivalently, the nontrivial 3-cocycle can be written as the following form,
\begin{align}
    \omega_{II}(g_1, g_2, g_3) = (-1)^{g_1^{(1)} g_2^{(2)} g_3^{(2)}}.
\end{align}

On the domain wall, the lattice is the same as Eq.~\eqref{eq:branching_3}, but with two qubits per site. The branching structure near site $G$ is given as follows,
\begin{align}    
\adjincludegraphics[width=0.25\linewidth,valign=c]{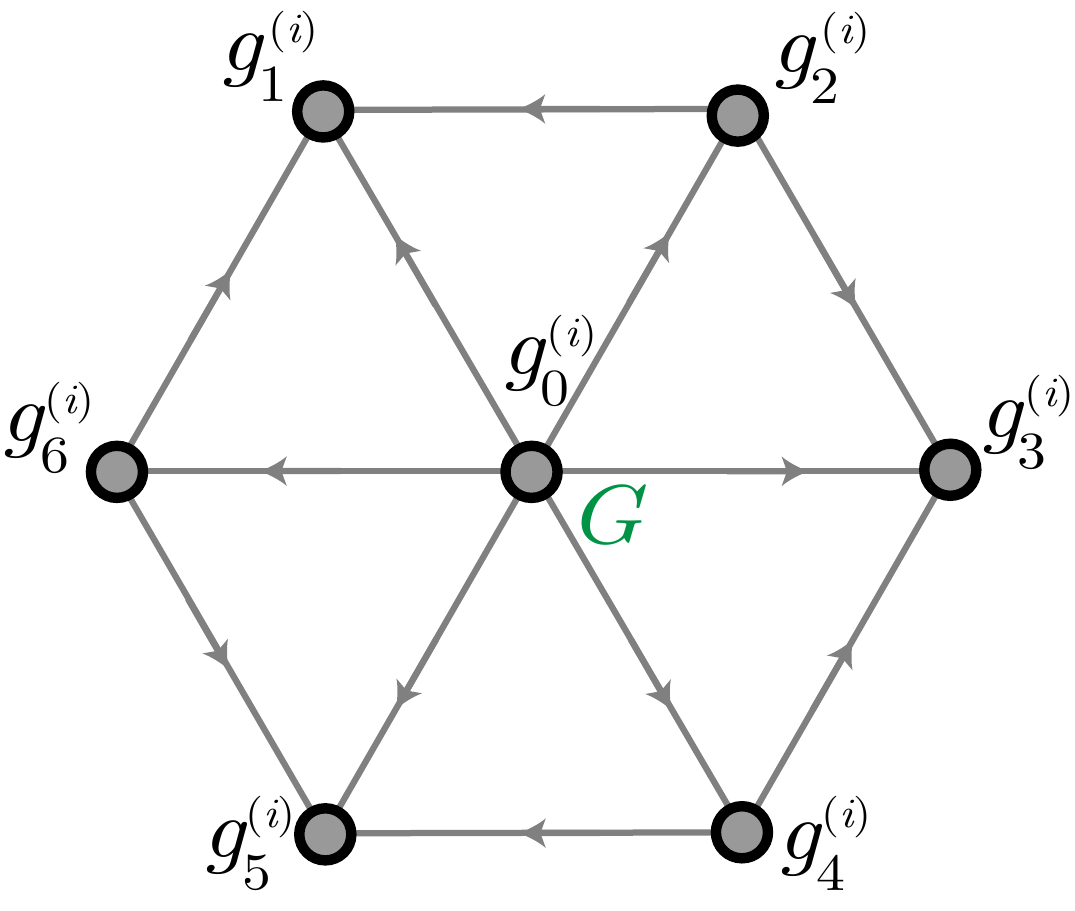},\label{eq:branching_6}
\end{align}
in which $i \in \{1,2\}$ labels different qubits at each site. The SPT entangler acting nontrivially on $g_0$ is
\begin{equation}
    \begin{aligned}
        &\quad \frac{\omega_{II}(g_1, g_{1}^{-1} g_2, g_2^{-1} g_{0}) \omega_{II}(g_3, g_{3}^{-1} g_4, g_4^{-1} g_{0}) \omega_{II}(g_5, g_{5}^{-1} g_6, g_6^{-1} g_{0})}{\omega_{II}(g_3, g_{3}^{-1} g_2, g_2^{-1} g_{0}) \omega_{II}(g_5, g_{5}^{-1} g_4, g_4^{-1} g_{0}) \omega_{II}(g_1, g_{1}^{-1} g_6, g_6^{-1} g_{0})} \\
        &= \exp\left(\pi i g_0^{(2)}(g_2^{(2)} g_1^{(1)} - g_2^{(2)} g_3^{(1)} + g_4^{(2)} g_2^{(1)} - g_4^{(2)} g_5^{(1)} + g_6^{(2)} g_5^{(1)} - g_6^{(2)} g_1^{(1)})\right) \\
        &\times \exp \left(\pi i\left(g_1^{(1)} g_1^{(2)} (g_2^{(2)} - g_6^{(2)}) + g_3^{(1)} g_3^{(2)} (g_4^{(2)} - g_2^{(2)}) + g_5^{(1)} g_5^{(2)} (g_6^{(2)} - g_4^{(2)})\right)\right) \\
        &\times \exp \left(\pi i \left(g_1^{(1)} g_6^{(2)} - g_1^{(1)} g_2^{(2)} + g_3^{(1)} g_2^{(2)} - g_3^{(1)} g_4^{(2)} + g_5^{(1)} g_4^{(2)} - g_5^{(1)} g_6^{(2)}\right)\right) \\
        &= \mathrm{CCZ}_{0^{(2)}, 1^{(1)}, 2^{(2)}} \mathrm{CCZ}_{0^{(2)}, 2^{(2)}, 3^{(1)}} \mathrm{CCZ}_{0^{(2)}, 4^{(2)}, 3^{(1)}} \mathrm{CCZ}_{0^{(2)}, 4^{(2)}, 5^{(1)}} \mathrm{CCZ}_{0^{(2)}, 6^{(2)}, 5^{(1)}} \mathrm{CCZ}_{0^{(2)}, 6^{(2)}, 1^{(1)}} \\
        &\times \mathrm{CCZ}_{1^{(1)}, 1^{(2)}, 2^{(2)}} \mathrm{CCZ}_{1^{(1)}, 1^{(2)}, 6^{(2)}} \mathrm{CCZ}_{3^{(1)}, 2^{(2)}, 3^{(2)}} \mathrm{CCZ}_{3^{(1)}, 3^{(2)}, 4^{(2)}} \mathrm{CCZ}_{5^{(1)}, 4^{(2)}, 5^{(2)}} \mathrm{CCZ}_{5^{(1)}, 5^{(2)}, 6^{(2)}} \\
        &\times \mathrm{CZ}_{1^{(1)}, 2^{(2)}} \mathrm{CZ}_{1^{(1)}, 6^{(2)}}\mathrm{CZ}_{3^{(1)}, 2^{(2)}}\mathrm{CZ}_{3^{(1)}, 4^{(2)}}\mathrm{CZ}_{5^{(1)}, 4^{(2)}}\mathrm{CZ}_{5^{(1)}, 6^{(2)}}.
    \end{aligned}
\end{equation}
Here $\mathrm{CCZ}_{0^{(2)}, 1^{(1)}, 2^{(2)}}$ represents a $\mathrm{CCZ}$ gate acting on the second qubit at site 0, the first qubit at site 1, and the second qubit at site 2. Note that, when conjugating the single-qubit operator on a site of type $G$, only the first line in the last equation contributes.

Similarly, one can get the SPT entangler near site of type $R$, which is given by 
\begin{equation}
    \begin{aligned}
        \quad &\mathrm{CCZ}_{0^{(1)}, 1^{(2)}, 2^{(2)}} \mathrm{CCZ}_{0^{(1)}, 2^{(2)}, 3^{(2)}} \mathrm{CCZ}_{0^{(1)}, 4^{(2)}, 2^{(2)}} \mathrm{CCZ}_{0^{(1)}, 4^{(2)}, 5^{(2)}} \mathrm{CCZ}_{0^{(1)}, 6^{(2)}, 5^{(2)}} \mathrm{CCZ}_{0^{(1)}, 6^{(2)}, 1^{(2)}}.
    \end{aligned}
\end{equation}
And we also have the SPT entangler near site of type $B$, which is
\begin{equation}
    \begin{aligned}
         \quad &\mathrm{CCZ}_{0^{(2)}, 2^{(1)}, 1^{(2)}} \mathrm{CCZ}_{0^{(2)}, 6^{(1)}, 1^{(2)}} \mathrm{CCZ}_{0^{(2)}, 2^{(1)}, 3^{(2)}} \mathrm{CCZ}_{0^{(2)}, 4^{(1)}, 3^{(2)}} \mathrm{CCZ}_{0^{(2)}, 6^{(1)}, 5^{(2)}} \mathrm{CCZ}_{0^{(2)}, 4^{(1)}, 5^{(2)}}\\
        \times &\mathrm{CCZ}_{2^{(1)}, 2^{(2)}, 1^{(2)}} \mathrm{CCZ}_{2^{(1)}, 2^{(2)}, 3^{(2)}} \mathrm{CCZ}_{4^{(1)}, 4^{(2)}, 3^{(2)}} \mathrm{CCZ}_{4^{(1)}, 4^{(2)}, 5^{(2)}} \mathrm{CCZ}_{6^{(1)}, 6^{(2)}, 5^{(2)}} \mathrm{CCZ}_{6^{(1)}, 6^{(2)}, 1^{(2)}}.
    \end{aligned}
\end{equation}
Note again only the first line contributes when conjugating an operator at the center site by the SPT entangler. 

Combining all the contributions together, the Hamiltonian of the SPT is given by the following:
\begin{align}
    H = - \sum_{v} \adjincludegraphics[width=3.2cm,valign=c]{typeII_1.pdf} + \adjincludegraphics[width=3.2cm,valign=c]{typeII_2.pdf} + \adjincludegraphics[width=3.2cm,valign=c]{typeII_3.pdf}.
    \label{eq:type-II_pregauging}
\end{align}
The number labels associated with each site represent which qubit the $\mathrm{CZ}$ gates apply on.

After applying the gauging map to Eq.~\eqref{eq:type-II_pregauging}, we obtain 3d toric codes in the bulk on both sides of the domain wall. On the domain wall, we get a model which is effectively a $\mathbb{Z}_2 \times \mathbb{Z}_2$ twisted quantum double, but with extra edges connected to the bulk. The Hamiltonian can be written in the following form:
\begin{align}
    H = -\sum_{v}\sum_{g \in G} A_v^g W^g_v - \sum_p B_p,\label{eq:Z22TQD}
\end{align}
where $A_v^g$, $B_p$ and $W^g_v$ are shown in Fig.~\ref{fig:Z2xZ2TQD}. 

\begin{figure*}
    \centering
    \includegraphics[width= \linewidth]{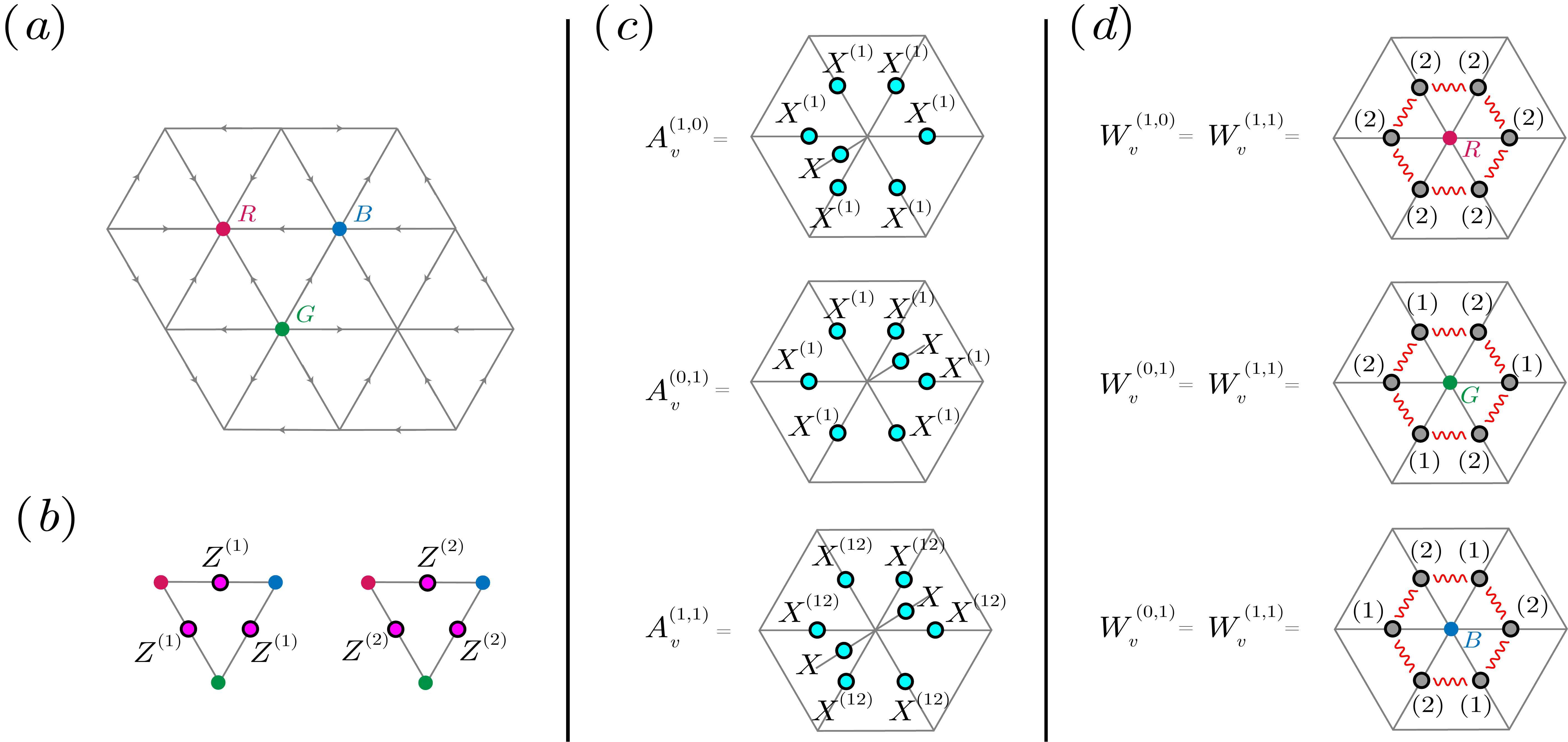}
    \caption{(a) Lattice used in the construction of type-II domain wall. We assign two qubits per site, each one being labeled by the bulk on different sides. (b) Plaquette $Z$ stabilizers $B_p$. (c) The star terms $A^g_{v}$. $X^{(1)}$ represents the Pauli $X$ operator acting on the first qubit, $X^{(2)}$ represents the Pauli $X$ operator acting on the second qubit, and  $X^{(12)}$ represents a product of Pauli $X$ operators acting on both qubits. (d) The phase operator $W^g_v$. The red wavy lines represent $\mathrm{CZ}$ gates between qubits on different sites. The numbers label the qubits on each site. For example, a wavy line connects $(1)$ and $(2)$ represents a $\mathrm{CZ}$ gate between the first qubit on the first site and the second qubit on the second site. The operators not specified in these diagrams are identities.}
    \label{fig:Z2xZ2TQD}
\end{figure*}

Before we discuss the properties of the domain wall in terms of bulk excitations, we first discuss the anyon theory of the $\mathbb{Z}_2 \times \mathbb{Z}_2$ twisted quantum double~\cite{propitius1995topological}. Anyons of $\mathbb{Z}_2 \times \mathbb{Z}_2$ TQD can be labelled as $(A,n^{(1)}n^{(2)})$, in which $A$ labels the magnetic fluxes and $n^{(1)}n^{(2)}$ labels the electric charges. For simplicity, and to draw an analogy to the $\mathbb{Z}_2$ toric code model, we will label the anyon $(A,n^{(1)}n^{(2)})$ as $m^{(a^{(1)}, a^{(2)})} e^{(n^{(1)}, n^{(2)})}$. For the full anyon content, see Table~\ref{tab:anyon_table_1}. The exchange statistics are given by
\begin{equation}
    \begin{aligned}
        \theta(m^{(0,1)} e^{(0,1)}) = \theta(m^{(1,0)} e^{(1,0)}) &= \theta(m^{(1,0)} e^{(1,1)}) = \theta(m^{(0,1)} e^{(1,1)}) = -1, \\
    \theta(m^{(1,1)}) = \theta(m^{(1,1)} e^{(1,1)}) = i&, \quad \theta(m^{(1,1)} e^{(0,1)}) = \theta(m^{(1,1)} e^{(1,0)}) = - i.
    \end{aligned}
\end{equation}
All other anyons are bosons.

The ribbon operators are given as follows. Pairs of $e^{(1,0)}$ and pairs of $e^{(0,1)}$ anyons can be created by applying Pauli $Z$ strings on the corresponding qubits.
\begin{equation}
    \begin{aligned}
        &\adjincludegraphics[width=0.3\linewidth,valign=c]{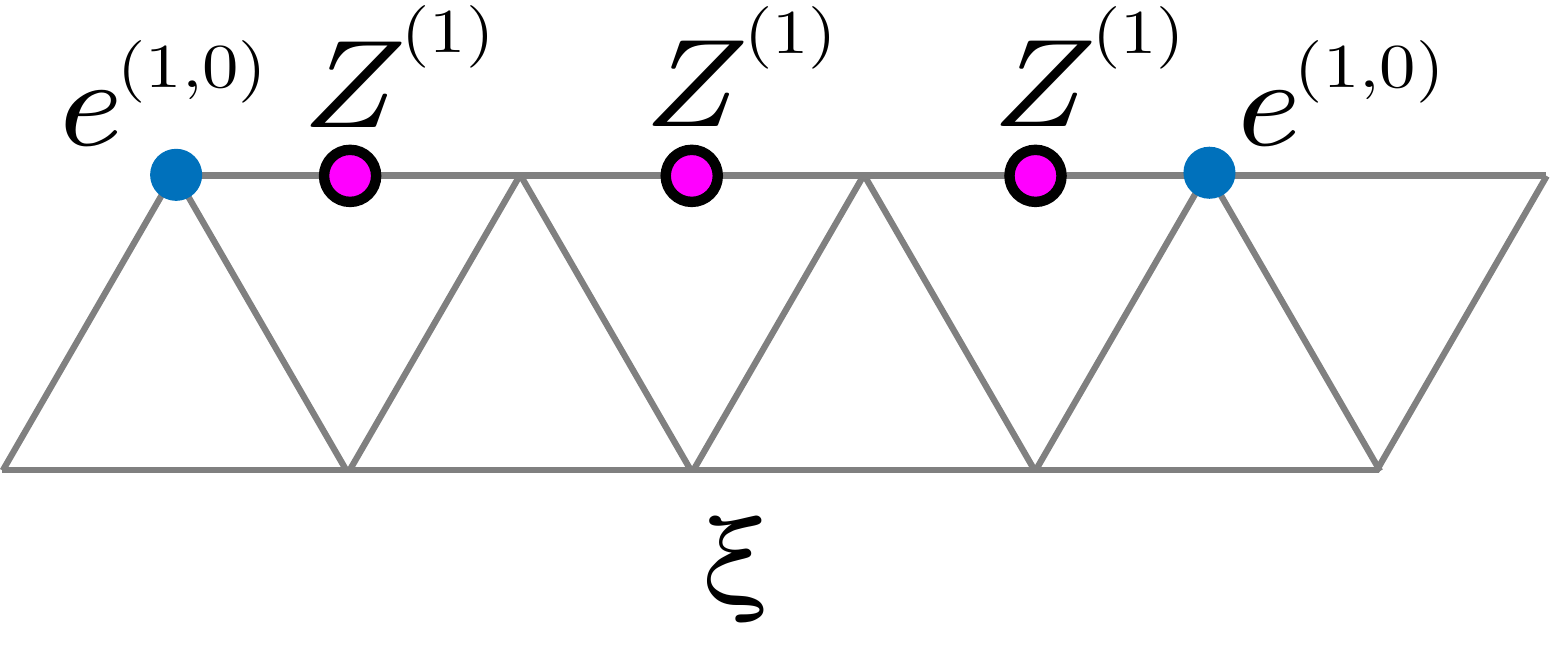}, \\
        &\adjincludegraphics[width=0.3\linewidth,valign=c]{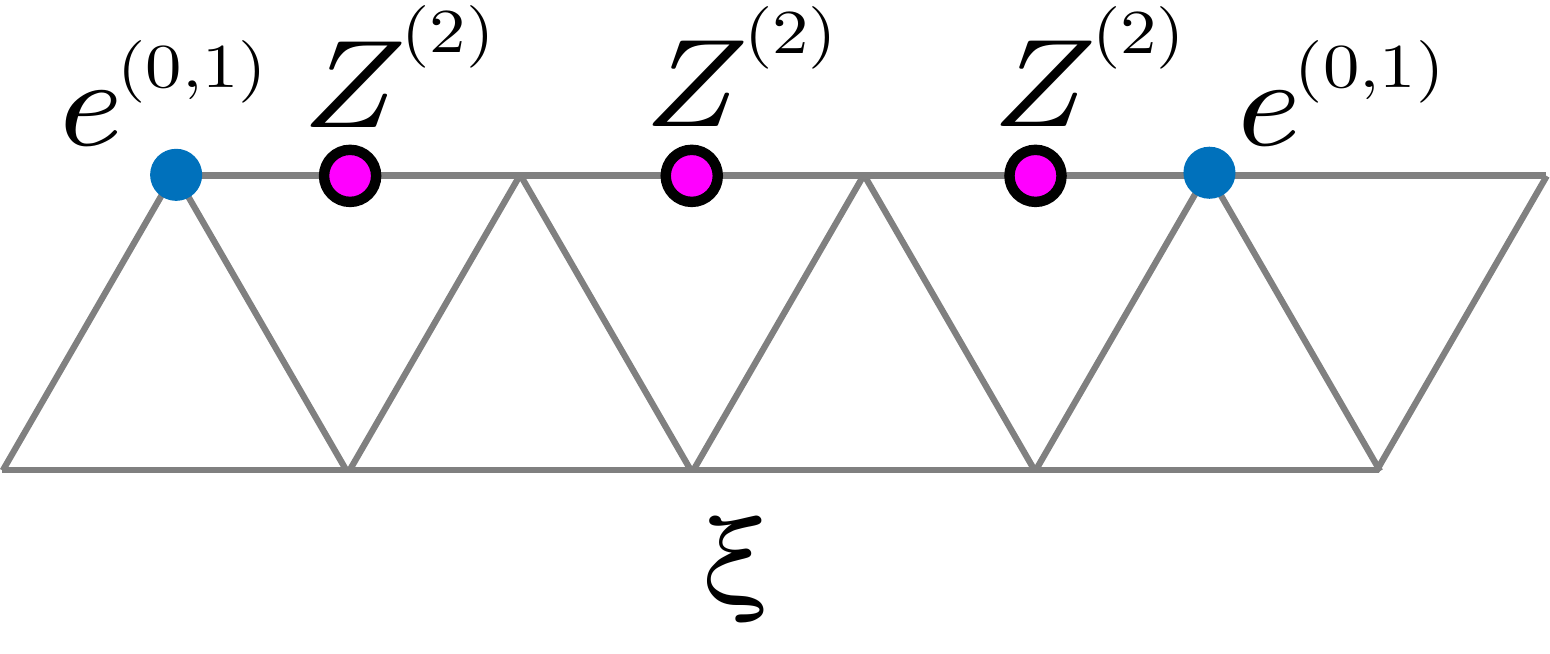},
    \end{aligned}
\end{equation}
in which the subscripts represents which side of the 3d toric code the qubit belongs to. At the endpoints, the star terms are violated, creating excitations. By applying Pauli $Z$ operators on the neighboring edges in the bulk, one can move the anyons $e^{(1,0)}$ and $e^{(0,1)}$ into the bulk, where they become $e_L$ and $e_R$, respectively.

A Pair of $m^{(1,0)}$ and $m^{(1,0)} e^{(0,1)}$ can be created by
\begin{equation}
    \begin{aligned}
        &\adjincludegraphics[width=0.4\linewidth,valign=c]{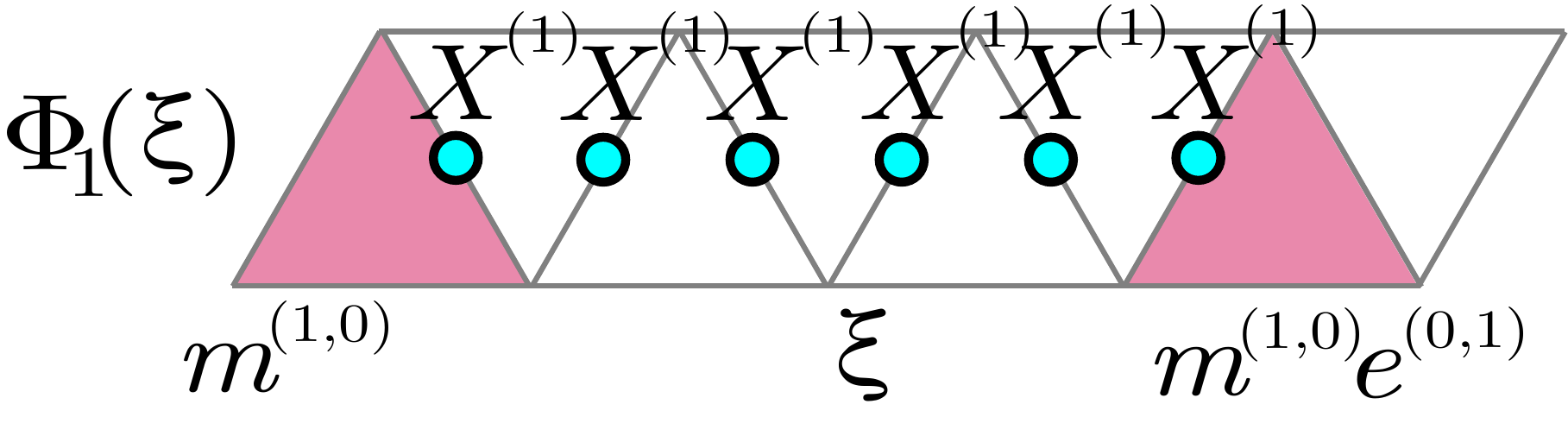}.
    \end{aligned}
\end{equation}
And similarly, a pair of $m^{(0,1)}$ and $m^{(0,1)} e^{(1,0)}$ can be created by
\begin{equation}
    \begin{aligned}
        &\adjincludegraphics[width=0.4\linewidth,valign=c]{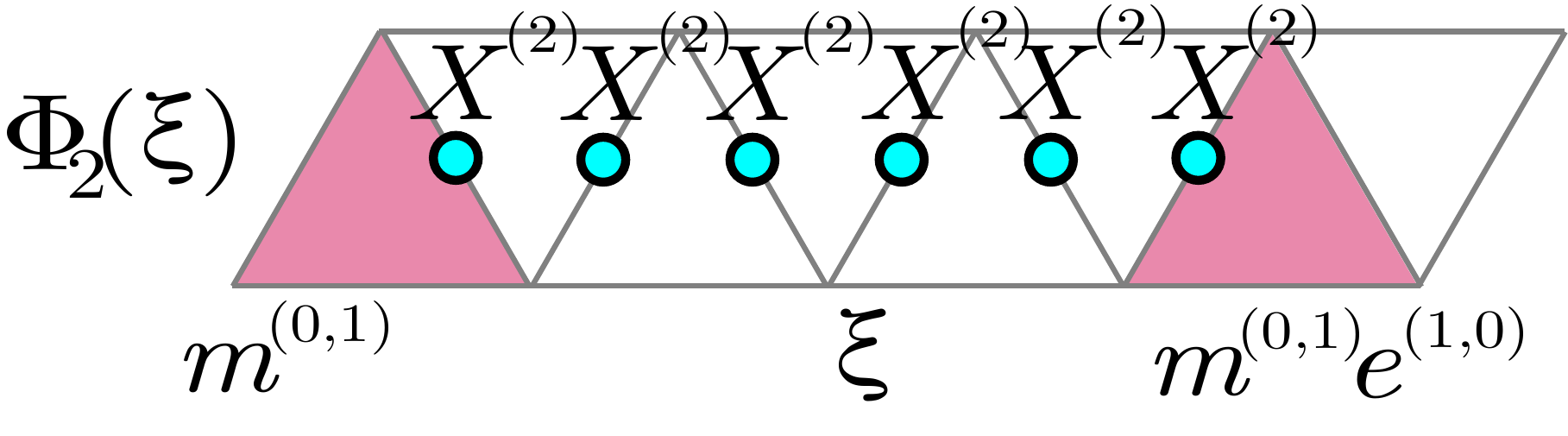},
    \end{aligned}
\end{equation}
in which $\Phi_{1}(\xi)$ and $\Phi_2(\xi)$ are phase gates satisfying Eq.~\eqref{eq:TQD_ribbon_1}. On one side of the bulk, $e^{(1,0)}$ or $e^{(0,1)}$ can move out of the domain wall and become $e_L$ or $e_R$ in the bulk. On the other side, string of $X^{(i)}$ violates the plaquette stabilizers in the bulk, generating a semi-loop of magnetic flux with endpoints attached with a pair of $m^{(1,0)}$ or $m^{(0,1)}$.

With these ribbon operators, we can study the interplay between the excitations in the domain wall and the excitations of the 3d bulk. Consider a system with two $(3+1)$d toric codes and a $(2+1)$d $\mathbb{Z}_2 \times \mathbb{Z}_2$ TQD as a gapped domain wall, as depicted in Fig.~\ref{fig:type-II}(a). We label the excitations in the left as $\{1_L, e_L, m_L, m_L e_L\}$, and the excitations in the right as $\{1_R, e_R, m_R, m_R e_R\}$. 

The endpoints of $e_L$ and $e_R$ lines can terminate on this domain wall and become bound states with $e^{(1,0)}$ and $e^{(0,1)}$, respectively. The endpoints of $m_L$ and $m_R$ can terminate on this domain wall as well, and become bound states with $m^{(1,0)}$ and $m^{(0,1)}$, respectively. This condensation property can enable a nontrivial domain wall. Namely, a pair of endpoints of magnetic fluxes become a single electric charge on the other side. This process is depicted in Fig.~\ref{fig:type-II}(a). 

Also, a semi-circle of magnetic flux terminate on this domain wall and becomes a pair of $m^{(1,0)}$ or $m^{(0,1)}$ anyons and their anti-particles. For example, we can choose the anyon pair to be $m^{(1,0)}$ and $m^{(1,0)} e^{(0,1)}$. Since $e^{(0,1)}$ forms a bound state with $e_R$ from the other side, it can come out of the domain wall and become a $e_R$ particle in the bulk. Therefore, near the domain wall, there is one magnetic semi-loop attached to a pair of $m^{(1,0)}$ particles, and a string of $e_R$ particle.

\subsection{Type-III domain wall} \label{appx:Z23TQD}

In this subsection, we discuss the $\mathbb{Z}_2 \times \mathbb{Z}_2 \times \mathbb{Z}_2$ SPT-sewn domain wall between two 3+1D toric codes and one 3+1D toric code. The procedure is briefly summarized as follows: Consider that we have three copies of the lattice displayed in Eq.~\eqref{eq:branching_3}. For each copy, we assign one qubit at each vertex. We stack two copies on one side of the domain wall and one copy on the other side. At the domain wall, the lattice is similar to Eq.~\eqref{eq:branching_3}, but with three qubits per site, each belonging to one of the copies. We initialize the state of each qubit to be the $|+\rangle$ state and apply the SPT entangler on the domain wall. After gauging the entire $\mathbb{Z}_2 \times \mathbb{Z}_2 \times \mathbb{Z}_2$ symmetry, we obtain the $\mathbb{Z}_2 \times \mathbb{Z}_2 \times \mathbb{Z}_2$ SPT-sewn domain wall.

Let us first discuss the ungauged model. The corresponding type-III 3-cocycle is given by
\begin{align}
    \omega_{III}(g_1, g_2, g_3) = \exp \left(p \pi i g_1^{(1)} g_2^{(2)} g_3^{(3)}\right),
\end{align}
in which $g_1 = (g_1^{(1)}, g_1^{(2)}, g_1^{(3)})$, $g_2 = (g_2^{(1)}, g_2^{(2)}, g_2^{(3)})$, $g_3 = (g_3^{(1)}, g_3^{(2)}, g_3^{(3)})$, $p, g_1^{(i)}, g_2^{(j)}, g_3^{(k)} \in \{0, 1\}$. The nontrivial 3-cocycle can be equivalently written as,
\begin{align}
    \omega_{III} = (-1)^{g_1^{(1)} g_2^{(2)} g_3^{(3)}}.
\end{align}

Following calculations similar to Appendix~\ref{appx:Z2TQD} and~\ref{appx:Z22TQD}, the SPT entangler near a site of type $R$ is given by
\begin{equation}
    \begin{aligned}
        \mathrm{CCZ}_{0^{(1)}, 1^{(2)}, 2^{(3)}} \mathrm{CCZ}_{0^{(1)}, 3^{(2)}, 2^{(3)}} \mathrm{CCZ}_{0^{(1)}, 3^{(2)}, 4^{(3)}} \mathrm{CCZ}_{0^{(1)}, 5^{(2)}, 4^{(3)}} \mathrm{CCZ}_{0^{(1)}, 5^{(2)}, 6^{(3)}} \mathrm{CCZ}_{0^{(1)}, 1^{(2)}, 6^{(3)}}.
    \end{aligned}
\end{equation}
Similarly, one can get the SPT entangler near site $G$ and site $B$. The Hamiltonian of the SPT on the domain wall is given by
\begin{align}
    H = - \sum_{v} \adjincludegraphics[width=3.2cm,valign=c]{typeIII_1.pdf} + \adjincludegraphics[width=3.2cm,valign=c]{typeIII_2.pdf} + \adjincludegraphics[width=3.2cm,valign=c]{typeIII_3.pdf}
\end{align}

After applying the gauging map to Eq.~\eqref{eq:gaugingmap}, we get the $\mathbb{Z}_2^3$ twisted quantum double. The Hamiltonian can be written in the follow form
\begin{align}
    H = - \sum_v \sum_{g \in G} A_v^g W_v^g - \sum_p B_p, \label{eq:Z23TQD}
\end{align}
in which $g \in G = \mathbb{Z}_2^3$, $A_v$ is the star term, $B_p$ is the plaquette term, while the phase terms $W_v^g$ are displayed below,
\begin{align}    W_v^{(1,b,c)} = \adjincludegraphics[width=3.2cm,valign=c]{TQD_18.pdf},
W_v^{(a,b,1)} = \adjincludegraphics[width=3.2cm,valign=c]{TQD_19.pdf},
W_v^{(a,1,c)} = \adjincludegraphics[width=3.2cm,valign=c]{TQD_20.pdf},
\end{align}
where $a, b, c \in \{0,1\}$.

The excitations of the $\mathbb{Z}_2^3$ TQD can be labelled by the magnetic fluxes, $A \in \mathbb{Z}_2^3$, and the associated projective representation corresponding to a 2-cocycle $c_A$, which is given by the slant product of the 3-cocycle.

\begin{align}
    c_A(B, C) := i_A \omega(B,C) = \frac{\omega_{III}(A, B, C) \omega_{III}(B, C, A)}{\omega_{III}(B, A, C)}.
\end{align}
The anyons are summarized as follows \cite{propitius1995topological}:
\begin{itemize}
    \item $A = (0, 0, 0)$. There are 1 vacuum and 7 electric charges in this class. We label them as $1, e^{(100)}, e^{(010)},$ $ e^{(001)},$ $e^{(110)}, e^{(011)}, e^{(101)},$ and $e^{(111)}$, in which $e^{(110)} = e^{(100)} \times e^{(010)}$.
    \item $A = (1, 0, 0)$. There are two two-dimensional irreducible representations for $c_A$, and the self-statistic for one is bosonic and the other is fermionic. Therefore, we label them as  $m^{(100)}$ and $f^{(100)}$. Similarly, we have $m^{(010)}$ and $f^{(010)}$ when $A = (0, 1, 0)$, and $m^{(001)}$ and $f^{(001)}$ when $A = (0, 0, 1)$.
    \item $A = (1, 1, 0)$. Similar to the previous case, there are two two-dimensional irreducible representations for $c_A$, and the self-statistic for one is bosonic and the other is fermionic. We use the same nomenclature to label the anyons in this class as $m^{(110)}$, $m^{(011)}$, $m^{(101)}$, $f^{(110)}$, $f^{(011)}$, and $f^{(101)}$.
    \item $A = (1, 1, 1)$. In this case, there are two two-dimensional irreducible representations for $c_A$, and the self-statistics for one is semionic while for the other one is anti-semionic. Therefore, we label them as $s$ and $\bar{s}$.
\end{itemize}
In summary, there are 22 different types of anyons in $\mathbb{Z}_2^3$ TQD, including 1 vacuum, 7 electric charges, and 14 dyons.

Similar to the type-I, and type-II cases, electric charges can be created by strings of Pauli $Z$ operators, and magnetic fluxes can be created by strings of Pauli $X$ operators with a sequence of phase gates along the ribbon. When braiding two magnetic fluxes with different types, for example, $m^{(100)}$ and $m^{(010)}$, an eletric charge of the third type, $e^{(001)}$, can be generated at the intersection, which can be moved to the bulk and becomes a electric charge on the other side of the domain wall.

\end{document}